\documentclass[a4paper, english]{article}
\usepackage[top=60pt,bottom=70pt,left=78pt,right=76pt]{geometry}

%This is a template for producing LIPIcs articles. 
%See lipics-manual.pdf for further information.
%for A4 paper format use option "a4paper", for US-letter use option "letterpaper"
%for british hyphenation rules use option "UKenglish", for american hyphenation rules use option "USenglish"
%for section-numbered lemmas etc., use "numberwithinsect"
%for enabling cleveref support, use "cleveref"
%for enabling autoref support, use "autoref"
%for anonymousing the authors (e.g. for double-blind review), add "anonymous"
%for enabling thm-restate support, use "thm-restate"

%\graphicspath{{./graphics/}}%helpful if your graphic files are in another directory

\bibliographystyle{plainurl}% the mandatory bibstyle

\title{The Yoneda Reduction of Polymorphic Types \\ (extended version)}% type isomorphisms\\  with the Yoneda lemma} %TODO Please add

%\titlerunning{The Yoneda Reduction of Polymorphic Types} %TODO optional, please use if title is longer than one line
\author{Paolo Pistone\\ {\small DISI, Universit\`a di Bologna,} \\
{\small Mura Anteo Zamboni,7, 40126, Bologna (Italy)} \\
{\small\url{paolo.pistone2@unibo.it}} \\ \ \\
Luca Tranchini \\ {\small Wilhelm-Schickard-Institut, Universit\"at T\"ubingen, }\\
{\small 
Sand 13, D-72076, T\" ubingen (Germany)}\\
{\small\url{
luca.tranchini@gmail.com}}}
\date{}

%%%%%%%%%%%%%%%%%%%%%%%%%%%%%%%%%%%%%%%%%%%%%%%%%%%%%%

\usepackage{etex}
\usepackage{rotating}
\usepackage{graphicx}
\usepackage{tabularx}
\usepackage{amsfonts}
\usepackage{amsthm}% or letter or a5paper or ... etc
\usepackage{amsmath}% or letter or a5paper or ... etc
\usepackage{mathabx2}
\usepackage{proof}
\usepackage{mathdots}
\usepackage{hyperref}
\usepackage{xcolor}
\usepackage{adjustbox}
\usepackage{cmll}
\usepackage{mathtools}

\usepackage{thmtools}
\usepackage{thm-restate}

\usepackage{accents}

\usepackage{amssymb}
\usepackage{color}
\usepackage{tikz}
\usepackage{mathrsfs}
\usepackage{mleftright}

\usepackage{lscape}
\usepackage{stmaryrd}
\usepackage{bussproofs}

\EnableBpAbbreviations

\usepackage{scalerel}
\usepackage{subcaption}
\usepackage{framed}

\usepackage{varwidth}
\newsavebox{\mypti}
\newsavebox{\mypto}
\newsavebox{\mypta}
\newsavebox{\myptu}
\newsavebox{\mypte}

\usepackage{tikz}
\usetikzlibrary{positioning,shapes,shapes.multipart}
\usetikzlibrary{cd,arrows,snakes,backgrounds}

\newtheorem{theorem}{Theorem}[section]
\newtheorem{example}{Example}[section]
\newtheorem{notation}{Notation}[section]
\newtheorem{definition}{Definition}[section]
\newtheorem{remark}{Remark}[section]
\newtheorem{corollary}{Corollary}[section]

\newtheorem{lemma}[theorem]{Lemma}
\newtheorem{proposition}[theorem]{Proposition}

%%
%%
%%\theoremstyle{definition}
%%\newtheorem{example}{Example}[section]
%%\newtheorem{problem}{Problem}[section]
%%\newtheorem{notation}{Notational convention}[section]
%%
%\newtheorem{remark}{Remark}[section]
%%\newtheorem{conjecture}{Conjecture}[section]
%%
%%
%%
%%\numberwithin{equation}{section}
%%
%%
%%\makeatletter
%%\newenvironment{proofof}[1]{
%%\par \pushQED{\qed} 
%%\normalfont \topsep6\p@\@plus6\p@\relax
%%\trivlist
%%\item[\hskip\labelsep \itshape
%%Proof of #1 \@addpunct{.}]\ignorespaces}{ %
%%\popQED\endtrivlist\@endpefalse}
%%\makeatother
%%
%%
%\newtheorem{corollary}[theorem]{Corollary}
%
%%math styles

\newcommand{\B}[1]{\mathbf{#1}}
\newcommand{\D}[1]{\mathscr{#1}}

\newcommand{\ff}{\mathsf{fold}}
\newcommand{\uu}{\mathsf{unfold}}
\newcommand{\inn}{\mathsf{in}}
\newcommand{\outt}{\mathsf{out}}

\newcommand{\pack}{\mathsf{pack}}
\newcommand{\unpack}{\mathsf{unpack}}

\newcommand{\TT}[1]{\mathtt{#1}}

\newcommand{\C}[1]{\mathcal{#1}}
\newcommand{\BB}[1]{\mathbb{#1}}
\newcommand{\U}[1]{\underline{#1}}
\newcommand{\OV}[1]{\mathbin{\overline{#1}}}
\newcommand{\F}[1]{\mathfrak{#1}}

\newcommand{\To}{\Rightarrow}
		% flot symetrique: u<-v + v<-u

\newcommand{\Elim}{\rr{  \TT{El}}} %^{\vec z}    }
\newcommand{\Intro}{\rr{  \TT{In}}}%^{\vec z}    }

\newcommand{\bb}[1]{\mathsf{#1}}
\newcommand{\rr}[1]{{\color{black}#1}}
\newcommand{\cc}[1]{{\color{blue}{#1}}}
\newcommand{\mm}[1]{{\color{red}{#1}}}

%{\colorlet{temp}{.}\color{cyan}\accentset{\rule{.3em}{.8pt}}{\color{temp}#1}\color{temp}}

%\newcommand{\Sh}{\mathsf{Sh}}

\newcommand{\at}[1]{\bb{at}(#1)}

\newcommand{\ar}{\mathsf{ar}}

\newcommand{\sss}{\mathsf{s}}
\newcommand{\zz}{\mathsf{0}}
\newcommand{\REC}{\mathsf{Rec}_{Y}}

\newcommand{\mcirc}{\circ_{\mathsf m}}

%Nota: \Nv e' un alias per \NI
\newcommand{\NI}{\mathsf{\Lambda p}}
\newcommand{\NImu}{\mathsf{\Lambda p_{ \mu\nu}}}

\newcommand{\Nd}{\mathsf{\Lambda 2}}
\newcommand{\Ndv}{\mathsf{\Lambda 2p}}
\newcommand{\Ntot}{\mathsf{\Lambda 2p_{\mu\nu}}}

\newcommand{\NRP}{\mathsf{\Lambda 2^{Poly}}}

\newcommand{\op}{\mathsf{op}}

\newcommand{\step}{\vartriangleright}

\newcommand{\nuzzet}{\blacktriangle}
\newcommand{\stand}{\leadsto_{\mathsf{stand}}}

\newcommand{\NY}{\mathsf{\Lambda 2^{\kappa\leq 0}}}
\newcommand{\NYY}{\mathsf{\Lambda 2^{\kappa\leq 1}}}

\newcommand{\Nnew}{\mathsf{\Lambda 2p_{\mu,\nu}^{*}}}

\newcommand{\Nda}{\mathsf{\Lambda 2_{at}}}

 %\mkern-3mu\vee\mkern-3mu
%\newcommand{\Nv}{{\mathtt{NI}^{\mkern-3mu\mathord{\vee}\mkern-3mu}}}
%\newcommand{\NIa}{\mathtt{NI}^{\mkern-3mu\mathord{\vee}\mkern-3mu}_{at}}
% \newcommand{\Ndvw}{\mathtt{NI}^{2\mathord{\vee}\wedge}}

% \newcommand{\Ldm}{\mathcal{L}^2}
% \newcommand{\Lvm}{\mathcal{L}^{\vee}}
% \newcommand{\Ldvm}{\mathcal{L}^{2\vee}}

   \newcommand{\FFun}[2]{  #2\left\langle #1\right\rangle}

  \newcommand{\Fun}[2]{ \Phi_{#2}^{#1}}
  %\left \{ #2\right \}^{#1}}
%   \scalerel*{#1}{x}   }   }

 \newcommand{\nat}{\mathtt{int}}
  \newcommand{\rec}{\mathsf{rec}}

\newcommand{\ctx}[0]{{\mathsf{ctx}}}
\newcommand{\CTX}{\BB C}

\newcommand{\rop}{\sharp}%\scalerel*{\mathsf{RP}}{x}}
\newcommand{\Rp}[1]{#1^{\rop}}

\definecolor{color0}{HTML}{4682B4}

% TODO

\newcommand{\FV}[1]{\mathsf{f}\C V(#1)}
\newcommand{\BV}[1]{\mathsf{b}\C V(#1)}

%\makeatletter
%\newcommand{\ccOV}{\mathpalette\@ccOV}
%\newcommand{\@ccOV}[2]{{{\color{black}#1}\mkern1.5mu
%\color{cyan}\vbox{
%  \sbox\z@{${\color{black}#1}\mkern-1.5mu#2\mkern-1.5mu$}%
%  \sbox\tw@{${\color{black}#1}\overline{#2}$}%
%  \dimen@=\dimexpr\ht\tw@-\ht\z@-.8\p@\relax
%  \hrule\@height.8\p@ % adjust for the desired rule thickness
%  \vskip\dimen@
%  \box\z@}\mkern1.5mu}
%}
%\makeatother
%
%\makeatletter
%\newcommand{\mmOV}{\mathpalette\@mmOV}
%\newcommand{\@mmOV}[2]{{{\color{black}#1}\mkern1.5mu
%\color{magenta}\vbox{
%  \sbox\z@{${\color{black}#1}\mkern-1.5mu#2\mkern-1.5mu$}%
%  \sbox\tw@{${\color{black}#1}\overline{#2}$}%
%  \dimen@=\dimexpr\ht\tw@-\ht\z@-.8\p@\relax
%  \hrule\@height.8\p@ % adjust for the desired rule thickness
%  \vskip\dimen@
%  \box\z@}\mkern1.5mu}
%}
%\makeatother
%

%{\colorlet{temp}{.}\color{magenta}\accentset{\rule{.3em}{.8pt}}{\color{temp}#1}\color{temp}}
%
%

%%% Local Variables:
%%% mode: latex
%%% TeX-master: "DinatuSecondoShort"
%%% End:

\begin{document}

\maketitle

%TODO mandatory: add short abstract of the document
\begin{abstract}

In this paper we explore a family of type isomorphisms in System F whose validity corresponds, semantically, to some form of the Yoneda isomorphism from category theory. These isomorphisms hold under theories of equivalence stronger than $\beta\eta$-equivalence, like those induced by parametricity and dinaturality.
We show that the Yoneda type isomorphisms yield a rewriting over types, that we call Yoneda reduction, which can  be used to eliminate quantifiers from a polymorphic type, replacing them with a combination of monomorphic type constructors. We establish some sufficient conditions under which quantifiers can be fully eliminated from a polymorphic type, and we show some application of these conditions to count the inhabitants of a type and to compute program equivalence in some fragments of System F.

\end{abstract}

\tableofcontents

\section{Introduction}

% !TEX root = YonedaTryAgain.tex

The study of type isomorphisms is a fundamental one both in the theory of programming languages and in logic, through the well-known \emph{proofs-as-programs} correspondence: type isomorphisms supply programmers with transformations allowing them to obtain simpler and more optimized code, and offer new insights to understand and refine the syntax of type- and proof-systems.

Roughly speaking, two types $A,B$ are isomorphic when one can transform any call by a program to an object of type $A$ into a call to an object of type $B$, without altering the behavior of the program. Thus, type isomorphisms are tightly related to theories of \emph{program equivalence}, which describe what counts as the observable behavior of a program, so that programs with the same behavior can be considered equivalent.
%since whether two types are isomorphic depends on the kind of program transformations that one is considering.

 The connection between type isomorphisms and program equivalence is of special importance for polymorphic type systems like System F (hereafter $\Nd$). In fact, while standard $\beta\eta$-equivalence for $\Nd$ and the related isomorphisms are well-understood \cite{Bruce91, DiCosmo2005}, stronger notions of equivalence (as those based on \emph{parametricity} or \emph{free theorems} \cite{Wadler1989, Johann2005, Ahmed2017}) are often more useful in practice but are generally intractable or difficult to compute, and  
 little is known about the type isomorphisms holding under such theories.
 
 %

%\subsection*{Type Isomorphisms with the Yoneda Lemma}
 
% Knowing that a polymorphic type is isomorphic to a \emph{finite} (i.e. with a finite number of inhabitants) or \emph{recursive} (i.e. with a recursively defined class of inhabitants) type 
% can be useful as it can lead to simplify complex code \cite{Seidel2011} as well as to devise decidable tests for equivalence \cite{Bernardy2010}.

%Our starting point is the observation that 
A cornerstone result of category theory, the \emph{Yoneda lemma}, 
 is sometimes invoked \cite{Bainbridge1990, Hasegawa2009,Bernardy2010,Uustalu2011,Hinze2010} to justify some type isomorphisms in $\Nd$ like e.g.
%\adjustbox{scale=0.9, center}{
%\parbox{\linewidth}{
\begin{equation}\label{eqa}
\forall X. (A\To X) \To (B\To X) \equiv B\To A \qquad
\forall X. (X\To A)\To (X\To B)\equiv A\To B \tag{$\star$}
\end{equation}
%}
%}
\noindent which do not hold under  $\beta\eta$-equivalence, but only under stronger equivalences.  
 Such isomorphisms are usually justified by reference to the interpretation of polymorphic programs as \emph{(di)natural transformations} \cite{Bainbridge1990}, a well-known semantics of $\Nd$ related to both parametricity \cite{Plotkin1993} and free-theorems \cite{Voigt2020}, and yielding a not yet well-understood equational theory over the programs of $\Nd$ \cite{Girard1992, Delatail2009, FSCD2017}, that we call here the \emph{$\varepsilon$-theory}. 
Other isomorphisms, like
those in Fig.~\ref{fig:yonexamples}, can be justified in a similar way as soon as the language of $\Nd$ is enriched with other type constructors like $1,0,+,\times,\To$ and \emph{least}/\emph{greatest fixpoints} $\mu X.A, \nu X.A$.

All such type isomorphisms have the effect of  \emph{eliminating} a quantifier, replacing it with a combination of monomorphic type constructors, and can be used to test if a polymorphic type has a 
\emph{finite} number of inhabitants (as illustrated in Fig.~\ref{fig:exasso}) or, as suggested in \cite{Bernardy2010},  
to devise \emph{decidable tests} for program equivalence.

In this paper we develop a formal study of the elimination of quantifiers from polymorphic types using a class of type isomorphisms, that we will call \emph{Yoneda type isomorphisms}, 
 which generalize the examples above. Then, we explore the application of such type isomorphisms to establish properties of program equivalence for polymorphic programs.

\subsection{A Type-Rewriting Theory of Polymorphic Types}

In the first part of the paper (Sections  \ref{secCounting}-\ref{secCharacteristic}) we investigate the type-rewriting arising from Yoneda type isomorphisms and its connection with proof-theoretic techniques to count type inhabitants.

\subparagraph*{Counting type inhabitants with type isomorphisms}

Examples like the one in Fig. \ref{fig:exasso} suggest that, while arising from a categorical reading of polymorphic programs,  Yoneda Type Isomorphisms have a proof-theoretic flavor. 
To demonstrate this, we  show that the use of such isomorphisms to count type inhabitants can be compared with some well-known proof-theoretic 
\emph{sufficient conditions} for a \emph{simple} type $A$ to have a unique or finitely many inhabitants \cite{Aoto1999,Broda2005}, namely the \emph{balancedness}, \emph{negatively-non duplicated} and \emph{positively-non duplicated} conditions. We show that  
whenever an inhabited simple type $A$ satisfies any of the first two conditions (resp.~a special case of the third), then its universal closure $\forall \vec X.A$ can be converted to 1  by applying Yoneda type isomorphisms and usual $\beta\eta$-isomorphisms, and when $A$ satisfies a special case of the third, then it can be converted to $1+\dots+1$. Furthermore, we suggest that the appearance of the fixpoints in isomorphisms like $(**)$ can be related to a well-known approach to counting type inhabitants by solving systems of \emph{polynomial fixpoint equations} \cite{Zaoinc}.

\subparagraph*{Eliminating Quantifiers with Yoneda Reduction}

We then turn to investigate in a more systematic way the quantifier-eliminating rewriting over types arising from the left-to-right orientation of Yoneda type isomorphisms.
A major obstacle here is that the rewriting must take into account possible applications of $\beta\eta$-isomorphisms, whose axiomatization is well-known to be challenging 
 in presence of the constructors $+,0$ \cite{DiCosmo, Ilik} (not to say of $\mu,\nu$).
 For this reason we introduce 
a family of rewrite rules, that we call \emph{Yoneda reduction}, defined not directly over types but 
over a class of finite trees which represent the types of $\Nd$ (but crucially \emph{not} those made with $0,+,\dots$) up to  $\beta\eta$-isomorphism.
  
  Using this rewriting we establish some sufficient conditions for eliminating quantifiers, based on elementary graph-theoretic properties of such trees, which in turn provide some \emph{new} sufficient conditions for the finiteness of type inhabitants of polymorphic types. First, we prove quantifier-elimination for the types satisfying a certain \emph{coherence} condition which can be seen as an instance of the 2-SAT problem.  
 We then introduce a more refined condition by associating each polymorphic type $A$ with 
 a value  $\kappa(A)\in \{0,1,\infty\}$, that we call the \emph{characteristic of $A$}, so that whenever $\kappa(A)\neq \infty$, $A$ rewrites into a monomorphic type, and when furthermore $\kappa(A)=0$, $A$ converges to a finite type. In the last case our method provides an effective way to count the inhabitants of $A$.

The computation of $\kappa(A)$ is somehow reminiscent of linear logic \emph{proof-nets}, as it is obtained by inspecting the existence of cyclic paths in a graph obtained by adding some ``axiom-links'' to the tree-representation of $A$.

\begin{figure}
\adjustbox{scale=0.8, center}{
\parbox{\linewidth}{
\begin{align}
\forall X. X\To  X\To A & \equiv A[X\mapsto 1 + 1]  \tag{$*$}\\
%\forall X. (A\To B\To X)\To C&  \equiv C[ X\mapsto A\times B] \\
\forall X.  (A\To X)\To (B\To X)\To C & \equiv   C[ X\mapsto \mu X.A+B] \tag{$**$}\\
\forall X. (X\To A) \To (X\To B)\To D & \equiv D[X\mapsto  \nu X.A\times B] \tag{$***$}
\end{align}
}
}
\vskip -0.3cm
\caption{\small Other examples of Yoneda type isomorphisms, where $X$ only occurs positively in $A,B,C$ and only occurs negatively in $D$.}
\label{fig:yonexamples}
\end{figure}
 \begin{figure}

\adjustbox{scale=0.8}{
$
\forall X.
X \To X \To
\forall Y. (\forall Z .(Z\To X)\To (\forall W.(W\To  Z)\To W\To X)\To Z\To Y)\To (X\To Y)\To Y
$}

\adjustbox{scale=0.8}{
$
\stackrel{(*)}{\equiv}
\forall Y. (\forall Z .(Z\To 1+1)\To (\forall W.(W\To  Z)\To W\To 1+1)\To Z\To Y)\To (1+1\To Y)\To Y
$}

\adjustbox{scale=0.8}{
$
\stackrel{\eqref{eqa}}{\equiv}
\forall Y. (\forall Z .(Z\To 1+1)\To(Z\To 1+1)\To Z\To  Y)\To (1+1\To Y)\To Y
$}

\adjustbox{scale=0.8}{
$
\stackrel{(***)}{\equiv}
\forall Y. (( \nu Z.(1+1)\times (1+1)) \To  Y)\To (1+1\To Y)\To Y
$}

\adjustbox{scale=0.8}{
$
\stackrel{(**)}{\equiv}
 \mu Y.(\nu Z.(1+1)\times (1+1))  + (1+1) \equiv 1+1+1+1+1+1
 $}

\caption{\small Short proof that a $\Nd$-type has 6 inhabitants, using type isomorphisms.}
\label{fig:exasso}
\end{figure}

\subsection{Program Equivalence in System F with Finite Characteristic}

 In the second part of the paper (Sections \ref{secEpsilon}-\ref{secAtomic}) we direct our attention to programs rather than types, and we exploit our results on type isomorphisms to establish some non-trivial properties of program equivalence for polymorphic programs in some suitable fragments of $\Nd$.

\subparagraph*{Computing equivalence with type isomorphisms}
Computing program equivalence under the $\varepsilon$-theory can be a challenging task, 
as this theory involves global permutations of rules which are difficult to detect and apply \cite{Delatail2009, FSCD2017, StudiaLogica, SL2, Voigt2020}.
Things are even worse at the semantic level, since computing with dinatural transformations can be rather cumbersome, due to the well-known fact that such transformations need not compose \cite{Bainbridge1990, Santamaria2018}. 

Nevertheless, our approach to quantifier-elimination based on the notion of characteristic provides a way to compute program equivalence \emph{without} the appeal to $\varepsilon$-rules, free theorems and parametricity, since all polymorphic programs having types of finite characteristic can be embedded 
inside well-known monomorphic systems. 
%
%
%Since all type isomorphisms we consider are sound under 
%computing program equivalence using this equational theory can be very difficult, as it involves global permutations of rules which are difficult to detect and apply.
%Things are even worse at the semantic level, since computing with dinatural transformations can be rather cumbersome, due to the well-known fact that such transformations need not compose \cite{Bainbridge1990}. Instead, we will exploit the criterion based on the notion of characteristic as a way to compute program equivalence \emph{without} the appeal to $\varepsilon$-rules, free theorems and parametricity,  by showing that all polymorphic programs having types of finite characteristic can be embedded 
%
To demonstrate this fact, we introduce two fragments $\NY$ and $\NYY$ of $\Nd$ in which types have a fixed finite characteristic, and we prove that these are equivalent, under the $\varepsilon$-theory, respectively, to the simply typed $\lambda$-calculus with finite products and co-products (or, equivalently, to the \emph{free bicartesian closed category} $\BB B$), and to its extension with $\mu,\nu$-types (that is, to the \emph{free cartesian closed $\mu$-bicomplete category} $\mu\BB B$ \cite{Santocanale2002, Basold2016}). Using well-known facts about $\BB B$ and $\mu \BB B$ \cite{Scherer2017, Basold2016, Okada1999}, we finally establish that the $\varepsilon$-theory is decidable in $\NY$ and undecidable in $\NYY$.

\subparagraph*{Program equivalence and predicativity}

We provide an example of how the correspondence between polymorphic types of finite characteristic and $\mu,\nu$-types can be used to prove non-trivial properties of program equivalence. A main source of difficulty with $\Nd$ is that polymorphic programs are \emph{impredicative}, that is, a program of universal type $\forall X.A$ can be instantiated at \emph{any} type $B$, yielding a program of type $A[B/X]$.
It is thus useful to be able to predict when a complex type instantiation can be replaced by a one of smaller complexity, without altering the program behavior. 

Using the fact that a universal type $\forall X.A $ of finite characteristic in which $A$ is of the form $A_{1}\To \dots \To A_{n}\To X$ is isomorphic to the \emph{initial algebra} $\mu X.T$ of some appropriate functor $T$, we establish a sufficient condition under which a program containing an instantiation of $\forall X.A$ as $A[B/X]$ can be transformed into one with instantiations of types strictly less complex than $B$.

We finally use this condition to provide a simpler proof of a result from \cite{SL2}, showing that all programs in a certain fragment of $\NY$ (the fragment freely generated by the embedding of finite sums and products) can be transformed into predicative programs only containing \emph{atomic} type instantiations, a result related to some recent investigations on atomic polymorphism \cite{Ferreira2013}.

\subsection*{Preliminaries and Notations}

We will presuppose familiarity with the syntax of $\Nd$ (in the version \`a la Church) and its extensions $\Ndv,\Ntot$ with sum and product types, as well as $\mu$ and $\nu$-types. We indicate by $\mathsf{\Lambda}, \NI, \NImu$ their respective quantifier-free fragments.
We recall the syntax of these systems in some more detail in Sec.~\ref{secEpsilon}. 
 We let $\TT V=\{X,Y,Z,\dots\}$ indicate the countable set of \emph{type variables}.

 Let $\mathsf S$ indicate any of the type systems above.
%We let $\CTX_{\beta\eta}(\mathsf S)$ be the category having as objects the types of $\mathsf S$, and as arrows between types $A,B$ the \emph{contexts}, i.e. terms with a hole $\rr{\TT C}[\ ], \rr{\TT D}[\ ]$ 
We let $\Gamma\vdash_{\mathsf S}t:A$ indicate that the judgement $\Gamma\vdash t:A$ is derivable in $\mathsf S$.
We indicate as $t[x]$ a term with a unique free variable $x$, and we let $t[x]:A\vdash_{\mathsf S}^{\Gamma} B$ be shorthand for $\Gamma, x:A \vdash_{\mathsf S}t:B$.

%We indicate \emph{contexts}, i.e. terms with a hole, as $\rr{\TT C}[\ ], \rr{\TT D}[\ ]$: if $t$ is a term, $\rr{\TT C}[t]$ is the  term obtained by variable-binding substitution of $\bb t$ for $[\ ]$ in $\rr{\TT C}$. 
%We let $\rr{\TT C}\circ \rr{\TT D}$ be shorthand for $\rr{\TT C}[ \rr{\TT D}]$.
%Contexts are typed as follows:
%we write
%$\rr{\TT C}: (\Gamma\vdash_{\mathsf S}A)\To (\Gamma'\vdash_{\mathsf S}A')$ when for all
%$\Gamma\vdash_{\mathsf S}t:A$, $\Gamma'\vdash_{\mathsf S}\rr{\TT C}[t]:A'$\footnote{One can deduce explicit typing rules for such judgements from the typing rules of $\mathsf S$, see for instance \cite{Harper}}.
%We let $\rr{\TT C}: A\vdash_{\mathsf S}^{\Gamma} B$ be a shorthand for $\rr{\TT C}: (\vdash_{\mathsf S}A)\To (\Gamma\vdash_{\mathsf S}B)$ and we indicate by
% $\rr{\TT C}: A\vdash_{\mathsf S} B$  that, for some $\Gamma$, 
% $\rr{\TT C}: A\vdash_{\mathsf S}^{\Gamma}B$.

A \emph{theory} of $\mathsf S$ is a class of equations over well-typed terms satisfying usual congruence rules. Standard theories of $\Nd, \Ndv, \Ntot$ are those generated by $\beta\eta$-equivalence and by contextual equivalence, recalled in Sec..~\ref{secEpsilon}. We will also consider a less standard theory, the $\varepsilon$-theory, also described in Sec.~\ref{secEpsilon}. 
For all theory $\mathsf T$ including $\beta\eta$-equivalence, we let $\CTX_{\mathsf T}({\mathsf S})$ be the category whose objects are the types of $\mathsf S$ and whose arrows are the $\mathsf T$-equivalence classes of terms 
$t[x]: A\vdash_{\mathsf S}B$. $\CTX_{\mathsf T}(\mathsf S)$ is cartesian closed as soon as $\mathsf S$ contains products, meaning in particular that
$\CTX_{\mathsf T}(\mathsf S)(A\times B, C)\simeq \CTX_{\mathsf T}(\mathsf S)(A,B\To C)$.  

By a \emph{$\mathsf T$-isomorphism}, indicated as $A\equiv_{\mathsf T}B$, we mean a pair of terms $t[x]:A\vdash_{\mathsf S}B$, $u[x]:B\vdash_{\mathsf T}A$ such that  $t[u[x]]\simeq_{\mathsf T}x$ and $u[t[x]]\simeq_{\mathsf T}x$ (where $t[u[x]]$ is $t[x\mapsto u]$).
%The $\beta\eta$-isomorphisms of $\Nd$ and $\Ntot$ can be axiomatized as shown in Fig.~\ref{fig:isond} and Fig.~\ref{fig:isondv} in App.~\ref{app2}. Note that while the first axiomatization is complete for the $\beta\eta$-isomorphisms of $\Nd$ \cite{DiCosmo2005}, the other fails to be complete already for the propositional fragment $\NI$ \cite{DiCosmo}.
%
%
%We moreover let $\CTX_{\mathsf T}^{0}({\mathsf S})$ be the subcategory of  $\CTX_{\mathsf T}(\mathsf S)$ made of $\mathsf T$-equivalence classes of contexts $\rr{\TT C}: A\vdash_{\mathsf S}^{\emptyset}B$.
%{\color{blue}Chiarisci sul duplice uso dei contesti}

 \section{Yoneda Type Isomorphisms }

In this section we introduce an axiomatization for a class of type isomorphisms that we call \emph{Yoneda type isomorphisms}. For this we will rely on a standard axiomatization of the $\beta\eta$-isomorphisms of $\Ntot$ (recalled in  Fig.~\ref{fig:isond}-\ref{fig:isomunu} in Sec.~\ref{secEpsilon})\footnote{\label{ft:incomplete}Note that while this axiomatization is complete for the $\beta\eta$-isomorphisms of $\Nd$ \cite{DiCosmo2005}, it fails to be complete (already at the propositional level) in presence of sums and the empty type \cite{DiCosmo, Ilik}.} and on the well-known distinction between \emph{positive} and \emph{negative} occurrences of a variable $X$ in a type $A$.  
% \subparagraph*{Variance and Functorial semantics}
\begin{notation}
Throughout the text we will indicate as $\cc X$ a {positive} occurrence of $X$, and as $\mm X$ a {negative} occurrence of $X$. When $B$ occurs within a larger type $A$, we  will often note $B$ as $\FFun{\cc X}{B}$ to indicate that all occurrences of the variable $X$ in $B$ are positive occurrences \emph{in $A$}, or as $\FFun{\mm X}{B}$  to indicate that all occurrences of the variable $X$ in $B$ are negative occurrences \emph{in $A$}. So for instance, when $B$ only contains positive occurrences of $X$, we write the type $A=X\To B$ as $\mm X \To \FFun{\cc X}{B}$ (since all positive occurrences of $X$ in $B$ are positive in $A$) and the type $A'=B\To X$ as $\FFun{\mm X}{B}\To \cc X$ (since all positive occurrences of $X$ in $B$ are negative in $A$) .
\end{notation}

The focus on positive/negative occurrences highlights a connection with to the so-called \emph{functorial semantics} of $\Nd$ \cite{Bainbridge1990, Girard1992}, in which types are interpreted as functors and typed programs as (di)natural transformations between such functors. More precisely, any positive type $\FFun{\cc X}{A}$ gives rise to a functor
$\Fun{X}{A}:\CTX_{\mathsf T}(\mathsf S)\to \CTX_{\mathsf T}(\mathsf S)$, any negative type $\FFun{\mm X}{A}$ gives rise to a functor
$\Fun{X}{A}:\CTX_{\mathsf T}(\mathsf S)^{\op}\to \CTX_{\mathsf T}(\mathsf S)$ and, more generally, any type 
$A$ gives rise to a functor
$\Fun{X}{A}:\CTX_{\mathsf T}(\mathsf S)^{\op}\times\CTX_{\mathsf T}(\mathsf S) \to \CTX_{\mathsf T}(\mathsf S)$.
In all such cases, the action of the functor on a type $A$ is obtained by replacing positive/negative occurrences of $X$ by $A$, and the action on programs can be defined inductively (we recall this construction in Sec.~\ref{secEpsilon}, see also \cite{Delatail2009, LMCS}).

With types being interpreted as functors, a polymorphic term $t[x]:A\vdash_{\Nd}  B$ is interpreted as a transformation satisfying an appropriate naturality condition: when $A$ and $B$ have the same variance, $t[x]$ is interpreted as an ordinary natural transformation; instead, if $A$ and $B$ have mixed variances, then $t[x]$ is interpreted as a dinatural transformation.

In the framework of syntactic categories, such (di)naturality conditions are expressed by families of equational rules over typed programs \cite{Delatail2009, FSCD2017}, generating, along with the usual $\beta\eta$-equations, a theory of program equivalence that we here call the \emph{$\varepsilon$-theory} (and which we define formally in Sec.~\ref{secEpsilon}).
These equational rules are usually interpreted as parametricity conditions \cite{Plotkin1993}, or as instances of \emph{free theorems} \cite{Voigt2020}. 

%
%
%in this paper we will not be concerned with computing program equivalence with these rules but with exploring the class of type isomorphisms which hold in this theory. In fact, since the $\varepsilon$-theory strictly extends the theory of $\beta\eta$-equivalence, it also certain isomorphisms which are not $\beta\eta$-isomorphisms.
%
%
%
% (for the interested reader, we introduce this equational theory formally in App.~\ref{app4}, but we will not use it in the main body of the text). 
% 
% SAY THAT EPSILON IS DIFFICULT TO COMPUTE, AND THAT WE WILL AVOID IT.
%
%Since two terms which are equated under the $\varepsilon$-theory 
%need not be $\beta\eta$-equivalent, this theory can be used to justify isomorphisms which do not hold under $\beta\eta$-equivalence but which are sound with respect to the interpretation of typed programs as (di)natural transformations. 
%For example, 
%

%The functor
%
%validates certain type isomorphisms 

As mentioned in the introduction, our goal in these first sections is not that of investigating the $\varepsilon$-theory directly, but rather to explore a class of type isomorphisms that hold under this theory (that is, of isomorphisms in the syntactic categories $\CTX_{\varepsilon}(\mathsf S)$, with $\mathsf S=\Nd, \Ndv, \Ntot$). 
For example, in functorial semantics
%an intuitive justification of the validity of the schemas $\equiv_{\cc X}, \equiv_{\mm X}$. 
 a type of the form $\forall X.\FFun{\mm X}{A}\To \FFun{\cc X}{B}$ is interpreted as the set of natural transformations between the functors $\FFun{\cc X}{A}$ and $\FFun{\cc X}{B}$. % (in our syntactic framework these are the transformations which are coded by a polymorphic program).
Now, if $\FFun{\cc X}{A}$ is of the form $A_{0}\To \cc X$ (i.e. it is a \emph{representable} functor), using the $\varepsilon$-theory we can deduce (see App.~\ref{app4}) a 
 ``Yoneda lemma'' in the form of the quantifier-eliminating isomorphism below:
\begin{equation}\label{yonea}
\forall X. (A_{0}\To \mm X) \To \FFun{\cc X}{B} \ \equiv \  \FFun{\cc X\mapsto A_{0}}{B}
\end{equation}
%holding under the $\varepsilon$-theory. 
Similarly, if $\FFun{\mm X}{A}, \FFun{\mm X}{B}$ are both negative and $\FFun{\mm X}{A}$ is of the form $\mm X\To A_{0}$ (i.e. it is a \emph{co-representable} functor), we can deduce another quantifier-eliminating isomorphism:
\begin{equation}\label{yoneb}
\forall X. (\cc X\To A_{0}) \To \FFun{\mm X}{B} \ \equiv \  \FFun{\mm X\mapsto A_{0}}{B}
\end{equation}
Observe that both isomorphisms \eqref{eqa} from the Introduction are instances of \eqref{yonea} or \eqref{yoneb}.% two schemas above.
%A first idea to study such isomorphisms would be then to consider the axiomatic theory generated by the two schemas above. 

As we admit more type-constructors in the language, we can use the $\varepsilon$-theory to deduce stronger schemas for eliminating quantifiers. For instance, using \emph{least} and \emph{greatest fixed points} $\mu X.\FFun{\cc X}{A}, \nu X.\FFun{\cc X}{A}$  of positive types, we can deduce the stronger schemas \cite{Uustalu2011} below.%\footnote{We provide a complete proof that such isomorphisms do hold under the $\varepsilon$-theory in App.~\ref{app4}.}
%\begin{lemma}[Yoneda lemma, \cite{Uustalu2011}]\label{lemma:uusta}
%For all types $\FFun{\cc X}{A}, \FFun{\cc X}{B}, \FFun{\mm X}{C}$\footnote{The quantification of $Y$ is not found in \cite{Uustalu2011}, but is a straightforward generalization of the same result.},
\begin{align}
\forall X. (\FFun{\cc X}{A}\To \mm X)\To \FFun{\cc X}{B} &  \equiv \FFun{\cc X\mapsto \mu X.\FFun{\cc X}{A}}{B}  \label{yonec} \\
\forall X. (\cc X\To \FFun{\mm X}{A})\To \FFun{\mm X}{C} & \equiv\FFun{\mm X\mapsto \nu X.\FFun{\mm X}{A}}{C} \label{yoned}
\end{align}
Note that \eqref{yonea} and \eqref{yoneb} can be deduced from \eqref{yonec} and \eqref{yoned} using  the isomorphisms $\mu X.A\equiv_{\beta\eta}\nu X.A\equiv_{\beta\eta}A$, when $X$ does not occur in $A$.
Moreover, adding sum and product types enables the elimination of the quantifier $\forall X$ also from a type of the form
$A=\forall X. (\FFun{\cc X}{A_{11}}\To \FFun{\cc X}{A_{12}}\To \mm X)\To (\FFun{\cc X}{A_{21}}\To \FFun{\cc X}{A_{22}}\To \mm X)\To \FFun{\cc X}{B}$ by using  $\beta\eta$-isomorphisms, as follows:
\begin{center}
\adjustbox{center, scale=0.9}{
$
A \equiv_{\beta\eta}
\forall X. \left ({\left(\sum_{i=1,2}\prod_{j=1,2}\FFun{\cc X}{A_{ij}}\right)}\To \mm X\right )\To \FFun{\cc X}{B} 
\equiv\FFun{\cc X\mapsto  \mu X.\FFun{\cc X}{\left(\sum_{i=1,2}\prod_{j=1,2}A_{ij}}\right) }{B} $
}
\end{center}

These considerations lead to introduce the following class of isomorphisms.

\begin{notation}
Given a list $L=\langle i_{1},\dots, i_{k}\rangle$, and an $L$-indexed list of types $(A_{i})_{i\in L}$, we will use $\langle A_{i}\rangle_{i\in L}\To B$ as a shorthand  for 
$A_{i_{1}}\To \dots \To A_{i_{k}} \To B$. 

\end{notation}

\begin{definition}
A \emph{Yoneda type isomorphism} is any instance of the schemas $\equiv_{\cc X}$, $\equiv_{\mm X}$ in Fig.~\ref{fig:yonedazz}, 
where the expressions $\{\mu X.\}$, $\{\nu X.\}$ indicate that 
the binder $\mu X.$ (resp.~$\nu X.$) is applied only if $\cc X$ (resp.~$\mm X$) actually occurs in some of the $A_{jk}$ (resp.~$A_{j}$), and $\exists \vec Y.A$ is a shorthand for $\forall Y'.(\forall \vec Y.A\To Y')\To Y'$.
 
For all types $A,B$ of $\Ntot$, we write $A\equiv_{\bb Y}B$ when $A$ can be converted to $B$ using $\equiv_{\cc X}, \equiv_{\mm X}$ and the (partial, see f.n.~\ref{ft:incomplete}) axiomatization of $\beta\eta$-isomorphisms in Fig.~\ref{fig:isond}-\ref{fig:isomunu}.
\end{definition}

\begin{figure}
%\begin{center}
%\adjustbox{center, scale=0.9}{
 $$
 \boxed{
  %\quad  \ \  
    \begin{matrix}
    \forall X. \forall \vec Y. \Big\langle \forall \vec Z_{k}.\langle \FFun{\cc X}{ A_{jk}}\rangle_{j} \To \mm X\Big \rangle_{k}\To \FFun{\cc X}{B} 
\quad  \equiv_{\cc X}\quad
 \forall\vec Y. \FFun{\cc X\mapsto  \{\mu X.\} \sum_{k}\left ( \exists \vec Z_{k}. \prod_{j}\FFun{\cc X}{A_{jk}}\right)     }{B}  \\ \ \\ 
 %$$ 
% }
% 
% \
% 
% 
% \adjustbox{center, scale=0.9}{
 %$$
   \forall X. \forall \vec Y.\left  \langle \forall \vec Z_{k}. \cc X\To  \FFun{\mm X}{A_{j}}\right\rangle_{k}\To \FFun{\mm X}{B}
  \quad \equiv_{\mm X}\quad
  \forall\vec Y. \FFun{\mm X\mapsto \{ \nu X.\} \forall \vec Z_{k}. \prod_{j}\FFun{\mm X}{A_{j}}     }{B}
  \end{matrix}
  }
$$
%}
%\end{center}
\caption{The two schemas $\equiv_{\cc X}$ and $\equiv_{\mm X}$ of Yoneda Type Isomorphisms.}
\label{fig:yonedazz}
\end{figure}

% !TEX root = YonedaTryAgain.tex

\begin{figure}
\begin{subfigure}{\textwidth}
$$
\boxed{
\begin{matrix}
A\To (B\To C) \equiv B\To (A\To C) \\ \\
\forall X.\forall Y.A \equiv \forall Y.\forall X.A
\\  \\
A\equiv \forall X.A \quad (X\notin FV(A)) \qquad A\To \forall X.B \equiv \forall X.A\To B
\end{matrix}
}
$$
\caption{Axiomatization of $\beta\eta$-isomorphisms for $\Nd$.}
\label{fig:isond}
\end{subfigure}

\medskip

\begin{subfigure}{\textwidth}
$$
\boxed{
\begin{matrix}
A \times 1 \equiv A \qquad  1\To A \equiv A \\ \\ 
A \times (B\times C) \equiv (A\times B)\times C \qquad A \times B \equiv B\times A 
\\ \\ 
(A\times B)\To C \equiv A\To (B\To C)
\qquad
\forall X.A\times B \equiv \forall X.A \times \forall X.B \\ \\
A + 0 \equiv  A \quad A\times 0 \equiv 0 \qquad 0 \To A \equiv 1\\ \\ 
A+ (B+ C) \equiv (A+ B)+ C \qquad A+B \equiv B+ A \\ \\ 
A\times (B+C) \equiv (A\times B)+(A\times C) \qquad
(A+B)\To C \equiv (A\To C)\times (B\To C)
%\\ \\ 
%\exists X. A+B \equiv \exists X.A+\exists X.B
\end{matrix}
}
$$
\caption{Partial axiomatization of $\beta\eta$-isomorphisms for $\Ndv$.}
\label{fig:isondv}
\end{subfigure}

\medskip

\begin{subfigure}{\textwidth}$$
\boxed{
\begin{matrix}
\mu X.A \equiv \nu X.A\equiv A \quad (X\notin FV(A)) \\ \\
\mu X.\FFun{\cc X}{A} \equiv \FFun{\cc X\mapsto \mu X.\FFun{\cc X}{A}}{A}
\qquad
\nu X.\FFun{\cc X}{A} \equiv \FFun{\cc X\mapsto \nu X.\FFun{\cc X}{A}}{A}
\end{matrix}
}
$$
\caption{Partial axiomatization of $\beta\eta$-isomorphisms for $\mu,\nu$-types.}
\label{fig:isomunu}
\end{subfigure}
\end{figure}

%the $\varepsilon$-isomorphisms which are instances of the 
%quantifier-eliminating schemas $\equiv_{\cc X}$, $\equiv_{\mm X}$ 
%illustrated in Fig.~\ref{fig:isomorphisms}.

In App.~\ref{app4} we prove in detail that the two schemas $\equiv_{\cc X}, \equiv_{\mm X}$ are valid under the $\varepsilon$-theory, and thus that whenever $A\equiv_{\bb Y}B$, $A$ and $B$ are interpreted as isomorphic objects in all dinatural and parametric models of $\Nd$.

\section{Counting Type Inhabitants with Yoneda Type Isomorphisms}\label{secCounting}
% !TEX root = YonedaTryAgain.tex

%\subparagraph*{Proving unique inhabitation by Yoneda equalities}

% not all closed expressions $e\in \C E$ such that $e\simeq 1+\dots + 1$ have characteristic 2: by lemma 

%We present a first application of Yoneda equalities: the equivalence $\simeq$  can be used to \emph{count} the inhabitants of a type. In fact, whenever we can prove $e \simeq  \sum_{i=1}^{k}1$, any type corresponding to $e$ has exactly $k$ inhabitants.
%For all types $A,B$ of $\Ntot$, we let $A\equiv_{ \beta\eta \mathsf Y}B$ indicate that $A$ can be converted to $B$ using usual $\beta\eta$-isomorphisms as well as $\equiv_{\cc X},\equiv_{\mm X}$.

In the previous section we introduced and motivated Yoneda Type Isomorphisms as arising from a categorical reading of polymorphic programs as (di)natural transformations. However, such isomorphisms also have a proof-theoretic flavor since, as we observed in the Introduction, they can be used to check if a closed polymorphic type has a finite number of inhabitants (up to contextual equivalence), by showing that $A$ can be transformed into a closed quantifier-free type $A'$, 
and in this case to effectively \emph{count} such inhabitants (by reducing $A'$ to a finite sum $0+1+\dots+1=\sum_{i=1}^{k}1$).

%
%Before entering a more technical investigation of such isomorphisms in the following sections, we develop a more proof-theoretic intuition on them by comparing their use for counting type-inhabitants with other proof-theoretic techniques.
%

At least in some simple cases, the way in which a quantifier $\forall X.A$ is eliminated by applying the schemas $\equiv_{\cc X}, \equiv_{\mm X}$ can be related with the structure of the $\beta$-normal $\eta$-long terms of type $A$. For instance, suppose 
$A$ is the simple type $ ( C_{11}\To C_{12}\To  X) \To (C_{21}\To C_{22}\To X)\To X$ (where none of the $C_{ij}$ contains occurrences of $X$) and $t$ is a $\beta$-normal $\eta$-long term of type $A$. Then $t$ must be of one of the following forms:
\begin{equation}\label{eq:forms}
\begin{split}
t& = \lambda y_{1}y_{2}. y_{1}u_{11}u_{12} \\
t& = \lambda y_{1}y_{2}. y_{2}u_{21}u_{22}
\end{split}
\end{equation}
Since $X$ does not occur in the $C_{ij}$ we can suppose that $y_{1},y_{2}$ do not occur in the terms $u_{ij}$, so that $u_{ij}$ is a closed term of type $C_{ij}$. Hence a closed term of $A$ is obtained from closed terms of $C_{11}$ \emph{and} $C_{12}$, \emph{or} from closed terms of $C_{21}$ \emph{and} $C_{22}$, and we deduce that the number of inhabitants of $A$ must be the same as the number of inhabitants of $\sum_{i}\prod_{j}C_{ij}$, which is precisely the type we would obtain by applying $\equiv_{\cc X}$ to $\forall X.A$. 

The situation becomes tricker if we admit the types $C_{ij}$ to have positive occurrences of $X$, so that $\equiv_{\cc X}$ yields the fixpoint type $\mu X. \sum_{i}\prod_{j}\FFun{\cc X}{C_{ij}}$. In this case all we can say is that the closed terms $\lambda y_{1}y_{2}.u_{ij}$ have type $C'_{ij}:=( C_{11}\To C_{12}\To  X) \To (C_{21}\To C_{22}\To X)\To C_{ij}$. In the concluding section we suggest that a purely proof-theoretic explanation of $\equiv_{\cc X}$ could be developed also in this case 
by considering a technique of counting type-inhabitants of simple types by solving a system of \emph{polynomial fixpoint equations} \cite{Zaoinc}.

%
% the transformations given by the left-to-right reading of the schemas $\equiv_{\cc X}$ and $\equiv_{\mm X}$ can be given somehow explained or justified by purely proof-theoretic means. 

%
%
%
%Suppose now some of the $C_{ij}$ contains positive occurrences of $X$, so that we can write them as $\FFun{\cc X}{C_{ij}}$. Then a closed term of type $A$ must still be of one of the forms in \eqref{eq:forms}, but now all we can say is that $\lambda y_{1}y_{2}.u_{ij}$ is a closed term of type $C'_{ij}:=( C_{11}\To C_{12}\To  X) \To (C_{21}\To C_{22}\To X)\To C_{ij}$.
%
%
% 
 
%Below we provide a first answer to the first question by comparing the use of Yoneda Type Isomorphism with some simple and well-known criteria for the finiteness/uniqueness of the inhabitants of a \emph{simple} type.
%We will conclude our discussion by providing (no more than) some hints in the direction of the second, and much broader, question. 
%
%
%
%On the one hand 
%
%
%On the other hand
%
%A first natural application of Yoneda type isomorphisms is to \emph{count} the inhabitants of a \emph{simple} type: given such a type $A$, with free variables $\vec X$, if $\forall \vec X.A\equiv_{  \mathsf Y}  {0+1+\dots +1}=\sum_{i=1}^{k}1$, then $A$ has exactly $k$ proofs (up to $\varepsilon$-equivalence). 

A natural question about counting type inhabitants using type isomorphisms is whether this approach can be compared with existing proof-theoretic criteria for having a finite number of inhabitants. We provide a first answer to this question by comparing the use of Yoneda Type Isomorphism with some well-known criteria for the finiteness/uniqueness of the inhabitants of a \emph{simple} type.

Let us start with a ``warm-up'' example.

\begin{example}\label{ex:broda}
In \cite{Broda2005} it is proved that the type $A=(((  (\cc X\To \mm Y)   \To \mm X\To   \cc Z)\To   ( \mm Y \To \cc Z )  \To   \mm W ) \To  (\cc Y\To \mm Z)\To \cc W$ has a unique inhabitant. Here's a quick proof of   
$\forall XYZW.A\equiv_{\bb Y}1$:

\medskip
%\resizebox{0.6\textwidth}{!}{
\begin{tabular}{l}
$ \forall XYZW. (((  (\cc X\To \mm Y)   \To \mm X\To   \cc Z)\To   ( \mm Y \To \cc Z )  \To   \mm W ) \To  (\cc Y\To \mm Z)\To \cc W$  \\
$ \equiv_{\cc W}
\forall XYZ . (\cc Y\To \mm Z) \To\Big (\big((\cc X\To \mm Y)\To \mm X\To \cc Z\big)\times \big ( \mm Y\To \cc Z\big)\Big)$
\\
$ \equiv_{\cc Z}
\forall XY . \big((\cc X\To \mm Y)\To \mm X\To \cc Y\big)\times \big ( \mm Y\To \cc Y\big)$
\\
$ \equiv_{\beta\eta}
\Big(\forall XY . \big((\cc X\To \mm Y)\To \mm X\To \cc Y\big)\Big)
\times
\Big(\forall Y . \mm Y\To \cc Y\Big)$
\\
$ \equiv_{\mm X}
\Big(\forall Y . \mm Y\To \cc Y\Big) \times \Big(\forall Y . \mm Y\To \cc Y\Big)
 \equiv_{\cc Y}
1\times 1 \equiv_{\beta\eta}1$
\end{tabular}
%}

\end{example}

% For this reason,
%as well as to provide a first demonstration of their potential application, we test them on two simple and well-known criteria for the unique inhabitation of a simple type: we will show that, whenever a simple type $A$ is inhabited and any of these criterion holds for $A$, then the universal closure of $A$ can be reduced to $1$ using $\beta\eta$-isomorphisms and Yoneda reduction.

The literature on counting simple type inhabitants is vast (e.g. \cite{Aoto1999,Broda2005,Salvati2011,Scherer2015}) and includes both 
complete algorithms and simpler \emph{sufficient conditions} for a given type to have a unique or finite number of inhabitants. 
The latter provide then an ideal starting point to test our axiomatic theory of type isomorphisms, 
as several of these conditions are based on properties like the number of positive/negative occurrences of variables.

We tested Yoneda type isomorphisms on two well-known sufficient conditions for unique inhabitation.
A simple type $A$ is \emph{balanced} when any variable occurring free in $A$ occurs exactly once as $\cc X$ and exactly once as $\mm X$.  $A$ is \emph{negatively non-duplicated} if no variable occurs twice in $A$ as $\mm X$. An inhabited simple type which is either balanced or negatively duplicated has exactly one inhabitant \cite{Aoto1999}. 
%
%For all types $A,B$ of $\Ntot$, we let $A\equiv_{ \beta\eta \mathsf Y}B$ indicate that $A$ can be converted to $B$ using usual $\beta\eta$-isomorphisms as well as $\equiv_{\cc X},\equiv_{\mm X}$.
The isomorphisms $\equiv_{ \mathsf Y}$ 
subsume these two conditions in the following sense:%
% $\beta\eta$ $\equiv_{\beta\eta\bb Y}$ of isomorphisms we 
% result using type isomorphisms (proof in App. \ref{app6}):  

\begin{restatable}{proposition}{unique}\label{prop:unique}
Let $A[X_{1},\dots, X_{n}]$ be an inhabited simple type with free variables $X_{1},\dots, X_{n}$.
If $A[X_{1},\dots, X_{n}]$ is either balanced or negatively non-duplicated, then $\forall  X_{1}\dots \forall X_{n}.A \equiv_{ \mathsf Y} 1$.
%
%\item  if $A[X_{1},\dots, X_{n}]$ is negatively non-duplicated, then $\forall  X_{1}\dots \forall X_{n}.A \equiv_{\beta\eta \mathsf Y} 1$.
%
%%Let $e(\vec X)\in \C E_{0}$ be a provable $\int$-free exponential expression.
%\begin{itemize}
%
%\item  if $A[X_{1},\dots, X_{n}]$ is balanced, then $\forall  X_{1}\dots \forall X_{n}.A \equiv_{\beta\eta \mathsf Y} 1$;
%\item  if $A[X_{1},\dots, X_{n}]$ is negatively non-duplicated, then $\forall  X_{1}\dots \forall X_{n}.A \equiv_{\beta\eta \mathsf Y} 1$.
%\end{itemize}
\end{restatable}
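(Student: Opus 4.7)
The plan is to proceed by induction on the size of $A$ (the number of arrows). Let $X$ be the head variable of $A = B_1 \To \dots \To B_k \To X$. Since $A$ is inhabited, a normal-form inhabitant $\lambda \vec y.\, y_i M_1 \dots M_m$ shows that the unique negative occurrence of $X$ must lie at the head of one argument $B_i = C_1 \To \dots \To C_m \To X$; by balancedness, $X$ is then absent from every other $B_l$ and from every $C_s$. Using the $\beta\eta$-iso $A \To \forall X.B \equiv \forall X.(A \To B)$ (for $X \notin FV(A)$) and argument permutation, I bring $\forall X$ to sit directly in front of $(B_i \To X)$ and apply $\equiv_{\cc X}$ once to obtain $C_1 \times \dots \times C_m$ (no $\mu$-prefix, as the $C_s$ are $X$-free). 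Distributing with the $\beta\eta$-isos $B \To (C \times D) \equiv (B \To C) \times (B \To D)$ and $\forall Y.(C \times D) \equiv (\forall Y.C) \times (\forall Y.D)$, the closure $\forall X_1 \dots X_n.A$ rewrites to a finite product of universally closed simple types, each strictly smaller than $A$, inhabited (by projecting the inhabitant of $A$), and negatively non-duplicated (Yoneda substitution cannot increase the number of negative occurrences of any remaining variable). The inductive hypothesis gives each factor $\equiv_{\mathsf Y} 1$, whence $1 \times \dots \times 1 \equiv_{\mathsf Y} 1$.

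\textbf{Negatively non-duplicated case.} Here the head may have several positive occurrences scattered across the $B_l$'s, so Yoneda on it need not apply directly. Instead I would pick another variable $Y$ that is \emph{eliminable} --- meaning that after pulling all $Y$-free arguments past $\forall Y$ by $\beta\eta$, every remaining $Y$-containing argument has $Y$ at its head --- and then proceed exactly as in the balanced case. I would argue that such a $Y$ always exists by tracing the spine of a normal-form inhabitant of $A$: the head variable of an innermost application, or equivalently a variable occurring at the head of some sub-argument at maximal depth, always qualifies.

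\textbf{Main obstacle.} The principal difficulty is the structural claim in the negatively non-duplicated case that an eliminable variable can always be found. Preservation of the inductive invariants (inhabitedness, negative non-duplication, and strict decrease in size) is comparatively routine: Yoneda substitution by a product of $X$-free types removes occurrences without adding new ones, and each factor inherits an inhabitant by projection. The spine-tracing argument for the existence of an eliminable variable requires a careful analysis of the tree structure of $A$ together with the head-variable chain of a normal-form inhabitant, and constitutes the main technical step.
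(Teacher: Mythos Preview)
Your balanced case is correct and matches the paper's argument.

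For the negatively non-duplicated case your diagnosis is off. You worry that extra positive occurrences of the head variable $X$ across the $B_l$'s block $\equiv_{\cc X}$, but they do not: the schema only requires the \emph{negative} occurrences of $X$ to sit at heads of top-level arguments, and by negative non-duplication together with inhabitation the unique negative occurrence of $X$ is precisely the head of the argument $B_i=C_1\to\dots\to C_m\to X$ selected by the normal inhabitant. All other (positive) occurrences are absorbed into the body $B\langle\cc X\rangle=\langle B_l\rangle_{l\neq i}\to X$. So the head variable is \emph{always} directly eliminable and there is no need to search for another one. In fact your stricter criterion (every $Y$-containing argument has $Y$ at its head) can fail for \emph{every} variable: in
\[
A \;=\; C_2 \To \big(((D\To X)\To C_1)\To C_2\big)\To Y\,\big) \To (Y\To X)\To X,
\]
inhabited by $\lambda c\,x_1\,x_2.\,x_2(x_1(\lambda e.c))$ and negatively non-duplicated, each of $X,Y,C_2$ lies in two top-level arguments and $C_1,D$ lie only in an argument whose head is $Y$.

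The genuine subtlety, which your plan to ``proceed exactly as in the balanced case'' misses, is that $X$ may occur positively inside the $C_s$, so $\equiv_{\cc X}$ yields a real fixed point $\alpha=\mu X.\prod_s C_s$. For instance
\[
A \;=\; \big(((X\To Z)\To Z')\To X\big)\To Z'\To X
\]
is inhabited (by $\lambda x_1\,z'.\,x_1(\lambda g.z')$), negatively non-duplicated, its head $X$ even satisfies your criterion, yet $\equiv_{\cc X}$ gives $Z'\to \mu X.((X\To Z)\To Z')$. The paper handles this by unfolding $\alpha$ once, distributing to factors of the form $\langle B_l[\alpha/X]\rangle_{l\neq i}\To C_s[\alpha/X]$, and then arguing that the inductive reduction of the \emph{un-substituted} simple type $\langle B_l\rangle_{l\neq i}\To C_s$ to $1$ uses Yoneda steps only on variables with exactly one negative occurrence---hence never on $X$, which has zero there---so the same sequence of steps still applies after substituting $\alpha$ for $X$. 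This substitution argument, not the search for an eliminable variable, is the actual technical core you are missing.
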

\begin{proof}%[Proof of Proposition \ref{prop:unique}]
Write $A$ as $\langle A_{k}\rangle_{k\in K}\To X_{u}$ and 
suppose $t= \lambda x_{1}.\dots x_{\sharp K}. x_{v} t_{1}\dots t_{p}$ is a closed term of type $A$ in  $\eta$-long $\beta$-normal form. This forces
$A_{v}= \langle A_{jv}\rangle_{j\in J_{v}}\To X_{u}$, with $\sharp J_{v}=p$, and 
implies that the terms $t_{j}^{*}=\lambda x_{1}\dots x_{\sharp K}. t_{j}$ have type
$\langle A_{k}\rangle_{k\in K}\To A_{jv}$.

We argue by induction on the size of $t$. 

\begin{itemize}
\item suppose $A$ is balanced. Since $X_{u}$ occurs positively in the rightmost position of $A$ and negatively in the rightmost position of $A_{v}$, it cannot occur in any other $A_{w}$, for $w\neq v$. We deduce that the variable $x_{v}$ cannot occur in any of the terms $t_{j}$.
Then, up to some $\beta\eta$-isomorphisms, we can write $\forall \vec X. A$ as 
$\forall \vec X_{v\neq u}.\forall X_{v}. \left( \langle A_{jv}\rangle_{j\in J_{v}}\To  \mm X_{u} \right) \To 
\left (\langle A_{w}\rangle_{w\neq v} \To \cc X_{u}\right)$, and by applying isomorphism $\equiv_{\cc X}$ we obtain the type 
$\forall \vec X_{v\neq u}. \langle A_{w}\rangle_{w\neq v} \To \prod_{j\in J_{v}}A_{jv} $.

Since the terms $\lambda x_{1}\dots \hat{x}_{v}\dots x_{\sharp K}.t_{j}$ have the balanced type
$A_{j}^{*}=\langle A_{w}\rangle_{w\neq v}\To A_{jv}$, by the induction hypothesis we deduce $A_{j}^{*}\equiv_{\bb Y}1$, and thus   we deduce 
$\forall \vec X_{v\neq u}. \langle A_{w}\rangle_{w\neq v} \To \prod_{j\in J_{v}}A_{jv} 
\equiv_{\beta\eta}
\forall \vec X_{v\neq u}. \prod_{j\in J_{v}}A_{j}^{*}
\equiv_{\beta\eta}
\prod_{j\in J_{v}}\forall \vec X_{v\neq u}A_{j}^{*}
\equiv_{\bb Y} \prod_{j\in J_{v}}1
\equiv_{\beta\eta}1
$.

\item Suppose $A$ is negatively non-duplicated; we will exploit the fact that $A$ admits at most one $\eta$-long $\beta$-normal term up to equivalence (see \cite{Aoto1999}).
We argue by induction that $A$ reduces to $1$ using $\beta\eta$-isomorphisms and isomorphisms of the form $\equiv_{X^{c}}$, where the variable $X$ occurs negatively exactly once.
 Since $X_{u}$ occurs negatively once in rightmost position of $A_{v}$, it can only occur positively in the $A_{jv}$ and negatively in the $A_{w}$ for $w\neq v$. 
 
 Moreover, we claim that $x_{v}$ cannot occur in the $t_{j}$. In fact, suppose some $t_{j}$ contains $x_{v}$. Since it is $\eta$-long  it contains a subterm of the form $x_{v}w_{1}\dots w_{p}$, where $w_{j}$ cannot be $\beta\eta$-equivalent to $t_{j}$. Then by replacing $w_{j}$ with $t_{j}$ in $t$ we obtain a new term $t_{1}$ of type $A$. It is clear that this process can be repeated yielding infinitely many distinct $\beta$-normal and $\eta$-long terms $t_{2},t_{3},\dots$ of type $A$. This is impossible because $A$ admits only one term up to $\beta\eta$-equivalence. 
%
% 
% Suppose $\Gamma,x:A\vdash t:X$ holds. If  $\Gamma=\FFun{\mm X}{\Gamma}$ and  $A= \langle\FFun{\mm X}{A_{i}}\rangle_{i}\To\cc X$, then $t=\lambda y_{1}.\dots .\lambda y_{p}.xu_{1}\dots u_{k}$ and if $x$ occurs in one of the $u_{i}$, then there exist infinitely many distinct $\beta$-normal and $\eta$-long terms $v$ such that  $\Gamma,x:A\vdash v:X$ holds.
% 
%In fact, if $u_{i}$ contains $x$, then since it is $\eta$-long%
% 
%  Moreover, $x_{v}$ cannot occur in the $t_{j}$, as a consequence of Lemma \ref{lemma:occ}.

Hence, up to some $\beta\eta$-isomorphisms, we can write $\forall \vec X. A$ as 
$\forall \vec X_{w\neq u}.\forall X_{v}. \left( \langle \FFun{\cc X_{u}}{A_{jv}}\rangle_{j\in J_{v}}\To \mm X_{u} \right) \To 
\left (\langle \FFun{\cc X_{u}}{A_{w}}\rangle_{w\neq v} \To \cc  X_{u}\right)$, and by applying $\equiv_{\cc X}$ we obtain the type 
$\forall \vec X_{w\neq u}. \langle \FFun{\cc X_{u}\mapsto \alpha}{A_{w}}\rangle_{w\neq v} \To\alpha$, where 
$\alpha=\mu X_{u}. \prod_{j\in J_{v}}\FFun{\cc X_{u}}{A_{jv}} $,
 which is $\beta\eta$-isomorphic to
$\forall \vec X_{w\neq u}. \langle \FFun{\cc X_{u}\mapsto \alpha}{A_{w}}\rangle_{w\neq v} \To \Pi_{j\in J_{v}}\FFun{\cc X\mapsto \alpha}{A_{jv}}$, and finally to  
  $\prod_{j\in J_{v}}\left(\forall \vec X_{w\neq u}. \langle \FFun{\cc X_{u}\mapsto \alpha}{A_{w}}\rangle_{w\neq v} \To \FFun{\cc X\mapsto \alpha}{A_{jv}}\right)$.

Since the terms $\lambda x_{1}\dots \hat{x}_{v}\dots x_{\sharp K}.t_{j}$ have the negatively non-duplicated type
$\FFun{\cc X_{u}}{A_{j}^{*}}=\langle \FFun{\cc X_{u}}{A_{w}}\rangle_{w\neq v}\To \FFun{\cc X_{u}}{A_{jv}}$, by the induction hypothesis we deduce $\forall \vec X.\FFun{\cc X_{u}}{A_{j}^{*}}\equiv_{\bb Y}1$
using $\beta\eta$-isomorphisms and isomorphisms of the form $\equiv_{X_{w}^{c}}$, where $w\neq v$, so we deduce that also $\forall \vec X_{w\neq u}\FFun{\cc X_{u}\mapsto \alpha}{A_{j}^{*}}\equiv_{\bb Y}1$ holds. We can then conclude
$\forall \vec X.A\equiv_{\bb Y} \prod_{j\in J_{v}}\forall \vec X_{v\neq u}\FFun{\cc X_{u}\mapsto \alpha}{A_{j}^{*}}
\equiv_{\bb Y} \prod_{j\in J_{v}}1
\equiv_{\beta\eta}1$.
\end{itemize}
\end{proof}

Observe that, since the type in Example \ref{ex:broda} is neither balanced nor negatively non-duplicated, type isomorphisms
 provide a stronger condition than the two above.

We tested another well-known property, dual to one of the previous ones: a simple type $A$ is \emph{positively non-duplicated} if no variable occurs twice in $A$ as $\cc X$. A positively non-duplicated simple type has a finite number of proofs \cite{Broda2005}.
We reproved this fact using type isomorphisms, but this time only in a restricted case.
Let the \emph{depth} $d(A)$ of a simple type $A$ be defined by $d(X)=0$, $d(A\To B)= \max\{d(A)+1, d(B)\}$.

\begin{restatable}{proposition}{positive}\label{prop:many}
Let $A[X_{1},\dots, X_{n}]$ be an inhabited simple type with free variables $X_{1},\dots, X_{n}$.
If $A$ is positively non-duplicated and $d(A)\leq 2$, then 
$\forall X_{1}\dots \forall X_{n}.A \equiv_{ \mathsf Y} 0+1+\dots+ 1$.
\end{restatable}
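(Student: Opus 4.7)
My plan is to follow the inductive structure of the proof of Proposition~\ref{prop:unique}, exploiting the restricted shape of depth-$\leq 2$ types. I will write $A = \langle A_{i}\rangle_{i\in I}\To X_{u}$, where each $A_{i}$ is either a bare variable or of the form $\langle Y_{i\ell}\rangle_{\ell}\To X'_{i}$ with all $Y_{i\ell}, X'_{i}$ type variables. Let $K^{*}_{X_{u}} = \{i\in I : \operatorname{codom}(A_{i}) = X_{u}\}$. By positively non-duplicated, the unique positive occurrence of $X_{u}$ in $A$ is the rightmost one, so $X_{u}$ cannot appear as any antecedent $Y_{i\ell}$; consequently it occurs only in the hypotheses belonging to $K^{*}_{X_{u}}$ (as their codomain or as the bare $A_{i}=X_{u}$). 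Using $\beta\eta$-isomorphisms to push $\forall X_{u}$ inward past the hypotheses that do not mention $X_{u}$ and then applying $\equiv_{\cc X_{u}}$, the first step reduces $\forall \vec{X}.A$ to
\[
\forall \vec{X}\setminus\{X_{u}\}.\;\langle A_{i}\rangle_{i\notin K^{*}_{X_{u}}} \To \sum_{i\in K^{*}_{X_{u}}}\prod_{\ell}Y_{i\ell},
\]
each bare $A_{i}=X_{u}$ contributing the empty product $1$, and no $\mu X_{u}$ is introduced since $X_{u}$ does not occur in any $Y_{i\ell}$.

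The second step iterates this elimination on the positive variables appearing in the new conclusion. The key structural observation, which I expect to be the main technical obstacle, is that the dependency graph on the positive variables of $A$ whose edges are $Y\to Y'$ whenever $Y' = Y_{v\ell}$ and $\operatorname{codom}(A_{v}) = Y$ is a tree rooted at $X_{u}$ when restricted to the reachable set $R$. Indeed, positively non-duplicated forces each positive variable $Y'\neq X_u$ to have a unique antecedent position $Y_{v'\ell'}$, so $Y'$ has at most one in-edge (from $\operatorname{codom}(A_{v'})$), and since $X_{u}$ has no in-edge (being the rightmost variable, which is not of the form $Y_{v\ell}$), the subgraph reachable from $X_{u}$ cannot contain cycles. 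Acyclicity ensures that iteratively applying $\equiv_{\cc Y}$ for each $Y\in R$ (in breadth-first order from $X_{u}$, and pushing the not-yet-consumed irrelevant hypotheses outward each time) substitutes, at each step, the unique positive occurrence of $Y$ in the current conclusion by $\sum_{v}\prod_{\ell} Y_{v\ell}$, and the process terminates with a closed natural-number expression $n = 0+1+\cdots+1$ whose value coincides with the count $|A| = \sum_{v}\prod_{\ell} N(Y_{v\ell})$ of $\eta$-long $\beta$-normal inhabitants computed by the polynomial recursion discussed after the introduction of this section.

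Finally, the remaining quantifiers and "irrelevant" hypotheses (those whose codomains are variables not in $R$) are handled uniformly: for each such variable $Y$, applying $\equiv_{\cc Y}$ or $\equiv_{\mm Y}$ to the current conclusion $n$ leaves $n$ unchanged (since $Y$ does not occur in it), which collapses all associated clauses and removes $\forall Y$. The invariant that each substitution preserves positively non-duplicated is maintained because the rewrite $Y\mapsto \sum_{v}\prod_{\ell} Y_{v\ell}$ only moves the unique positive occurrence of each $Y_{v\ell}$ from its original antecedent position into the conclusion, so no new positive duplication is created. The technical heart of the argument is thus the tree-structure lemma above, which guarantees that the finite iteration of $\equiv_{\cc Y}$-eliminations extracts a well-defined finite sum $0+1+\cdots+1$ matching the number of inhabitants of $A$.
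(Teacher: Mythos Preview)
Your two-phase strategy (first eliminate the variables reachable from $X_u$, then the ``irrelevant'' ones) is essentially the paper's proof, which strengthens the claim to \emph{quasi-polynomials} $\langle A_k\rangle_{k}\To \sum_i\prod_j u_{ij}$ (with each $A_k$ of depth $\leq 1$ and $u_{ij}$ a variable or $0,1$) and inducts on the number of variables among the $u_{ij}$; your tree/BFS argument is an alternative packaging of that induction, and your observation that positive non-duplication forces in-degree $\leq 1$ in the dependency graph is exactly what guarantees no $\mu$ appears in the first phase. The base case where no variable remains in the conclusion is handled in the paper by a separate Lemma~\ref{lemma:depth}, which corresponds to your ``irrelevant hypotheses'' phase.

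That phase is where your write-up needs tightening. Applying $\equiv_{\cc Y}$ does not ``collapse all associated clauses'': it removes only the hypotheses whose codomain is $Y$, and the resulting substitution $\theta$ (which may well be a $\mu$-type, e.g.\ if some hypothesis has shape $Y\To Y$) is pushed into the at-most-one remaining hypothesis where $Y$ sits as an antecedent. What makes the phase terminate is that the number of free variables strictly decreases; what keeps the conclusion equal to $n$ is simply that no irrelevant variable occurs in $n$. The paper isolates this as Lemma~\ref{lemma:depth} precisely because the intermediate hypotheses become $\To$-free $\NImu$-types rather than simple types, and one must check the schema still applies. Finally, there is no role for $\equiv_{\mm Y}$ here: every irrelevant $Y$ has its (at most one) positive occurrence as an antecedent and none in $n$, so $\equiv_{\cc Y}$ always fires.
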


We need the following lemma:
\begin{lemma}\label{lemma:depth}
Suppose  $A= \langle A_{i}\To X_{i}\rangle_{i\in L}\To U$ has free variables $X_{1},\dots, X_{n}$, where the $A_{i}$ are $\To$-free $\NImu$-types and $U$ is a closed type constructed using only 0 and 1. Then $\forall  X_{1}\dots \forall X_{n}.A\equiv_{\bb Y}U$. 

\end{lemma}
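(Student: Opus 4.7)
The plan is to argue by induction on $n$, peeling off one universal quantifier at a time using the Yoneda isomorphism $\equiv_{\cc X}$ of Fig.~\ref{fig:yonedazz}. The base case $n=0$ is immediate: every head variable $X_{i}$ with $i\in L$ belongs to the free variables of $A$, so $n=0$ forces $L=\emptyset$ and already $A=U$.

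For the inductive step I single out $X_{1}$ and split $L=L_{1}\sqcup L'$ with $L_{1}=\{i\in L\mid X_{i}=X_{1}\}$. Using the $\beta\eta$-commutation $A\To B\To C\equiv B\To A\To C$ from Fig.~\ref{fig:isond}, I bring the arguments whose head is $X_{1}$ to the front, rewriting $\forall X_{1}\dots X_{n}.A$ as $\forall X_{1}.\forall X_{2}\dots X_{n}.\langle A_{i}\To X_{1}\rangle_{i\in L_{1}}\To C$, where $C=\langle A_{i'}\To X_{i'}\rangle_{i'\in L'}\To U$. The only real prerequisite for applying $\equiv_{\cc X}$ is the polarity bookkeeping: since each $A_{i}$ and $A_{i'}$ is $\To$-free, all of its $X_{1}$-occurrences are positive there, and the two left-of-$\To$ inversions place them in positive position of the whole type, matching the patterns $\FFun{\cc X_{1}}{A_{i}}$ and $\FFun{\cc X_{1}}{C}$; meanwhile the head $X_{1}$ in each $A_{i}\To X_{1}$ supplies the required $\mm X_{1}$. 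Thus $\equiv_{\cc X}$ applies (with $\vec Y=X_{2},\dots,X_{n}$, empty $\vec Z_{k}$, and a single inner index $j$ per $k$), yielding $\forall X_{2}\dots X_{n}.\,C[X_{1}\mapsto\sigma]$ with $\sigma=\{\mu X_{1}.\}\sum_{i\in L_{1}}A_{i}$. Because $\sigma$ is again a $\To$-free $\NImu$-type and $U$ is closed (so untouched by the substitution), the outcome reads $\forall X_{2}\dots X_{n}.\langle A_{i'}[\sigma/X_{1}]\To X_{i'}\rangle_{i'\in L'}\To U$, once more of the form assumed by the lemma but with strictly fewer free variables. The induction hypothesis (after dropping any vacuous $\forall X_{j}$ via $\forall X.A\equiv A$ for $X\notin FV(A)$) then delivers $\equiv_{\bb Y}U$.

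The main point requiring care is the handling of the degenerate cases. When $L_{1}=\emptyset$, the sum $\sigma$ is the empty sum $0$ (and the $\mu X_{1}$ wrapper is dropped), and $\equiv_{\cc X}$ is applied with zero $\mm X_{1}$-headed arguments; the net effect is simply to substitute $0$ for the (necessarily positive) occurrences of $X_{1}$ in $C$, and the resulting type still has the shape required to invoke the induction hypothesis. When $L'=\emptyset$, the substitution leaves only $\forall X_{2}\dots X_{n}.U$, which collapses to $U$ via the vacuous-quantifier isomorphism of Fig.~\ref{fig:isond}. Once these corner cases are dispatched, the induction proceeds as mechanical bookkeeping of polarities and substitutions.
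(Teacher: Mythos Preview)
Your proof is correct and follows essentially the same induction on $n$ as the paper, applying $\equiv_{\cc X}$ to peel off one variable at a time and observing that the substituted type $\sigma$ is again a $\To$-free $\NImu$-type, so the shape is preserved. The only difference is cosmetic: the paper eliminates the variable $X_{i_0}$ that heads the \emph{first} element of $L$, which guarantees $L_1\neq\emptyset$ and thus never needs the empty-$k$ instance of $\equiv_{\cc X}$; you instead always eliminate $X_1$ and handle the degenerate case $L_1=\emptyset$ separately (substituting $0$).
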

\begin{proof}
We argue by induction on the number of variables $X_{1},\dots, X_{n}$ freely occurring in $A$.
If $n$ is 0 then the list  $L$ is empty, so the claim is obvious. Otherwise, suppose $L$ has a first element $i_{0}$, so that 
$A$ can be written, up to some $\beta\eta$-isomorphism, as 
$ 
\langle \FFun{\cc X_{i_{0}}}{A_{u}}\To\mm X_{i_{0}}\rangle_{u}\To
\langle   \FFun{\cc X_{i_{0}} }{B_{v}}\To\mm Y_{v}\rangle_{v} \To  U$
where the $ \FFun{\cc X_{i_{0}} }{A_{u}}$ and  $ \FFun{\cc X_{i_{0}} }{A_{u}}$ are $\To$-free types  and $ Y_{v}\neq X$.
Then $\forall  X_{1}\dots \forall X_{n}.A\equiv_{\cc X_{i_{0}}}
\forall X_{1}\dots \widehat{\forall X}_{i_{0}}\dots \forall X_{n}.
\langle   \FFun{\cc X_{i_{0}} }{B_{v}}\theta\To\mm Y_{v}\rangle_{v} \To  U$, where
$\theta: \cc X_{i_{0}}\mapsto \mu X_{i_{0}}. \sum_{u}\FFun{\cc X_{i_{0}}}A_{u}$, so we can conclude by the induction hypothesis.
\end{proof}

\begin{proof}[Proof of Proposition \ref{prop:many}]
We will establish the following stronger claim. Let a \emph{quasi-polynomial} be a type of the form $\langle A_{k}\rangle_{k\in K}\To
\sum_{i=1}^{p}\prod_{j=1}^{p_{i}} u_{ij}$, where $p, p_{i}\geq 1$, the $A_{k}$ are simple types of depth $\leq 1$ (hence of the form $\langle X_{v_{k}}\rangle_{v_{k}}\To X_{k}$) and 
the $u_{ij}$ are either variables or 0,1. We claim that for any quasi-polynomial  $P[X_{1},\dots, X_{n}]$, if $P$ is positively non-duplicated, then  $\forall X_{1}\dots \forall X_{n}.P\equiv_{\bb Y}0+1+\dots +1$.

We argue by induction on the number of variables occurring among the $u_{ij}$.
%, with a sub-induction on the number of the $A_{k}$.

If all $u_{ij}$ are either $0$ or $1$, then any $A_{k}=\langle X_{v_{k}}\rangle_{v_{k}}\To X_{k}$ is $\beta\eta$-isomorphic to  $(\Pi_{v_{k}}X_{v_{k}})\To X_{k}$ so we can apply Lemma \ref{lemma:depth} yielding $\forall X_{1}\dots \forall X_{n}.P\equiv_{\bb Y}
\sum_{i=1}^{p}\prod_{j=1}^{p_{i}} u_{ij}\equiv_{\beta\eta}
0+1+\dots +1$.

Suppose now $u_{i_{0}j_{0}}=Z$. Since $P$ is positively non-duplicated, we can write $P$, up to some $\beta\eta$-isomorphisms, as 
$
\langle \langle \cc Z_{lm}\rangle_{l}\To \mm Z\rangle_{m}\To \langle B_{m'}\rangle_{m'}\To \sum_{i=1}^{p}\prod_{j=1}^{p_{i}} u_{ij}
$
where none of $Z_{lm}$ coincides with $Z$ and $Z$ does not occur in any of the $B_{m'}$. Then we deduce 
$\forall X_{1}\dots \forall X_{n}.P \equiv_{\cc Z}
\forall X_{1}\dots \widehat{\forall Z}\dots \forall X_{n}.
 \langle B_{m'}\rangle_{m'}\To \sum_{i=1}^{p}\prod_{j=1}^{p_{i}} u_{ij}\theta
$, where $\theta: \cc Z \mapsto \sum_{m}\prod_{l}\cc Z_{lm}$. Observe that the type  $\langle B_{m'}\rangle_{m'}\To \sum_{i=1}^{p}\prod_{j=1}^{p_{i}} u_{ij}\theta$
 is positively non-duplicated, has one variable less than $P$, and moreover, reduces using $\beta\eta$-isomorphisms to a type of the form $\langle B_{m'}\rangle_{m'}\To \sum_{i=1}^{p'}\prod_{j=1}^{p'_{i}} v_{ij}$, where the $v_{ij}$ are among $0,1,X_{1},\dots, \widehat Z, \dots, X_{n}$. We can thus apply the induction hypothesis.
\end{proof}

%\begin{remark}
%
%The approach with Yoneda equalities is stronger than the two conditions above. For instance, the type 
%$(\cc Z\To \mm Y)\To \mm Z\To \mm Z\To \mm X\To \cc X$ is neither balanced nor negatively non-duplicated, but we can compute as below: 
%\begin{align*}
%\forall XYZ.(\cc Z\To \mm Y)\To \mm Z\To \mm Z\To \mm X\To \cc X
%& \equiv_{\mm Z}
%\forall XY. \mm Y \To \mm Y \To \mm X\To \cc X
%\\
%& \equiv_{\cc Y}
%\forall X.  \mm X\To \cc X
%\equiv_{\cc X}
%1
%\end{align*}
%\end{remark}

\section{From Polymorphic Types to Polynomial Trees}

%%\subparagraph*{The interaction between Yoneda type isomorphisms and $\beta\eta$-isomorphisms}
%As we observed, the left-to-right orientations of 
%Yoneda type isomorphisms can be read as rewriting rules over $\Ntot$-types which \emph{eliminate} a quantifier. 

%Our goal in the next sections will be to investigate this rewriting.

As we already observed, read from left to right, the schemas $\equiv_{\cc X},\equiv_{\mm X}$ yield rewriting rules over $\Ntot$-types which \emph{eliminate} occurrences of polymorphic quantifiers. 
Yet, a major obstacle to study this rewriting is that the application of $\equiv_{\cc X},\equiv_{\mm X}$ might depend on the former application of $\beta\eta$-isomorphisms (as we did for instance in the previous section). Already for the propositional fragment $\NI$, the $\beta\eta$-isomorphisms are not finitely axiomatizable and it is not yet clear if a decision algorithm exists at all (see \cite{DiCosmo, Ilik}).
This implies in particular that a \emph{complete} criterion for the conversion of a $\Nd$-type to a monomorphic (or even finite) type can hardly be computable.

For this reason, we will restrict our goal to establishing some efficiently recognizable (in fact, polytime) \emph{sufficient conditions} for quantifier-elimination. Moreover, 
we will exploit the well-known fact that the constructors $0,1,+,\times,\mu,\nu$ can be encoded inside $\Nd$ to describe our rewriting entirely within (a suitable representation of) $\Nd$-types, for which $\beta\eta$-isomorphisms can be finitely axiomatized \cite{DiCosmo2005}.

% the type isomorphisms from Fig.~\ref{fig:isomorphisms} as isomorphisms of \emph{pure} $\Nd$-types, and we will take the latter as the starting point
%to describe our rewriting.
%
% in $\Nd$ (which we recall in App.~\ref{app2}). 
%
%
%we will restrict our attention to \emph{pure} $\Nd$-types, replacing the constructors $0,1,+,\times,\mu,\nu$ in  $\equiv_{\cc X},\equiv_{\mm X}$ with their well-known \emph{encodings} in $\Nd$ (which we recall in App.~\ref{app2}). 
%
%that will not depend on the application of $\beta\eta$-isomorphisms.

Even if one restricts to $\Nd$-types, recognizing if one of the schemas $\equiv_{\cc X}, \equiv_{\mm X}$ applies to a $\Nd$-type $\forall X.A$ might still require to first apply some $\beta\eta$-isomorphisms.  For example, consider the $\Nd$-type $A=\forall X.  (( \mm Y\To \cc X)\To \mm Y)\To (\cc Y\To  \mm X)\To \cc Y$.
In order to eliminate the quantifier $\forall X$ using $\equiv_{\cc X}$, we first need to apply the $\beta\eta$-isomorphism $A\To (B\To C)\equiv_{\beta\eta} B\To (A\To C)$, turning $A$ into 
 $\forall X.   (\cc Y\To  \mm X)\To(( \mm Y\To \cc X)\To \mm Y)\To  \cc Y$, which is now of the form
 $\forall X. (\FFun{\cc X}{B}\To \mm X)\To \FFun{\cc X}{C}$, with $\FFun{\cc X}{B}=\cc Y$ and 
 $\FFun{\cc X}{C}=((\mm Y\To \cc X)\To \mm Y)\To \cc Y$. We can then apply $\equiv_{\cc X}$, yielding
 $((\mm Y\To \cc Y)\To \mm Y)\To \cc Y$.

%Fortunately, unlike for $\Ntot$, the $\beta\eta$-isomorphisms for $\Nd$ have a simple and finite axiomatization \cite{DiCosmo2005}, that 
To obviate this problem, we introduce below a representation of $\Nd$-types as labeled trees so that $\beta\eta$-isomorphic types are represented by the same tree. In the next section we will 
reformulate the schemas $\equiv_{\cc X},\equiv_{\mm X}$ as reduction rules over such trees.
This approach drastically simplifies the study of this rewriting, 
and will allow us to establish conditions for quantifier-elimination based on elementary graph-theoretic properties.

\subsection{A $\beta\eta$-invariant representation of $\Nd$-types}

We introduce a representations of $\Nd$-types as rooted trees whose leaves are labeled by \emph{colored} variables, with {colors} being any $c\in \mathsf{Colors}=\{\cc{\mathrm{blue}}, \mm{\mathrm{red}}\}$. We indicate such variables as either $X^{c}$ or simply as $\cc X, \mm X$. Moreover, we indicate as $\OV c$ the unique color different from $c$. 

By a rooted tree we indicate a finite connected acyclic graph with a chosen vertex, called its root. 
If $(\D G_{i})_{i\in I}$ is a finite family of rooted trees, we indicate as 
$
\begin{tikzpicture}[baseline=-12]
\node (a) at (0,0) {\small$\big\{ \D G_{i}\big\}_{i\in I}$};

\node (b) at (0,-0.8) {\small$u$};

\draw(a) to (b);

\end{tikzpicture}
$
the tree with root $u$ obtained by adding an edge from any of the roots of the trees $\D G_{i}$ to $u$.

\begin{definition}%[$\Nd$-trees]
 %\emph{Positive} and \emph{negative $\Nd$-trees} are labeled rooted trees defined by mutual induction as follows:
%\begin{itemize}
The sets $\cc{\C E}$ and $\mm{\C E}$ of \emph{positive} and \emph{negative $\Nd$-trees} are inductively defined by:

\adjustbox{scale=0.85, center}{$\cc{\bb E} \quad := \quad  
%	\begin{tikzpicture}[baseline=-6]
%	\node(a) at (0,0) {\small $\cc{\bb E}'[\cc X]$};
%	\node (b) at (0,-0.7) {\small$\mu X$};
%	\draw (a) to (b);
%	\end{tikzpicture} 
%\quad \Big | \quad 
%	\begin{tikzpicture}[baseline=-6]
%	\node(a) at (0,0) {\small $\cc{\bb E}'[\cc X]$};
%	\node (b) at (0,-0.7) {\small$\nu X$};
%	\draw (a) to (b);
%	\end{tikzpicture} 
%\quad \Big | \quad
	\begin{tikzpicture}[baseline=-6]
	\node(a) at (0,0) {\small $\Big \{ \mm{\bb E}_{i}\Big \}_{i\in I}$};
	\node (b) at (1.4,0) {\small$ \cc X$};
	\node (c) at (0.7,-0.8) {\small$\vec{Y}$};
	\draw (a) to (c);
	\draw (b) to (c);
	\end{tikzpicture} 
	\qquad\qquad\qquad\qquad
	\mm{\bb E} \quad := \quad  
%	\begin{tikzpicture}[baseline=-6]
%	\node(a) at (0,0) {\small $\mm{\bb E}'[\mm X]$};
%	\node (b) at (0,-0.7) {\small$\mu X$};
%	\draw (a) to (b);
%	\end{tikzpicture} 
%\quad \Big | \quad
%	\begin{tikzpicture}[baseline=-6]
%	\node(a) at (0,0) {\small $\mm{\bb E}'[\mm X]$};
%	\node (b) at (0,-0.7) {\small$\nu X$};
%	\draw (a) to (b);
%	\end{tikzpicture} 
%\quad \Big | \quad
	\begin{tikzpicture}[baseline=-6]
	\node(a) at (0,0) {\small $\Big \{ \cc{\bb E}_{i}\Big \}_{i\in I}$};
	\node (b) at (1.4,0) {\small$ \mm X$};
	\node (c) at (0.7,-0.8) {\small$\vec{Y}$};
	\draw (a) to (c);
	\draw (b) to (c);
	\end{tikzpicture} 		
	$}
	
	\noindent
where $X$ is a variable, $\vec{Y}$ indicates a finite set of variables, and the edge  in $\cc{\bb E}$ (resp. $\mm{\bb E}$) with label $\cc X$ (resp. $\mm X$) is called the \emph{head} of $\cc{\bb E}$ (resp. of $\mm{\bb E}$).	The trees \adjustbox{scale=0.8}{$\begin{tikzpicture}[baseline=-12]
		\node(a0) at (-0.5,0) {\small $\{ \ \}_{ \emptyset}$};
	\node(a) at (0.5,0) {\small $\cc X$};
	\node (b) at (0,-0.7) {\small$\emptyset$};
	\draw (a) to (b);
		\draw (a0) to (b);

	\end{tikzpicture}$}
and 	
\adjustbox{scale=0.8}{$\begin{tikzpicture}[baseline=-12]
		\node(a0) at (-0.5,0) {\small $\{ \ \}_{ \emptyset}$};
	\node(a) at (0.5,0) {\small $\mm X$};
	\node (b) at (0,-0.7) {\small$\emptyset$};
	\draw (a) to (b);
		\draw (a0) to (b);

	\end{tikzpicture} $}
	are indicated simply as $\cc X$ and $\mm X$.

\end{definition}

 \emph{Free} and \emph{bound} 
%and \emph{$\mu$-bound} 
variables of a tree 
$\bb E= $ \adjustbox{scale=0.8}{$
\begin{tikzpicture}[baseline=-12]
	\node(a) at (0,0) {\small $\Big \{ \bb E_{i}\Big \}_{i\in I}$};
	\node (b) at (1.4,0) {\small$ X$};
	\node (c) at (0.7,-0.7) {\small$\vec{Y}$};
	\draw (a) to (c);
	\draw (b) to (c);
	\end{tikzpicture}  $} are defined by $
\FV{\bb E}= \bigcup_{i=1}^{n}\FV{\bb E_{i}}\cup \{X\}- \vec{Y}$ and $\BV{\bb E}= \bigcup_{i=1}^{n}\BV{\bb E_{i}}\cup \vec{Y}$.
%We identify trees up to renaming of bound variables.

%
%\caption{Free, bound and $\mu$-bound variables.}
%\label{fig:bound}
%\end{figure}

%\subparagraph*{From $\Nd$-Types to $\Nd$-trees}

We can associate a positive and a negative $\Nd$-tree to any $\Nd$-type as follows.
Let us say that a type $A$ of $\Nd$ is \emph{in normal form} (shortly, \emph{in NF}) if $A= \forall \vec Y.A_{1}\To  \dots \To  A_{n} \To X$ where each of the variables in $\vec Y$ occurs in $A_{1}\To\dots \To A_{n}\To X$ at least once, and the types $A_{1},\dots, A_{n}$ are in normal form.

We will show that the tree-representation of $\Nd$-types captures $\beta\eta$-isomorphism classes, in the sense that 
$A\equiv_{\beta\eta} B$ iff $\cc{\mathsf t}(A)= \cc{\mathsf t}(B)$. For instance, the two $\beta\eta$-isomorphic types 
$\forall XY. (\cc X\To \mm X)\To  (\forall Z.\cc X\To \mm Z)\To (\cc Y\To  \mm X)\To \cc Y$ and 
 $\forall X.   (\cc X\To \mm X)\To \forall Y. (\cc Y\To  \mm X)\To  (\cc X\To\forall Z. \mm Z)\To \cc Y$
 translate into the same $\Nd$-tree, shown in Fig.~\ref{fig:root3} (where underlined node labels and dashed edges can be ignored, for now). 

The $\beta\eta$-isomorphisms of $\Nd$ are completely axiomatized as shown in Fig. \ref{fig:isond}. A type $A$ of $\Nd$ is \emph{in normal form} (shortly, \emph{in NF}) if $A= \forall \vec Y.A_{1}\To  \dots \To  A_{n} \To X$ where each variable in $\vec Y$ occurs at least once in $A_{1}\To  \dots \To  A_{n} \To X$, and the types $A_{1},\dots, A_{n}$ are in normal form. Let $\bb{NF}(\Nd)$ be the set of $\Nd$-types in NF.

Given $A, B\in \bb{NF}(\Nd)$, with $A= \forall \vec Y.A_{1}\To  \dots \To  A_{n} \To X$ and 
$B= \forall \vec Z.B_{1}\To  \dots \To  B_{m} \To X'$, let $A \sim B$ when $n=m$, $\vec Y=\vec Z$, $X=X'$ and there exists 
$\sigma\in \F S_{n}$ such that $A_{i}\sim B_{\sigma(i)}$ for $i=1,\dots,n$. 

\begin{lemma}\label{lemma:nf}
\begin{itemize}
\item[i.] For all $A\in \Nd$, there exists a type $\bb{NF}(A)\in \bb{NF}(\Nd)$ such that $A\equiv_{\beta\eta}\bb{NF}(A)$.
\item[ii.] For all $A,B\in \Nd$, $A\equiv_{\beta\eta} B$ iff $\bb{NF}(A) \sim \bb{NF}(B)$.

\end{itemize}
\end{lemma}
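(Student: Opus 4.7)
My plan is to prove (i) by structural induction on $A$, exploiting the axioms of Fig.~\ref{fig:isond}, and then to deduce (ii) by combining (i) with the completeness of this axiomatization for $\beta\eta$-isomorphism of $\Nd$ (Bruce--Di Cosmo--Longo, see \cite{DiCosmo2005}). For (i), the atomic case $A=X$ is immediate. In the case $A=\forall X.B$, I would invoke the IH to obtain $B\equiv_{\beta\eta}\bb{NF}(B)$ and then set $\bb{NF}(A) := \forall X.\bb{NF}(B)$ whenever $X$ occurs free in $\bb{NF}(B)$, and $\bb{NF}(A) := \bb{NF}(B)$ otherwise (using the vacuous-quantifier axiom $A\equiv \forall X.A$). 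In the case $A = B\To C$, the IH gives $B\equiv_{\beta\eta}\bb{NF}(B)$ and $C\equiv_{\beta\eta}\bb{NF}(C)= \forall \vec Y.C_{1}\To\dots\To C_{n}\To Z$; after $\alpha$-renaming $\vec Y$ to make it disjoint from the free variables of $\bb{NF}(B)$, iterated applications of the extrusion axiom $A\To \forall X.B\equiv \forall X.(A\To B)$ would yield $B\To C\equiv_{\beta\eta}\forall \vec Y.(\bb{NF}(B)\To C_{1}\To\dots\To C_{n}\To Z)$, which is again in NF since the $\vec Y$ still occur in the enlarged body and all subtypes are in NF by construction.

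For (ii), the direction $\bb{NF}(A)\sim\bb{NF}(B) \Rightarrow A\equiv_{\beta\eta} B$ is immediate: the permutations of the $C_i$ captured inductively by $\sim$ are all witnessed by iterated applications of $A\To(B\To C)\equiv B\To(A\To C)$, together with the analogous axiom $\forall X.\forall Y.A\equiv \forall Y.\forall X.A$ for quantifiers and a congruence argument under $\forall$ and $\To$. Combining this with (i) yields $A\equiv_{\beta\eta}\bb{NF}(A)\equiv_{\beta\eta}\bb{NF}(B)\equiv_{\beta\eta} B$.

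For the converse, I would invoke the completeness theorem of \cite{DiCosmo2005}: every $\beta\eta$-isomorphism between $\Nd$-types is derivable from the axioms of Fig.~\ref{fig:isond}. The strategy is to show that on NF-types the vacuous-quantifier and extrusion axioms become trivial (the NF shape already fixes the position of the quantifier prefix relative to the body), so all the remaining freedom comes from the two permutation axioms, which generate exactly $\sim$. The hard part will be precisely this last step: turning an arbitrary axiomatic derivation into one whose intermediate types are all NFs requires a confluence-style analysis, and the cleanest route is to extract this directly from the normal-form analysis inside \cite{DiCosmo2005}. A self-contained alternative would distinguish non-$\sim$-equivalent NFs by semantic means (e.g.\ by finding separating interpretations in finite cartesian or cartesian closed categories), but this would essentially reproduce the argument already present in the cited reference.
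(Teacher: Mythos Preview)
Your proposal is correct and follows essentially the same route as the paper: part (i) by pushing quantifiers outward via the extrusion and vacuous-quantifier axioms, and part (ii) by invoking the completeness of the axiomatization in Fig.~\ref{fig:isond} (citing \cite{DiCosmo2005}) and observing that the remaining permutation axioms generate exactly~$\sim$. Your caution about a ``confluence-style analysis'' is somewhat overstated: the paper dispatches the hard direction in one line, since axioms (3) and (4) are precisely the normalization rules (hence leave $\bb{NF}(\cdot)$ unchanged), and axioms (1) and (2), applied under congruence at any depth, visibly act as the permutations defining~$\sim$.
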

\begin{proof}
Claim $i.$ is a consequence of applying the isomorphisms $\forall X.A\equiv_{\beta\eta} A$ when $X$ does not occur in $A$ and the isomorphisms 
 $A\To \forall X.B \equiv_{\beta\eta} \forall X. A\To B$ to push quantifiers in leftmost position. Claim $ii.$ follows then since the only other isomorphism rule is the permutation $A\To (B \To C) \equiv_{\beta\eta} B\To (A \To C)$.
\end{proof}

The definition of the tree-type $\cc{\mathsf t}(A)$ only depends on $\bb{NF}(A)$. Moreover, it can be checked by induction on $A$ that $\bb{NF}(A)\sim \bb{NF}(B)$ holds iff $\cc{\mathsf t}(A)=\cc{\mathsf t}(B)$. Hence, we deduce from Lemma \ref{lemma:nf} that $A\equiv_{\beta\eta} B$ iff $\cc{\mathsf t}(A)= \cc{\mathsf t}(B)$.

\begin{definition}%[From $\Nd$-types to $\Nd$-trees]
For all $A\in \Nd$, with $\bb{NF}(A)=\forall \vec Y.A_{1}\To  \dots \To  A_{n} \To X $, let 

\adjustbox{scale=0.9, center}{$\cc{\mathsf t}(A)=
\begin{tikzpicture}[baseline=-12]
\node (a) at (0,0) {\small$\Big \{ \mm{\mathsf t}(A_{i})\Big \}_{i=1,\dots,n}$};

\node (b) at (2,0) {\small$\cc X$};

\node(c) at (1, -0.9) {\small${\vec{Y}}$};

\draw (b) to (c);
\draw (a) to (c);

\end{tikzpicture}
%
% \AXC{\small${\mm{\mathsf t}(A_{1})} \ \dots \ {\mm{\mathsf t}(A_{n})}$}\RL{\tiny$\vec X$}\UIC{\small$\cc X$}\DP
\qquad\qquad
\mm{\mathsf t}(A)=
\begin{tikzpicture}[baseline=-12]
\node (a) at (0,0) {\small$\Big \{ \cc{\mathsf t}(A_{i})\Big \}_{i=1,\dots,n}$};

\node (b) at (2,0) {\small$\mm X$};

\node(c) at (1, -0.9) {\small${\vec{Y}}$};

\draw (b) to (c);
\draw (a) to (c);

\end{tikzpicture}
% \AXC{\small${\cc{\mathsf t}(A_{1})} \ \dots \ {\cc{\mathsf t}(A_{n})}$}\RL{\tiny$\vec X$}\UIC{\small$\mm X$}\DP
$}
\end{definition}

\subsection{Polynomial Trees}

To formulate the schemas $\equiv_{\cc X},\equiv_{\mm X}$ in the language of rooted trees we exploit  an encoding of the  
types of the form $\mu X.\exists \vec Y.\sum_{i\in I}\prod_{j\in J_{i}}\FFun{\cc X}{A_{ij}}$ and $\nu X.\forall \vec Y.\prod_{j\in J}\FFun{\cc X}{A_{j}}$ as
certain special trees employing two new constants $\bullet $ and $\nuzzet$. 
This encoding is easily seen to be a small variant, in the language of finite trees, of the usual 
second-order encodings.

%We describe this encoding by enriching the family of $\Nd$-trees with some new constructors. 

We first introduce a handy notation for ``polynomial'' types, i.e. types corresponding to a generalized sum of generalized products.
 Following \cite{Kock2013}, any such type $A$ is completely determined by a 
 diagram of finite sets $  I\stackrel{f}{\leftarrow} J \stackrel{g}{\to} K$ and a $I$-indexed family of types $(A_{i})_{i \in I}$, so that 
 $A=\sum_{k\in K}\prod_{j\in J_{k}}A_{f(j)}$, where $J_{k} :=g^{-1}(k)$. 
 In the following, we will call the given of a finite diagram $  I\stackrel{f}{\leftarrow} J \stackrel{g}{\to} K$, and an  $I$-indexed family $(a_{i})_{i \in I}$ a \emph{polynomial family}, and indicate it simply as $(a_{jk})_{k\in K, j\in J_{k}}$.
 
 We now enrich the class of $\Nd$-trees as follows:

\begin{definition}[Polynomial trees]

Let $\bullet, \nuzzet$ indicate two new constants. \begin{itemize}
\item the set $\cc{\C P}$ of \emph{positive polynomial trees} is defined by adding to the clauses defining positive $\Nd$-trees two new clauses:

\adjustbox{scale=0.8, center}{$
%\F p(\cc{\bb E}_{ji}) \quad = \quad
\begingroup\makeatletter\def\f@size{10}\check@mathfonts
\begin{lrbox}{\mypti}% Store prooftree in \mypti
\begin{varwidth}{\linewidth}
\begin{tikzpicture}
\node(a) at (-0.3,0) {\small$\Big \{\FFun{\cc X\mapsto \cc\bullet}{\cc{\bb E}_{jk}}\Big \}_{j}$};
\node(b) at (1.2,0) {\small$\mm \bullet$};
\node(c) at (0.6,-0.9) {\small$\vec X_{k}$};

\draw[thick] (a) to (c);
\draw[thick] (b) to (c);

\end{tikzpicture}
%$\AXC{\small$\Big \{\cc{\bb E}_{ji}\Big \}_{i\in I}$}
%\UIC{\small$\mm \bullet$}
%\DP$
\end{varwidth}
\end{lrbox}
\begin{tikzpicture}[baseline=-20]

\node(a) at (-0.3,0) {\small$\left\{ \usebox{\mypti} \right\}_{k}$};
\node(b) at (2,-0.6) {\small$\cc \bullet$};
\node(c) at (1.3,-1.2) {\small$\bullet$};

\draw[thick] (0.6,-0.8) to (c);
\draw[thick] (b) to (c);

\end{tikzpicture}
%\AXC{$\left \{ \usebox{\mypti} \right\}_{j\in J}$}
%\RL{\tiny$\forall \bullet$}
%\UIC{\small$\cc \bullet$}
%\DP 
\endgroup
\qquad
\begingroup\makeatletter\def\f@size{10}\check@mathfonts
\begin{lrbox}{\mypti}% Store prooftree in \mypti
\begin{varwidth}{\linewidth}
\begin{tikzpicture}
\node(a) at (-0.3,0) {\small$\Big \{\FFun{\cc X\mapsto \cc\nuzzet}{\cc{\bb E}_{j}}\Big \}_{j}$};
\node(b) at (1.2,0) {\small$\mm \nuzzet$};
\node(c) at (0.6,-0.9) {\small$\vec X$};

\draw[thick] (a) to (c);
\draw[thick] (b) to (c);

\end{tikzpicture}
%$\AXC{\small$\Big \{\cc{\bb E}_{ji}\Big \}_{i\in I}$}
%\UIC{\small$\mm \bullet$}
%\DP$
\end{varwidth}
\end{lrbox}
\begin{tikzpicture}[baseline=-20]

\node(a) at (0,0) {\small$\usebox{\mypti} $};
\node(b) at (2,-0.6) {\small$\cc \nuzzet$};
\node(c) at (1.3,-1.2) {\small$\nuzzet$};

\draw[thick] (0.6,-0.8) to (c);
\draw[thick] (b) to (c);

\end{tikzpicture}
%\AXC{$\left \{ \usebox{\mypti} \right\}_{j\in J}$}
%\RL{\tiny$\forall \bullet$}
%\UIC{\small$\cc \bullet$}
%\DP 
\endgroup
$}
where $X$ is some variable, $(\FFun{\cc X}{\cc{\bb E}_{jk}})_{k,j}$ (resp. $(\FFun{\cc X}{\cc{\bb E}_{j}})_{j}$) is a polynomial family (resp. a family) of positive polynomial trees 
 with no occurrence of $\mm X$, and  $(\vec X_{k})_{k\in K}$ is a $K$-indexed family of finite sets of variables. 
%The \emph{pseudo-polynomial} $\F p(\cc{\bb E}_{jk})$ is the pure positive $\Nd$-tree

%
%Let $X$ be a variable and $\cc{E_{ij}}[\cc X]$ be a polynomial family of positive trees. The \emph{pseudo-polynomial} 
%$\F p(X; \cc{E_{ij}})$ is the positive term
%$$
%\F p(X, \cc{E_{ij}})=
%\AXC{\small$\cc{E_{11}}[\cc X]\ \dots \ \cc{E_{1p_{1}}}[\cc X]$}
%\UIC{\small $\mm X$}
%\AXC{\small$ \dots$}
%\AXC{\small$\cc{E_{k1}}[\cc X]\ \dots \ \cc{E_{kp_{k}}}[\cc X]$}
%\UIC{\small $\mm X$}
%\RL{\tiny$X$}
%\TrinaryInfC{\small$\cc X$}
%\DP
%$$
%
%The \emph{uniform polynomial} $\F p(\cc{E_{ij}})$ is the positive term
%$$
%\F p(\cc{E_{ij}})=
%\AXC{\small$\cc{E_{11}}\ \dots \ \cc{E_{1p_{1}}}$}
%\UIC{\small $\mm \bullet$}
%\AXC{\small$ \dots$}
%\AXC{\small$\cc{E_{k1}}\ \dots \ \cc{E_{kp_{k}}}$}
%\UIC{\small $\mm \bullet$}
%\RL{\tiny$\bullet$}
%\TrinaryInfC{\small$\cc \bullet$}
%\DP
%$$

\item the set $\mm{\C P}$ of \emph{negative polynomial trees} is defined by adding to the clauses defining negative $\Nd$-trees two new clauses:

\adjustbox{scale=0.8, center}{$
%\F p(\cc{\bb E}_{ji}) \quad = \quad
\begingroup\makeatletter\def\f@size{10}\check@mathfonts
\begin{lrbox}{\mypti}% Store prooftree in \mypti
\begin{varwidth}{\linewidth}
\begin{tikzpicture}
\node(a) at (-0.3,0) {\small$\Big \{\FFun{\mm X\mapsto \mm\bullet}{\mm{\bb E}_{jk}}\Big \}_{j}$};
\node(b) at (1.2,0) {\small$\cc \bullet$};
\node(c) at (0.6,-0.9) {\small$\vec X_{k}$};

\draw[thick] (a) to (c);
\draw[thick] (b) to (c);

\end{tikzpicture}
%$\AXC{\small$\Big \{\cc{\bb E}_{ji}\Big \}_{i\in I}$}
%\UIC{\small$\mm \bullet$}
%\DP$
\end{varwidth}
\end{lrbox}
\begin{tikzpicture}[baseline=-20]

\node(a) at (-0.3,0) {\small$\left\{ \usebox{\mypti} \right\}_{k}$};
\node(b) at (2,-0.6) {\small$\mm \bullet$};
\node(c) at (1.3,-1.2) {\small$\bullet$};

\draw[thick] (0.6,-0.8) to (c);
\draw[thick] (b) to (c);

\end{tikzpicture}
%\AXC{$\left \{ \usebox{\mypti} \right\}_{j\in J}$}
%\RL{\tiny$\forall \bullet$}
%\UIC{\small$\cc \bullet$}
%\DP 
\endgroup
\qquad
\begingroup\makeatletter\def\f@size{10}\check@mathfonts
\begin{lrbox}{\mypti}% Store prooftree in \mypti
\begin{varwidth}{\linewidth}
\begin{tikzpicture}
\node(a) at (-0.3,0) {\small$\Big \{\FFun{\mm X\mapsto \mm\nuzzet}{\mm{\bb E}_{j}}\Big \}_{j}$};
\node(b) at (1.2,0) {\small$\cc \nuzzet$};
\node(c) at (0.6,-0.9) {\small$\vec X$};

\draw[thick] (a) to (c);
\draw[thick] (b) to (c);

\end{tikzpicture}
%$\AXC{\small$\Big \{\cc{\bb E}_{ji}\Big \}_{i\in I}$}
%\UIC{\small$\mm \bullet$}
%\DP$
\end{varwidth}
\end{lrbox}
\begin{tikzpicture}[baseline=-20]

\node(a) at (0,0) {\small$\usebox{\mypti} $};
\node(b) at (2,-0.6) {\small$\mm \nuzzet$};
\node(c) at (1.3,-1.2) {\small$\nuzzet$};

\draw[thick] (0.6,-0.8) to (c);
\draw[thick] (b) to (c);

\end{tikzpicture}
%\AXC{$\left \{ \usebox{\mypti} \right\}_{j\in J}$}
%\RL{\tiny$\forall \bullet$}
%\UIC{\small$\cc \bullet$}
%\DP 
\endgroup
$}
where $X$ is some variable, $(\FFun{\mm X}{\mm{\bb E}_{jk}})_{k,j}$ (resp. $(\FFun{\mm X}{\mm{\bb E}_{j}})_{j}$) is a polynomial family (resp. a family) of negative polynomial trees 
 with no occurrence of $\cc X$, and  $(\vec X_{k})_{k\in K}$ is a $K$-indexed family of finite sets of variables. 

\end{itemize}

We indicate by $\C P$ the set of all polynomial trees, and by $\C P_{0}$ the set of all polynomial trees with no bound variables, which are called \emph{simple}.
\end{definition}

Any polynomial tree $\bb E\in \C P$ can be converted into a type $\tau(\bb E)$ of $\Ntot$ as illustrated in Fig.~\ref{fig:tau}. It is easily checked that, whenever $\bb E$ is simple, $\tau(\bb E)$ has no quantifier. Moreover, one can check that for all $\Nd$-type,  
$\tau(\cc{\bb t}(A))=A$.

%Conversely, from the definition of $\tau$ it
% is clear that the definition of $\cc{\bb t}(A)$ can be extended to every type $A$ of $\Ntot$ in such a way that $\tau(\cc{\bb t}(A))=A$\footnote{ Note, however, that the encoding $\sigma$ \emph{not} invariant under $\beta\eta$-isomorphisms of such types.} (we define $\cc{\bb t}(A)$ in detail in App.~\ref{app?}) 
% , so that, in the next section, we will be able to formulate Yoneda type isomorphisms as rewrite rules over polynomial trees. 

We conclude this section with some basic example of polynomial trees.
% In the next section we will introduce a rewriting over polynomial trees and our
%Our problem, since the next section, will be to investigate whether a 

\begin{figure}
\adjustbox{center, scale=0.75}{$
\tau(X)=X
 \qquad \qquad
% \tau(\mu X. \bb E)= \mu X.\tau(\bb E) \qquad \tau(\nu X.\bb E)= \nu X.\tau(\bb E)
%$$
%$$
\begingroup\makeatletter\def\f@size{10}\check@mathfonts
\begin{lrbox}{\mypti}% Store prooftree in \mypti
\begin{varwidth}{\linewidth}
\begin{tikzpicture}[baseline=-6]
	\node(a) at (0,0) {\small $\Big \{ \bb E_{i}\Big \}_{i\in I}$};
	\node (b) at (1.4,0) {\small$ \bb F$};
	\node (c) at (0.7,-0.6) {\small$\vec{Y}$};
	\draw (a) to (c);
	\draw (b) to (c);
	\end{tikzpicture} 
\end{varwidth}
\end{lrbox}
\tau\left( \usebox{\mypti}\right) =
%\AXC{\small$\bb E_{1}\ \dots \ \bb E_{n}$}\RL{\tiny$\forall\emptyset$}\UIC{\small$\bb e$}\DP\right) =
\forall \vec{ X}.\tau(\bb E_{1})\To \dots \To \tau(\bb E_{n}) \To \tau(\bb F)
\endgroup
$}

\

\adjustbox{scale=0.75}{$
\begingroup\makeatletter\def\f@size{10}\check@mathfonts
\begin{lrbox}{\mypti}% Store prooftree in \mypti
\begin{varwidth}{\linewidth}
\begin{tikzpicture}
\node(a) at (-0.3,0) {\small$\Big \{\bb E_{jk}[X\mapsto \bb \bullet]\Big \}_{j}$};
\node(b) at (1.2,0) {\small$\bb \bullet$};
\node(c) at (0.6,-0.9) {\small$\vec{Y}_{k}$};

\draw[thick] (a) to (c);
\draw[thick] (b) to (c);

\end{tikzpicture}
%$\AXC{\small$\Big \{\cc{\bb E}_{ji}\Big \}_{i\in I}$}
%\UIC{\small$\mm \bullet$}
%\DP$
\end{varwidth}
\end{lrbox}
\begin{lrbox}{\mypta}% Store prooftree in \mypti
\begin{varwidth}{\linewidth}
\begin{tikzpicture}[baseline=-20]

\node(a) at (0,0) {\small$\left\{ \usebox{\mypti} \right\}_{k}$};
\node(b) at (2,-0.6) {\small$\bb \bullet$};
\node(c) at (1.3,-1.2) {\small$\bullet$};

\draw[thick] (0.6,-0.8) to (c);
\draw[thick] (b) to (c);

\end{tikzpicture}
\end{varwidth}
\end{lrbox}
%\AXC{$\left \{ \usebox{\mypti} \right\}_{j\in J}$}
%\RL{\tiny$\forall \bullet$}
%\UIC{\small$\cc \bullet$}
%\DP 
\tau\left( \usebox{\mypta}\right) =\mu X.\exists \vec{Y}_{k}.
\sum_{k\in K}\prod_{j\in J_{k}}\tau(\bb E_{jk}[X])
\endgroup
\qquad
\begingroup\makeatletter\def\f@size{10}\check@mathfonts
\begin{lrbox}{\mypti}% Store prooftree in \mypti
\begin{varwidth}{\linewidth}
\begin{tikzpicture}
\node(a) at (-0.3,0) {\small$\Big \{\bb E_{j}[X\mapsto \bb \nuzzet]\Big \}_{j}$};
\node(b) at (1.2,0) {\small$\bb \nuzzet$};
\node(c) at (0.6,-0.9) {\small$\vec{Y}$};

\draw[thick] (a) to (c);
\draw[thick] (b) to (c);

\end{tikzpicture}
%$\AXC{\small$\Big \{\cc{\bb E}_{ji}\Big \}_{i\in I}$}
%\UIC{\small$\mm \bullet$}
%\DP$
\end{varwidth}
\end{lrbox}
\begin{lrbox}{\mypta}% Store prooftree in \mypti
\begin{varwidth}{\linewidth}
\begin{tikzpicture}[baseline=-20]

\node(a) at (0,0) {\small$ \usebox{\mypti} $};
\node(b) at (2,-0.6) {\small$\bb \nuzzet$};
\node(c) at (1.3,-1.2) {\small$\nuzzet$};

\draw[thick] (0.6,-0.8) to (c);
\draw[thick] (b) to (c);

\end{tikzpicture}
\end{varwidth}
\end{lrbox}
%\AXC{$\left \{ \usebox{\mypti} \right\}_{j\in J}$}
%\RL{\tiny$\forall \bullet$}
%\UIC{\small$\cc \bullet$}
%\DP 
\tau\left( \usebox{\mypta}\right) =\nu X.\forall \vec{Y}.
\prod_{j\in J}\tau(\bb E_{j}[X])
\endgroup
$}

\caption{\small Translation of simple polynomial trees into monomorphic types.}
\label{fig:tau}
\end{figure}

\begin{example}
\begin{itemize}
\item The constant types $0$ and $1$ are represented as positive/negative trees  by 
$\cc{\B 0}=\begin{tikzpicture}[baseline=-10pt]
\node (a) at (0,0) {\small$\cc \bullet$};
\node(b) at (0,-0.5) {\small$\bullet$};
\draw (a) to (b);
\end{tikzpicture}$, $\mm{\B 0}=\begin{tikzpicture}[baseline=-10pt]
\node (a) at (0,0) {\small$\mm \bullet$};
\node(b) at (0,-0.5) {\small$\bullet$};
\draw (a) to (b);
\end{tikzpicture}$
and
$\cc{\B 1}=\begin{tikzpicture}[baseline=-10pt]
\node (c) at (-0.3,0) {\small$\mm \bullet$};
\node (a) at (0.3,0) {\small$\cc \bullet$};
\node(b) at (0,-0.5) {\small$\bullet$};
\draw (a) to (b);
\draw (c) to (b);

\end{tikzpicture}$, $\mm{\B 1}=\begin{tikzpicture}[baseline=-10pt]
\node (c) at (-0.3,0) {\small$\cc \bullet$};
\node (a) at (0.3,0) {\small$\mm \bullet$};
\node(b) at (0,-0.5) {\small$\bullet$};
\draw (a) to (b);
\draw (c) to (b);

\end{tikzpicture}$.
%and as negative trees as $\mm{\B 0}=\begin{tikzpicture}[baseline=-10pt]
%\node (a) at (0,0) {\small$\mm \nuzzet$};
%\node(b) at (0,-0.5) {\small$\nuzzet$};
%\draw (a) to (b);
%\end{tikzpicture}$
%and
%$\mm{\B 1}=\begin{tikzpicture}[baseline=-10pt]
%\node (c) at (-0.3,0) {\small$\cc \nuzzet$};
%\node (a) at (0.3,0) {\small$\mm \nuzzet$};
%\node(b) at (0,-0.5) {\small$\nuzzet$};
%\draw (a) to (b);
%\draw (c) to (b);
%
%\end{tikzpicture}$.
\item The diagram $\{1,2\} \stackrel{1,3\mapsto1; 2\mapsto 2   }{\leftarrow }\{ 1,2,3\} \stackrel{1,2 \mapsto 1; 3\mapsto 2}{\to} \{ 1,2\}$, along with the family $(X_{i})_{i\in \{1,2\}}$,
%induces the polynomial family $\cc{\bb E}_{ij}$ with $\cc{\bb E}_{11}=\cc{\bb E}_{23}=\cc{\bb D}$, $\cc{\bb E}_{12}=\cc{\bb F}$, and 
yields the polynomial family $(\cc{\bb E}_{jk})_{k,j}$, with $\cc{\bb E}_{11}=\cc{\bb E}_{32}=\cc X_{1}$ and $\cc{\bb E}_{21}=\cc X_{2}$, and yields the polynomial tree  
\adjustbox{scale=0.8}{$\begin{tikzpicture}[baseline=-10pt]
\node (a0) at (-0.7,0.1 ) {\small$\cc\bullet$};

\node (a) at (0,0.1 ) {\small$\cc X_{2}$};
\node (aa) at (0.7,0.1) {\small$\mm\bullet$};
\node(aaa) at (0.35,-0.5) {\small$\emptyset$}; 

\node (b) at (1.5,0.1 ) {\small$\cc \bullet$};
\node (bb) at (2.2,0.1) {\small$\mm\bullet$};
\node(bbb) at (1.85,-0.5) {\small$\emptyset$}; 

\node(d) at (3, -0.5) {\small$\cc \bullet$};

\node(c) at (1.85, -0.9) {\small$\bullet$};

\draw (a0) to (aaa);

\draw (a) to (aaa);
\draw (aa) to (aaa);
\draw (b) to (bbb);
\draw (bb) to (bbb);
\draw (a) to (aaa);
\draw (c) to (aaa);
\draw (c) to (bbb);
\draw (c) to (d);
\end{tikzpicture}
%\AXC{\small$\cc{\bb E}_{11}$}
%\AXC{\small$\cc{\bb E}_{12}$}
%\BIC{\small$\mm \bullet$}
%\AXC{\small$\cc{\bb E}_{23}$}
%\UIC{\small$\mm \bullet$}
%\RL{\tiny$\forall \bullet$}
%\BIC{\small$\cc \bullet$}
%\DP
$}  encoding the type $\mu X_{1}.(X_{1}\times X_{2})+ X_{1}$.

\item The diagram $\{1,2\} \stackrel{id }{\leftarrow} \{1,2\} \to \{1\}$ with the same family as above
% $1\mapsto X, 2\mapsto Y$
% as above
  yields the polynomial tree
%
% product $\cc{\bb D} \times \cc{\bb F}$ can encoded similarly using $\bullet$, but it also admits an alternative encoding using the constant $\nuzzet$ as 
\adjustbox{scale=0.8}{$\begin{tikzpicture}[baseline=-10pt]
\node (a) at (0,0.1 ) {\small$\cc\nuzzet$};
\node (aa) at (0.7,0.1) {\small$\mm\nuzzet$};
\node(aaa) at (0.35,-0.5) {\small$\emptyset$}; 

\node (b) at (1.5,0.1 ) {\small$\cc X_{2}$};
\node (bb) at (2.2,0.1) {\small$\mm\nuzzet$};
\node(bbb) at (1.85,-0.5) {\small$\emptyset$}; 

\node(d) at (3, -0.5) {\small$\cc \nuzzet$};

\node(c) at (1.85, -0.9) {\small$\nuzzet$};

\draw (a) to (aaa);
\draw (aa) to (aaa);
\draw (b) to (bbb);
\draw (bb) to (bbb);
\draw (a) to (aaa);
\draw (c) to (aaa);
\draw (c) to (bbb);
\draw (c) to (d);
\end{tikzpicture}$} encoding the type $\nu X_{1}.X_{1}\times X_{2}$.

%
%
%\end{example}
%\begin{example}
%\item The fixpoint type $\mu X.X$ is represented by the polynomial tree $\boldsymbol{\cc \mu}=$
%\adjustbox{scale=0.7}{$
%\begin{tikzpicture}[baseline=-20pt]
%\node(a) at (0,0) {$\cc \bullet$};
%\node(b) at (0.6,0) {$\mm \bullet$};
%
%\node(c) at (0.3,-0.6) {$\emptyset$};
%\node(d) at (1.1,-0.6) {$\cc \bullet$};
%
%\node(e) at (0.7,-1.2) {$ \bullet$};
%
%\draw (c) to (a);
%\draw (c) to (b);
%\draw (e) to (c);
%\draw (e) to (d);
%
%\end{tikzpicture}
%$}.
\end{itemize}
\end{example}

\section{Yoneda Reduction and the Characteristic of a Polymorphic Type}\label{secCharacteristic}

In this section we introduce a family of rewriting rules $\leadsto_{\cc X},\leadsto_{\mm X}$ over polynomial trees, that we call \emph{Yoneda reduction}, which 
correspond to the left-to-right orientation of the isomorphisms $\equiv_{\cc X},\equiv_{\mm X}$. 
We will then exploit Yoneda reduction to establish 
 two sufficient conditions to convert a $\Nd$-type $A$ into a quantifier-free type $A'$ such that $A\equiv_{\bb Y}A'$. 

\subsection{Yoneda Reduction}
We will adopt the following conventions:

\begin{notation}
We make the assumption that all bound variables of a polynomial tree $\bb E$ are distinct. More precisely, for any $X\in \BV{\bb E}$, we suppose there exist unique nodes $\bb r_{X}$ and $\bb h_{X}$ such that 
$\bb r_{X}:\vec X$, for some set of variable $\vec X$ such that 
$X\in \vec  X$, and $\bb h_{X}$ is the head of the sub-tree whose root is $\bb r_{X}$.
\end{notation}

We will call two distinct nodes \emph{parallel} if they are immediate successors of the same node, and we let the distance $d(\alpha,\beta)$ between two nodes in a polynomial tree be the number of edges of the unique path from $\alpha$ to $\beta$.

% In the following sections we will then investigate when a $\Nd$-tree converts to a simple polynomial tree using this rewriting.
Using polynomial trees we can identify when a quantifier can be eliminated from a type independently from $\beta\eta$-isomorphisms, by inspecting a simple condition on the tree-representation of the type based on the notion of \emph{modular} node, introduced below.

\begin{definition}
%Let $X\in \BV{\bb E}$. A \emph{$X$-module} in $\bb E$ is a sub-tree of $\bb E$ of one of the forms:
%
%
%\adjustbox{scale=0.8}{$
%\bb F=
%\begin{tikzpicture}[baseline=-20]
%
%\node (a) at (0,0) {\small$\Big\{\cc{\bb D}_{i}\Big \}_{i}$};
%\node (b) at (1.5,0) {\small $\mm X$};
%
%\node(c) at (0.75,-0.9) {$\vec Z_{j}$};
%
%\draw[thick] (c) to (a);
%\draw[thick] (c) to (b);
%
%\end{tikzpicture}
%\qquad
%\mm{\bb F}=
%\begin{tikzpicture}[baseline=-20]
%
%\node (u) at (-1.2,-0.15) {\small $\cc X$};
%\node (a) at (0.3,0) {\small$\Big\{\cc{\bb D}_{i}\neq \cc X\Big \}_{i}$};
%\node (b) at (1.6,-0.05) {\small $\mm{\bb E}$};
%
%
%\node(c) at (0.4,-1) {$\vec Z_{j}$};
%
%\draw[thick] (c) to (a);
%\draw[thick] (c) to (b);
%\draw[thick] (c) to (u);
%
%\end{tikzpicture}
%\qquad
%\cc{\bb F}=
%\begin{tikzpicture}[baseline=-20]
%
%\node (u) at (-1.2,-0.15) {\small $\mm X$};
%\node (a) at (0.3,0) {\small$\Big\{\mm{\bb D}_{i}\neq \mm X\Big \}_{i}$};
%\node (b) at (1.6,-0.05) {\small $\cc{\bb E}$};
%
%
%\node(c) at (0.4,-1) {$\vec Z_{j}$};
%
%\draw[thick] (c) to (a);
%\draw[thick] (c) to (b);
%\draw[thick] (c) to (u);
%
%\end{tikzpicture}
%\qquad
%\cc{\bb F}=
%\begin{tikzpicture}[baseline=-20]
%
%\node (a) at (0,0) {\small$\Big\{\mm{\bb D}_{i}\Big \}_{i}$};
%\node (b) at (1.5,0) {\small $\cc X$};
%
%\node(c) at (0.75,-0.9) {$\vec Z_{j}$};
%
%\draw[thick] (c) to (a);
%\draw[thick] (c) to (b);
%
%\end{tikzpicture}
%$}
%
%\noindent
%whose root is a predecessor of $\alpha_{X}$. The node of label $X$ in a $X$-module is called the \emph{head} of the module.

For all $X\in \BV{\bb E}$, 
a terminal node $\alpha: X^{c}$ in $\bb E$ is said \emph{modular} if $\alpha\neq \bb h_{X}$, 
$1\leq d(\alpha, \bb r_{X})\leq 2$ 
 and $\alpha$ has no parallel node of label $X^{c}$.
A  pair of nodes of the form $(\alpha:\cc X, \beta:\mm X)$ is called a \emph{$X$-pair},  and 
 a $X$-pair is said \emph{modular} if one of its nodes is modular.

% The \emph{enriched tree of $\bb E$} is the graph obtained from $\bb E$ by adding all modular $X$-pairs, for all $X\in\BV{\bb E}$.
\end{definition}

In the trees in Fig.~\ref{fig:allfig} the modular nodes are underlined and 
the modular pairs are indicated as dashed edges. 
%Modular $X$-pairs can be used to identify when a quantified variable can be eliminated.

%
% (with underlined node labels) and pairs (as a dashed  in the trees
%of the type $\forall X. (\cc X \To \mm X)\To (\forall Y.\cc X \To \mm Y)\To \cc X$ 
%and
% of the type
%$\forall X. (\cc X\To \mm X)\To (\cc X\To \mm Y)\To (\cc Y\To \mm X)\To \cc Y$ from Section MISSING! \ref{}.
%DASHED ETC.

%We claim that the following simple definition characterizes the trees corresponding to the types which might occur in a Yoneda-rewriting.

\begin{definition}%[characteristic in one variable]
%For all variable $X$, let $\D G_{X}(\bb E)$ be the bipartite graph with vertices the nodes of $\bb E$ of the form $\alpha:\cc X$ and $\beta: \mm X$, and edges the traverse $X$-edges of $\bb T$. 
A variable {$X\in \BV{\bb E}$} is said \emph{eliminable} when every $X$-pair of $\bb E$ is modular.
For every color $c$, we furthermore call $X$ \emph{$c$-eliminable} if every node $\alpha: X^{\OV c}$ is modular.

%$\kappa_{X}(\mathsf E)\in \{0,1,2,3,\infty\}$ be defined as follows: 
%if all $X$-edge of $\bb E$ is traverse, then we let:
%$$
%\kappa_{X}(\mathsf E)= 
%\begin{cases}
%0 & \text{if $\bb E$ has no node $\alpha: X$}\\
%1 & \text{if $\bb E$ has no $X$-edge}\\
%2 & \text{if $\bb E$ has no vertical $X$-edge}\\
%3 & \text{if $\bb E$ has a vertical $X$-edge}
%\end{cases}
%$$
%Otherwise we let 
% $\kappa_{X}(\bb E)=\infty$. % if $\bb E$ has a non transient $X$-edge. Otherwise we let

%
%\begin{itemize}
%\item $\kappa_{X}(\mathsf E)= \infty$ if $\bb E$ has a non transient $X$-edge.
%$\D G_{X}(\mathsf E)$ has at least one edge and it is not bipartite-connected.
%
%\item $\kappa_{X}(\mathsf E)= 0$ if $\D G_{X}(\mathsf E)$ has no node.
%\item $\kappa_{X}(\mathsf E)= 1$ if $\D G_{X}(\mathsf E)$ has at least one node but no edge.
%\item $\kappa_{X}(\mathsf E)= 2$ if $\D G_{X}(\mathsf E)$ has at least one edge, it is bipartite-connected and acyclic.
%\item $\kappa_{X}(\mathsf E)= 3$ if $\D G_{X}(\mathsf E)$ has at least one edge and is bipartite-connected.
%
%\end{itemize}

\end{definition}

We let $\cc{\mathrm{eliminable}}$ be a shorthand for ``${\cc{\mathrm{blue}}}$-eliminable'' and $\mm{\mathrm{eliminable}}$ be a shorthand for ``${\mm{\mathrm{red}}}$-eliminable''. These notions are related as follows:
\begin{lemma}
$X$ is eliminable iff it is either $\cc{\mathrm{eliminable}}$ or $\mm{\mathrm{eliminable}}$.
\end{lemma}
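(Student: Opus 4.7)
The plan is to split this biconditional into its two implications; the $(\Leftarrow)$ direction is immediate from the definitions, while the $(\Rightarrow)$ direction I would handle by contraposition. No induction on the tree or case analysis on its geometry is needed, since modularity is a purely local property of a single node.

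For $(\Leftarrow)$, I would take, without loss of generality, $X$ to be $\cc{\mathrm{eliminable}}$, so that every node of label $\mm X$ is modular. Given an arbitrary $X$-pair $(\alpha:\cc X,\beta:\mm X)$, the negative endpoint $\beta$ is modular by hypothesis, hence the pair is modular by the very definition of a modular $X$-pair. The symmetric argument handles the case where $X$ is $\mm{\mathrm{eliminable}}$, by selecting the positive endpoint $\alpha$ as the modular node instead.

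For $(\Rightarrow)$, I would argue by contraposition. Assume $X$ is \emph{neither} $\cc{\mathrm{eliminable}}$ \emph{nor} $\mm{\mathrm{eliminable}}$. Unpacking the failure of $\cc{\mathrm{eliminability}}$ produces a node $\beta$ with label $\mm X$ that is not modular; independently, unpacking the failure of $\mm{\mathrm{eliminability}}$ produces a node $\alpha$ with label $\cc X$ that is not modular. The pair $(\alpha,\beta)$ is then an $X$-pair neither of whose endpoints is modular, which directly witnesses that $X$ is not eliminable.

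The reason the argument can be this short is the locality of modularity: whether a given node is modular depends only on its distance to $\bb r_X$, on whether it coincides with $\bb h_X$, and on the labels of its immediate siblings, so the witnesses $\alpha$ and $\beta$ may be chosen wholly independently of one another. There is essentially no obstacle to overcome; the only point worth double-checking is the degenerate case in which $X$ occurs with a single polarity (so no $X$-pair exists at all), but in that situation the lemma holds vacuously on both sides.
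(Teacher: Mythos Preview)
Your proof is correct and takes essentially the same approach as the paper: the $(\Rightarrow)$ direction by contraposition matches the paper's first sentence verbatim, and your explicit $(\Leftarrow)$ argument is the straightforward unfolding that the paper leaves implicit (its second sentence is in fact a direct re-proof of $(\Rightarrow)$ rather than the converse, modulo what appears to be a color typo). Your remark on the single-polarity degenerate case is a nice addition.
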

\begin{proof}
If $X$ is neither $\cc{\mathrm{eliminable}}$ nor $\mm{\mathrm{eliminable}}$, then there exist non-modular nodes $\alpha:\cc X, \beta :\mm X$, whence the $X$-pair $(\alpha,\beta)$ is not modular. Conversely, suppose $X$ is eliminable but not $\cc{\mathrm{eliminable}}$. Hence there is a non modular node $\alpha: \cc X$. For all node $\beta:\mm X$, since the $X$-pair $(\alpha,\beta)$ is modular, $\beta$ is modular. We deduce that $X$ is $\mm{\mathrm{eliminable}}$.
\end{proof}

%We will call $X$ $\cc{solvable}$ (resp. $\mm{solvable}$) when all nodes $\alpha:\mm X$ (resp. $\alpha:\cc X$) are modular.

\begin{example}

The variable
$X$ is $\cc{\mathrm{eliminable}}$ but not $\mm{\mathrm{eliminable}}$ in the tree in Fig.~\ref{fig:root1}, and it is both $\mm{\mathrm{eliminable}}$ and $\cc{\mathrm{eliminable}}$ in the tree in Fig.~\ref{fig:root3}. 
%In fact, all nodes $\alpha:\mm X$ in $\bb E_{2},\bb E_{3}$ are modular, while all nodes $\alpha:\cc X$ in $\bb E_{4}$ are modular. 
%$X$ is not solvable in the tree in Fig.~\ref{fig:fig00}, since the $X$-edge on top is not modular.
\end{example}

%When $\bb E$ is $X$-solvable, we can eliminate all nodes $\alpha:X$ using some form of Yoneda rewriting. 
%$X$-solvable trees 
%To introduce our rewriting we first show that $X$-solvable trees can be written in a somehow canonical form.

\begin{figure}[t]
\begin{subfigure}{\textwidth}
\leftskip=-0.5cm
\begin{tabular}{c c c c }
\adjustbox{scale=0.7}{$
\begingroup
\makeatletter\def\f@size{10}\check@mathfonts
\begin{lrbox}{\mypti}% Store prooftree in \mypti
\begin{varwidth}{\linewidth}
\begin{tikzpicture}[anchor=base, baseline]

\node (a) at (0,0) {\small$\Big\{\cc{\bb D}_{jk}\langle\cc X\rangle\Big \}_{j}$};
\node (b) at (1.5,0) {\small $\mm X$};

\node(c) at (0.75,-1) {$\vec Z_{k}$};

\draw[thick] (c) to (a);
\draw[thick] (c) to (b);

\end{tikzpicture}
\end{varwidth}
\end{lrbox}
\cc{\bb E} = 
\begin{tikzpicture}[baseline=-20pt]

\node (a) at (0,0) {\small$\left\{ \usebox{\mypti}\right\}_{k}$};

\node (b) at (2.6,-0.4) {\small$\Big\{ \mm{\bb F}_{l}\langle\cc X\rangle\Big \}_{l}$};

\node(c) at (3.8, -0.5) {\small$\cc Y$};

\node(d) at (2.2,-1.5) {\small$\vec YX$};

\draw[thick] (d) to (0.4,-0.8);
\draw[thick] (d) to (b);
\draw[thick] (d) to (c);

\end{tikzpicture}
%\begingroup\makeatletter\def\f@size{10}\check@mathfonts
%\begin{lrbox}{\mypti}% Store prooftree in \mypti
%\begin{varwidth}{\linewidth}
%$\AXC{$\Big \{\cc{\bb D}_{ji}[\cc X]\Big \}_{i}$}\RL{\tiny$\vec Z_{j}$}\UIC{$\mm X$}\DP$
%\end{varwidth}
%\end{lrbox}
%\AXC{$\left \{ \usebox{\mypti} \right\}_{j}$ }
%\AXC{$\phantom{A}$  }
%\noLine
%\UIC{\small$\Big \{\mm{\bb E}_{l}[\cc X]\Big \}_{l}$}
%\RL{\tiny$\vec YX$}
%\BIC{\small$\cc{\bb E}[\cc X]$}
%\DP
\endgroup
$}
&
$ \leadsto_{\cc X} $
&
\adjustbox{scale=0.7}{$
\begin{tikzpicture}[baseline=-20pt]
\node (a) at (0,0) {\small$\Big\{\mm{\bb F}_{l}\langle\cc X\theta\rangle\Big\}_{l}$};

\node (b) at (1.2,0) {\small $\cc Y\theta$};

\node(c) at (0.6,-1) {\small$\vec Y$};

\draw[thick] (c) to (b); 
\draw[thick] (c) to (a);

\end{tikzpicture}
$} 
%\AXC{\small$\Big\{\mm{\bb E}_{l}[\cc X\theta]\Big\}_{l}$}
%\RL{\tiny$\vec Y$}
%\UIC{\small$\cc{\bb E}[\cc X\theta]$}
%\DP
& 
\adjustbox{scale=0.7}{$
\begingroup\makeatletter\def\f@size{10}\check@mathfonts
\begin{lrbox}{\mypti}% Store prooftree in \mypti
\begin{varwidth}{\linewidth}
\begin{tikzpicture}
\node (a) at (-0.4,0) {\small$\Big \{\cc{\bb D}_{jk}\langle\cc X\mapsto \cc \bullet\rangle\Big \}_{j}$};

\node (b) at (1.2,0) {\small $\mm\bullet$};

\node(c) at (0.6,-1) {\small$\vec Z_{k}$};

\draw[thick] (c) to (b); 
\draw[thick] (c) to (a);

\end{tikzpicture}
%
%$\AXC{\small$\Big \{\cc{\bb D}_{ji}[\cc X]\Big \}_{i}$}
%\RL{\tiny$\vec Z_{j}$}
%\UIC{$\mm \bullet$}
%\DP$
%
\end{varwidth}
\end{lrbox}
%\ \text{where}\
\left (
\theta: \cc X \mapsto
\begin{tikzpicture}[baseline=-20]
\node(a) at (-0.3,0) {\small$\left\{ \usebox{\mypti}\right\}_{k}$};
\node(b) at (2.4,-0.6) {\small$\cc \bullet$};

\node(c) at (1.2,-1.2) {\small$\bullet$};
%\node(d) at (1.2,-1.9) {\small$\mu X$};

\draw[thick] (c) to (b); 
\draw[thick] (c) to (0.4,-0.9);
%\draw[thick] (c) to (d); 

\end{tikzpicture}\right )
\endgroup
%\cc \mu X. 
%\AXC{$\left \{ \usebox{\mypti} \right\}_{j}$}
%\RL{\tiny$\bullet$}
%\UIC{\small$\cc\bullet$}
%\DP
$}
\end{tabular}

\caption{$ X$ $\cc{\mathrm{eliminable}}$ in $\cc{\bb E}$. }
\label{fig:11}
\end{subfigure} 

\medskip

\begin{subfigure}{\textwidth}
\leftskip=-0.5cm
\begin{tabular}{c c c c }
\adjustbox{scale=0.7}{$
\begingroup\makeatletter\def\f@size{10}\check@mathfonts
\begin{lrbox}{\mypti}% Store prooftree in \mypti
\begin{varwidth}{\linewidth}
\begin{tikzpicture}[anchor=base, baseline]

\node (u) at (-1.2,-0.15) {\small $\cc X$};
\node (a) at (0,0) {\small$\Big\{\cc{\bb D}_{jk}\langle\mm X\rangle\Big \}_{j}$};
\node (b) at (1.2,-0.05) {\small $\mm{\bb E}_{k}\langle\mm X\rangle$};

\node(c) at (0.75,-1) {$\vec Z_{k}$};

\draw[thick] (c) to (a);
\draw[thick] (c) to (b);
\draw[thick] (c) to (u);

\end{tikzpicture}
\end{varwidth}
\end{lrbox}
\cc{\bb E} = 
\begin{tikzpicture}[baseline=-20pt]

\node (a) at (-0.3,0) {\small$\left\{ \usebox{\mypti}\right\}_{k}$};

\node (b) at (2.6,-0.4) {\small$\Big\{ \mm{\bb F}_{l}\langle\mm X\rangle\Big \}_{l}$};

\node(c) at (3.8, -0.5) {\small$\cc Y\neq \cc X$};

\node(d) at (2.2,-1.5) {\small$\vec YX$};

\draw[thick] (d) to (0.4,-0.8);
\draw[thick] (d) to (b);
\draw[thick] (d) to (c);

\end{tikzpicture}
\endgroup
%\begingroup\makeatletter\def\f@size{10}\check@mathfonts
%\begin{lrbox}{\mypti}% Store prooftree in \mypti
%\begin{varwidth}{\linewidth}
%$\AXC{$\cc X \ \ \Big \{ \cc{\bb D}_{ji}[\mm X]\Big \}_{i}$}\RL{\tiny$\vec Z_{j}$}\UIC{$\mm{\bb D}_{j}[\mm X]$}\DP$
%\end{varwidth}
%\end{lrbox}
%\cc{\bb E} = \ 
%\AXC{$\left \{ \usebox{\mypti} \right\}_{j}$ }
%\AXC{$\phantom{A}$  }
%\noLine
%\UIC{\small$\Big \{\mm{\bb E}_{l}[\mm X]\Big \}_{l}$}
%\RL{\tiny$\vec YX$}
%\BIC{\small$\cc{\bb E}[\mm X]$}
%\DP
%\endgroup
$}
& 
$  \leadsto_{\mm X} $
& 
\adjustbox{scale=0.7}{$
%
%\begingroup\makeatletter\def\f@size{10}\check@mathfonts
%\begin{lrbox}{\mypti}% Store prooftree in \mypti
%\begin{varwidth}{\linewidth}
%$\AXC{\small$\Big \{\cc{\bb D}_{ji}[\mm X]\Big \}_{i}$}
%\RL{\tiny$\vec Z_{j}$}
%\UIC{$\mm{\bb D}_{j}[\mm X]$}
%\DP$
%\end{varwidth}
%\end{lrbox}
\begin{tikzpicture}[baseline=-20pt]
\node (a) at (0,0) {\small$\Big\{\mm{\bb F}_{l}\langle\mm X\theta\rangle\Big\}_{l}$};

\node (b) at (1.2,0) {\small $\cc Y$};

\node(c) at (0.6,-1) {\small$\vec Y$};

\draw[thick] (c) to (b); 
\draw[thick] (c) to (a);

\end{tikzpicture}
$}
&
\adjustbox{scale=0.7}{
$
\begingroup\makeatletter\def\f@size{10}\check@mathfonts
\begin{lrbox}{\mypti}% Store prooftree in \mypti
\begin{varwidth}{\linewidth}
\begin{tikzpicture}
\node (a) at (-0.4,0) {\small$\Big \{\cc{\bb D}_{jk}\langle\mm X\mapsto \mm \nuzzet\rangle\Big \}_{j}$};

\node (b) at (1.5,0) {\small $\mm{\bb E}_{k}\langle\mm X\mapsto \mm \nuzzet\rangle$};

\node(c) at (0.6,-1) {\small$\vec Z_{k}$};

\draw[thick] (c) to (b); 
\draw[thick] (c) to (a);

\end{tikzpicture}
%
%$\AXC{\small$\Big \{\cc{\bb D}_{ji}[\cc X]\Big \}_{i}$}
%\RL{\tiny$\vec Z_{j}$}
%\UIC{$\mm \bullet$}
%\DP$
%
\end{varwidth}
\end{lrbox}
%
%\cc{\bb E}\  \leadsto_{\mm X} \  
%\AXC{\small$\Big\{\mm{\bb E}_{l}[\mm X\theta]\Big\}_{l}$}
%\RL{\tiny$\vec Y$}
%\UIC{\small$\cc{\bb E}[\mm X\theta]$}
%\DP
\left(
\theta: \mm X \mapsto
\begin{tikzpicture}[baseline=-20]
\node(a) at (-0.6,0) {\small$\left\{ \usebox{\mypti}\right\}_{k}$};
\node(b) at (2.4,-0.6) {\small$\cc \nuzzet$};

\node(c) at (1.2,-1.2) {\small$\emptyset$};
\node(d) at (3.2,-1.2) {\small$\mm \nuzzet$};
\node(e) at (2.2,-1.7) {\small$ \nuzzet$};
%\node(f) at (2.2,-2.3) {\small$\nu X$};

\draw[thick] (c) to (b); 
\draw[thick] (c) to (0,-0.8);
\draw[thick] (c) to (e); 
\draw[thick] (e) to (d); 
%\draw[thick] (f) to (e); 

\end{tikzpicture}
\right)
%\theta: \mm  X \mapsto
%\mm \nu X. 
%\AXC{$\left \{ \usebox{\mypti} \right\}_{j}$}
%\UIC{\small$\cc\bullet$}
%\RL{\tiny$\bullet$}
%\UIC{\small$\mm\bullet$}
%\DP
\endgroup
$}
\end{tabular}
\caption{$ X$ $\mm{\mathrm{eliminable}}$ in $\cc{\bb E}$. }
\label{fig:12}
\end{subfigure}

\medskip

\begin{subfigure}{\textwidth}
\leftskip=-0.5cm
\begin{tabular}{c c c c }
\adjustbox{scale=0.7}{$
\begingroup\makeatletter\def\f@size{10}\check@mathfonts
\begin{lrbox}{\mypti}% Store prooftree in \mypti
\begin{varwidth}{\linewidth}
\begin{tikzpicture}[anchor=base, baseline]

\node (u) at (-1.2,-0.15) {\small $\mm X$};
\node (a) at (0,0) {\small$\Big\{\mm{\bb D}_{jk}\langle\cc X\rangle\Big \}_{j}$};
\node (b) at (1.2,-0.05) {\small $\cc{\bb E}_{k}\langle\cc X\rangle$};

\node(c) at (0.75,-1) {$\vec Z_{k}$};

\draw[thick] (c) to (a);
\draw[thick] (c) to (b);
\draw[thick] (c) to (u);

\end{tikzpicture}
\end{varwidth}
\end{lrbox}
\mm{\bb E} = 
\begin{tikzpicture}[baseline=-20pt]

\node (a) at (-0.3,0) {\small$\left\{ \usebox{\mypti}\right\}_{k}$};

\node (b) at (2.6,-0.4) {\small$\Big\{ \cc{\bb F}_{l}\langle\cc X\rangle\Big \}_{l}$};

\node(c) at (3.8, -0.5) {\small$\mm  Y\neq \mm X$};

\node(d) at (2.2,-1.5) {\small$\vec YX$};

\draw[thick] (d) to (0.4,-0.8);
\draw[thick] (d) to (b);
\draw[thick] (d) to (c);

\end{tikzpicture}
\endgroup
%
%\begingroup\makeatletter\def\f@size{10}\check@mathfonts
%\begin{lrbox}{\mypti}% Store prooftree in \mypti
%\begin{varwidth}{\linewidth}
%$\AXC{$\mm X \ \ \Big \{ \mm{\bb D}_{ji}[\cc X]\Big \}_{i}$}\RL{\tiny$\vec Z_{j}$}\UIC{$\cc{\bb D}_{j}[\cc X]$}\DP$
%\end{varwidth}
%\end{lrbox}
%\mm{\bb E}= \ 
%\AXC{$\left \{ \usebox{\mypti} \right\}_{j}$ }
%\AXC{$\phantom{A}$  }
%\noLine
%\UIC{\small$\Big \{\cc{E_{l}}[\cc X]\Big \}_{l}$}
%\RL{\tiny$\vec YX$}
%\BIC{\small$\mm{\bb E}[\cc X]$}
%\DP
%\endgroup
$}
& 
$  \leadsto_{\cc X} $
&  
\adjustbox{scale=0.7}{$
%
%\begingroup\makeatletter\def\f@size{10}\check@mathfonts
%\begin{lrbox}{\mypti}% Store prooftree in \mypti
%\begin{varwidth}{\linewidth}
%$\AXC{\small$\Big \{\cc{\bb D}_{ji}[\mm X]\Big \}_{i}$}
%\RL{\tiny$\vec Z_{j}$}
%\UIC{$\mm{\bb D}_{j}[\mm X]$}
%\DP$
%\end{varwidth}
%\end{lrbox}
\begin{tikzpicture}[baseline=-20pt]
\node (a) at (0,0) {\small$\Big\{\cc{\bb F}_{l}\langle\cc X\theta\rangle\Big\}_{l}$};

\node (b) at (1.2,0) {\small $\mm Y$};

\node(c) at (0.6,-1) {\small$\vec Y$};

\draw[thick] (c) to (b); 
\draw[thick] (c) to (a);

\end{tikzpicture}
$}
& 
\adjustbox{scale=0.7}{
$
\begingroup\makeatletter\def\f@size{10}\check@mathfonts
\begin{lrbox}{\mypti}% Store prooftree in \mypti
\begin{varwidth}{\linewidth}
\begin{tikzpicture}
\node (a) at (-0.4,0) {\small$\Big \{\mm{\bb D}_{jk}\langle\cc X\mapsto \cc \nuzzet\rangle\Big \}_{j}$};

\node (b) at (1.5,0) {\small $\cc{\bb E}_{k}\langle\cc X\mapsto \cc \nuzzet\rangle$};

\node(c) at (0.6,-1) {\small$\vec Z_{k}$};

\draw[thick] (c) to (b); 
\draw[thick] (c) to (a);

\end{tikzpicture}
%
%$\AXC{\small$\Big \{\cc{\bb D}_{ji}[\cc X]\Big \}_{i}$}
%\RL{\tiny$\vec Z_{j}$}
%\UIC{$\mm \bullet$}
%\DP$
%
\end{varwidth}
\end{lrbox}
%
%\cc{\bb E}\  \leadsto_{\mm X} \  
%\AXC{\small$\Big\{\mm{\bb E}_{l}[\mm X\theta]\Big\}_{l}$}
%\RL{\tiny$\vec Y$}
%\UIC{\small$\cc{\bb E}[\mm X\theta]$}
%\DP
\left (\theta: \cc X \mapsto
\begin{tikzpicture}[baseline=-20]
\node(a) at (-0.6,0) {\small$\left\{ \usebox{\mypti}\right\}_{k}$};
\node(b) at (2.4,-0.6) {\small$\mm \nuzzet$};

\node(c) at (1.2,-1.2) {\small$\emptyset$};
\node(d) at (3.2,-1.2) {\small$\cc \nuzzet$};
\node(e) at (2.2,-1.7) {\small$ \nuzzet$};
%\node(f) at (2.2,-2.3) {\small$\nu X$};

\draw[thick] (c) to (b); 
\draw[thick] (c) to (0,-0.8);
\draw[thick] (c) to (e); 
\draw[thick] (e) to (d); 
%\draw[thick] (f) to (e); 

\end{tikzpicture}\right)
%\theta: \mm  X \mapsto
%\mm \nu X. 
%\AXC{$\left \{ \usebox{\mypti} \right\}_{j}$}
%\UIC{\small$\cc\bullet$}
%\RL{\tiny$\bullet$}
%\UIC{\small$\mm\bullet$}
%\DP
\endgroup%\begingroup\makeatletter\def\f@size{10}\check@mathfonts
%\begin{lrbox}{\mypti}% Store prooftree in \mypti
%\begin{varwidth}{\linewidth}
%$\AXC{\small$\Big \{\mm{\bb D}_{ji}[\cc X]\Big \}_{i}$}
%\RL{\tiny$\vec Z_{j}$}
%\UIC{$\cc{\bb D}_{j}[\cc X]$}
%\DP$
%\end{varwidth}
%\end{lrbox}
%\mm{\bb E}\  \leadsto_{\cc X} \  
%\AXC{\small$\Big\{\cc{\bb E}_{l}[\cc X\theta]\Big\}_{l}$}
%\RL{\tiny$\vec Y$}
%\UIC{\small$\mm{\bb E}[\cc X\theta]$}
%\DP
%\qquad \text{where} \qquad
%\theta: \cc X \mapsto
%\cc \nu X. 
%\AXC{$\left \{ \usebox{\mypti} \right\}_{j}$}
%\UIC{\small$\mm\bullet$}
%\RL{\tiny$\bullet$}
%\UIC{\small$\cc\bullet$}
%\DP
%\endgroup
$}
\end{tabular}
\caption{$ X$ $\cc{\mathrm{eliminable}}$ in $\mm{\bb E}$. }
\label{fig:21}
\end{subfigure} 

\medskip

\begin{subfigure}{\textwidth}
\leftskip=-0.5cm
\begin{tabular}{c c c c }
\adjustbox{scale=0.7}{$
\begingroup\makeatletter\def\f@size{10}\check@mathfonts
\begin{lrbox}{\mypti}% Store prooftree in \mypti
\begin{varwidth}{\linewidth}
\begin{tikzpicture}[anchor=base, baseline]

\node (a) at (0,0) {\small$\Big\{\mm{\bb D}_{jk}\langle\mm X\rangle\Big \}_{j}$};
\node (b) at (1.5,0) {\small $\cc X$};

\node(c) at (0.75,-1) {$\vec Z_{k}$};

\draw[thick] (c) to (a);
\draw[thick] (c) to (b);

\end{tikzpicture}
\end{varwidth}
\end{lrbox}
\mm{\bb E} = 
\begin{tikzpicture}[baseline=-20pt]

\node (a) at (0,0) {\small$\left\{ \usebox{\mypti}\right\}_{k}$};

\node (b) at (2.6,-0.4) {\small$\Big\{ \cc{\bb F}_{l}\langle\mm X\rangle\Big \}_{l}$};

\node(c) at (3.8, -0.5) {\small$\mm Y$};

\node(d) at (2.2,-1.5) {\small$\vec YX$};

\draw[thick] (d) to (0.4,-0.8);
\draw[thick] (d) to (b);
\draw[thick] (d) to (c);

\end{tikzpicture}
\endgroup
%\begingroup\makeatletter\def\f@size{10}\check@mathfonts
%\begin{lrbox}{\mypti}% Store prooftree in \mypti
%\begin{varwidth}{\linewidth}
%$\AXC{$\Big \{\mm{\bb D}_{ji}[\mm X]\Big \}_{i}$}\RL{\tiny$\vec Z_{j}$}\UIC{$\cc X$}\DP$
%\end{varwidth}
%\end{lrbox}
%\mm{\bb E} = \ 
%\AXC{$\left \{ \usebox{\mypti} \right\}_{j}$ }
%\AXC{$\phantom{A}$  }
%\noLine
%\UIC{\small$\Big \{\cc{E_{l}}[\mm X]\Big \}_{l}$}
%\RL{\tiny$\vec YX$}
%\BIC{\small$\mm{\bb E}[\mm X]$}
%\DP
%\endgroup
$}
& 
$  \leadsto_{\mm X} $
&
\adjustbox{scale=0.7}{$
\begin{tikzpicture}[baseline=-20pt]
\node (a) at (0,0) {\small$\Big\{\cc{\bb F}_{l}\langle\mm X\theta\rangle\Big\}_{l}$};

\node (b) at (1.2,0) {\small $\mm Y\theta$};

\node(c) at (0.6,-1) {\small$\vec Y$};

\draw[thick] (c) to (b); 
\draw[thick] (c) to (a);

\end{tikzpicture}
$}
&
\adjustbox{scale=0.7}{$
\begingroup\makeatletter\def\f@size{10}\check@mathfonts
\begin{lrbox}{\mypti}% Store prooftree in \mypti
\begin{varwidth}{\linewidth}
\begin{tikzpicture}
\node (a) at (-0.4,0) {\small$\Big \{\mm{\bb D}_{jk}\langle\mm X\mapsto \mm \bullet\rangle\Big \}_{j}$};

\node (b) at (1.2,0) {\small $\cc\bullet$};

\node(c) at (0.6,-1) {\small$\vec Z_{k}$};

\draw[thick] (c) to (b); 
\draw[thick] (c) to (a);

\end{tikzpicture}
%
%$\AXC{\small$\Big \{\cc{\bb D}_{ji}[\cc X]\Big \}_{i}$}
%\RL{\tiny$\vec Z_{j}$}
%\UIC{$\mm \bullet$}
%\DP$
%
\end{varwidth}
\end{lrbox}
\left (\theta: \mm X \mapsto
\begin{tikzpicture}[baseline=-20]
\node(a) at (-0.3,0) {\small$\left\{ \usebox{\mypti}\right\}_{k}$};
\node(b) at (2.4,-0.6) {\small$\mm \bullet$};

\node(c) at (1.2,-1.2) {\small$\bullet$};
%\node(d) at (1.2,-1.9) {\small$\mu X$};

\draw[thick] (c) to (b); 
\draw[thick] (c) to (0.4,-0.9);
%\draw[thick] (c) to (d); 

\end{tikzpicture}\right)
%\cc \mu X. 
%\AXC{$\left \{ \usebox{\mypti} \right\}_{j}$}
%\RL{\tiny$\bullet$}
%\UIC{\small$\cc\bullet$}
%\DP
\endgroup
%
%\begingroup\makeatletter\def\f@size{10}\check@mathfonts
%\begin{lrbox}{\mypti}% Store prooftree in \mypti
%\begin{varwidth}{\linewidth}
%$\AXC{\small$\Big \{\mm{\bb D}_{ji}[\mm X]\Big \}_{i}$}
%\RL{\tiny$\vec Z_{j}$}
%\UIC{$\cc \bullet$}
%\DP$
%\end{varwidth}
%\end{lrbox}
%\mm{\bb E}\  \leadsto_{\mm X} \  
%\AXC{\small$\Big\{\cc{\bb E}_{l}[\mm X\theta]\Big\}_{l}$}
%\RL{\tiny$\vec Y$}
%\UIC{\small$\mm{\bb E}[\mm X\theta]$}
%\DP
%\qquad \text{where} \qquad
%\theta: \mm X \mapsto
%\mm \mu X. 
%\AXC{$\left \{ \usebox{\mypti} \right\}_{j}$}
%\RL{\tiny$\bullet$}
%\UIC{\small$\mm\bullet$}
%\DP
%\endgroup
$}
\end{tabular}
\caption{$ X$ $\mm{\mathrm{eliminable}}$ in $\mm{\bb E}$. }
\label{fig:22}
\end{subfigure}
\caption{\small Yoneda reduction of $X$-eliminable trees.}% possible cases in which $X$ is eliminable.}
\label{fig:rewriting}
\end{figure}

The proposition below shows that a variable $X$ is $\cc{\mathrm{eliminable}}$ (resp. $\mm{\mathrm{eliminable}}$) in the tree of a $\Nd$-type $\forall X.A$ exactly when $\forall X.A$ matches, up to $\beta\eta$-isomorphisms, with the left-hand type of the schema $\equiv_{\cc X}$  (resp. $\equiv_{\mm X}$).

%
%capture the cases in which one can apply an isomorphism $\equiv_{\cc X}$, and the situations in which a variable $X$ is $\mm{\mathrm{eliminable}}$
%capture the cases in which one can apply an isomorphism $\equiv_{\mm X}$. This 
%is shown in detail by the proposition below:

%The configurations in which $X$ is eliminable can be characterized as follows: % characterization of eliminable variables:

\begin{restatable}{proposition}{normalform}\label{prop:normalform}
\begin{itemize}
%Let $\cc{\bb E}$ be a positive tree and $X$ a variable quantified in the last rule of $\cc{\bb E}$. 
\item $X$ is $\cc{\mathrm{eliminable}}$ in $\cc{\bb E}$ iff $\cc{\bb E}$ is as in Fig.~\ref{fig:11} left, 
for some polynomial family $(\cc {\mathsf D}_{jk}\langle\cc X\rangle)_{k\in K, j\in J_{k}}$, family $(\mm{\bb F}_{l}\langle\cc X\rangle)_{l\in L}$ and blue variable $\cc Y$ (possibly $\cc X$ itself).

\item $X$ is $\mm{\mathrm{eliminable}}$ in $\cc{\bb E}$ iff $\cc{\bb E}$ is as in Fig.~\ref{fig:12} left, for some polynomial family $(\cc {\mathsf D}_{jk}\langle\mm X\rangle)_{k\in K, j\in J_{k}}$, family $(\mm{\bb E}_{k}\langle\mm X\rangle)_{k\in K} $, family 
$(\mm{\bb F}_{l}\langle\mm X\rangle)_{l\in L}$ and blue variable $\cc Y\neq \cc X$. 

\item $X$ is $\cc{\mathrm{eliminable}}$ in $\mm{\bb E}$ iff $\mm{\bb E}$ is as in Fig.~\ref{fig:21} left, for some polynomial family $(\mm {\mathsf D}_{jk}\langle\cc X\rangle)_{k\in K,j\in J_{k}}$, family $(\cc{\bb E}_{k}\langle\cc X\rangle)_{k\in K}$, family
$(\cc{\bb F}_{l}\langle\cc X\rangle)_{l\in L}$ and red variable $\mm Y\neq \mm X$.
 
\item $X$ is $\mm{\mathrm{eliminable}}$ in $\mm{\bb E}$  iff $\mm{\bb E}$ is as in Fig.~\ref{fig:22} left, for some
 polynomial family $(\mm {\mathsf D}_{jk}\langle\mm X\rangle)_{k\in K, j\in J_{k}}$, family $(\cc{\bb F}_{l}\langle\mm X\rangle)_{l\in L}$ and red variable $\mm Y$ (possibly $\mm X$ itself).
 
\end{itemize}
%
%Moreover, in all cases above, 
%\begin{itemize}
%\item $\kappa_{X}(\bb E)=0$ iff there is no occurrence of $X$;
%\item $\kappa_{X}(\bb E)=1$ iff the index set $J$ is empty;
%\item $\kappa_{X}(\bb E)=2$ iff the $\bb D_{ji}$ have no occurrence of $X$;
%\item $\kappa_{X}(\bb E)=3$ iff none of the above holds.
%\end{itemize}
\end{restatable}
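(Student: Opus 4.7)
The plan is to unpack the definition of modular node and read off, in each of the four cases, the allowed tree shapes from the distance, head-avoidance, and parallelism constraints. I will give the argument for the first claim in detail; the remaining three unfold by the obvious symmetry swapping colours and tree variance.

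Suppose $X$ is $\cc{\mathrm{eliminable}}$ in $\cc{\bb E}$. Every node $\alpha:\mm X$ lies in the subtree rooted at $\bb r_X$ (since $X$ is bound exactly once in the tree), and being modular forces $\alpha\neq\bb h_X$ together with $1\leq d(\alpha,\bb r_X)\leq 2$. In a positive tree the unique variable-leaf directly adjacent to $\bb r_X$ is $\bb h_X$, whose label is necessarily blue, so any such $\alpha$ must sit at distance exactly $2$ from $\bb r_X$, i.e.\ must be the head of an immediate negative subtree of $\bb r_X$. I would then split the immediate negative subtree-children of $\bb r_X$ into two families: those whose head is $\mm X$ and those whose head is a different negative variable. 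Recording, for each member of the first family, its root label $\vec Z_k$ and its positive subtree-children yields the polynomial family $(\cc{\mathsf D}_{jk}\langle\cc X\rangle)_{k\in K,\, j\in J_k}$ pictured on the left of Fig.~\ref{fig:11}; the second family becomes $(\mm{\bb F}_l\langle\cc X\rangle)_{l\in L}$. The $\langle\cc X\rangle$ annotations are forced because an $\mm X$-leaf at distance $\geq 3$ from $\bb r_X$ would violate modularity, while the ``no parallel $\mm X$'' clause is automatic here since the siblings of a head-leaf are subtree-roots, labeled by sets of variables rather than by variables. The converse direction is an immediate reading of the tree in Fig.~\ref{fig:11}: its $\mm X$-leaves are exactly the heads of the first-family subtrees, each at distance $2$ from $\bb r_X$, distinct from $\bb h_X=\cc Y$, and with no parallel $\mm X$.

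For the remaining three claims I expect the template to be identical, modulo two minor adjustments. First, when the tracked colour coincides with the head colour (the $\mm{\mathrm{eliminable}}$ case in a positive tree of Fig.~\ref{fig:12} and the $\cc{\mathrm{eliminable}}$ case in a negative tree of Fig.~\ref{fig:21}), the head $\bb h_X$ could itself carry that colour, so the clause $\alpha\neq\bb h_X$ forces the side conditions $\cc Y\neq\cc X$ and $\mm Y\neq\mm X$ displayed in those figures. Second, in these mixed-variance cases the modular $X$-leaves appear not as heads of immediate subtrees (whose heads must carry the opposite colour to the head of $\bb r_X$) but as additional leaf-children of the immediate subtree-roots, whence the explicit $\cc X$ (resp.\ $\mm X$) leaves drawn beside the other subtree components. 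I foresee the only mildly delicate point to be the bookkeeping that turns the combination ``distance $\leq 2$ plus no parallel tracked-colour leaf'' into the exact polynomial shape of Fig.~\ref{fig:yonedazz}: parallelism enforces at most one tracked-colour leaf per immediate subtree of $\bb r_X$, and the polynomial indexing $(k,j)$ is precisely the record of how these distinguished subtrees are grouped by their common root label $\vec Z_k$.
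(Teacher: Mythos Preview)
Your proof is correct and follows essentially the same route as the paper: partition the immediate subtrees of $\bb r_X$ according to whether they carry a leaf of the tracked colour, and use the distance bound $d\leq 2$ to pin any such leaf down either as the subtree's head (first and fourth cases) or as a bare-variable child of the subtree's root alongside its head (second and third cases), dispatching the negative-tree cases by colour/variance duality. The paper also writes out the second case explicitly before invoking duality, but your sketch of the two adjustments (the $Y\neq X$ side condition and the shift of the modular leaf from head to sibling position) captures exactly the same content.
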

\begin{proof}%[Proof of Proposition \ref{prop:normalform}].
If $ X$ is $\cc{\mathrm{eliminable}}$ in a positive tree $\cc{\mathsf E}=$ 
\adjustbox{scale=0.7}{\begin{tikzpicture}[baseline=-20]
	\node(a) at (0,0) {\small $\Big \{ \mm{\bb E}_{i}\Big \}_{i\in I}$};
	\node (b) at (1.4,0) {\small$ \cc X$};
	\node (c) at (0.7,-0.8) {\small${ \vec{ Y}} X$};
	\draw (a) to (c);
	\draw (b) to (c);
	\end{tikzpicture} }, 
	then let us split the negative subtrees $\mm{\bb E}_{i}$ into those which do not contain terminal nodes  with label $\mm X$, that we indicate as $\FFun{\cc X}{\mm{\mathsf F}_{l}}$, and those which contain a (necessarily unique) modular node of label $\mm X$, that we indicate
as $\mm{\bb G}_{k}$. Since $\mm{\bb G}_{k}$ contains a unique modular node $\alpha_{k}:\mm X$ which is no more than two edges far away from the root of $\cc{\mathsf E}$, then $\mm {\bb G}_{k}$ is of the form 	\adjustbox{scale=0.8}{\begin{tikzpicture}[baseline=-20]
	\node(a) at (0,0) {\small $\Big \{ \FFun{\cc X}{\cc{\mathsf D}_{jk}}\Big \}_{i\in I}$};
	\node (b) at (1.4,0) {\small$ \mm X$};
	\node (c) at (0.7,-0.8) {\small${\vec{  Z}_{k}}$};
	\draw (a) to (c);
	\draw (b) to (c);
	\end{tikzpicture} }, where possibly $I= Z_{k}=\emptyset$, (so that $\mm{\bb G}_{k}=\mm X$).

If $X$ is $\mm{\mathrm{eliminable}}$ in a positive tree $\cc{\mathsf E}=$ 
\adjustbox{scale=0.7}{\begin{tikzpicture}[baseline=-20]
	\node(a) at (0,0) {\small $\Big \{ \mm{\bb E}_{i}\Big \}_{i\in I}$};
	\node (b) at (1.4,0) {\small$ \cc X$};
	\node (c) at (0.7,-0.8) {\small$\vec{ Y}X$};
	\draw (a) to (c);
	\draw (b) to (c);
	\end{tikzpicture} }, 
	then, since all nodes $\alpha:\cc X$ are modular, it must be $\cc Y\neq \cc X$. Let us split the positive subtrees $\cc{\mathsf E}_{i}$ into those which do not contain terminal nodes  with label $\cc X$, that we indicate as $\FFun{\mm X}{\mm{\mathsf F}_{l}}$, and those which contain a (necessarily unique) modular node of label $\cc X$, that we indicate
as $\cc G_{k}$. Since $\cc G_{k}$ contains a unique modular node $\alpha_{k}:\cc \alpha$ which is no more than two edges far away from the root of $\cc{\mathsf E}$, then $\cc G_{k}$ is of the form 	\adjustbox{scale=0.8}{\begin{tikzpicture}[baseline=-20]
	\node(a0) at (-1.5,0) {\small$\mm X$};
	\node(a) at (0,0) {\small $\Big \{ \FFun{\mm X}{\cc{\mathsf D}_{jk}}\Big \}_{i\in I}$};
	\node (b) at (1.4,0) {\small$ \FFun{\mm X}{\cc{\mathsf E}_{k}}$};
	\node (c) at (0.7,-0.8) {\small${\vec{{ Z}}_{k}}$};
	\draw (a) to (c);
	\draw (b) to (c);
		\draw (a0) to (c);
	\end{tikzpicture} }.
	
	The two cases with a negative tree are described in a dual way (i.e. by switching blue and red everywhere).
\end{proof}

%\begin{remark}
%If $\forall X.A$ is a $\Nd$-type in NF and  $X$ is solvable in its associated tree, then $\forall X.A$  is of one of the forms below
%
%\adjustbox{center, scale=0.9}{$
%\forall X. \Big\langle \forall \vec Z_{j}. \big\langle \FFun{\cc X}{D_{ij}}\big\rangle_{i\in I}\To \mm X\Big\rangle_{j\in J} \To \FFun{\cc X}{B}
%\qquad\qquad
%\forall X. \Big\langle \forall \vec Z_{j}. \cc X \To \FFun{\mm X}{D_{j}}\Big\rangle_{j\in J} \To \FFun{\mm X}{B}
%$}
%
%\noindent where $\big\langle A_{i}\big\rangle_{i\in I}\To B$ is shorthand for $A_{i_{1}}\To \dots \To A_{i_{k}}\To B$, where $i_{1}<\dots <i_{k}$ is a linear ordering of $I$.
%
%\end{remark}

For all four cases of Prop. \ref{prop:normalform} we define a rewriting rule which eliminates $X$.
\begin{definition}[Yoneda reduction]
Let $\bb F\in \C P$ and $X\in \BV{\bb F}$ be an eliminable variable. The rules $\bb F
\leadsto_{X^{c}} \bb F'$ 
consist in replacing the subtree $\bb E$ of $\bb F$ rooted in $\bb r_{X}$ as   
 illustrated in Fig.~\ref{fig:rewriting}. %, depending on which of the configurations from Fig.~\ref{fig:rewriting} applies.

\end{definition}

\begin{example}
%In the cases in Fig.~\ref{fig:treeexp2}, \ref{fig:treeexp3}, \ref{fig:treeexp4}, Yoneda reduction provides a formal counterpart of the rewritings sketched in Equations \ref{eq:yonebasic}, \ref{eq:yonepoly}, \ref{eq:yonedual}. 
The tree in Fig.~\ref{fig:root3} rewrites as illustrated in  Fig.~\ref{fig:root4}.% and \ref{fig:red4}.
\end{example}

By inspecting the rules in Fig.~\ref{fig:rewriting} one can check that $\bb E\leadsto_{X^{c}}\bb E'$ implies $\tau(\bb E)\equiv_{\bb Y}\tau(\bb E')$. From this we can deduce by induction: 
\begin{lemma}\label{lemma:converge}
For all $\Nd$-type $A$, if $\cc{\bb t}(A) \leadsto^{*} \bb E \in \C P_{0}$, then $A\equiv_{\bb Y}\tau(\bb E)\in \NImu$.

\end{lemma}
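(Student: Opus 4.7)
The plan is to argue by induction on the length $n$ of the reduction sequence $\cc{\bb t}(A) \leadsto^{*} \bb E$, proving along the way the stronger invariant that $\tau(\cc{\bb t}(A)) \equiv_{\bb Y} \tau(\bb E)$. The base case $n=0$ is immediate: we have $\bb E = \cc{\bb t}(A)$ and, as noted just after the definition of $\tau$, $\tau(\cc{\bb t}(A)) = A$, so $A = \tau(\bb E)$ trivially. For the inductive step, it suffices to establish the one-step claim: whenever $\bb F \leadsto_{X^{c}} \bb F'$, one has $\tau(\bb F) \equiv_{\bb Y} \tau(\bb F')$. Combined with transitivity of $\equiv_{\bb Y}$ and the induction hypothesis, this yields $A \equiv_{\bb Y} \tau(\bb E)$.

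For the one-step claim, first I would handle the case in which $X^{c}$ is eliminated at the root of $\bb F$, by going through the four patterns of Fig.~\ref{fig:rewriting}. In each case, Prop.~\ref{prop:normalform} guarantees that $\bb F$ has the displayed shape, and applying $\tau$ to the left-hand and right-hand templates produces, respectively, the left-hand and right-hand sides of an instance of $\equiv_{\cc X}$ or $\equiv_{\mm X}$ from Fig.~\ref{fig:yonedazz}. The one key compatibility to verify is that the tree-level substitutions $\langle \cc X \mapsto \cc\bullet\rangle$ and $\langle \mm X \mapsto \mm\nuzzet\rangle$ appearing in the rewritten subtrees translate, under $\tau$, to the type-level substitutions $[X \mapsto \mu X.\sum_{k}\exists \vec Z_{k}.\prod_{j}\FFun{\cc X}{\cc{\bb D}_{jk}}]$ and $[X \mapsto \nu X.\forall \vec Z_{k}.\prod_{j}\FFun{\mm X}{\mm{\bb D}_{jk}}]$; this follows from the two $\bullet$/$\nuzzet$ clauses of Fig.~\ref{fig:tau} together with a straightforward commutation property $\tau(\bb E\langle X\mapsto \bb G\rangle) = \tau(\bb E)[X \mapsto \tau(\bb G)]$, provable by induction on $\bb E$. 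The side conditions of Fig.~\ref{fig:yonedazz} concerning when $\{\mu X.\}$ or $\{\nu X.\}$ is actually inserted match the two sub-cases in the tree rules (head occurring or not among the subtrees). When the reduction is applied inside a larger context, we invoke the fact that $\equiv_{\bb Y}$ is a congruence with respect to all type constructors of $\Ntot$ (including $\forall, \to, \times, +, \mu, \nu$), which follows because its generators are themselves congruence axioms.

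For the conclusion $\tau(\bb E) \in \NImu$, I would inspect Fig.~\ref{fig:tau}: a universal quantifier in $\tau(\bb E)$ can only arise from the $\forall \vec{X}$ prefix at an internal node labelled by some variable $X$, and such a prefix is non-trivial only if $\vec X$ contains bound variables. Since $\bb E \in \C P_{0}$ has no bound variables, $\tau(\bb E)$ contains no $\forall$, hence belongs to the quantifier-free fragment $\NImu$.

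The main technical obstacle is the case analysis in the one-step claim: unpacking the four templates of Fig.~\ref{fig:rewriting} and matching them precisely with the two Yoneda schemas of Fig.~\ref{fig:yonedazz} requires careful bookkeeping of the polynomial-family structure $(k,j)$ and of whether the "head" variable of each subtree is the eliminated $X$ or some other variable. The remainder of the argument is routine induction and congruence.
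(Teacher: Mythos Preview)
Your proposal is correct and follows exactly the approach the paper sketches: the paper states just before the lemma that ``By inspecting the rules in Fig.~\ref{fig:rewriting} one can check that $\bb E\leadsto_{X^{c}}\bb E'$ implies $\tau(\bb E)\equiv_{\bb Y}\tau(\bb E')$. From this we can deduce by induction'', and you have simply unfolded this one-line hint into the full induction on reduction length, the case analysis over the four templates, the congruence argument for non-root reductions, and the observation that $\bb E\in\C P_{0}$ forces $\tau(\bb E)$ to be quantifier-free. The only minor addition worth making explicit is that the $\forall\vec Y$ and $\exists\vec Y_{k}$ prefixes in the $\bullet/\nuzzet$ clauses of Fig.~\ref{fig:tau} also vanish when $\bb E$ has no bound variables, but this is covered by the same reasoning you give for the internal-node case.
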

The lemma above suggests to study the elimination of quantifiers from $\Nd$-types by studying the convergence of $\Nd$-trees onto simple polynomial trees. This will be our next goal.% in the next two sections.

\begin{figure}
%\begin{subfigure}{0.48\textwidth}
%$$
%\bb E \quad = \quad 
%\AXC{}
%\UIC{\small$\mm X $}
%\AXC{}
%\UIC{\small$ \mm Y$}
%\BIC{\small$\cc X$}
%\RL{\tiny$\forall Y$}
%\UIC{\small$\mm Y$}
%\AXC{}
%\UIC{\small$\cc W$}
%\RL{\tiny$\forall W$}
%\UIC{\small$\mm X$}
%\RL{\tiny$\forall X$}
%\BIC{\small$\cc X$}
%\DP
%$$
%\caption{\small An example of simple tree.}
%% for $\forall Z\forall X. \forall Y. ((Y \To Z \To  X) \To Y)\To (Z\To X)\To X$ and $\forall X.  (C\To X)\To\forall Z. (\forall Y. ((Y\To Z \To X) \To Y)) \To X$.}
%\label{fig:treeexp1}
%\end{subfigure}

%\begin{subfigure}{.48\textwidth}
%$$
%\bb E_{2} \quad = \quad 
%\AXC{}
%\UIC{\small$\cc X$}
%\UIC{\small$\mm X$}
%\AXC{}
%\UIC{\small$\cc X$}
%\RL{\tiny$\forall Y$}
%\UIC{\small$\mm Y$}
%\RL{\tiny$\forall X$}
%\BIC{\small$\cc X$}
%\DP
%$$
%\caption{Tree for the lefthand type in Eq. \ref{eq:yonebasic}, with $A[X]=X$ and $P[X]=(\forall Y.X\To  Y)\To X$.}
%\label{fig:treeexp2}
%\end{subfigure}
\begin{subfigure}{0.5\textwidth}
\adjustbox{scale=0.75, center}{$
\bb E_{1}=
\begin{tikzpicture}[baseline=-20]

\node (a) at (0,0) {\small$\U{\cc X}$};
\node (b) at (1,0) {\small$\U{\mm X}$};
\node (c) at (0.5,-0.7) {\small$\emptyset$};

\node (aa) at (2,0) {\small$\U{\cc X}$};
\node (bb) at (3,0) {\small${\mm Y}$};
\node (cc) at (2.5,-0.7) {\small$ Y$};

\node(d) at (4,-0.7) {\small$\cc X$};

\node(r) at (2.5,-1.4) {\small$ X$};

\draw (a) to (c);
\draw (b) to (c);
\draw (aa) to (cc);
\draw (bb) to (cc);
\draw (r) to (c);
\draw (r) to (d);
\draw (r) to (cc);

\draw[dashed, bend right=65] (d) to (b);
\draw[dashed, bend right=45] (aa) to (b);
\draw[dashed, bend left=40] (a) to (b);

\end{tikzpicture}
%\qquad
%\leadsto_{\cc X}
%\qquad
%\begin{tikzpicture}[baseline=-20]
%
%
%\node(m1) at (0,0) {\small$\cc{\boldsymbol\mu}$};
%
%\node(y) at (1,0 ) {\small$\U{\mm Y}$};
%
%\node(yy) at (0.5,-0.7) {\small$Y$};
%
%\draw (yy) to (m1);
%\draw (yy) to (y);
%
%\node(l) at (1.5, -0.7) {\small$\cc{\boldsymbol\mu}$};
%
%
%\node(b) at (1,-1.4) {\small$\emptyset$};
%
%\draw (b) to (l);
%\draw (b) to (yy);
%
%\end{tikzpicture}
%\qquad \leadsto_{\mm Y}\qquad
%\begin{tikzpicture}[baseline=-20]
%
%
%\node(m1) at (0,0) {\small$\cc{\boldsymbol\mu}$};
%
%\node(y) at (1,0 ) {\small$\mm{\B 0} $};
%
%\node(yy) at (0.5,-0.7) {\small$\emptyset$};
%
%\draw (yy) to (m1);
%\draw (yy) to (y);
%
%\node(l) at (1.5, -0.7) {\small$\cc{\boldsymbol\mu}$};
%
%
%\node(b) at (1,-1.4) {\small$\emptyset$};
%
%\draw (b) to (l);
%\draw (b) to (yy);
%
%\end{tikzpicture}
$}
\caption{\adjustbox{scale=0.9}{ $\forall X. (\cc X \To \mm X)\To (\forall Y.\cc X \To \mm Y)\To \cc X$.}}
% and its 
%reduction.}
\label{fig:root1}
\end{subfigure}
\begin{subfigure}{0.48\textwidth}
\adjustbox{scale=0.75, center}{$
\bb E_{2}=
\begin{tikzpicture}[baseline=-20]

\node (a) at (0,0) {\small$\U{\cc X}$};
\node (aa) at (1,0) {\small$\U{\mm X}$};
\node (u) at (0.5,-0.7) {\small$\emptyset$};

\draw (a) to (u);
\draw (aa) to (u);

\node (b) at (1.7,0) {\small$\U{\cc X}$};
\node (bb) at (2.7,0) {\small${\mm Z}$};
\node (uu) at (2.2,-0.7) {\small$Z$};

\draw (b) to (uu);
\draw (bb) to (uu);

\node (cb) at (3.5,0) {\small$\U{\cc Y}$};
\node (cbb) at (4.5,0) {\small$\U{\mm X}$};
\node (c) at (4,-0.7) {\small$\emptyset$};

\draw (cb) to (c);
\draw (cbb) to (c);

\node (d) at (5,-0.7) {\small$\cc Y$};

\node(t) at (3,-1.4) {\small$ XY$};

\draw (c) to (t);
\draw (d) to (t);
\draw (u) to (t);
\draw (uu) to (t);

\draw[dashed, bend right=35] (cbb) to (b);
\draw[dashed, bend right=35] (cbb) to (a);
\draw[dashed, bend right=55] (aa) to (a);
\draw[dashed, bend left=55] (aa) to (b);

%\draw[dashed, bend left=55] (bb) to (cb);
%\draw[dashed, bend left=85] (bb) to (d);

\end{tikzpicture}
$}
\caption{\adjustbox{scale=0.9}{$\forall XY. (\cc X \To\mm X) \To (\forall Z.\cc X\To \mm Z)\To (\cc Y\To \mm X)\To \cc Y  $.}}
\label{fig:root3}
\end{subfigure}
\caption{Polynomial trees with highlighted modular nodes and modular pairs.}
\label{fig:allfig}
\end{figure}
\begin{figure}
\adjustbox{scale=0.8}{$
\bb E_{2} 
\quad \leadsto_{\cc X}\quad
\begin{tikzpicture}[baseline=-20]

\node(m1) at (0,0) {\small$\cc{\boldsymbol\mu +\cc Y}$};

\node(y) at (1,0 ) {\small${\mm Z}$};

\node(yy) at (0.5,-0.7) {\small$Z$};

\draw (yy) to (m1);
\draw (yy) to (y);

\node(l) at (1.5, -0.7) {\small$\cc{Y}$};

\node(b) at (1,-1.4) {\small$Y$};

\draw (b) to (l);
\draw (b) to (yy);

\end{tikzpicture}
\quad \leadsto_{\mm Z}\ 
\begin{tikzpicture}[baseline=-20]

\node(m1) at (0,0) {\small$\cc{\boldsymbol\mu +\cc Y}$};

\node(y) at (1,0 ) {\small$\B{\mm 0}$};

\node(yy) at (0.5,-0.7) {\small$\emptyset$};

\draw (yy) to (m1);
\draw (yy) to (y);

\node(l) at (1.5, -0.7) {\small$\cc{Y}$};

\node(b) at (1,-1.4) {\small$Y$};

\draw (b) to (l);
\draw (b) to (yy);

\end{tikzpicture}
 \ \leadsto_{\cc Y}\quad
\begin{tikzpicture}[baseline=-20]

\node(m1) at (0,0) {\small$\cc{\boldsymbol\mu + \B 0}$};

\node(y) at (1,0 ) {\small$\mm{\B 0} $};

\node(yy) at (0.5,-0.7) {\small$\emptyset$};

\draw (yy) to (m1);
\draw (yy) to (y);

\node(l) at (1.5, -0.7) {\small$\cc{\B 0}$};

\node(b) at (1,-1.4) {\small$\emptyset$};

\draw (b) to (l);
\draw (b) to (yy);

\end{tikzpicture}
$}
\adjustbox{scale=0.6}{$
\quad\quad \quad
\left ( 
\cc{\boldsymbol \mu + \bb E}=\begin{tikzpicture}[scale=0.9,baseline=-10pt]
\node (a) at (0,0.1 ) {\small$\cc\bullet$};
\node (aa) at (0.7,0.1) {\small$\mm\bullet$};
\node(aaa) at (0.35,-0.5) {\small$\emptyset$}; 

\node (b) at (1.5,0.1 ) {\small$\cc{\bb E}$};
\node (bb) at (2.2,0.1) {\small$\mm\bullet$};
\node(bbb) at (1.85,-0.5) {\small$\emptyset$}; 

\node(d) at (3, -0.5) {\small$\cc \bullet$};

\node(c) at (1.85, -0.9) {\small$\bullet$};

\draw (a) to (aaa);
\draw (aa) to (aaa);
\draw (b) to (bbb);
\draw (bb) to (bbb);
\draw (a) to (aaa);
\draw (c) to (aaa);
\draw (c) to (bbb);
\draw (c) to (d);
\end{tikzpicture}
\right)
$}
\caption{Reduction of $\bb E_{2}$.}
% (\cc X\To \FFun{\mm X}{A})\To (\cc X\To B)\To \FFun{\mm X}{C}$, where $\FFun{\cc X}{A}=\cc X$, $B=\cc Y$ and 
%$\FFun{\mm X}{C}=\mm X\To \cc Y$.}
\label{fig:root4}
\end{figure}

%\section{The Characteristic of a Polymorphic Type}\label{secCharacteristic}

%
%While the examples from the previous section only apply to types of the form $\forall \vec X.A$, where $A$ is a simple type, i

%In this section we exploit Yoneda reduction to establish 
% two sufficient conditions to convert a $\Nd$-type $A$ into a quantifier-free type $A'$ such that $A\equiv_{\bb Y}A'$. 

%
%
% so-called \emph{rank 1} polymorphic types (i.e. to the universal closure of  
%
%In this section we investigate some properties of the rewriting induced by Yoneda reduction, and we introduce in particular the notion of characteristic of a polymorphic type, as a way to predict the reduction behavior of the type. While in the previous section we discussed reductions in the language of $\Nd$-types, in this section we will make an essential use of the graph-theoretic formulation from Section 3.

\subsection{The Coherence Condition}

When a reduction is applied to $\bb E$, several sub-trees of $\bb E$ can be either erased or copied and moved elsewhere. Hence, the resulting tree $\bb E'$ might well have a greater size and even a larger number of bound variables than $\bb E$. Nevertheless, sequences of Yoneda reductions always terminate, as one can define a measure which decreases at each step.

\begin{restatable}{proposition}{termination}\label{prop:termination}
There is no infinite sequence of Yoneda reductions.
% $\bb E_{1}\leadsto_{X_{1}^{c_{1}}} \bb E_{2}\leadsto_{X_{2}^{c_{2}}} \dots \leadsto_{X_{n-1}^{c_{n-1}}} \bb E_{n}\leadsto_{X_{n}^{c_{n}}} \dots$.
\end{restatable}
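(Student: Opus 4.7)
The plan is to define a well-founded measure $\mu$ on polynomial trees, valued in finite multisets of natural numbers ordered by the Dershowitz--Manna multiset order, and to show that every Yoneda reduction step strictly decreases $\mu$. A naive first guess, $\mu_0(\bb E)=|\BV{\bb E}|$, fails: a single $\leadsto_{X^c}$-step can duplicate the binders $\vec Z_k$ living inside the $\cc{\bb D}_{jk}$ subtrees (one copy per occurrence of $\cc X$, resp.~$\mm X$, substituted by $\theta$), so $|\BV{\bb E}|$ may grow. The refinement is to weight each binder by a quantity that does not see what is happening under any $\bullet/\nuzzet$ node.

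Concretely, for a polynomial tree $\bb F$ let $n(\bb F)$ denote the number of $\forall$-bound variables appearing in the \emph{truncated} subtree obtained from $\bb F$ by cutting at the first occurrence of $\bullet$ or $\nuzzet$ along every downward branch (i.e.\ $n(\bb F)=\sum_\alpha |\vec Y_\alpha|$, where $\alpha$ ranges over those root-nodes of $\bb F$ that have no $\bullet/\nuzzet$ proper ancestor in $\bb F$). For each $X \in \BV{\bb E}$, set $w(X,\bb E) := n(\bb F_X)$ with $\bb F_X$ the subtree of $\bb E$ rooted at $\bb r_X$, and let $\mu(\bb E) := \{\, w(X,\bb E) \mid X \in \BV{\bb E}\,\}$ as a multiset. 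The crucial property of $n$ is its invariance under substitution of a closed $\bullet/\nuzzet$-rooted expression for a variable leaf: if $\bb G$ is such an expression (in particular the $\theta$ produced by any Yoneda step) and $\cc X$ is a variable leaf of $\bb F$, then $\bb F[\cc X \mapsto \bb G]$ merely turns a variable leaf into a truncation leaf, and neither contributes to $n$; hence $n(\bb F[\cc X\mapsto \bb G]) = n(\bb F)$.

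I would then verify, case by case from Fig.~\ref{fig:rewriting}, that $\mu(\bb E) > \mu(\bb E')$ whenever $\bb E \leadsto_{X^c} \bb E'$. Binders $W$ whose root $\bb r_W$ lies outside $\bb F_X$ have $w(W,\bb E') = w(W,\bb E)$ since their subtrees are untouched. Binders $W$ whose root $\bb r_W$ lies inside some $\mm{\bb F}_l$ or along the $\cc Y$-branch also satisfy $w(W,\bb E') = w(W,\bb E)$, by the invariance property applied to the substitution of $\theta$ for $\cc X$ (resp.\ $\mm X$). Binders $Z$ with $\bb r_Z$ inside some $\cc{\bb D}_{jk}$ (including the $\vec Z_k$ themselves) produce one fresh copy in $\bb E'$ per occurrence of $\cc X$ (resp.\ $\mm X$) replaced by $\theta$, but in every copy the truncated subtree at $\bb r_Z$ coincides, up to relabelling leaf $\cc X$'s as $\cc\bullet$'s, with the one in $\bb E$, so $w(Z,\bb E') = w(Z,\bb E)$. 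Finally $X$ itself disappears from the multiset, and the key inequality $w(Z,\bb E) < w(X,\bb E)$ holds for every such $Z$ because the truncated subtree at $\bb r_Z$ is strictly contained in the truncated subtree at $\bb r_X$, the latter carrying in addition the binder of $X$ at its root.

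Putting these together, $\mu(\bb E')$ is obtained from $\mu(\bb E)$ by removing the single element $w(X,\bb E)$ and adding finitely many (possibly zero) copies of strictly smaller values, leaving all other elements unchanged; this is a strict decrease in the Dershowitz--Manna order, which is well-founded on finite multisets of naturals, so no infinite reduction sequence can exist. The main obstacle I anticipate is the bookkeeping for binders inside $\mm{\bb F}_l$ and along the $\cc Y$-branch: one must check carefully that $\theta$ is always substituted at positions that sit at or beyond the truncation boundary of every such binder, so that $n$ is genuinely preserved rather than merely upper-bounded; should this fail in some corner case, a lexicographic fallback $(\text{max truncated-binder depth}, \mu(\bb E))$ should still force termination by the same line of reasoning.
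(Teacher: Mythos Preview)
Your multiset measure has a genuine gap. The inequality ``$w(Z,\bb E) < w(X,\bb E)$ for every duplicated $Z$'' relies on the truncated subtree at $\bb r_Z$ being contained in the truncated subtree at $\bb r_X$, but this fails whenever a $\bullet$ or $\nuzzet$ node lies on the path from $\bb r_X$ to $\bb r_Z$ inside some $\cc{\bb D}_{jk}$ --- which happens as soon as one previous reduction has already deposited a $\theta$ there. Concretely, start from $\forall X_1 X_2.\,(A_{11}\To X_1)\To(X_1\To X_2)\To(X_2\To X_2\To W)\To W$ with $A_{11}=\forall Z_1\dots Z_n.\,Z_1\To\dots\To Z_n\To V$. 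After reducing $X_1$, the $Z_i$'s sit below the new $\bullet$ inside what is now the sole $\cc{\bb D}^{(2)}$ for the $X_2$-step, so $w(X_2)=1$ while each $w(Z_i)=n$; reducing $X_2$ then creates two copies of every $Z_i$ (one per $\cc X_2$ leaf in the last argument), and your multiset passes from $\{1\}\cup\{n\}^{n}$ to $\{n\}^{2n}$, which is \emph{not} a Dershowitz--Manna decrease. Your lexicographic fallback does not help, since the offending binders already live below the truncation boundary and are invisible to any ``truncated depth''. (A minor slip, harmless for the order: binders $W$ with $\bb r_W$ an \emph{ancestor} of $\bb r_X$ do have their subtrees touched --- their weight strictly drops rather than staying fixed.)

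A clean repair of your approach is to let the multiset range only over those $X\in\BV{\bb E}$ whose $\bb r_X$ has no $\bullet/\nuzzet$ ancestor in $\bb E$. Then a step removes $X$ together with all such $Z$'s in the $\cc{\bb D}_{jk}$'s (they fall under the new $\bullet$ and leave the index set), strictly decreases the weights of $X$'s ancestors and of the $\vec Y$'s, and preserves the weights inside the $\mm{\bb F}_l$'s by exactly your invariance argument; nothing is added, so the multiset strictly drops. The paper avoids the issue along a different axis: it weights each binder by the \emph{un-truncated} leaf count $K_j$ of its subtree, so that $K_Z<K_X$ follows from plain strict containment regardless of intervening $\bullet/\nuzzet$, and packages the multiset into a single integer $\sum_j s^{K_j}$ with $s$ chosen large once and for all.
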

\begin{proof}%[Proof of Proposition \ref{prop:termination}]
Let $K$ be the number of terminal nodes of $\bb E$ and, for all bound variable $X_{i}$ of $\bb E$, let $K_{i}$ be the number of terminal nodes of the subtree $\bb E_{i}$ of $\bb E$ with root $\bb r_{X_{i}}$.
Fix some $s> K^{2}$ and let $\mu(\bb E)$ be the sum of all $s^{K_{i}}$, for all bound variable $X_{i}$.
%, so that $\mu(\bb E)\leq  N (cK)^{2K}$. 
If $\bb E \leadsto_{X_{i}^{c}} \bb E'$, then the elimination of $X_{i}$ comes at the cost of possibly duplicating the bound variables which are at distance at least one from the root of $\bb E_{i}$. These are no more than $K_{i}$ and can be duplicated for no more than $K_{i}$ times. Since for each such variable $j$, it must be $K_{j}< K_{i}$, we deduce that $\mu(\bb E')$ is obtained by replacing  in $\mu(\bb E)$ the summand $s^{K_{i}}$ by a sum of no more than $K_{i}^{2}$ summands which are bounded by $s^{K_{i}-1}$. Hence, if we write $\mu(\bb E)$ as $ a+ s^{K_{i}}$, then $\mu(\bb E')  \leq a+ K_{i}^{2}s^{K_{i}-1} < a + s^{K_{i}}=\mu(\bb E)$. We deduce that the measure $\mu(\bb E)$ strictly decreases under reduction, and so, if an infinite sequence of reductions exists, then after a large enough number of step one must obtain a tree $\bb E'$ such that  $\mu(\bb E')<0$, which is impossible.
\end{proof}

Although sequences of reductions always terminate, they need not terminate on a simple polynomial tree, that is, on the encoding of a monomorphic type. 
This can be due to several reasons. Firstly, one bound variable might not be eliminable. Secondly, even if all variables are eliminable, this property need not be preserved by reduction.
For example, take the type $A=\forall X.\forall Y. (\cc X\To \cc Y\To \mm X)\To ((\mm Y\To \cc X)\To \mm W)\To \cc Z$: although $X$ and $Y$ are both eliminable (in the associated tree), if we apply a reduction to $X$, then $Y$ ceases to be eliminable, and similarly if we reduce $Y$ first.
%
%
%
%as soon as a reduction to $X$, $Y$ ceases to be eliminable, and if we apply a reduction to $Y$, then $X$ ceases to be eliminable.
Such conflicts can be controlled by imposing a suitable \emph{coherence} relation on variables.
%
%
%The example above suggests that modular nodes $\alpha:X$ and $\beta:Y$ might get in conflict if, after eliminating one of them by a reduction, the other ceases to be modular. This happens when $\alpha$ and $\beta$ occur within a tree which is both a $X$- and a $Y$-module.

\begin{definition}
Let $\bb E$ be a polynomial tree, $X,Y\in \BV{\mathsf E}$ and $c,d\in  \mathsf{Colors}$. $X^{c}$ and $Y^{d}$ are said \emph{coherent} if there exists no parallel modular nodes of the form $\alpha:X^{\OV c}, \beta: Y^{\OV d}$ in $\bb E$. 

 %We let $\C C(\mathsf T)\subseteq (\C V(\mathsf T)\times C)^{2}$ be the set of conflicts of $\mathsf T$.

\end{definition}

\begin{example}
In the tree in Fig.~\ref{fig:root3}, $\mm Y$ and $\cc Z$ are coherent, while $\cc X$ and $\mm Y$ are not.

\end{example}

%In Example \ref{ex:xon} one has a conflict between $\cc X$ and $\mm Y$.
%The need to avoid conflicts in order to fully reduce a polynomial tree leads to the following definition:

%The absence of conflicts can be formulated as a constraint satisfaction problem:

\begin{definition}[coherence condition]
Let $\mathsf E$ be a polynomial tree. A \emph{valuation of $\mathsf E$} is any map $\phi: \BV{\mathsf E} \to\mathsf{Colors}$. For all valuation $\phi$ of $\mathsf E$, we call $\mathsf E$ \emph{$\phi$-coherent} if for all $X\in \BV{\mathsf E}$, $X$ is $\phi(X)$-eliminable, and moreover for all $Y\neq X\in \BV{\mathsf E}$, $X^{\phi(X)}$ is coherent with $Y^{\phi(Y)}$. We call $\mathsf E$ \emph{coherent} if it is $\phi$-coherent for some valuation $\phi$ of $\mathsf E$.
%
%
%A polynomial tree $\mathsf E$ is called \emph{coherent} if there exists a map $\phi: \BV{\mathsf E} \to\mathsf{Colors}$ (called a \emph{coherent valuation}) such that for all $X\in \BV{\mathsf E}$, $X$ is $\phi(X)$-eliminable and for all $Y\neq X\in \BV{\mathsf E}$, $X^{\phi(X)}$ is coherent with $Y^{\phi(Y)}$.
\end{definition}

\begin{remark}[Coherence is an instance of 2-SAT]\label{rem:2sat}
The problem of checking if a polynomial tree $\bb E$ is coherent can be formulated as an instance of 2-SAT (a well-known polytime problem): consider $n$ Boolean variables $x_{1},\dots, x_{n}$ (one for each bound variable of $\bb E$), and let $a_{i}^{c}$ be $x_{i}$ if $c=\cc{\mathrm{blue}}$ and $\lnot x_{i}$ if $c=\mm{\mathrm{red}}$.
Consider then the 2-CNF $A\land B$, where $A$ is the conjunction of all $a_{i}^{c}\vee a_{i}^{c}$ such that $
 X_{i}$ is not $\OV c$-eliminable in $\bb E$, and $B$ is the conjunction of all $  a_{i}^{c}\lor a_{j}^{d}$, for all incoherent $X_{i}^{\OV c}$ and $X_{j}^{\OV d}$.
Then a coherent valuation of $\bb E$ is the same as a model of $A\land B$.

\end{remark}

%\begin{proof}
%Appendix.
%\end{proof}

As observed before, a reduction $\bb E \leadsto_{X}\bb E'$ might copy or erase some bound variables of $\bb E$. One can then define a map $g:\BV{\bb E'}\to \BV{\bb E}$ associating any variable in $\bb E'$ with the corresponding variable in $\bb E$ of which it is a copy. A sequence of reductions $\bb E_{0}\leadsto_{X_{1}^{c_{1}}}\dots\leadsto_{X_{n}^{c_{n}}}\bb E_{n}$ induces then, for $1\leq  i\leq  n$, maps 
$g_{i}: \BV{\bb E_{i}}\to \BV{ \bb E_{i-1}}$, and we let $G_{i}: \BV{\bb E_{i}}\to \BV{ \bb E_{0}}$ be $g_{1}\circ g_{2}\circ\dots \circ  g_{i}$.

The rewriting properties of coherent trees are captured by the following notion:
%Coherent trees are those which converge onto simple polynomial trees 
%in the following ``standard'' way.

\begin{definition}[standard reduction]
A sequence of reductions $\bb E_{0}\leadsto_{X_{1}^{c_{1}}}\dots\leadsto_{X_{n}^{c_{n}}}\bb E_{n}$ is said \emph{standard} if for all $i,j=1,\dots,n$, $G_{i}(X_{i})^{c_{i}}$ is coherent with $G_{j}(X_{j})^{c_{j}}$ in $\bb E_{0}$.
We say that $\bb E$ \emph{strongly converges under standard reduction} if all standard reductions starting from $\bb E$ terminate on a simple polynomial tree.

\end{definition}

We indicate a standard reduction as $\bb E\stand^{*} \bb E'$.
We will show that standard reduction preserves coherence.

\begin{lemma}\label{lemma:solvable}
If $\bb E$ is coherent, and $\bb E\stand^{*} \bb E'$, then $\bb E'$ is coherent.
\end{lemma}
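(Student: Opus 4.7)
The plan is to argue by induction on the length $n$ of the standard reduction $\bb E_0 \stand^* \bb E_n$. The base case $n=0$ is immediate, so I focus on handling one additional step $\bb E_n \leadsto_{X_{n+1}^{c_{n+1}}} \bb E_{n+1}$.

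The key technical move is to first upgrade the hypothesis: from coherence of $\bb E_0$ I extract a coherent valuation $\phi_0$ that matches the reduction's colour choices, that is $\phi_0(G_i(X_i)) = c_i$ for $i = 1, \dots, n+1$. Coherence of $\bb E_0$ is a 2-SAT instance by Remark~\ref{rem:2sat}, and the additional unit clauses $G_i(X_i) \mapsto c_i$ are jointly satisfiable because (a) their pairwise compatibility is exactly the standard condition, and (b) each clause is individually consistent with the base instance, since $X_i$ being $c_i$-eliminable in $\bb E_{i-1}$ lifts to $G_i(X_i)$ being $c_i$-eliminable in $\bb E_0$: the retractions $g_i$ neither create parallel nodes of the same label nor shorten distances to binding roots, so non-modularity is preserved when going backwards along $g_i$.

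The inductive step then reduces to the one-step claim: if $\bb E$ is $\phi$-coherent with $\phi(X) = c$ and $\bb E \leadsto_{X^c} \bb E'$, then $\bb E'$ is $\phi'$-coherent for $\phi'(Y) := \phi(g(Y))$, where $g : \BV{\bb E'} \to \BV{\bb E}$ is the canonical retraction. I verify the two defining conditions: (i) eliminability of each $Y \in \BV{\bb E'}$ at colour $\phi'(Y)$; and (ii) pairwise compatibility of all coloured pairs $Y^{\phi'(Y)}, Y'^{\phi'(Y')}$. Part (ii) follows cleanly: the substitution $\theta$ only introduces new nodes with constant labels $\bullet$ or $\nuzzet$, never variable labels, and distinct occurrences of the eliminated $X$ receive disjoint copies of the substitution target; so any parallel pair of variable-labelled nodes in $\bb E'$ descends via $g$ to a parallel pair in $\bb E$, contradicting the coherence of $\phi$.

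The main obstacle will be part (i), since occurrences of $Y$ sitting inside the $\cc{\bb D}_{jk}$ or $\cc{\bb E}_k$ subtrees are relocated by $\theta$ inside the new $\mu$/$\nu$-substructure, potentially altering their distances to $\bb r_Y$. A case-by-case inspection of the four rules in Fig.~\ref{fig:rewriting} is required: if $\bb r_Y$ lies inside the moved subtree, it is copied alongside its $Y$-terminals, so local distances and parallel-node structure are preserved in each copy; if $\bb r_Y$ lies strictly above $\bb r_X$, any $Y$-terminal at distance $\leq 2$ from $\bb r_Y$ was already outside the scope of $X$ in $\bb E$ and is untouched by $\theta$. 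Iterating this one-step claim then produces the required coherent valuation of $\bb E_{n+1}$.
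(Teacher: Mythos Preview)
Your one-step claim matches the core of the paper's argument, but your case analysis for part (i) is too coarse. You split only on whether $\bb r_Y$ lies inside the moved subtree or strictly above $\bb r_X$, missing the case $\bb r_{g(Y)}=\bb r_X$ (and the case where $g(Y)$ is bound at some $\vec Z_k$). The equal-root case is precisely where the pairwise coherence of $X^{\phi(X)}$ with $g(Y)^{\phi(g(Y))}$ does real work: a modular node $\alpha:g(Y)^{\overline{\phi(g(Y))}}$ at distance~$1$ or~$2$ from $\bb r_X=\bb r_{g(Y)}$ could a priori coincide with some $\cc{\bb D}_{jk}$ and hence be relocated deep inside a $\theta$-copy, destroying its modularity; coherence of $\cc X$ with $g(Y)^{\phi(g(Y))}$ is exactly what forbids this, forcing all such $\alpha$ into the $\mm{\bb F}_l$, which are not moved. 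Your part (i) never invokes pairwise coherence, which should flag that a case is missing. Separately, the claim that ``$\theta$ only introduces new nodes with constant labels $\bullet$ or $\nuzzet$'' is false: the $\theta$-image contains copies of the $\cc{\bb D}_{jk}$, which may carry terminal nodes labelled by bound variables other than $X$. Your conclusion for (ii) still holds, but via the observation that any parallel pair inside a $\theta$-copy was already parallel in $\bb E$ at the original position of the $\cc{\bb D}_{jk}$, and that a node inside a $\theta$-copy cannot be parallel to one outside it.

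The ``upgrade'' step also has a gap. A satisfiable 2-SAT instance together with unit clauses that are pairwise compatible need not be jointly satisfiable (take base $x\vee y$ with units $\neg x$, $\neg y$): your conditions (a) and (b) only exclude direct $B$-clauses among the chosen literals, not propagation through variables outside $\{G_i(X_i)\}$. Moreover, the justification of (b) goes in the wrong direction: you need every node of label $G_i(X_i)^{\overline{c_i}}$ in $\bb E_0$ to be modular, including those that are erased before step $i$ and therefore have no witness in $\bb E_{i-1}$; eliminability in $\bb E_{i-1}$ says nothing about such nodes. The paper does not attempt this global upgrade: it fixes a coherent valuation $\phi$ of $\bb E$, considers only the step $\leadsto_{X^{\phi(X)}}$, and shows $\bb E'$ is $\phi\circ g$-coherent, so that the induction proceeds along colours dictated by $\phi$.
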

\begin{proof}
%[Proof of Lemma \ref{lemma:solvable}]

For any node $\alpha$ of $\bb E'$ there exists a unique node, that we indicate as $g(\alpha)$, from which 
 $\alpha$ ``comes from'' through the reduction step. Moreover, for each bound variable $Y\in \BV{\bb E'}$ there exists a unique bound variable $g(Y)\in \BV{\bb E}$ from which $Y$ ``comes from'' through the reduction step.

We indicate as $\alpha\preceq \beta$ the tree-order relation on the nodes of $\bb E$ (so that the root of $\bb E$ is the minimum node). 
Suppose $\bb E$ is $\phi$-coherent and $\bb E\leadsto_{X^{\phi(X)}} \bb E'$. 
We will show that $\bb E'$ is
 $\phi\circ g$-coherent.
 
Let $\mathsf{F}$ be the sub-tree of $\bb E$ with root $\bb r_{X}$, fix a variable $Y$ in $\BV{\bb E'} $.
 We must consider four possible cases, corresponding to the four reduction rules. Each rule requires itself four subcases. We only consider the case of the first rule (hence with $\bb E$ positive), as the other three cases can be proved in a similar way.
\begin{description}
\item[($\cc{\mathsf{F}}\leadsto_{\cc X}\cc{\mathsf F'}$)]   We first show that $Y$ is $\phi(g(Y))$-eliminable in $\cc{\mathsf{F}}'$, by showing that any modular node $\alpha: g(Y)$ in $\cc{\mathsf{F}}$ is transformed into a modular node $\beta:Y$ in $\cc{\mathsf{F}}'$. We argue by considering the relative position of $\bb r_{X}$ and $\bb r_{Z}$ in $\cc{\mathsf E}$:
	\begin{description}
	\item[($\bb r_{X}=\bb r_{g(Y)}$)] Since $\cc X$ and $g(Y)^{\phi(g(Y))}$ are coherent in $\cc{\mathsf E}$, and all nodes $\alpha:g(Y)^{\OV{\phi(g(Y))}}$ are modular, each such node $\alpha$ must be a terminal node of some of the $\FFun{\cc X}{\mm{\mathsf F}_{l}}$
	(see Fig. \ref{fig:11}). In fact, if $\alpha$ were a terminal node of some $\FFun{\cc X}{\cc{\mathsf D}_{jk}}$, then either $\FFun{\cc X}{\cc{\mathsf D}_{jk}}=g(Y)^{\OV{\phi(g(Y))}}$, which contradicts the assumption that $\cc X$ and $g(Y)^{\phi(g(Y))}$ are coherent, or its distance from $\bb r_{X}=\bb r_{g(Y)}$ is greater than 2, against the assumption that $\alpha$ is modular. 
	Now, since all nodes in $\FFun{\cc X}{\mm{\mathsf F}_{l}}$ which are not labeled with $\cc X$ are not ``moved'' by the reduction rule (see Fig. \ref{fig:11}), and no new node labeled $g(Y)^{\phi(g(Y))}$ is created by $\theta$ (since otherwise it should come from a node with same label in some $\FFun{\cc X}{\cc{\mathsf D}_{jk}}$), all nodes $\alpha:Y^{\OV{\phi(g(Y))}}$ in $\cc{\mathsf F'}$ are modular, whence $Y$ is $\phi(g(Y))$-eliminable in $\mathsf{F}'$.

%
%
% This is clear if the the transformation $\cc V \leadsto_{\cc X}\cc{V'}$ creates no new parallel terminal node.
%Otherwise, the only way in which the transformation $\cc V \leadsto_{\cc X}\cc{V'}$ can introduce new parallel terminal nodes (hence new potential conflicts) is if $\theta$ replaces the terminal nodes labeled $\cc X$ by a terminal node, and this is possible only if
%$k=1$, for some $j\leq p_{1}$, $\cc{T_{1j}}=Y^{c}$, and some $\mm{U_{i}}$ contains two parallel nodes $\alpha: \cc X, \beta: Z^{d}$. 

%
%Then in this case $Y^{\OV c}$ is in conflict with $\cc X$ in $\mathsf V$, so it must be $\phi(Y)=c$, and 
%$\C C(\mathsf T')= \C C(\mathsf T)- \C C_{X}(\mathsf T) \cup \{ (Y^{\OV c}, Z^{\OV d})\}$. Hence if we let $\psi: \BV{\mathsf T'}\to C$ be the restriction of $\phi$ to $\BV{\mathsf T'}$, we are sure to produce no conflict in $\mathsf T'$.
%

	\item[($\bb r_{X}\prec\bb r_{g(Y)}$)] First, if $\bb r_{g(Y)}$ occurs within either the $\FFun{\cc X}{\cc{\mathsf D}_{jk}}$ or the $\FFun{\cc X}{\mm{\mathsf F}_{l}}$, then one can check that, by the transformation from $\cc{\mathsf{F}}$ to $\cc{\mathsf F'}$, the relative positions between the nodes $g(\alpha): g(Y)^{c}$ and $\bb r_{g(Y)}$ in $\bb E$ coincide with the relative positions between the nodes $\alpha$ and $\bb r_{Y}$, and thus all modular nodes $\delta:g(Y)^{\OV{\phi(g(Y))}}$ in $\cc{\mathsf{F}}$ yield modular nodes $\delta':Y^{\OV{\phi(g(Y))}}$ in $\cc{\mathsf{F}}'$.

	Suppose now $g(Y)$ is among some set $\vec Z_{k}$ (see Fig. \ref{fig:11}). One can check that for all node $\delta:g(Y)^{\OV{\phi(g(Y))}}$ occurring in $\mathsf{F}$ within one of the $\FFun{\cc X}{\cc{D}_{jk}}$, if, in the transformation of $\cc{\mathsf{F}}$ in $\cc{\mathsf{F}}'$, $\delta$ is transformed into $\delta'$, then $d(\delta, \bb r_{g(Y)})$ in $\cc{\mathsf E}$ is the same as $d(\delta', \bb r_{Y})$ in $\cc{\mathsf E}'$. Moreover, it is clear that $\delta'$ cannot be the head of the sub-tree rooted in $\bb r_{Y}$, and that if $\delta$ has no parallel node with same label, so does $\delta'$. Hence, if $\delta$ is modular, so is $\delta'$.
	
%	. Then the transformation $\cc V\leadsto \cc{V'}$ does not alter the distance between $\delta$ and $\gamma_{Y}$, which remains less than $2$, it cannot create new parallel terminal nodes to $\delta$, nor it can move $\delta$ in the position of head of $\gamma_{X}$, so we must conclude that any such $\delta$ is moved into traverse node in $\mathsf T'$. Since the transformation cannot create other nodes of label $Y^{\OV{\phi(Y)}}$, we deduce that $\kappa_{Y}(\mathsf T')$ is still finite. 

	\item[($\bb r_{g(Y)}\prec\bb r_{X}$)] 
	If a modular node $\delta: g(Y)^{\OV{\phi(g(Y))}}$ in $\cc{\mathsf E}$ is transformed into some node $\delta': Y^{\OV{\phi(Y)}}$ in $\cc{\mathsf E}'$, then $\delta'$ is modular too. In fact, no such $\delta$ can be actually moved by the reduction rule, since the only possibilities are that 
		either $\delta\prec \bb r_{X}$ or $\delta$ coincides with one of the $\FFun{\cc X}{\mm{\mathsf F}_{l}}$.
		Since neither the node $\bb r_{Y}$ is moved, we deduce that $\delta'$ has the same distance from $\bb r_{g(Y)}$ than $\delta$ has from $\bb r_{g(Y)}$, that it has no parallel node with same label and cannot be the head of the sub-tree rooted in $\bb r_{Y}$, whence $\delta'$ is modular.

	\item[($\bb r_{g(Y)}\not\preceq\bb r_{X},\bb r_{X}\not\preceq\bb r_{g(Y)}$]
	We can argue as in the case above, since the transformation from $\cc{\mathsf{F}}$ to $\cc{\mathsf F'}$ cannot move any node  $\delta:g(Y)^{\OV{\phi(g(Y))}}$.
	
	\end{description}

We conclude then that, for all $Y\in \BV{\bb E'}$, $Y$ is $\phi(g(Y))$-eliminable.
It remains to check that for all $Y,Z\in \BV{\bb E'}$, $Y^{\phi(g(Y))}$ and $Z^{\phi(g(Y))}$ are coherent. It can be checked that for any parallel nodes $\alpha:Y^{\OV{\phi(g(Y))}}, \beta: Z^{\OV{\phi(g(Z))}}$ in $\bb E'$, $g(\alpha)$ and $g(\beta)$ are parallel in $\bb E$. Hence, if  $Y^{\phi(g(Y))}$ and $Z^{\phi(g(Y))}$  are not coherent, then $g(Y)^{\phi(Y)}$ and $g(Z)^{\phi(Z)}$ are not coherent, against the assumption.
%
%This leads to the following cases:
%
%	\begin{itemize}
%%	\item for all $i=1,\dots,k$, the terminal nodes $\beta_{i}:\mm\bullet$, which produce the conflicts $(\cc\bullet,  Y^{d})$, for all $\cc{T_{ij}}=Y^{\OV d}$;
%	\item if  $k=1$ and $\cc X$ does not occur in any of the $\cc{T_{1j}}$	, a conflict $(\mm\bullet, Z^{e})$, whenever in some of the sub-trees $\mm{U_{i}}$ of $\mathsf T$ there is some node $\delta:Z^{\OV e}$ which is parallel to some node $\eta: \mm X$.
%	\end{itemize}
%	
%Since $\psi(\bullet)=\cc{cyan}$, the only conflicts that $\psi$ can generate 	are of the form $(\cc\bullet,Y^{d})$, with $Y_{d}=\cc{T_{ij}}$, whenever $\psi(Y)=d$. However, in this case, since $(\cc X,Y^{d})$ is a conflict in $\mathsf T$, it must be $\psi(Y)=\phi(Y)=\OV d$. 
%
%\item[($\cc V\leadsto_{\mm X}\cc{V'}$)]
%
%
%\item[($\mm V\leadsto_{\cc X}\mm{V'}$)] This case is dual to the previous case, by switching $\cc{cyan}$ and $\mm{magenta}$.
%
%
%\item[($\mm V\leadsto_{\mm X}\mm{V'}$)] This case is dual to the first case, by switching $\cc{cyan}$ and $\mm{magenta}$.
\end{description}
\end{proof}

Let us call a $c$-eliminable variable $ X$ \emph{$c$-canceling} in the following cases:
\begin{itemize}
\item $\bb r_{X}$ is the root of a positive tree as in Fig. \ref{fig:11} left, where $c=\cc{\mathrm{blue}}$, $Y\neq X$ and no node in any sub-tree $\FFun{\cc X}{\mm F_{l}}$ has label $\cc X$;
\item $\bb r_{X}$ is the root of a positive tree as in Fig. \ref{fig:12} left, $c=\mm{\mathrm{red}}$ and no node in any sub-tree $\FFun{\mm X}{\mm F_{l}}$ has label $\mm X$;
\item $\bb r_{X}$ is the root of a negative tree as in Fig. \ref{fig:21} left, $c=\cc{\mathrm{blue}}$ and no node in any sub-tree $\FFun{\cc X}{\cc F_{l}}$ has label $\cc X$;
\item $\bb r_{X}$ is the root of a negative tree as in Fig. \ref{fig:22}, $c=\mm{\mathrm{red}}$, $Y\neq X$ and no node in any sub-tree $\FFun{\mm X}{\cc F_{l}}$ has label $\mm X$.
\end{itemize}

In fact, $X$ is $c$-canceling precisely when by applying a $X^{c}$-reduction step, the action of the substitution $\theta$ from Fig. \ref{fig:rewriting} is empty, so that the whole left-hand part of the tree, including some of its bound variables, is erased. 
Observe that, if $X$ is not $c$-canceling, and $\bb E\leadsto_{X^{c}}\bb E'$, then the application $ Y\mapsto g( Y)$ from a bound variable of $\bb E'$ to a bound variables of $\bb E$ from which $Y$ ``comes from'' is a \emph{surjective} function $g: \BV{\bb E'}\to \BV{\bb E}-\{X\}$.

Moreover, let us call a standard reduction $\bb E\stand^{*} \bb E'$ \emph{least-canceling} if for all $c$-canceling variable $X$, a reduction $\leadsto_{X^{c}}$ is performed only after all bound variables $ Y$ occurring above the node $\bb r_{X}$ are eliminated. 
Let us indicate by $\bb E\leadsto_{\bb{canc}}\bb E'$ that $\bb E$ reduces to $\bb E'$ by a sequence of least-canceling standard reductions.

Using Lemma \ref{lemma:solvable} we can prove: %prove the following:

\begin{restatable}{theorem}{solvability}\label{thm:solvability}
$\bb E$ is coherent iff $\bb E$ strongly converges under standard reduction.
%, then there exists $\bb E^{\sharp}\in \C P_{0}$ such that $\bb E\leadsto^{*} \bb E^{\sharp}$.
\end{restatable}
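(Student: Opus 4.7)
The proof plan is to establish the two directions separately, leaning on Lemma~\ref{lemma:solvable} and Proposition~\ref{prop:termination} for the ``easy'' direction and using the structure of the reduction together with the least-canceling refinement for the converse.

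For the forward direction ($\Rightarrow$), I assume $\bb E = \bb E_{0}$ is $\phi$-coherent for some valuation $\phi$. I would fix a maximal standard reduction $\bb E_{0}\stand^{*} \bb E_{n}$ and aim to show that $\bb E_{n}$ has no bound variables. By Proposition~\ref{prop:termination} the reduction is finite. The key invariant, proved by induction on $i$, is that each intermediate $\bb E_{i}$ is $(\phi\circ G_{i})$-coherent: this follows by iterating Lemma~\ref{lemma:solvable}, once one observes that in a standard reduction from a $\phi$-coherent tree the colour $c_{i}$ of the $i$-th step can be chosen to coincide with $\phi(G_{i}(X_{i}))$ (pairwise coherence of the $G_{i}(X_{i})^{c_{i}}$ in $\bb E_{0}$, required for standardness, is automatic from the $\phi$-coherence of $\bb E_{0}$). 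Now suppose towards contradiction that $\bb E_{n}$ still contains a bound variable $Y$. By the invariant, $Y$ is $(\phi\circ G_{n})(Y)$-eliminable in $\bb E_{n}$, and the extended sequence $\bb E_{0}\stand^{*}\bb E_{n}\leadsto_{Y^{(\phi\circ G_{n})(Y)}}\bb E_{n+1}$ remains standard, since $G_{n+1}(Y)=G_{n}(Y)$ and the added choice is $\phi$-aligned. This contradicts maximality, so $\bb E_{n}\in\C P_{0}$.

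For the backward direction ($\Leftarrow$), I proceed contrapositively: assuming $\bb E$ is not coherent, I produce a maximal standard reduction whose endpoint still has a bound variable. Using the 2-SAT reformulation of Remark~\ref{rem:2sat}, non-coherence means the associated 2-CNF is unsatisfiable. I would then consider any partial standard reduction from $\bb E$: the colours used along it correspond to a partial valuation on $\BV{\bb E}$ whose choices are pairwise coherent in $\bb E$ (by standardness). Unsatisfiability of the 2-CNF prevents this partial valuation from being extended to a total coherent valuation, so at some step every remaining bound variable either fails to be eliminable in any colour that is coherent with the choices already made, or would require a step that is not standard. At that point the reduction is maximal yet $\bb E_{n}$ still has bound variables, contradicting strong convergence.

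The main obstacle I expect is in the backward direction, and more precisely in making rigorous the claim that a maximal standard reduction from a non-coherent tree must get stuck before eliminating all bound variables. The difficulty is twofold: first, a variable of $\bb E_{0}$ may appear in several copies in $\bb E_{i}$ because of duplications, or disappear without an explicit elimination step via a canceling reduction, so the colours used along the standard reduction do not straightforwardly yield a valuation on $\BV{\bb E_{0}}$; second, a variable that is not eliminable in $\bb E_{0}$ could in principle become eliminable in some $\bb E_{i}$ after other variables are eliminated. Here I would exploit the least-canceling refinement $\leadsto_{\bb{canc}}$ introduced just before the theorem: by postponing canceling steps, every bound variable of $\bb E_{0}$ is explicitly eliminated by some $\leadsto_{X^{c}}$ step, giving a well-defined colour $\phi(X):=c$, and the surjectivity of the $g$'s in non-canceling steps should allow one to pull the coherence of $(\bb E_{i},\phi\circ G_{i})$ back to $(\bb E_{0},\phi)$ by induction on the length of the reduction, essentially providing a converse to Lemma~\ref{lemma:solvable} valid along least-canceling sequences.
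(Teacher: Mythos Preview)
Your forward direction is in spirit the paper's --- Lemma~\ref{lemma:solvable} plus Proposition~\ref{prop:termination} --- but your phrasing is confused. Having fixed an \emph{arbitrary} maximal standard reduction, the colours $c_{i}$ are given, not chosen; your parenthetical that they ``can be chosen to coincide with $\phi(G_{i}(X_{i}))$'' conflates the fixed reduction with a $\phi$-aligned one you might construct. The paper simply invokes the lemma as stated: if $\bb E\stand^{*}\bb E'$ with $\bb E$ coherent then $\bb E'$ is coherent, hence either $\bb E'\in\C P_{0}$ or a further standard step is available, and termination finishes.

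For the backward direction the paper does \emph{not} go by contraposition. The obstacles you yourself raise against the contrapositive --- erasure by canceling steps and variables becoming eliminable only after other reductions --- are genuine, and your sketch does not dispel them. The paper argues directly: from strong convergence, pick a converging \emph{least-canceling} standard reduction $\bb E_{0}\leadsto_{Y_{1}^{c_{1}}}\cdots\leadsto_{Y_{n}^{c_{n}}}\bb E_{n}\in\C P_{0}$ and set $\phi(X):=c_{\delta(X)}$ where $\delta(X)=\min\{i\mid G_{i}(Y_{i})=X\}$. Least-cancelation guarantees every $X\in\BV{\bb E_{0}}$ is explicitly eliminated (so $\delta$ is total); standardness gives pairwise coherence of the $G_{i}(Y_{i})^{c_{i}}$ in $\bb E_{0}$, from which $\phi$-coherence of $\bb E_{0}$ follows. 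This is essentially what you outline in your final paragraph, with two corrections worth noting: a variable may be copied and eliminated several times, possibly with different colours, so one must take the colour of the \emph{first} eliminating step (hence the $\min$); and the pairwise coherence is read off the standardness condition in $\bb E_{0}$ directly, so no ``converse to Lemma~\ref{lemma:solvable}'' or step-by-step pull-back along the $\bb E_{i}$ is required. Make this direct extraction of $\phi$ your main argument and drop the contrapositive detour.
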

\begin{proof}%[Proof of Theorem \ref{thm:solvability}]

Suppose $\bb E$ is coherent. From Lemma \ref{lemma:solvable} we deduce that if $\bb E\stand^{*} \bb E'$, then either $\bb E'\in \C P_{0}$ or $\bb E'\stand \bb E''$ for some $\bb E''$. Since sequences of standard reductions terminate, we deduce that $\bb E$ must in the end converge onto some simple tree.

Conversely, suppose $\bb E$ strongly converges under $\stand$. Since all standard reductions converge, there must be a converging least-canceling standard reduction $\bb E=\bb E_{0}\leadsto_{Y_{1}^{c_{1}}} \bb E_{1}\leadsto_{Y_{2}^{c_{2}}} \dots \leadsto_{Y_{n}^{c_{n}}}\bb E_{n}\in \C P_{0}$. 
Let $\delta(X)= \min\{i \mid G_{i}(Y_{i})= X\}$ and $\phi(X)=c_{\delta(X)}$. 
Using the fact that the variables $G_{i}(Y_{i})^{c_{i}}$ are pairwise coherent and include all bound variables of $\bb E$, one can check that $\delta$ and $\phi$ are well-defined and $\bb E$ is $\phi$-coherent.
%
%then any map of the form $X_{j}\mapsto d_{j}$, 
%where for some $1\leq i\leq n$, $G_{i}(Y_{i})=X_{j}$ and $d_{j}=c_{i}$, is a coherent valuation for $\bb E$.
\end{proof}

Using Lemma \ref{lemma:converge} we further deduce:
\begin{corollary}
Let $A$ be a type of $\Nd$. If $\cc{\bb t}(A)$ is coherent, then there exists a $\NImu$-type $A'$ such that 
$A\equiv_{\bb Y}A'$.
\end{corollary}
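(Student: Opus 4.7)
The plan is to chain the two results immediately preceding the statement, namely Theorem~\ref{thm:solvability} and Lemma~\ref{lemma:converge}, with essentially no further work. Assuming $\cc{\bb t}(A)$ is coherent, Theorem~\ref{thm:solvability} guarantees that $\cc{\bb t}(A)$ strongly converges under standard reduction, i.e.\ every maximal standard reduction starting from $\cc{\bb t}(A)$ terminates on a simple polynomial tree $\bb E\in \C P_{0}$. To see that such a maximal sequence actually exists, I would invoke Prop.~\ref{prop:termination}, which bounds the length of any sequence of Yoneda reductions (and in particular of standard ones); this ensures the existence of at least one complete standard reduction $\cc{\bb t}(A)\stand^{*}\bb E$ ending in $\C P_{0}$.

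Next, I would observe that since standard reduction is by definition a restriction of ordinary Yoneda reduction, we also have $\cc{\bb t}(A)\leadsto^{*}\bb E$ with $\bb E\in \C P_{0}$. At this point Lemma~\ref{lemma:converge} applies directly and yields $A\equiv_{\bb Y}\tau(\bb E)$ with $\tau(\bb E)\in \NImu$. Setting $A':=\tau(\bb E)$ then produces the desired $\NImu$-type, proving the corollary.

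The proof is genuinely short and presents no real obstacle: the substantive work has already been carried out in Lemma~\ref{lemma:solvable}, Prop.~\ref{prop:termination}, Theorem~\ref{thm:solvability}, and Lemma~\ref{lemma:converge}. The only minor point worth making explicit is the combination ``strong convergence plus termination $\Rightarrow$ existence of a converging reduction'', which is immediate but deserves a sentence so that the logical flow from coherence to a concrete $\NImu$-type $A'$ is transparent.
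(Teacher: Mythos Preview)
Your proposal is correct and follows essentially the same route as the paper: the corollary is stated right after Theorem~\ref{thm:solvability} with the one-line justification ``Using Lemma~\ref{lemma:converge} we further deduce'', which is precisely your chaining of strong convergence with Lemma~\ref{lemma:converge}. Your extra care in invoking Prop.~\ref{prop:termination} is harmless but redundant, since termination is already used inside the proof of Theorem~\ref{thm:solvability}.
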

%
%
%Let us call a tree $\bb E$ \emph{solvable} if there exists $\bb E'\in \C P_{0}$ such that $\bb E\leadsto^{*}\bb E'$.
%While all coherent tree is solvable, a solvable tree need not be coherent.
%For instance, 
%the tree of the following type

%
% While this ``weak solvability'' property can also be characterized in a simple way (see full paper), it is not 
% well-adapted for our aims, 
% as it is not \emph{compositional} (see Lemma \ref{rem:compositionality}).
 
%
%[BETTER!] However, unlike solvability, weak solvability does not lends itself well to describe a type systems, since it is not \emph{compositional}: a type $\forall X.A$ might well be weakly solvable even if $A$ is not (e.g., if $X$ is canceling).
%%
%DISCUSS WEAK SOLVABILITY AND WHY IT IS NOT GOOD FOR A SYSTEM (IT IS NOT COMPOSITIONAL: A FORMULA CAN BE WSOLVABLE WHILE ITS SUBFORMULAS ARE NOT). RELATE TO LAMBDACALCULUS.

%[INTRODUCE STANDARD REDUCTION AND PROVE ITS STRONG NORMALIZATION. FULL REDUCTION IS NOT SN. PROVE E SOLVABLE IFF SN UNDER STANDARD REDUCTION.]

%
%\section{The characteristic of a polymorphic type}

\subsection{The Characteristic}

We now introduce a refined condition for coherent trees, which can be used to predict whether a type rewrites into a finite type (i.e. one made up from $0,1,+,\times, \To$ only) or into one using $\mu,\nu$-constructors. %Then we introduce two fragments of $\Nd$ in which types have a bounded characteristic.

An intuition from Section 2 is that, for a type of the form $\forall X. (\FFun{\cc X}{A}\To \mm X)\To \FFun{\cc X}{B}$  (which  rewrites into $\FFun{\cc X\mapsto \mu X.\FFun{\cc X}{A}}{B}$)
to reduce to one without $\mu,\nu$-types, the variable $X$ \emph{must not occur in $A$} at all. However, the property ``$X$ does not occur in $A$'' need not be preserved under reduction. Instead, we will define a stronger condition that is preserved by standard reduction by inspecting a class of \emph{paths} in the tree of a type.

%Any rooted tree has a natural order on vertices with the root being its minimum, that we indicate as $\preceq$.

\begin{definition}
Let $\bb E$ be a polynomial $\Nd$-tree and let $\preceq$ indicate the natural order on the nodes of $\bb E$ having the root of $\bb E$ as its minimum.
A \emph{down-move} in $\bb E$ is a pair $\alpha\step\beta$, such that, for some bound variable $X\in \BV{\bb E}$, $(\alpha,\beta)$ is an $X$-pair and $\beta$ is modular. 
An \emph{up-move} in $\bb E$ is a pair $\alpha\step\beta $ such that $\alpha\neq \beta$, $\alpha$ is a modular node with immediate predecessor $\gamma$, $\beta:X$ for some $X\in \BV{\bb E}$, and  $\gamma\preceq \beta$. 
An \emph{alternating path} in $\bb E$ is a sequence of nodes $\alpha_{0}\dots \alpha_{2n}$ such that $\alpha_{2i}\step\alpha_{2i+1}$ is a down-move and $\alpha_{2i+1}\step\alpha_{2i+2}$ is an up-move.
%
%A \emph{move in $\bb E$} is a pair $\alpha \step \beta$ where $\alpha,\beta$ are two nodes of $\bb E$ and for some $X, Y\in \BV{\BB E}$, one of the following holds:
%\begin{enumerate}
%\item $\alpha: X^{c}$ and $\beta: X^{\OV{c}}$ is modular;
%\item $\alpha: X^{\OV c}$ is the head of a $X^{ c}$-module, and $\beta: Y^{d}\preceq \gamma$, where $\gamma$ is the conclusion of the same module. 
%\end{enumerate}
%A \emph{path in $\bb E$} is a sequence of consecutive moves $\alpha_{1}\step \alpha_{2}\step\dots \step \alpha_{n}$ such that $\alpha_{1}$ is of type $1.$.

\end{definition}

%
%A move in $\bb E$ consists either in moving along a $X$-edge from $X^{\phi(X)}$ to $X^{\OV{\phi(X)}}$, or in ``entering'' a $X$-module from its head and moving upwards up to any terminal node above its conclusion.
 In Fig.~\ref{fig:phipath} are illustrated some alternating paths.
 Observe that whenever $\cc X$ occurs in $\FFun{\cc X}{A}$, we can construct a \emph{cyclic} alternate path in the tree of  $\forall X. (\FFun{\cc X}{A}\To \mm X)\To \FFun{\cc X}{B}$: down-move from an occurrence of $\cc X$  in $A$ to the modular node labeled $\mm X$, then up-move back to $\cc X$. We deduce that if no cyclic alternate path exists, then any subtype of the form above must be such that $\cc X$ does not occur in $A$. This leads to introduce the following:

\begin{figure}
\begin{subfigure}{0.48\textwidth}
\adjustbox{scale=.85, center}{$
\begin{tikzpicture}[baseline=-20]

\node (a) at (0,0) {\small$\U{\cc X}$};
\node (aa) at (1,0) {\small$\U{\mm X}$};
\node (u) at (0.5,-0.7) {\small$\emptyset$};

\draw (a) to (u);
\draw (aa) to (u);

\node (b) at (1.7,0) {\small$\U{\cc X}$};
\node (bb) at (2.7,0) {\small$\U{\mm Z}$};
\node (uu) at (2.2,-0.7) {\small$Z$};

\draw (b) to (uu);
\draw (bb) to (uu);

\node (cb) at (3.5,0) {\small$\U{\cc Y}$};
\node (cbb) at (4.5,0) {\small$\U{\mm X}$};
\node (c) at (4,-0.7) {\small$\emptyset$};

\draw (cb) to (c);
\draw (cbb) to (c);

\node (d) at (5,-0.7) {\small$\cc Y$};

\node(t) at (3,-1.4) {\small$ XY$};

\draw (c) to (t);
\draw (d) to (t);
\draw (u) to (t);
\draw (uu) to (t);

\draw[dashed, bend right=35] (cbb) to (b);
\draw[->,very thick, gray, bend right=45] (cbb) to (a);
\draw[dashed, bend right=55] (aa) to (a);
\draw[->,very thick, gray, bend left=55] (aa) to (b);

\draw[->,very thick, gray, bend right=45] (a) to (aa);
\draw[->,very thick, gray, bend right=45] (b) to (bb);

%\draw[dashed, bend left=55] (bb) to (cb);
%\draw[dashed, bend left=85] (bb) to (d);

\end{tikzpicture}
$}
\caption{\small Alternate path in \\ \adjustbox{scale=0.9}{$\forall XY. (\cc X\To \mm X)\To (\forall Z.\cc X\To \mm Z)\To (\cc Y\To \mm X)\To \cc Y$}.}
\label{}
\end{subfigure}
\begin{subfigure}{0.5\textwidth}
\adjustbox{scale=0.85, center}{$
\begin{tikzpicture}[baseline=-20]

\node(a) at (0,0) {\small$\cc Y$};
\node(b) at (0.8,0) {\small$\mm X$};

\node(c) at (0.4,-0.7) {\small$\emptyset$};

\draw(a) to (c);
\draw(b) to (c);

\node(e) at (1,0.7) {\small$\mm Z$};
\node(f) at (1.8,0.7) {\small$\cc X$};

\node(g) at (1.4,0) {\small$\emptyset$};

\draw(g) to (e);
\draw(g) to (f);

\node (gg) at (2.2,0) {\small$\cc Z$};
\node (ggg) at (3,0) {\small$\mm Y$};

\node (h) at (2.2,-0.7) {\small$\bb Z$};

\draw(h) to (gg);
\draw(h) to (ggg);
\draw(h) to (g);

\node(i) at (3.6,-0.7) {\small$\cc Y$};

\node(j) at (1.8,-1.4) {\small$XY$};

\draw(j) to (i);
\draw(h) to (j);
\draw(c) to (j);

\draw[->,very thick, gray, bend right=55] (i) to (ggg);

\draw[->,very thick, gray, bend left=55] (ggg) to (gg);
\draw[->,very thick, gray, bend right=85] (gg) to (e);
\draw[->,very thick, gray, bend right=65] (e) to (f);
\draw[->,very thick, gray, bend right=65] (f) to (b);
\draw[->,very thick, gray, bend left=65] (b) to (a);
\draw[->, very thick, gray, rounded corners=15] (a) to (0,1) to [bend left=20] (3,1) to (ggg);

\end{tikzpicture}
%\qquad
%\stand
%\qquad
%\begin{tikzpicture}[baseline=-20pt]
%\node(a) at (0,0) {$\cc \bullet$};
%\node(b) at (0.6,0) {$\mm \bullet$};
%
%\node(c) at (0.3,-0.6) {$\emptyset$};
%\node(d) at (1.1,-0.6) {$\cc \bullet$};
%
%\node(e) at (0.7,-1.2) {$ \bullet$};
%
%\draw (c) to (a);
%\draw (c) to (b);
%\draw (e) to (c);
%\draw (e) to (d);
%
%\end{tikzpicture}
$}
\caption{\small Cyclic alternate path in \\ 
\adjustbox{scale=0.9}{$\forall XY. (\cc Y\To \mm X)\To (\forall Z.(\mm Z\To \cc X)\To (\cc Z\To \mm Y))\To \cc Y$}. }
\label{fig:phipath}
\end{subfigure}
\caption{Examples of alternate paths.}
\end{figure}
%We can use $\phi$-paths to predict the outcomes of Yoneda reduction simply by measuring their length:
%
%\begin{definition}
%Let $\bb E$ be solvable and  $\phi$ be a certificate for $\bb E$.
%The \emph{weight of $\phi$} is the value $W(\phi)\in \{1,2,3\}$ defined by:
%\begin{itemize}
%\item $W(\phi)=1$ if all $\phi$-path has length 0;
%\item $W(\phi)=2$ if all $\phi$-path is finite (i.e. there is no cyclic $\phi$-path);
%\item $W(\phi)=3$ if there is an infinite (i.e cyclic) $\phi$-path.
%
%\end{itemize}
%\end{definition}
%

\begin{definition}
For any polynomial tree $\bb E$, the \emph{characteristic of $\bb E$}, $\kappa(\bb E)\in\{0,1,\infty\}$ is defined as follows:
if $\bb E$ is coherent, then $\kappa(\bb E)=0$ if it has no cyclic alternating path, and $\kappa(\bb E)=1$ if it has a cyclic alternating path; if $\bb E$ is not coherent, $\kappa(\bb E)=\infty$.
%For a type $A\in \Nd$, we let $\kappa(A)= \kappa(\cc{\mathsf t}(A))$.
\end{definition}

%
%When $\kappa(\bb E)$ is finite, for all variable $X\in \BV{\bb E}$, the sub-tree of $\bb E$ with root $\alpha_{X}$ is in one of the configurations in Fig.~\ref{fig:confs}. If $\kappa(\bb E)=0$, then by arguing as above, it can be seen that the
% trees $\bb D_{ij}$, $\bb F_{j}$ have no terminal node with label $X$ (as otherwise we could construct a cyclic alternating path $\alpha:\cc X\step \beta:\mm X\step \alpha:\cc X$).
%

The characteristic is indeed stable under standard reduction:% (see proof in App.~\ref{app1}).

\begin{restatable}{lemma}{kappaz}
\label{lemma:kappa}
 For all $\bb E, \bb E'$, if $\bb E $ reduces to $ \bb E'$ by standard reduction, then $\kappa(\bb E')\leq\kappa(\bb E)$.
\end{restatable}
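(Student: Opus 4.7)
The plan is to reduce the statement to the one-step case and then analyze how alternating paths in $\bb E'$ lift back to alternating paths in $\bb E$. Concretely, it suffices to treat a single-step reduction $\bb E \leadsto_{X^{c}} \bb E'$, since standard reductions are compositions of such steps. There are three sub-cases to consider according to the value of $\kappa(\bb E)$. If $\kappa(\bb E)=\infty$ there is nothing to prove. If $\kappa(\bb E) \in \{0,1\}$ then $\bb E$ is coherent, and by Lemma \ref{lemma:solvable} the reduct $\bb E'$ is also coherent, so $\kappa(\bb E') \in \{0,1\}$. The only remaining content is then the implication $\kappa(\bb E)=0 \Rightarrow \kappa(\bb E')=0$, i.e.\ that standard reduction does not create cyclic alternating paths.

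I would prove the contrapositive: assume there is a cyclic alternating path $\pi' = \alpha_{0}\step\cdots\step\alpha_{2n}=\alpha_{0}$ in $\bb E'$, and construct a cyclic alternating path $\pi$ in $\bb E$. The main tool is the node-level extension of the map $g:\BV{\bb E'}\to \BV{\bb E}$ already used in the proof of Lemma \ref{lemma:solvable}: every node of $\bb E'$ either (i) originates from a unique node of $\bb E$ under the rewrite, or (ii) is a freshly introduced node labeled by one of the constants $\cc\bullet,\mm\bullet,\cc\nuzzet,\mm\nuzzet$ produced by the substitution $\theta$. Crucially, the freshly introduced nodes carry no variable label, so they can appear neither as endpoints of a down-move (which requires an $Y$-pair) nor as endpoints of an up-move (which requires a label $Y\in\BV{\cdot}$). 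Hence every node $\alpha_{i}$ of $\pi'$ is of type (i), and we obtain a well-defined sequence $g(\pi')=g(\alpha_{0})\cdots g(\alpha_{2n})$ in $\bb E$.

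The core step is then to verify, step-by-step, that each move $\alpha_{i}\step \alpha_{i+1}$ in $\pi'$ lifts to either a single move or a short alternating block $g(\alpha_{i})\step\beta\step\gamma\step g(\alpha_{i+1})$ in $\bb E$, possibly routed through the eliminated $X$-pair. This requires a case analysis based on (a) which of the four reduction rules in Fig.~\ref{fig:rewriting} is being applied and (b) the relative position (above, below, within one of the $\cc{\bb D}_{jk}$, within one of the $\mm{\bb F}_{l}$, or outside $\bb r_{X}$) of $g(\alpha_{i})$ and $g(\alpha_{i+1})$. In the cases when both endpoints lie outside the rewritten subtree, the argument inside the proof of Lemma \ref{lemma:solvable} already shows that modularity and the tree-order are preserved, so the move lifts directly. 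When at least one endpoint lies inside the rewritten subtree, the substitution $\theta$ may have contracted a passage $g(\alpha_{i})\to \bb h_{X}\to \bb r_{X}\to g(\alpha_{i+1})$ through the eliminated modular $X$-pair into a single edge of $\bb E'$; in that case one inserts precisely this detour to recover a valid alternating segment in $\bb E$. Concatenating these local lifts produces a closed alternating walk in $\bb E$, which contains a cyclic alternating path, contradicting $\kappa(\bb E)=0$.

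The main obstacle will be the bookkeeping in the case where $g(\alpha_{i+1})$ is a bound-variable node whose binder $\bb r_{g(\alpha_{i+1})}$ sits strictly above $\bb r_{X}$: here the root of the relevant sub-tree has been displaced by the rewrite, so one must check carefully that the tree-order condition $\gamma \preceq \beta$ defining an up-move is still satisfied after lifting, using that standard reduction preserves the coherent valuation and hence the relative positions of the binders of the variables actually occurring in $\pi'$. A secondary subtlety is to ensure that the inserted detour through the eliminated $X$-pair indeed respects the alternation of down- and up-moves, which follows because by construction $\bb h_{X}$ is modular on one side and lies directly below $\bb r_{X}$ on the other, matching exactly the type of move that needs to be restored.
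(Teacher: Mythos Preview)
Your proposal is correct and follows essentially the same route as the paper: reduce to a single step, invoke Lemma~\ref{lemma:solvable} for the coherence cases, and for $\kappa(\bb E)=0$ lift each alternating path in $\bb E'$ back to $\bb E$ via the node-origin map $g$, inserting a three-step detour through an $X$-pair precisely when an up-move in $\bb E'$ crosses from an $\mm{\bb F}_{l}$ into a copy of some $\cc{\bb D}_{jk}$ that was grafted by $\theta$. One small correction on the detour: the inserted nodes are not $\bb h_{X}$ and $\bb r_{X}$ (the latter is a root, not a leaf, and the former is labeled $\cc Y$), but rather the specific leaf $\alpha:\cc X$ in $\mm{\bb F}_{l}$ that $\theta$ replaced and the modular leaf $\beta:\mm X$ heading the $\vec Z_{k}$-subtree containing the relevant $\cc{\bb D}_{jk}$ --- this gives the pattern up--down--up exactly as in Fig.~\ref{fig:cycle}.
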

\begin{proof}%[Proof of Lemma. \ref{lemma:kappa}.]
If $\kappa(\bb E)=\infty$ the claim is trivial.
If $\kappa(\bb E)=1$, then by Lemma \ref{lemma:solvable}, $\kappa(\bb E')\leq 1=\kappa(\bb E)$.

%$\kappa(\bb E)=0$, then $\bb E$ has no  path, which implies that it has no modular edge. Since $\bb E$ is solvable, the only possibility is that for all bound variable $\bb X$, either $\cc X$ or $\mm X$ never occurs. This property is stable under reduction, since a reduction step cannot introduce new labels except for $\bullet$. We deduce that also $\bb E'$ has no modular edge, and thus no path, whence $\kappa(\bb E')=1$. 

Suppose now $\kappa(\bb E)=0$. We must show that there is no cyclic alternate path in $\bb E'$, on the assumption that there is no cyclic path in $\bb E$. Suppose $\bb E'$ is obtained by a reduction as the one in Fig. \ref{fig:11}. 
As in the proof of Lemma \ref{lemma:solvable}, for any node $\alpha$ of $\bb E'$, let $g(\alpha)$ be the unique node in $\bb E$ 
 $\alpha$ ``comes from''.
%Let us define a mapping $g$ from the nodes of $\bb E'$ without label $\bullet$, to the nodes of $\bb E$ as follows:
%if $\alpha$ is a node of $\bb E'$ occurring within either $\bb E_{l}$ or $\cc F$ different from the node replaced by $\cc X\theta$, then $g(\alpha)$ is the same node in in $\bb E$; if $\alpha$ is a node occurring in one of the copies of $\cc{\mathsf D}_{ji}[\cc X\mapsto \bullet]$, or one of the nodes with label $\vec Z_{j}$, then $g(\alpha)$ is the corresponding node in the unique copy of $\cc{\mathsf D}_{ji}[\cc X]$ in $\bb E$. 
We will show that any alternating path from $\gamma_{1}$ to $ \gamma_{n}$ in $\bb E'$ induces an alternate path from $g(\gamma_{1})$ to  $g(\gamma_{n})$ in $\bb E$. From this fact it immediately follows that $\kappa(\bb E')\leq 0$, since any cyclic path in $\bb E'$ would induce a cyclic path in $\bb E$, against the assumption that $\kappa(\bb E)= 0$.

First observe that if $\delta: Y^{c}$ is a modular node in $\bb E'$, then $g(\delta):g( Y)^{c}$ is a modular node in $\bb E$. In fact one can check that:
	\begin{itemize}
	\item if $g(\delta):g(Y)^{ c}$ is the head of the sub-tree of $\bb E$ rooted in $\bb r_{g(Y)}$, then also $\delta$ is the head of the sub-tree of $\bb E'$ rooted in $\bb r_{Y}$ (it suffices to check for the cases $\bb r_{g(Y)}\preceq \bb r_{X}$ and $\bb r_{g(X)}\prec \bb r_{X}$);
	\item if $d(g(\delta), \bb r_{g(Y)})>2$, then $d(\delta, \bb r_{Y})>2$ (again, one has to check the cases $\bb r_{g(Y)}\preceq \bb r_{X}$ and $\bb r_{g(X)}\prec \bb r_{X}$);
	
 	\item if $g(\delta): g(Y)^{ c}$ has a parallel node $\beta: g(Y)^{ c}$, then also $\delta:Y^{ c}$ has a parallel node $\beta':Y^{ c}$ with $g(\beta')=\beta$.
	\end{itemize}
From this it follows that if $(\gamma: Y^{c})\step (\delta: Y^{\OV c})$ is a down-move in $\bb E'$; then $(g(\gamma): g(Y)^{c})\step (g(\delta):g( Y)^{\OV c})$ is a down-move in $\bb E$. 
Moreover, if $(\gamma: Y^{c})\step (\delta: Z^{d})$ is an up-move in $\bb E'$, then two possibilities arise:

	\begin{itemize}
	\item  $\gamma: Y^{c}$ is a modular node occurring either in some of the $\FFun{\cc X\mapsto \cc\bullet}{\cc{\mathsf D}_{jk}}$ or of the $\FFun{\cc X\theta}{\mm{\mathsf F}_{l}}$, and $\delta: Z^{d}$ is a terminal node in the same sub-tree. Then $g(\gamma)$ is a modular node
 occurring either in some copy of $\FFun{\cc X}{\cc{\mathsf D}_{jk}}$ or in $\FFun{\cc X}{\mm{\mathsf F}_{l}}$
 and $g(\delta)$ is a terminal node in the same sub-tree, so we conclude that $g(\gamma)\step g(\delta)$ is an up-move in  $\bb E$.

	\item $\gamma: Y^{c}$ is a modular node occurring in some copy of $\FFun{\cc X\theta}{\mm{\mathsf F}_{l}}$, and $\delta: Z^{d}$ is a terminal node in the sub-tree $\FFun{\cc X\mapsto \cc\bullet}{\cc{\mathsf D}_{jk}}$ which is on top of $\FFun{\cc X\theta}{\mm{\mathsf F}_{l}}$. Then $g(\gamma)$ is modular in $\bb E$ and by construction there exists a terminal node $\alpha:\cc X$ in $\FFun{\cc X\theta}{\mm{\mathsf F}_{l}}$, a modular node $\beta: \mm X$ in $\FFun{\cc X}{\cc{\mathsf D}_{jk}}$, so that there is an alternate path $g(\gamma)\step \alpha\step \beta\step g(\delta)$ in $\bb E$, as illustrated in Fig. \ref{fig:cycle}. 	
	
%	 the same sub-tree. Then $g(\gamma)$ is a modular node
% occurring either in some copy of $\FFun{\cc X\mapsto \cc\bullet}{D_{jk}}$ or in $\FFun{\cc X\theta}{\mm{\mathsf F}_{l}}$
% and $g(\delta)$ is a terminal node in the same sub-tree, so we conclude that $g(\gamma)\step g(\delta)$ is a up-move in  $\bb E$.
%
%
%	
%	 $\gamma: Y^{c}$ is a modular node occurring in either some $\mm E_{l}$ or in $\cc F$ and $\delta: Z^{d}$ is a terminal node in $\cc{\mathsf D}_{ji}$. Then $g(\gamma)$ is a modular node in $\mm E_{l}[\cc X]$ or $\cc F[\cc X]$ in $\bb E$ (for the same reason of the previous points)  
%and $g(\delta)$ is a terminal node in $\cc{\mathsf D}_{ji}[\cc X]$, and there is a modular edge $(\alpha:\cc X,\beta:\mm X)$ such that $\alpha$ is terminal in $\mm E_{l}[\cc X]$ (or $\cc F[\cc X]$), $\beta$ is the head of a $\bb X$-module in which $\cc{\mathsf D}_{ji}[\cc X]$ occurs and there is a path $g(\gamma)\step \alpha \step \beta \step g(\delta)$ in $\bb E$ (see Fig. \ref{fig:cycle}).
	 \end{itemize}
	 
We conclude that if $\alpha_{0}\dots \alpha_{2n}$ is an alternate path in $\bb E'$, then we can find an alternate path 
$g(\alpha_{0})g(\alpha_{1})\dots g(\alpha_{2})g(\alpha_{3})\dots g(\alpha_{2n-2})g(\alpha_{2n-1})\dots g(\alpha_{2n})$ in $\bb E$.
	 
A similar argument can be developed for the other tree reduction rules.
%
%As the is not cyclic path in $\bb E$, the only possibility is that $\pi$ passes through an instance of the tree $\cc X\theta$, and this can only happen if the path enters either some $\mm E_{l}$ or $\cc F$ through some modular edge $(\gamma_{u}: Y^{d}, \gamma_{u+1}: Y^{\OV d})$ and then moves upwards through $\cc X\theta$ and ends up into a terminal node $\gamma_{u+2}: Z^{c}$ of some $\cc{\mathsf D}_{ji}[\cc X\mapsto \cc \bullet]$. Since $\pi$ is a cycle, there is then a path $\gamma_{u+2}\step\dots \step \gamma_{u}$. This path 
%
%
% some node $\gamma_{u}$ of some $\cc{D_{ji}}[\cc\bullet]$ 
\end{proof}

\begin{figure}
\adjustbox{scale=0.75, center}{$
\begin{tikzpicture}[baseline=-10]

\node[regular polygon,regular polygon sides=3, draw, shape border rotate=180] (a) at (0,0) { \ \ \ \ \ };
\node(aa) at (0.35,-1) {\small$\mm E_{l}$};
\node(y) at (0.6,0.7) {\small$\cc Y$};
\node(b) at (-0.6,0.7) {\small$\cc X$};

\node(xy) at (-1,-1.8) {\small$ X$};

\node(bz) at (-2, -1) {\small$\vec Z_{k}$};

\node(x) at (-1.3, -0.3) {\small$\mm X$};

\node[regular polygon,regular polygon sides=3, draw, shape border rotate=180] (v) at (-2.7,0.7) { \ \ \ \ \ };
\node(aa) at (-3.3,-0.3) {\small$\cc{\mathsf D}_{jk}[\cc X]$};

\node(z) at (-2.1, 1.4) {\small$\mm Z$};

\draw[thick] (xy) to (-0.05,-1.1);
\draw[thick] (xy) to (bz);
\draw[thick] (bz) to (x);
\draw[thick] (bz) to (-2.6,-0.4);

\draw[->, dotted, thick] (y) to [bend left=85] (b);
\draw[->, dotted, thick] (b) to [bend right=85] (x);
\draw[->, dotted, thick] (x) to [bend left=85] (z);

\end{tikzpicture}
\qquad \qquad\leadsto_{\cc X} \qquad \qquad 
\begin{tikzpicture}[baseline=30]

\node[regular polygon,regular polygon sides=3, draw, shape border rotate=180] (a) at (0,0) { \ \ \ \ \ };
\node(aa) at (0.35,-1) {\small$\mm E_{l}$};
\node(y) at (0.6,0.7) {\small$\cc Y$};
\node(b) at (-0.6,0.7) {\small$\bullet$};

\node(bb) at (-0.1, 1.4) {\small$\cc \bullet$};

\node(bz) at (-1, 1.4) {\small$Z_{k}$};

\node(br) at (-0.5, 2.1) {\small$\mm \bullet$};
\node[regular polygon,regular polygon sides=3, draw, shape border rotate=180] (v) at (-1.5,3.2) { \ \ \ \ \ };
\node(aa) at (-2.2,2.2) {\small$\cc{\mathsf D}_{jk}[\cc\bullet]$};

\node(z) at (-0.9, 3.9) {\small$\mm Z$};

\draw[thick] (b) to (bz);
\draw[thick] (b) to (bb);
\draw[thick] (br) to (bz);
\draw[thick] (-1.45,2.05) to (bz);

\draw[->, dotted, thick] (y) to [bend left=55] (b) to [bend left=10] (z);
\end{tikzpicture}$}
\caption{\small When $\bb E\leadsto_{\cc X}\bb E'$ is an up-move in $\bb E'$ induces an alternate path in $\bb E$.}
\label{fig:cycle}
\end{figure}

Using the observation above and Lemma \ref{lemma:kappa} we can easily prove:

\begin{proposition}\label{prop:kappone}
Suppose $\kappa(\bb E)\in \{0,1\}$ and $\bb E$ reduces to  $\bb E'\in \C P_{0}$ by standard reduction. If $\kappa(\bb E)=0$, then $\tau(\bb E')\in \NI$, and  if $\kappa(\bb E)=1$, then $\tau(\bb E')\in \NImu$.
\end{proposition}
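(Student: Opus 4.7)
Since $\bb E'$ contains no bound variables, the type $\tau(\bb E')$ is automatically quantifier-free and hence a $\NImu$-type, settling the case $\kappa(\bb E)=1$. The work lies in the case $\kappa(\bb E)=0$, where I would show that every $\mu$- or $\nu$-binder appearing in $\tau(\bb E')$ is vacuous (its bound variable does not occur in the body), so that by the isomorphisms $\mu X.A \equiv_{\beta\eta} A$ and $\nu X.A \equiv_{\beta\eta} A$, valid whenever $X\notin FV(A)$, the type $\tau(\bb E')$ is isomorphic to a $\NI$-type built only from $\to, +, \times, 0, 1$.

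The $\mu$- (resp.\ $\nu$-) binders of $\tau(\bb E')$ come precisely from $\bullet$- (resp.\ $\nuzzet$-) subtrees, and such a subtree yields a non-vacuous binder if and only if it contains a self-reference leaf of the corresponding colour. My plan therefore reduces to proving the following single-step invariant: for every standard reduction step $\bb F \leadsto_{X^{c}} \bb F'$ performed on a tree $\bb F$ with $\kappa(\bb F)=0$, the newly created $\bullet$- or $\nuzzet$-subtree contains no self-reference leaf. Once this invariant is available, Lemma \ref{lemma:kappa} guarantees $\kappa=0$ at every intermediate stage of $\bb E = \bb F_{0} \stand \bb F_{1} \stand \cdots \stand \bb F_{n}=\bb E'$, and the fact that eliminated variables never reappear in subsequent trees ensures that later reductions can neither introduce new self-reference leaves into a subtree already created nor retroactively corrupt one.

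The invariant is verified by inspecting the four rules of Fig.~\ref{fig:rewriting} in turn. Take the case of Fig.~\ref{fig:11}: a self-reference leaf $\cc\bullet$ in the created subtree can only originate from an occurrence $\gamma:\cc X$ inside one of the $\cc{\bb D}_{jk}$, and such a $\gamma$ lies strictly below the root $\vec Z_{k}$ of $\cc{\bb G}_{k}$, which is the immediate predecessor of the modular $\mm X$ occurring as head of $\cc{\bb G}_{k}$. Then $(\gamma, \text{modular }\mm X)$ is a down-move, and because $\vec Z_{k}\preceq\gamma$ there is an up-move from the modular $\mm X$ back to $\gamma$; the two moves compose into a cyclic alternate path of length two, contradicting $\kappa(\bb F)=0$. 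The three remaining rules are handled symmetrically: in each, a hypothetical self-reference leaf in the created subtree would originate from an occurrence of $X^{\OV c}$ sitting strictly below the immediate predecessor of some modular $X^{c}$-node, producing the same length-two cycle.

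The main obstacle is the geometric bookkeeping: for each of the four configurations one has to verify that the self-reference leaves of the created $\bullet$- or $\nuzzet$-subtree correspond bijectively to occurrences of $X^{\OV c}$ inside the substituted subtrees (a single family in the rules of Fig.~\ref{fig:11} and \ref{fig:22}, two families in those of Fig.~\ref{fig:12} and \ref{fig:21}), and that in every configuration the offending occurrence is placed below the immediate predecessor of some modular $X^{c}$-node, so that the up-move step is indeed available. Once these geometric facts are settled for each rule, the rest of the argument follows smoothly from Lemma \ref{lemma:kappa}.
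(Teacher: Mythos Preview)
Your proposal is correct and follows essentially the same line as the paper: the paper simply says ``using the observation above and Lemma~\ref{lemma:kappa}'', where the ``observation above'' is precisely your length-two cyclic alternate path argument (an occurrence of $X^{\overline c}$ inside the $\bb D_{jk}$/$\bb E_k$ parts yields a down-move to the modular $X^{c}$-node and an up-move back). You spell out in more detail the bookkeeping that later reduction steps cannot retroactively introduce self-references into previously created $\bullet$/$\nuzzet$-subtrees, but the core mechanism is identical.
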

%\begin{proof}
%{\color{blue}say something!}
%\end{proof}

For a $\Nd$-type $A$, we can define its characteristic as $\kappa(A)=\kappa(\cc{\mathsf t}(A))$. From Prop. \ref{prop:kappone} and Lemma \ref{lemma:converge} we deduce then a  new criterion for finiteness:

\begin{corollary}
Let $A$ be a closed $\Nd$-type. If $\kappa(A)=0$, then $A\equiv_{\bb Y}0+1+\dots+1$.
\end{corollary}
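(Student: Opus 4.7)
\medskip

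\noindent\textbf{Proof plan.} The plan is to chain together the three main results already established and then perform a routine post-processing of the resulting quantifier-free type. First, since $\kappa(A)=\kappa(\cc{\mathsf t}(A))=0$ is finite, the tree $\cc{\mathsf t}(A)$ is coherent by definition of the characteristic. Theorem~\ref{thm:solvability} then supplies a standard reduction sequence $\cc{\mathsf t}(A)\stand^{*}\bb E'$ terminating on some simple polynomial tree $\bb E'\in\C P_{0}$. Applying Proposition~\ref{prop:kappone} to this reduction, using $\kappa(\cc{\mathsf t}(A))=0$, we obtain $\tau(\bb E')\in\NI$, i.e.\ $\tau(\bb E')$ is a monomorphic $\mu,\nu$-free type built only from $0,1,+,\times,\To$. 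Lemma~\ref{lemma:converge} then yields $A\equiv_{\bb Y}\tau(\bb E')$.

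It remains to show that any \emph{closed} $\NI$-type reduces under $\beta\eta$-isomorphisms (those in Fig.~\ref{fig:isondv}) to a finite sum $0+1+\dots+1$. Since $A$ is closed and the Yoneda reduction rules neither introduce free variables nor use variables outside those of $A$, the tree $\bb E'$ has $\FV{\bb E'}=\emptyset$, so $\tau(\bb E')$ is a closed propositional type. I would then argue by induction on the structure of such a type $C$ that $C\equiv_{\beta\eta}\sum_{i=1}^{n}1$ for some $n\geq 0$. The base cases $C=0$ and $C=1$ are immediate. For $C=C_{1}+C_{2}$, the induction hypothesis gives $C_{i}\equiv \sum^{n_{i}}1$, whence $C\equiv \sum^{n_{1}+n_{2}}1$ by associativity of $+$. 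For $C=C_{1}\times C_{2}$, write each $C_{i}\equiv \sum^{n_{i}}1$ and apply the distributivity law $A\times(B+C)\equiv (A\times B)+(A\times C)$ (together with $A\times 0\equiv 0$ and $A\times 1\equiv A$) iteratively to obtain $\sum^{n_{1}n_{2}}1$. For $C=C_{1}\To C_{2}$, write $C_{i}\equiv \sum^{n_{i}}1$; then the isomorphisms $(A+B)\To D\equiv (A\To D)\times (B\To D)$, $0\To D\equiv 1$ and $1\To D\equiv D$ collapse $C$ to $(\sum^{n_{2}}1)^{n_{1}}$, a product which is reduced to $\sum^{n_{2}^{n_{1}}}1$ by the previous case.

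Combining this with the isomorphism $A\equiv_{\bb Y}\tau(\bb E')$ already obtained yields $A\equiv_{\bb Y}0+1+\dots+1$, as desired.

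\noindent\textbf{Main obstacle.} The genuinely non-trivial ingredients (coherence $\Rightarrow$ strong convergence under $\stand$, preservation of $\kappa=0$ under standard reduction, and soundness of tree reduction with respect to $\equiv_{\bb Y}$) are all encapsulated in Theorem~\ref{thm:solvability}, Proposition~\ref{prop:kappone}, and Lemma~\ref{lemma:converge}, so no new difficulty arises there. The only delicate point in the final step is that the axiomatization of Fig.~\ref{fig:isondv} is known to be incomplete in the presence of $0$ and $+$ (f.n.~\ref{ft:incomplete}); however, for the restricted task of reducing a \emph{closed} type to a sum of units, the listed laws (distributivity, $(A+B)\To C\equiv(A\To C)\times(B\To C)$, $0\To A\equiv 1$, $A\times 0\equiv 0$) are exactly what is needed, and no appeal to completeness is required.
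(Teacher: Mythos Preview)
Your proposal is correct and follows exactly the approach the paper intends: the corollary is stated immediately after Proposition~\ref{prop:kappone} with the remark that it follows from Proposition~\ref{prop:kappone} and Lemma~\ref{lemma:converge} (with Theorem~\ref{thm:solvability} supplying the converging standard reduction), and you have simply made explicit the final routine step of collapsing a closed $\NI$-type to a finite sum of units via the isomorphisms of Fig.~\ref{fig:isondv}.
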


The criterion based on the characteristic can be used to  capture {yet more} finite $\Nd$-types.
In fact, whenever a type $A$ reduces to a type with characteristic $0$, we can deduce that  $A\equiv_{\bb Y}0+1+\dots+1$. Note that such a type $A$ need not even be coherent.
For instance, the (tree of) the type 
 $ A=\forall XY.( \forall Z.( (\cc Z\To \mm Z)\To \cc X) \To \mm X) \To \mm Y\To \cc Y$, is not coherent (since the variable $Z$ is not eliminable), but 
 reduces, by eliminating $X$, to (the tree of) $\forall Y. \mm Y\To \cc Y$, which has characteristic $0$, and in fact we have that $A\equiv_{\bb Y} 1$.
%
% 
%  converges to $\cc{\B 1}$ in two steps, but is not coherent (and thus has an infinite characteristic). In fact, while $ Z$ is not eliminable, this variable can be erased by reducing $X$, yielding the coherent type 
% $\forall Y. \mm Y\To \cc Y$.

As this example shows, a type $A$ can reduce to a finite sum $0+1+\dots+1$ even if some of its subtypes cannot be similarly reduced. In fact, while we can eliminate all quantifiers from the type $A$ above, we cannot do this from its subtype $ \forall Z.( (\cc Z\To \mm Z)\To \cc X) \To \mm X)$.
By contrast, in the next section we show that the characteristic satisfies nice compositionality conditions that will allow us to define suitable fragments of $\Nd$.

% 
%  in (the tree of) $A$, the variable $X$ is \emph{canceling}: 
% by applying a reduction $\leadsto_{\cc X}$ the whole subtype $(\cc Z\To \mm Z)\To \cc X$ is erased, hence the variable $Z$ disappears. 
% 
%In spite of this, 
% 
%We decided to focus on coherence rather than on the more general property of solvability for two reasons: first, the class of solvable trees is not \emph{compositional} and thus not well-adapted to define a fragment of $\Nd$ (see Section \ref{}),  since  due to the observation above, a tree can be solvable even if some of its sub-types is not. 
%Moreover, while coherence admits a rather intuitive combinatorial characterization based on local properties of variable occurrences, it seems that any characterization of the solvability requires to check some complex \emph{global} condition on the graph.
% 

\section{Preliminaries on Theories of Program Equivalences}\label{secEpsilon}
% !TEX root = YonedaTryAgain.tex

%After properly defining the translation of types from $\NYY$ to $\NImu$, we must show how the fragment $\NYY$ (resp. $\NY$) embeds into $\NImu$ (resp. $\NI$). 

After developing the rewriting theory of Yoneda Type Isomorphisms, our goal in the next section will be to exploit the correspondence it provides between polymorphic and monomorphic types to investigate properties of program equivalence in $\Nd$.
In this section we introduce the fundamental notions of type systems and program equivalence we will be concerned with. 

%
%
%
%
%
%In this section we introduce formally the equational theory induced by the interpretation of polymorphic programs as dinatural transformations (for a suitable fragment of $\Ntot$), that we call the $\varepsilon$-theory, and we  
%show that the type isomorphisms $\equiv_{\cc X}$ and $\equiv_{\mm X}$ from Section 2 hold under this theory.

% !TEX root = YonedaTryAgain.tex

\begin{figure}
\fbox{
\begin{subfigure}{\textwidth}
\adjustbox{scale=0.8, center}{$
\begin{matrix}
\qquad
\AXC{}
\UIC{$\Gamma, x:A\vdash x:A$}
\DP
\qquad
\AXC{$\Gamma, x:A\vdash t:B$}
\UIC{$\Gamma\vdash {\lambda x.t}:A\to B$}
\DP
\qquad
\AXC{$\Gamma\vdash  t:A\to B$}
\AXC{$\Gamma\vdash  u:A$}
\BIC{$\Gamma\vdash {tu}:B$}
\DP\\ \\
\AXC{$\Gamma\vdash t:A$}
\RL{\small$X\notin FV(\Gamma)$}
\UIC{$\Gamma\vdash {\Lambda X.t}:\forall X.A$}
\DP
\qquad
\AXC{$\Gamma\vdash  t:\forall X.A$}
\UIC{$\Gamma\vdash {tB}:A[B/X]$}
\DP
\end{matrix}$}
\caption{Typing rules for $\Nd$.}
\label{fig:typf}
\end{subfigure}
}

\

\

\fbox{
\begin{subfigure}{\textwidth}
\adjustbox{scale=0.8, center}{$
\begin{matrix}
\AXC{$\Gamma\vdash t:A$}
\AXC{$\Gamma\vdash u:B$}
\BIC{$\Gamma\vdash{ \langle t,u\rangle}:A\times B$}
\DP \qquad
\AXC{$\Gamma\vdash t:A_{1}\times A_{2}$}
\UIC{$\Gamma\vdash  {\pi_{i}^{A_{i}}t}:A_{i}$}
\DP
\qquad
\AXC{}
\UIC{$\Gamma\vdash  {\star}:1$}
\DP
\\
 \qquad \\
\AXC{$\Gamma\vdash t:A_{i}$}
\UIC{$\Gamma\vdash {\iota_{i}t}:A_{1}+A_{2}$}
\DP
\qquad
\AXC{$\Gamma\vdash t:A_{1}+A_{2}$}
\AXC{$(\Gamma,{y}:A_{i}\vdash{ u_{i}}:C)_{i=1,2}$}
\BIC{$\Gamma\vdash {\delta_{C}(t,y.u_{1},y.u_{2})}: C$}
\DP
\qquad
\AXC{$\Gamma\vdash t:0$}
\UIC{$\Gamma \vdash{\xi_{A} t}: A$}
\DP
\\ \qquad \\
\AXC{$\Gamma\vdash t:\FFun{\cc X\mapsto \mu X.\FFun{\cc X}{A}}{A}$}
\UIC{$\Gamma\vdash \inn_{A}t: \mu X.\FFun{\cc X}{A}$}
\DP
\qquad
\AXC{$\Gamma\vdash t: \FFun{\mm X\mapsto B}{A}\To B$}
%\AXC{$\Gamma\vdash u:\mu  X.A$}
\UIC{$\Gamma\vdash \ff_{A}(t):  \mu X.\FFun{\mm X}{A}\To B$}
\DP
\\ \qquad \\
\AXC{$\Gamma\vdash t:\nu X.\FFun{\cc X}{A}$}
\UIC{$\Gamma\vdash \outt_{A}t: \FFun{\cc X\mapsto \nu  X.\FFun{\cc X}{A}}{A}$}
\DP
\qquad
\AXC{$\Gamma\vdash t: B\To \FFun{\cc X\mapsto B}{A}$}
%\AXC{$\Gamma\vdash u:B$}
\UIC{$\Gamma\vdash \uu_{A}(t): B\To \nu X.\FFun{\cc X}{A}$}
\DP
\end{matrix}
$}
\caption{Typing rules for $+,\times, 0,1,\mu, \nu$.}
\label{fig:typtot}
\end{subfigure}
}

\

\

\fbox{
\begin{subfigure}{\textwidth}
\adjustbox{scale=0.8, center}{$
\begin{matrix}
{(\lambda x.t)u \ }\simeq_{\beta}{ \ t[u/x]} 
\qquad {(\Lambda X.t)B \ } \simeq_{\beta }{ \ t[B/X] }
\\
\ \\
\AXC{$\Gamma\vdash  t:A\to B$}
\UIC{$\Gamma\vdash {t }\simeq_{\eta}{ \lambda x.tx}:A\to B$}
\DP
 \qquad
 \AXC{$\Gamma\vdash  t:\forall X.A$}
\UIC{$\Gamma\vdash {t} \ \simeq_{\eta} \ {\Lambda X.tX}:\forall X.A$}
\DP
\end{matrix}
$}
\caption{$\beta$ and $\eta$-rules for $\Nd$.}
\label{eqscherer0}
\end{subfigure}
}

\

\

\fbox{
\begin{subfigure}{\textwidth}
\adjustbox{scale=0.8, center}{$
\begin{matrix}
 \Big ({\pi_{i}^{A_{i}}\langle t_{1},t_{2}\rangle \ }\simeq_{\beta}{ \ t_{i} } \Big)_{\scalerel*{i=1,2}{X}}\qquad
\Big({\delta_{C}(\iota_{i}t, y.u_{1},y.u_{2}) \ }\simeq_{\beta}{ \ u_i[t/y]} \Big)_{\scalerel*{i=1,2}{X}}
\\
\
\\
\AXC{$\Gamma\vdash  t: A\times B$}
\UIC{$\Gamma\vdash {
t }\simeq_{\eta}{  \langle\pi_{1}^{A}t, \pi_{2}^{B}t\rangle }: A\times B$}
\DP
\qquad\qquad
\qquad\AXC{$\Gamma\vdash  t:1$}
\UIC{$\Gamma\vdash {t}\simeq_{\eta}\bb{ \star}: 1$}
\DP 
\\
\
\\
\AXC{$\Gamma\vdash t:A+B$}
\AXC{$\rr{ u[x]}:  A+B \vdash^{\Gamma} C$}
\BIC{$\Gamma\vdash {\rr{ u}[t]%[\delta(t, x_{1}.u_{1}, x_{2}.u_{2})/y] 
}  \simeq_{\eta} {
\delta_{C}(t, y.\rr{ u}[\iota_{1}y], y.\rr{ u}[\iota_{2}y]) }:C$}
\DP
\qquad\qquad
\AXC{$\Gamma\vdash  t:0$}
\AXC{$\rr{ u }[x]: 0 \vdash^{\Gamma}A$}
\BIC{$\Gamma\vdash{ \rr{ u}[t] }\simeq_{\eta} {\xi_{A} t}: A$}
\DP 
\end{matrix}
$}
\caption{$\beta$ and $\eta$-rules for $+,\times, 0,1$.}
\label{eqscherer}
\end{subfigure}
}

\

\

\fbox{
\begin{subfigure}{\textwidth}
\adjustbox{scale=0.8, center}{$
\begin{matrix}
%\ff_{P}(\inn_{P})(t) \simeq_{\beta}t  \qquad\qquad
\ff_{A}(t)(\inn_{A}u )\simeq_{\beta} t ( \Fun{X}{A}( \ff_{A}(t) x)[x\mapsto u ])
\\
\
\\
% \uu_{P}(\outt_{P})[\ ] \simeq_{\beta}[\ ]
% \qquad 
 \outt_{A}( \uu_{A}(t)u) \simeq_{\beta}
 \Fun{X}{A}( \uu_{A}(t)x     )[x\mapsto tu]
 % \outt_{P}( \uu_{P}(t)[\ ]))(tu)
\\
\
\\
%\AXC{$\Gamma\vdash u:P[C/X]\To C$}
%\AXC{$\Gamma\vdash v: P[D/X]\To D$}
%\AXC{$\rr{\TT c}: C\vdash^{\Gamma}D$}
%\AXC{$\rr{\TT c}[u[\ ]]\simeq v\Fun{X}{P}(\rr{\TT c}): P[C/X]\vdash^{\Gamma}D$}
%\QuaternaryInfC{$ \rr{\TT c}[ \ff_{P}(u)]\simeq_{\eta}\ff_{P}(v): \mu X.P\vdash^{\Gamma} D$}
%\DP
\AXC{$\Gamma\vdash u:\FFun{\mm X\mapsto C}{A}\To C$}
\AXC{$t[x]: \mu X.\FFun{\mm X}{A}\vdash^{\Gamma}C$}
\AXC{$t[\inn_{A}x]\simeq u\Fun{X}{A}(t): \FFun{\mm X\mapsto \mu X.\FFun{\mm X}{A}}{A}\vdash^{\Gamma}C$}
\TrinaryInfC{$ t[x]\simeq_{\eta}\ff_{A}(u)x: \mu X.\FFun{\mm X}{A}\vdash^{\Gamma} C$}
\DP
\\
\
\\
%\AXC{$\Gamma\vdash u:C\To P[C/X]$}
%\AXC{$\Gamma\vdash v: D\To P[D/X]$}
%\AXC{$\rr{\TT c}: C\vdash^{\Gamma}D$}
%\AXC{$\Fun{X}{P}(\rr{\TT c})[u[\ ]]\simeq v\rr{\TT c}: C\vdash^{\Gamma}P[D/X]$}
%\QuaternaryInfC{$ \uu_{P}(u) \simeq_{\eta} \uu_{P}(v) \circ \rr{\TT c}: C\vdash^{\Gamma} \nu X.P$}
%\DP
\AXC{$\Gamma\vdash u:C\To \FFun{\cc X\mapsto C}{A}$}
%\AXC{$\Gamma\vdash v: D\To P[D/X]$}
\AXC{$t[x]: C\vdash^{\Gamma}\nu X.\FFun{\cc X}{A}$}
\AXC{$\Fun{X}{A}(t)[x\mapsto ux]\simeq \outt_{A}t: C\vdash^{\Gamma}\FFun{\cc X\mapsto \nu X\FFun{\cc X}{A}}{A}$}
\TrinaryInfC{$ t[x] \simeq_{\eta} \uu_{A}(u)x : C\vdash^{\Gamma} \nu X.\FFun{\cc X}{A}$}
\DP
\end{matrix}
$}
\caption{$\beta$ and $\eta$-rules for $\mu, \nu$.}
\label{eqscherermu}
\end{subfigure}
}
\caption{Typing rules and $\beta\eta$-rules.}
\end{figure}
 
% 
%
% The types of $\Nd$ are generated by the grammar
%$$
%A,B:= X\in \TT V\mid A\To B\mid \forall X.A
%$$
%% Given types $A_{1},\dots, A_{k}$, we will let
%%$\exists X.A_{1}\times \dots \times A_{k}$ be an abbreviation for $\forall Y.(\forall X.A_{1}\To \dots \To A_{k}\To Y)\To Y$, where $Y$ is a variable that does not occur in the $A_{i}$.
%%We indicate by $\mathsf{\Lambda}$ the quantifier-free fragments of $\Nd$, i.e. the \emph{simply typed $\lambda$-calculus}.
%

%
%The extension of $\Nd$ with sum and product types  as well as $\mu$ and $\nu$-types $\Ntot$
%is generated by the following grammar of types:
%$$
%A,B:= 0\mid 1\mid X\in\TT V\mid A\To B\mid A+B\mid A\times B\mid \mu X.\FFun{\cc X}{A}\mid \nu X.\FFun{\cc X}{A}\mid \forall X.A
%$$

The terms of $\Nd$ are generated by usual $\lambda$-calculus constructors, i.e. variables $x,y,z,\dots$, application $tu$ and abstraction $\lambda x.t$, plus polymorphic constructors $tA$, for all type $A$ and $\Lambda  X.t$ for all $X\in \TT V$.
The terms of $\Ntot$ (and its $\mu,\nu$-free fragment $\Ndv$) involve, in addition to the constructors for $\Nd$, the following constructors:
 product constructors $\langle t,u\rangle$, $\pi_{i}t$ ($i=0,1$); sum constructors $\iota_{i}t$ ($i=0,1$) and  $\delta_{A}(t, y.u_{1},y.u_{2})$ for all type $A$;
 the $1$ constructor $\star$; the 0 destructors $\xi_{A}t$, for all type $A$;
  fixpoints constructors $\ff_{A}(t),\inn_{A}t, \uu_{A}(t), \outt_{A}t$ for any type $A$.

\begin{notation}Let $L=\langle i_{1},\dots, i_{k}\rangle$ be a list. If
%, $(A_{k})_{i\in L}$ be a $L$-indexed list of types and $B$ be another type. We let 
%$\langle A_{k}\rangle_{i\in L}\To B$ be shorthand for the type
%$A_{i_{1}}\To \dots \To A_{i_{k}} \To B$. 
%Similarly,
  $(t_{k})_{i\in L}$ is a $L$-indexed list of terms, we let  for any term $u$, $u\langle t_{k}\rangle_{i\in L}$ be shorthand for $ut_{i_{1}}\dots t_{i_{k}}$; if $\langle x_{i_{1}},\dots, x_{i_{k}}\rangle$ is a $L$-indexed list of variables, we let  for any term $u$, $\lambda\langle x_{k}\rangle_{i\in L}.u$ be shorthand for $\lambda x_{i_{1}}\dots \lambda x_{i_{k}}.u$. 
  \end{notation}
  %so that $\unpack_{A_{1},\dots, A_{n}} \left (\pack_{A_{1},\dots, A_{n}}[B_{1},\dots, B_{k}]x_{1}\dots x_{n}\right)\simeq_{\beta} \lambda Zf. fB_{1}\dots B_{k}x_{1}\dots x_{k}$.

The  typing rules of $\Ndv$ and $\Ntot$ are recalled in Fig.~\ref{fig:typf}-\ref{fig:typtot}.  
$\beta$ and $\eta$-rules are recalled in Fig.~\ref{eqscherer0}-\ref{eqscherer} and \ref{eqscherermu}.
We let the \emph{$\beta\eta$-theory} of $\Ntot$ (and its fragments) be the smallest congruent theory closed under $\beta$ and $\eta$ rules.

\medskip

%\paragraph{Contextual equivalence}

In addition to the $\beta\eta$-theory, in the next sections we will investigate two other theories: the one arising from \emph{contextual equivalence} and the \emph{$\varepsilon$-theory}. 

\begin{definition}
The \emph{contextual equivalence} relation for $\Ntot$ and $\NI$ is defined by
\begin{center}
$\Gamma\vdash_{} t\simeq_{\ctx}u:A$ iff for all context $\TT C: (\Gamma\vdash_{} A)\To(\vdash_{} 1+1)$, $\TT C[t]\simeq_{\beta\eta}\TT C[u]$
\end{center}
The contextual equivalence relation for $\Nd$ is defined by
\begin{center}
$\Gamma\vdash_{} t\simeq_{\ctx}u:A$ iff for all context $\TT C: (\Gamma\vdash_{} A)\To (\vdash_{} \forall X.X\To X\To X)$, $\TT C[t]\simeq_{\beta\eta}\TT C[u]$
\end{center}
\end{definition}

It is a standard result that contextual equivalence (for either $\Ntot$ or $\Nd$) is closed under congruence rules and thus generates an equational theory $\ctx$. 

An equational theory $\bb T$ for $\Ntot$ is \emph{consistent} when it does not contain the equation $\iota_{1}(\star)\simeq_{\bb T}\iota_{2}(\star): 1+1$. Similarly, an equational theory $\bb T$ for $\Nd$ is \emph{consistent} when it does not contain the equation $\Lambda X.\lambda xy.x\simeq_{\bb T}\Lambda X.\lambda xy.y: \forall X.X\To X\To X$. It is well-known that $\ctx$ is the \emph{maximum consistent theory} (both for $\Ntot $and $\Nd$).

\medskip

%\paragraph*{The $\varepsilon$-theory}

To introduce the $\varepsilon$-theory we first recall how positive types $\FFun{\cc X}{A}$ and negative type $\FFun{\mm X}{B}$ are associated with functors
$\Fun{X}{A}: \CTX_{\beta\eta}(\Nd)\to \CTX_{\beta\eta}(\Nd)$ and 
$\Fun{X}{B}: \CTX_{\beta\eta}(\Nd)^{\mathsf{op}}\to \CTX_{\beta\eta}(\Nd)$. We let 
$\Fun{X}{A}(C)=A[C/X]$, $\Fun{X}{B}(C)=B[C/X]$ and for all $t[x]: C\vdash D$, 
\begin{align}
\Fun{X}{X}(t)& =t\\ 
\Fun{X}{Y}(t)& =x   \\
\Fun{X}{(C\To D)}(t) & = \lambda y. \Fun{X}{D}  ( t[\Fun{X}{C}(y)]   )  \\
\Fun{X}{(\forall X.C)}(t) & = \Lambda Y. \Fun{X}{C}(tY)
\end{align}
One can check that $\Fun{X}{A}(x)\simeq_{\eta}x$ and $\Fun{X}{A}(t[u[x]])\simeq_{\beta} \Fun{X}{A}(t)\Big[ x\mapsto \Fun{X}{A}(u)\Big ]$.

The definition above can be extended to the types defined using all other constructors, yielding functors
$\Fun{X}{A}: \CTX_{\beta\eta}(\Ntot)\to \CTX_{\beta\eta}(\Ntot)$ and 
$\Fun{X}{B}: \CTX_{\beta\eta}(\Ntot)^{\mathsf{op}}\to \CTX_{\beta\eta}(\Ntot)$.

For simplicity, we will define the $\varepsilon$-theory for an \emph{ad-hoc} fragment $\Nnew$ of $\Ntot$, 
in which types are restricted so that a universal type $\forall X.A$ is always of one 
%whose types are generated by the grammar
%$$
%A,B:= 0\mid 1\mid X\mid A+B\mid A\times B\mid A\To B\mid \mu X.\FFun{\cc X}{A}\mid \nu X.\FFun{\cc X}{A} \mid
%\forall X.C
%$$
%where in the last condition $C$ must be of one 
of the two forms below:
\begin{equation}\label{eq:adhoc}
\forall X.\Big \langle \forall \vec Y_{k}. \big\langle \FFun{\cc X}{A_{jk}}\big\rangle_{j}\To \mm X\Big\rangle_{k}\To \FFun{\cc X}{B} 
\qquad\qquad
 \forall X. \Big \langle \forall \vec Y_{j}.\cc X\To \FFun{\mm X}{A_{j}}\Big\rangle_{j}\To \FFun{\mm X}{B} 
\end{equation}
As this set of types is stable by substitution, all type rules and equational rules of $\Ntot$ scale well to $\Nnew$.
We will meet again this subsystem of $\Ntot$ in the next section as well as in App.~\ref{app8}.

% In particular, it makes sense to talk of the $\varepsilon$-theory for $\Nnew$, as the theory generated by $\beta,\eta$-equations and all schemas of the form illustrated in Fig. \ref{fig:dinazzone}.

To each universal type $\forall X.A$ of $\Nnew$ as in Eq. \eqref{eq:adhoc} left we can associate an equational rule, that we call a \emph{$\varepsilon$-rule}, illustrated in Fig.~\ref{fig:dinazza}; similarly, to each type as in Eq. \eqref{eq:adhoc} right we can associate a {$\varepsilon$-rule} as illustrated in Fig.~\ref{fig:dinazzabis}.\footnote{Observe that $\forall X.C$ might well be of \emph{both} forms \eqref{eq:adhoc} left and right.}
From the viewpoint of category theory, the two $\varepsilon$-rules for a type $\forall X.A$ correspond to requiring 
that the transformations induced by a polymorphic programs of type $\forall X.A$  are  \emph{strongly dinatural} \cite{Uustalu2011}, or, in other words, that the diagrams illustrated in Fig.~\ref{fig:catdinazza} and Fig.~\ref{fig:catdinazzabis} commute.

\begin{definition}[$\varepsilon$-theory]
The \emph{$\varepsilon$-theory} of $\Nnew$ is the smallest congruent equational theory closed under $\beta$-, $\eta$-equations as well as $\varepsilon$-rules.
\end{definition}

% We illustrate these equations and their associated categorical diagrams in Fig. \ref{fig:dinazzone}. 
% We call \emph{$\varepsilon$-theory} the smallest congruence over the terms of $\Nnew$ which includes $\beta$-, $\eta$- and $\varepsilon$-equations.
%We now illustrate how to construct explicit isomorphisms for \eqref{eqi} and \eqref{eqii} in the category $\CTX_{\varepsilon}(\Ntot)$. 

\begin{figure}[t]
\begin{subfigure}{0.45\textwidth}
\parbox[h][4cm][c]{\textwidth}{
\adjustbox{scale=0.8, center}{$
\AXC{$ t: \forall X.C$}
\noLine
%\UIC{$\TT u_{k}[x]: \FFun{\cc X\mapsto C}{A_{}}\vdash C$, $\TT D: \FFun{\cc X\mapsto D}{T}\vdash D$}
%\noLine
\UIC{$\left ( e_{k}: \forall \vec Y_{k}.\langle \FFun{E}{A_{jk}}\rangle_{j} \To E\right)_{k} $}
\noLine
\UIC{$ \left ( f_{k}:\forall \vec Y_{k}. \langle \FFun{F}{A_{jk}}\rangle_{j} \To F\right)_{k}$}
\noLine
\UIC{$ \TT v[x]: E\vdash F$}
\noLine
\UIC{$ \TT v[ e_{k}\vec Y_{k}\langle z_{j}\rangle_{j}  ] \simeq f_{k}\vec Y_{k}\langle \Fun{X}{A_{jk}}(\TT v)[z_{j}]\rangle_{j}: \langle\FFun{\cc X\mapsto E}{A_{jk}}\rangle_{j}\vdash F$}
\UIC{$
\Gamma\vdash 
\Fun{X}{B}(\TT v)\Big [x\mapsto t E \langle e_{k}\rangle_{k}\Big]
\simeq
t F \langle f_{k}\rangle_{k}: \FFun{\cc X\mapsto F}{B}
$}
\DP
$}
}
\caption{\small $\varepsilon$-rule for the left-hand type in Eq. \eqref{eq:adhoc}.}
\label{fig:dinazza}
\end{subfigure} \ \ \ \ \ 
\begin{subfigure}{0.45\textwidth}
\parbox[h][4cm][c]{\textwidth}{
\adjustbox{scale=0.8, center}{$
\AXC{$ t: \forall X.C$}
\noLine
%\UIC{$\TT u_{k}[x]: \FFun{\cc X\mapsto C}{A_{}}\vdash C$, $\TT D: \FFun{\cc X\mapsto D}{T}\vdash D$}
%\noLine
\UIC{$\left ( e_{j}: \forall \vec Y_{k}. E \To  \FFun{E}{A_{j}} \right)_{j} $}
\noLine
\UIC{$\left (  f_{j}:\forall \vec Y_{k}. F \To   \FFun{F}{A_{j}}\right )_{j}$}
\noLine
\UIC{$ \TT v[x]: E\vdash F$}
\noLine
\UIC{$\Fun{X}{A_{j}}( \TT v[x])[ e_{j} x] \simeq
f_{j}\TT v[x]
: E \vdash \FFun{\cc X\mapsto F}{A_{j}}
$}
\UIC{$
\Gamma\vdash 
 \Fun{X}{B}(\TT v)\Big [x\mapsto t F \langle f_{j}\rangle_{j}\Big]\simeq  t E \langle e_{j}\rangle_{j}
: \FFun{\cc X\mapsto E}{B}
$}
\DP
$}
}
\caption{\small $\varepsilon$-rule for the left-hand type in Eq. \eqref{eq:adhoc}.}
\label{fig:dinazzabis}
\end{subfigure}

\bigskip

\begin{subfigure}{\textwidth}
%\parbox[h][4cm][c]{\textwidth}{
\adjustbox{scale=0.9, center}{$
\begin{tikzcd}
  &  \big\langle \forall \vec Y_{k}.\langle A_{jk}\langle E\rangle \rangle_{j}\To E \big\rangle_{k}
   \ar{rr}{\langle y_{k}\rangle_{k}\mapsto tE\langle y_{k}\rangle_{k}} 
  \ar{rd}{ \langle \forall \vec Y_{k}. \langle A_{jk}\langle E\rangle\rangle_{j}\To \TT v  \rangle_{k} } 
   & &  B\langle E\rangle \ar{dd}{B(\TT v)}   \\
1 \ar{ru}{\langle e_{k}\rangle_{k}} \ar{rd}[below]{\langle f_{k}\rangle_{k}}  &  & \langle \forall \vec Y_{k}.\langle A_{jk}\langle E\rangle \rangle_{j}\To F\rangle_{k}  & \\
 &\big\langle \forall \vec Y_{k}.\langle A_{jk}\langle F\rangle \rangle_{j}\To F\big\rangle_{k} 
 \ar{rr}[below]{\langle y_{k}\rangle_{k}\mapsto tF\langle y_{k}\rangle_{k}}
 \ar{ru}[right]{ \langle \forall \vec Y_{k}.\langle A_{jk}\langle \TT v\rangle\rangle_{j} \To F\rangle_{k}}
    &  & B\langle F\rangle
\end{tikzcd}
$
}
\caption{Strong dinaturality diagram for the $\varepsilon$-rule {\textrm{(a)}}.}
\label{fig:catdinazza}
\end{subfigure}

\bigskip

\begin{subfigure}{\textwidth}
%\parbox[h][4cm][c]{\textwidth}{
\adjustbox{scale=0.9, center}{$
\begin{tikzcd}
  &  \langle \forall \vec Y_{j}.E\To  A_{j}\langle E\rangle \rangle_{j}
   \ar{rr}{\langle  y_{j}\rangle_{j}\mapsto tE\langle y_{j}\rangle_{j}} 
  \ar{rd}{ \forall \vec Y_{j}.  \langle E \To  A_{j}\langle \TT v\rangle  \rangle_{j} }  & &  B\langle E\rangle    \\
1 \ar{ru}{\langle e_{j}\rangle_{j}} \ar{rd}[below]{\langle f_{j}\rangle_{j}}  &  &\langle \forall \vec Y_{j}.E\To  A_{j}\langle F\rangle \rangle_{j}  & \\
 &\langle \forall \vec Y_{j}.F\To  A_{j}\langle F\rangle \rangle_{j}  
 \ar{rr}[below]{\langle  y_{j}\rangle_{j}\mapsto t F\langle y_{j}\rangle_{j}}
 \ar{ru}[right]{ \forall \vec Y_{j}.\langle \TT v\To   A_{j}\langle F\rangle \rangle_{j}}
    &  & B\langle F\rangle\ar{uu}{B(\TT v)}
\end{tikzcd}
$
}
\caption{Strong dinaturality diagram for the $\varepsilon$-rule \textsf{\textrm{(b)}}.}
\label{fig:catdinazzabis}
\end{subfigure}
\caption{$\varepsilon$-rules and their associated strong dinaturality diagrams.}
\label{fig:dinazzone}
\end{figure}

%
%Moreover, let for all $i\in L$, $M_{k}=\langle j_{1},\dots, j_{k_{1}}\rangle $ be another finite list, $(A_{jk})_{i\in L, j\in M_{k}}$ be a list of lists of types, and $B$ be another type. We let 

%To describe the $^{\sharp}$ and $^{\flat}$-translations we recall that a generic universal type in $\NYY$ can be written (modulo some $\simeq_{0}$-isomorphism) as either
%\begin{equation}\label{opz1}
%A=\forall \vec X \forall X . A_{1}\To \dots \To A_{k}\To P[X] 
%\end{equation}
%where $P\in \Pos_{X}$ and $A_{k}= \forall\vec Y_{k}. A_{i1}\To \dots \To A_{ik_{k}}\To X$, with $A_{jk}\in \Pos_{X}$, or as
%\begin{equation}\label{opz2}
%A=\forall \vec X\forall X. A_{1}\To \dots \To A_{k}\To N[X]
%\end{equation}
%where $N\in \Neg_{X}$ and $A_{k}= \forall \vec Y_{k}. X\To P_{k}$, with $P_{k}\in \Pos_{X}$. 
%
%Using this fact we can define the $^{\sharp}$ and $^{\flat}$ translations by induction on types as illustrated in Fig. \ref{fig:sharp1} and \ref{fig:flat1}.

For the interested reader, in App.~\ref{app4} we check in detail that the isomorphisms $\equiv_{\cc X}, \equiv_{\mm X}$ hold under the $\varepsilon$-theory of $\Ntot$.

\begin{figure}
\fbox{
\begin{subfigure}{\textwidth}
\resizebox{0.35\textwidth}{!}{
\begin{minipage}{0.4\textwidth}
\begin{equation*}
\begin{split}
X^{\rop}  & = X \\
(A\to B)^{\rop} &= A^{\rop}\to B^{\rop} \\
(\forall X.A)^{\rop} &= \forall X.A^{\rop}\\
1^{\rop}& = \forall X.X\to X \\
0^{\rop}& = \forall X.X \\
%\end{split} 
%\end{equation*}
%\end{minipage}
%\begin{minipage}{0.4\textwidth}
%\begin{equation*}
%\begin{split}
\end{split} 
\end{equation*}
\end{minipage}
}
\ \ \ \ \ \ \ \ %
\resizebox{0.35\textwidth}{!}{
\begin{minipage}{0.4\textwidth}
\begin{equation*}
\begin{split}
(A\times B)^{\rop}& = \forall X.(A^{\rop}\to B^{\rop}\to X)\to X \\
(A+ B)^{\rop}& = \forall X.(A^{\rop}\to X)\to (B^{\rop}\to X)\to X\\
(\mu X.\FFun{\cc X}{A})^{\sharp} & = \forall X. (\FFun{\cc X}{A^{\sharp}}\To X)\To X \\
(\nu X.\FFun{\cc X}{A})^{\sharp} & = \forall Y. (\forall X.X\To (X\To \FFun{\cc X}{A^{\sharp}})\To Y)\To Y
\end{split} 
\end{equation*}
\end{minipage}
}
\caption{$^{\sharp}$-translation of types.}
\label{fig:startypes}
\end{subfigure}
}

\

\

\fbox{
\begin{subfigure}{\textwidth}
\resizebox{0.18\textwidth}{!}{
\begin{minipage}{0.19\textwidth}
\begin{align*}
{ x^{\rop} }& = { x }\\ 
{ (\lambda x.t)^{\rop}} & ={   \lambda x.t^{\rop} }\\
{  (tu)^{\rop} }& ={ t^{\rop}u^{\rop}} \\
(\Lambda X.t)^{\rop} & = \Lambda X. t^{\rop} \\
(tB)^{\rop} & = t^{\rop}B^{\rop} \\
{ \star^{\rop}}& = { \Lambda X.\lambda x.x} \\
{  (\xi_{C} t)^{\rop} } & ={  t^{\rop}C^{\rop}} \\
\end{align*}
\end{minipage}
}
\resizebox{0.72\textwidth}{!}{
\begin{minipage}{0.85\textwidth}
\begin{align*}
{ \langle t,u\rangle^{\rop}}& = { \Lambda X.\lambda y.yt^{\rop}u^{\rop} }\\
{   (\pi_{i}^{C}t)^{\rop}} & ={  t^{\rop}C^{\rop}\lambda x_{1}.\lambda x_{2}.x_{i} } \\
{ (\iota_{i}t)^{\rop} }& ={  \Lambda X.\lambda x_{1}.\lambda x_{2}.x_{i}t^{\rop}} \\ 
{ (\delta_{C}(t,y.u_{1},y.u_{2})^{\rop}} & ={  t^{\rop}C^{\rop}\lambda y.u_{1}^{\rop} \lambda y.u_{2}^{\rop}}\\
( \inn_{P}t)^{\rop} & =  \Lambda X.\lambda f.f ( \Fun{X}{(P^{\rop})}(xXf) [x\mapsto  t^{\rop}]) \\
(\ff_{P}(t))^{\rop} & = \lambda x.xC^{\rop}t^{\rop} \quad (t: \FFun{\cc X\mapsto C}{A}\To C)  \\
(\outt_{P}t)^{\rop} & = t^{\sharp}(\FFun{\cc X\mapsto \nu X.A}{A})^{\rop} \Lambda X.\lambda yz. \Fun{X}{(P^{\rop})}(   \Lambda Y.\lambda f.fXxy)[x\mapsto zy] \\
(  \uu_{P}(t))^{\rop}  & = \lambda x. \Lambda Y.\lambda f.fC^{\rop}t^{\rop}x \quad (t: C\To \FFun{\cc X\mapsto C}{A})
\end{align*}
\end{minipage}
}
\caption{$^{\rop}$-translation of terms.}
\label{fig:starbeta}
\end{subfigure}
}

\caption{$^{\sharp}$-translation of sum, product and fixpoint types.}
\label{fig:startra}
\end{figure}
%}

\begin{figure}

\fbox{
\begin{subfigure}{\textwidth}
\begin{center}
\resizebox{0.8\textwidth}{!}{
\begin{minipage}{\textwidth}
\begin{align*}
%\begin{split}
{\bb{\TT c}_{X}}[x] & = x \\
{\rr{\TT c}_{B\to C}}[x] & ={ \lambda y. \rr{\TT c}_{C}\big[ x (\rr{\TT c}^{-1}_{B}[y])\big ]} \\
{\rr{\TT c}_{\forall Y.B}}[x]& ={ \Lambda Y. \rr{\TT c}_{B}[x Y  ]} \\ 
{\rr{\TT c}_{B\times C}}[x]& = {\Lambda Y. \lambda y. y (\rr{\TT c}_{B}[ \pi_{1}x])(\rr{\TT c}_{C}[\pi_{2}x])}\\
{\rr{\TT c}_{1}}[x]& ={ \Lambda Y. \lambda y.y} \\
{\rr{\TT c}_{B+C}}[x]& ={ \Lambda Y. \lambda ab.\delta_{X}( x,  y.a(\rr{\TT c}_{B}[ y]), y. b(\rr{\TT c}_{C}[ y]))}\\
{\rr{\TT c}_{0}}[x]&={ \xi_{\forall X.X} x}\\
\rr{\TT c}_{\mu X.B}[x]& = \ff_{B}( \lambda x.\Lambda X.\lambda f. f ( \rr{\TT c}_{B}[ \Fun{X}{B}( xXf)) \\
\rr{\TT c}_{\nu X.B}[x]& =\Lambda Y.\lambda f. f(\nu X.B)x\lambda y.\rr{\TT c}_{B}[\nu X.B/X][ \outt_{B}y] 
%\end{split}
\end{align*}
\end{minipage}
}
\end{center}
\caption{Definition of $\bb c_{A}[x]: A\vdash A^{\sharp}$.}
\end{subfigure}
}

\medskip

\fbox{
\begin{subfigure}{\textwidth}
\begin{center}
\resizebox{0.8\textwidth}{!}{
\begin{minipage}{\textwidth}
\begin{align*}
%
%\begin{equation*}
%\begin{split}
\bb{\rr{\TT c}^{-1}_{X}}[x]& =x \\
{\rr{\TT c}^{-1}_{B\to C}}[x] & ={\lambda y. \rr{\TT c}^{-1}_{C} \big[x (\rr{\TT c}_{B}[y])\big]}\\
{\rr{\TT c}^{-1}_{\forall Y.B}}[x] & = {\Lambda Y. \rr{\TT c}^{-1}_{B}[ xY ]} \\
{\rr{\TT c}^{-1}_{B\times C}}[x]& ={x (B\times C)  \lambda yz. \langle \rr{\TT c}^{-1}_{B}[y],\rr{\TT c}^{-1}_{C}[z]\rangle} \\
{\rr{\TT c}^{-1}_{1}}[x]&={ \star}\\
{\rr{\TT c}^{-1}_{B+C}}[x] &={
x (B+C)  \lambda y.\iota_{1}(\rr{\TT c}^{-1}_{B}[ y]) \ \lambda y. \iota_{2}(\rr{\TT c}^{-1}_{C} [ y]) }\\
\bb{\rr{\TT c}^{-1}_{0}}[x] & = \bb{x0} \\
\rr{\TT c}^{-1}_{\mu X.B}[x] & = x(\mu X.B )\lambda x.\inn_{B}( \rr{\TT c}^{-1}_{B}[x]) \\
\rr{\TT c}^{-1}_{\nu X.B}[x]& = \uu_{B}(  \lambda z.  \rr{\TT c}^{-1}_{B}[(\nu X.B)^{\rop}/X  ][z (B^{\rop}[(\nu X.B)^{\rop}/X])\Lambda X.\lambda yg. \Fun{X}{(B^{\rop})}(  \TT w[y] )      ]      )x \\
 &( \text{ where }\TT w[y]=
\Lambda Y.\lambda h.hXyg \ )
%\end{split}
%\end{equation*}
\end{align*}
\end{minipage}
}
\end{center}
\caption{Definition of $\bb c^{-1}_{A}[x]: A^{\sharp}\vdash A$.}
\end{subfigure}
}

\caption{Definition of the isomorphisms $A\mapsto A^{\sharp}$.}
\label{fig:cc}
\end{figure}

We recall in Fig. \ref{fig:startypes} and \ref{fig:starbeta} the usual embedding $^{\sharp}$ of $\Ntot$ into $\Nd$.
It is well-known that $\beta$-equivalent terms of $\Ntot$ translate under $^{\sharp}$ into $\beta$-equivalent terms of $\Nd$, and that $\eta$-equivalent terms of $\Ntot$ translate under $^{\sharp}$ into terms which 
are equivalent up to dinaturality (see \cite{Plotkin1993, Hasegawa2009}), that is, into $\varepsilon$-equivalent terms of $\Nd$.
Moreover, for all $\Ntot$-type $A$ we can define terms  $\bb{\rr{\TT c}_{A}}[x]: A\vdash_{\Ntot} A^{\sharp}$ and 
$\bb{\rr{\TT c}^{-1}_{A}}[x]: A^{\sharp}\vdash_{\Ntot}A$ (as illustrated in Fig. \ref{fig:cc}).
It is a good exercise with $\varepsilon$-rules to check that 
$\TT c_{A}[\TT c_{A}^{-1}[x]]\simeq_{\varepsilon} x : A^{\sharp}$
 and $\TT c_{A}^{-1}[\TT c_{A}[x]]\simeq_{\varepsilon} x : A$ hold.

\begin{remark}
It is easily checked that for all type $A$ of $\Ntot$, $A\equiv_{\bb Y} A^{\sharp}$ holds, the only non immediate case being 
\begin{center}
\begin{tabular}{r  l}
$\nu X.\FFun{\cc X}{A}$ & $ \equiv_{\cc X} \forall Y.(\nu X.\FFun{\cc X}{A} \To Y)\To Y $ \\
& $ \equiv_{\mm X} 
\forall Y. ( \forall X.(X\To \FFun{\cc X}{A})\To (X\To Y)) \To Y$ \\
& $ \equiv_{\beta\eta}
\forall Y. (\forall X. X\To (X\To \FFun{\cc X}{A})\To Y)\To Y$
\end{tabular}
\end{center}
\end{remark}

\section{System F with Finite Characteristic}\label{secFin}
% !TEX root = YonedaTryAgain.tex

In this section we show that  Yoneda reduction can be used to characterize the $\varepsilon$-theory in some fragments of $\Nd$. We introduce two fragments $\NY,\NYY$ of $\Nd$ in which types have a fixed finite characteristic, and we show that 
the $\varepsilon$-theory for such fragment can be computed by embedding polymorphic programs into  well-known monomorphic systems. 

We recall that the \emph{free bicartesian closed category} is the category $\BB B=\CTX_{\beta\eta}(\NI)$ and the \emph{free cartesian closed $\mu$-bicomplete category} $\mu\BB B$ \cite{Santocanale2002,Basold2016} is the category $\CTX_{\beta\eta}(\NImu)$. 
We will show that, under the $\varepsilon$-theory, $\NY$ is equivalent to $\BB B$ and $\NYY$ is equivalent to $\mu\BB B$.
We will then use this embedding to establish that this theory is decidable in the fragment of characteristic 0 (where it also coincides with contextual equivalence) and undecidable in the fragment of characteristic 1. 

%The interested reader can find the definition of the $\varepsilon$-theory in App.~\ref{app4} and all technical proofs in the extended version of this paper. The fundamental fact we will use is that the isomorphisms $\equiv_{\bb Y}$ hold under this theory.

\subsection{The Systems $\NY$ and $\NYY$}
We first have to check that the types with a fixed finite characteristic do yield well-defined fragments of $\Nd$. This requires to check two properties. First, 
the characteristic has to be \emph{compositional}: a subtype of a type of characteristic $k$ cannot have a higher characteristic, 
since every subtype of a type of the fragment must be in the fragment itself.
Second, since a universally quantified variable can be instantiated with any other type of the fragment, the characteristic must be \emph{closed by instantiation}: if $\forall X.A$ and $B$ have characteristic $k$, then $A[B/X]$ must have characteristic (at most) $k$.

\begin{restatable}{lemma}{compositionality}\label{rem:compositionality}
\begin{description}
\item[(compositionality)] If $A$ is a sub-type of $B$, then $\kappa(A)\leq \kappa(B)$. 
\item[(closure by instantiation)] $\kappa(A[B/X])\leq \max\{\kappa(\forall X.A),\kappa(B)\}$.
\end{description}
\end{restatable}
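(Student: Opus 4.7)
The plan is to prove both statements by inspecting how the polynomial tree representation $\cc{\mathsf t}(\cdot)$ behaves under sub-expression extraction and type substitution, exploiting the fact that coherence, modularity and cyclic alternating paths are all defined by local conditions on the tree.

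First I would prove compositionality. By induction on the structure of $B$, I would show that if $A$ is a sub-expression of $B$, then $\cc{\mathsf t}(A)$ (or $\mm{\mathsf t}(A)$) appears, up to $\beta\eta$-normalization, as a rooted subtree of $\cc{\mathsf t}(B)$, with the only change being that some free variables of $A$ may have been bound at an ancestor node in $\cc{\mathsf t}(B)$. Internal distances and parallel-node relationships for variables bound inside $A$ are preserved by this embedding. As a consequence, every non-modular $Y$-pair, every incoherent pair of variables, and every cyclic alternating path witnessed in $\cc{\mathsf t}(A)$ remains such in $\cc{\mathsf t}(B)$, yielding $\kappa(A) \leq \kappa(B)$.

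For closure by instantiation, I would first give an explicit description of $\cc{\mathsf t}(A[B/X])$: assuming $\forall X.A$ is in $\beta\eta$-normal form with $X$ bound at the root, the tree of $A[B/X]$ is obtained from $\cc{\mathsf t}(\forall X.A)$ by deleting $X$ from the root's binder set and replacing every terminal $\cc X$-leaf (resp.\ $\mm X$-leaf) by a fresh copy of $\cc{\mathsf t}(B)$ (resp.\ $\mm{\mathsf t}(B)$), with bound variables renamed apart. Let $\mathcal A$ denote the surviving ``$A$-part'' and $\mathcal B_\iota$ the grafted ``$B$-copies''. If $\forall X.A$ is $\phi$-coherent and $B$ is $\psi$-coherent, I would exhibit a valuation $\chi$ of $A[B/X]$ extending $\phi$ on $\BV{A} \setminus \{X\}$ and $\psi$ on each copy of $\BV{B}$; since substitution inserts no new parallel nodes labeled by variables of $A$ at the modular-distance positions, and symmetrically inside each $\mathcal B_\iota$, the eliminability and pairwise coherence conditions carry over, so $A[B/X]$ is $\chi$-coherent.

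Finally, assuming $\kappa(\forall X.A) = \kappa(B) = 0$, I would show $\kappa(A[B/X]) = 0$ by contradiction: a cyclic alternating path $P$ in $\cc{\mathsf t}(A[B/X])$ has its down-moves lying entirely in $\mathcal A$ or entirely in some $\mathcal B_\iota$ (modular pairs involve variables bound either in $A$ or in one $B$-copy, never spanning both), so the only interface crossings occur in up-moves. Cutting $P$ at each such crossing and rerouting it through the bridging $X$-pair modularities available in $\cc{\mathsf t}(\forall X.A)$ (which exist because $X$ is $\cc X$- or $\mm X$-eliminable by the coherence of $\forall X.A$), one obtains either a cyclic alternating path inside $\cc{\mathsf t}(\forall X.A)$ or inside a single $\mathcal B_\iota$, contradicting the assumed zero characteristics. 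The main obstacle is this last decomposition: one must check carefully that successive interface crossings can be rerouted through the bound $X$ of $\forall X.A$ while preserving the alternation between down- and up-moves.
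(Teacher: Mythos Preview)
Your treatment of compositionality, and of the coherence half of closure-by-instantiation, matches the paper: restrict a coherent valuation of $\cc{\mathsf t}(B)$ to $\BV{A}$ for the first claim, and for the second take $\chi$ to be the disjoint union of $\phi|_{\BV{A}\setminus\{X\}}$ with one copy of $\psi$ per grafted $B$-subtree.

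For the cyclic-path part you take a harder route than necessary, and the obstacle you flag is real for that route but entirely avoidable. You already note that down-moves stay within $\mathcal A$ or within a single $\mathcal B_\iota$; the missing observation is that up-moves starting inside some $\mathcal B_\iota$ also stay there. Indeed, if $\alpha:Z^c$ is modular in $\mathcal B_\iota$ with $Z$ bound there, then $\bb r_Z\preceq\alpha$ and $d(\alpha,\bb r_Z)\leq 2$ force the immediate predecessor $\gamma$ of $\alpha$ to satisfy $\bb r_Z\preceq\gamma$, hence $\gamma\in\mathcal B_\iota$; the condition $\gamma\preceq\beta$ then forces $\beta\in\mathcal B_\iota$ as well. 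So interface crossings are one-way ($\mathcal A\to\mathcal B_\iota$ only), and a \emph{cycle} cannot cross at all: it lies entirely in the $\mathcal A$-part, yielding a cycle in $\cc{\mathsf t}(\forall X.A)$ that avoids $X$, or entirely in one $\mathcal B_\iota$, yielding a cycle in $\cc{\mathsf t}(B)$. This is the paper's argument. Your rerouting through $X$-pair modularities implicitly assumes a cycle can exit a $B$-copy so that the excursion can be bridged by an $X$-move in $\cc{\mathsf t}(\forall X.A)$; since no exit exists there is nothing to bridge, and reassembling the rerouted pieces into a genuine alternating cycle in $\cc{\mathsf t}(\forall X.A)$ is precisely where your argument would break.
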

\begin{proof}%[Proof of Lemma \ref{rem:compositionality}]

If 
$\phi: \BV{B}\to \bb{Colors}$ and $\cc{\bb t}(B)$ is $\phi$-coherent, then  $\cc{\bb t}(A)$ is $\phi'$-coherent, where $\phi'$ is 
 the restriction $\phi|_{{\BV{A}}}$ of $\phi$ to $\BV{A}$. Moreover, it is clear that any cyclic alternate path in $\cc{\mathsf t}(A)$ yields a cyclic alternate path in $\cc{\mathsf t}(B)$.

The set $\BV{A[B/X]}$ can be identified with $V=\BV{A}\uplus \biguplus_{i=1}^{k}\BV{B}$, where $\uplus$ indicates disjoint union $k$ is the number of occurrences of $X$ in $A$.
Suppose $\cc{\bb t}(\forall X.A)$ is $\phi$-coherent and $\cc{\bb t}(B)$ is $\psi$-coherent, and let $\chi:=\phi|_{\BV{A}}\uplus \biguplus_{i=1}^{k}\psi$; we claim that $\cc{\bb t}(A[B/X])$ is $\chi$-coherent. First, any bound variable $\neq X$ in $A$ is coherent with any bound variable within any of the copies of $B$ in $A[B/X]$. Moreover, 
one can check that if an alternate path starts somewhere in the tree of $A[B/X]$ and enters the sub-tree 
corresponding to a copy of $B$, then the path must end within this sub-tree, since in 
any alternate path starting from (a copy of) $B$ all labels are bound variables of (this copy of) $B$, which are disjoint from all other bound variables.
%for all down- and up-moves $\alpha\step\beta$ with $\alpha$ in this sub-tree, $\beta$ belongs to the same sub-tree.
We deduce then that a cyclic alternate path in the tree of $A[B/X]$ can only be a cyclic alternate path in either $\cc{\mathsf t}(A)$  or in one copy of $\cc{\mathsf t}(B)$. Since both cases are excluded by assumption, no alternate cyclic path exists.
\end{proof}

Thanks to Lemma \ref{rem:compositionality} the following fragments can be seen to be well-defined.

\begin{definition}[Systems $\Nd^{\kappa\leq k}$]

%For any $A\in \mathsf T(\Nd)$ and variable $X$, we let \emph{the characteristic of $A$ in $X$} be the integer
%$\kappa_{X}(A)$. 

For $k=0,1$, let $\Nd^{\kappa\leq k}$ be the subsystem of $\Nd$ with same typing rules and types restricted to the types of $\Nd$ of characteristic $k$.

\end{definition}

\subsection{Embedding $\NY$ and $\NYY$ into Monomorphic Systems}\label{app5}

The standard embedding $^{\sharp}$ of $\Ntot $ into $\Nd$ recalled in the previous section restricts to an embedding of
$\BB B$ and $\mu \BB B$ in $\NY$ and $\NYY$, respectively. As this embedding preserves $\beta$ and $\eta$-rules up to the $\varepsilon$-theory, it yields then a functor $^{\sharp}: \mu \BB B \to \CTX_{\varepsilon}(\NYY)$, which restricts  to a functor from  $\BB B$ to $\CTX_{\varepsilon}(\NY)$.

We will now define a converse embedding from $\NY$ (resp.~$\NYY$) into $\NI$ (resp.~$\NImu$), yielding a functor $^{\flat}:  \CTX_{\varepsilon}(\NYY)\to \mu \BB B$, restricting to a functor from $\CTX_{\varepsilon}(\NY)$ to $\BB B$.

\paragraph{The embedding of types}

Recall that
from Proposition \ref{prop:kappone} we already know that the types of $\NY$ (resp.~$\NYY$) are isomorphic, modulo the $\varepsilon$-theory, to types of $\NI$ (resp.~$\NImu$). 
%
%\begin{proposition}\label{prop:flattuno}
%For all $\bb{\Lambda 2}^{\kappa\leq 0}$-type (resp.~$\bb{\Lambda 2}^{\kappa\leq 1}$-type) $A$ there exists a type $A^{\flat}$ of $\NI$ (resp.~$\NImu$) such that $A\equiv_{\varepsilon}A^{\flat}$.
%%
%%\begin{itemize}
%%%\item if $k=0$, then $A^{\flat}\in \mathsf T(\mathsf{\Lambda}_{\To})$ and $A\simeq_{\beta\eta}A^{\flat}$;
%%%\item if $k=1$, then $A^{\flat}\in \mathsf T(\mathsf{\Lambda}_{\To,0,1})$ and $A\simeq_{\varepsilon}A^{\flat}$;
%%\item if $k=0$, then $A^{\flat}$ is a t\NI)$ and $A\equiv_{\varepsilon}A^{\flat}$;
%%\item if $k=1$, then $A^{\flat}\in \mathsf T(\NImu)$ and $A\equiv_{\varepsilon}A^{\flat}$.
%%
%%\end{itemize}
%\end{proposition}
%\begin{proof}
In fact, since all isomorphisms $\equiv_{\bb Y}$ are valid under the $\varepsilon$-theory, we can obtain $A^{\flat}$ from any standard reduction of the tree of $A$, using Prop. \ref{prop:kappone}. However, 
since our goal is to also embed the terms of $\NY/\NYY$ into $\NI/\NImu$, we cannot define $^{\flat}$ starting from an arbitrary standard reduction, since we need the map $A\mapsto A^{\flat}$ to satisfy two additional properties.
First, it has to commute with substitutions, i.e. $(A[B/X])^{\flat}=A^{\flat}[B^{\flat}/X]$; second, we need the embedding to be compositional, meaning that the translation of a complex type can be reconstructed from the translation of its subtypes.

%\subsubsection{The Embedding of Types}

When $\phi$ is a valuation of $\bb E$ such that $\bb E$ is $\phi$-coherent, we call $\phi$ simply a \emph{coherent valuation of $\bb E$}.
We call a reduction $\bb E_{0}\leadsto_{X_{1}^{c_{1}}}\dots \leadsto_{X_{n}^{c_{n}}}\bb E_{n}$ 
\emph{uniform} if it is standard and satisfies the following two properties:
\begin{itemize}
\item for all $1\leq i\leq n$, no bound variable occurs in $\bb E_{i-1}$ in the scope of the quantified variable $X_{i}$;
%
%when $ X_{i}$ is eliminated no other bound variable occurs in the scope of its quantifier; 
\item if $G_{i}(X_{i})=G_{j}(X_{j})$, then $c_{i}=c_{j}$.
\end{itemize}
We indicate a uniform reduction as $\bb E\leadsto_{\bb{uni}}\bb E'$. 
In other words, a uniform reduction only eliminates a quantifier after having eliminating all quantifiers in its scope, and when a quantified variable $X$ is copied by the effect of some other reduction, all copies of $X$ are reduced using the same rule.

Any uniform reduction $\bb E_{0}\leadsto_{X_{1}^{c_{1}}}\dots \leadsto_{X_{n}^{c_{n}}}\bb E_{n}\in \C P_{0}$ that converges to a simple polynomial trees induces 
a coherent valuation $\chi $ of $\bb E$, where  
$\chi(X)$ is the unique $c$ such that \emph{for all} $i$ such that $G_{i}(X_{i})=X$, $c_{i}=c$. We call the valuation  $\chi$ the \emph{skeleton} of the reduction. 
Conversely, given a coherent valuation $\chi $of $\bb E$ one can construct a uniform reduction of $\bb E$ of which $\chi$ is the skeleton. 

The following lemma states that the normal form of a uniform reduction is completely determined by its skeleton.

\begin{lemma}\label{lemma:skeleton}
 If two convergent uniform reductions $r_{1}:  \bb E\leadsto_{\bb{uni}} \bb E_{1}$, and 
$r_{2}:\bb E\leadsto_{\bb{uni}}\bb E_{2}$ have the same skeleton $\chi$, then $\bb E_{1}=\bb E_{2}$.
\end{lemma}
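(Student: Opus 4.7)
The plan is to prove the lemma by induction on the cardinality of $\BV{\bb E}$, following a standard local-confluence-plus-termination scheme. The base case $\BV{\bb E}=\emptyset$ is immediate: $\bb E\in\C P_{0}$ already, so both $r_{1}$ and $r_{2}$ are empty and $\bb E_{1}=\bb E_{2}=\bb E$.

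For the inductive step, the first clause of uniformity forces both $r_{1}$ and $r_{2}$ to start by eliminating an \emph{innermost} bound variable of $\bb E$, with color prescribed by $\chi$; let these first steps be $\bb E\leadsto_{X^{\chi(X)}}\bb F_{1}$ and $\bb E\leadsto_{Y^{\chi(Y)}}\bb F_{2}$. If $X=Y$ then $\bb F_{1}=\bb F_{2}$; moreover, the restriction of $\chi$ along the map $g:\BV{\bb F_{1}}\to\BV{\bb E}\setminus\{X\}$ (via $\chi'(Z):=\chi(g(Z))$) is a coherent valuation that serves as the common skeleton of the continuations of $r_{1}$ and $r_{2}$, so the induction hypothesis applied to $\bb F_{1}$ gives $\bb E_{1}=\bb E_{2}$.

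The interesting case is $X\neq Y$. Since both are innermost in $\bb E$, the root nodes $\bb r_{X},\bb r_{Y}$ cannot be comparable in the tree order (otherwise one of the two variables would lie in the scope of the other, contradicting innermostness). Hence the subtrees rooted at $\bb r_{X}$ and at $\bb r_{Y}$ are disjoint. Inspecting Fig.~\ref{fig:rewriting}, each of the four reduction rules replaces only the subtree rooted at $\bb r_{X}$ (respectively $\bb r_{Y}$), leaving everything else untouched. Consequently the two steps commute in a diamond: there is a single tree $\bb G$ with $\bb F_{1}\leadsto_{Y^{\chi(Y)}}\bb G$ and $\bb F_{2}\leadsto_{X^{\chi(X)}}\bb G$, and the second uniformity clause (same color for all copies of a given original variable) guarantees that these extra steps are still uniform and extend $\chi$ coherently.

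To conclude, both $\bb F_{1}$ and $\bb F_{2}$ have strictly fewer bound variables than $\bb E$ (one of $X,Y$ has been eliminated), so the induction hypothesis applies to them. The continuation of $r_{1}$ from $\bb F_{1}$ is a convergent uniform reduction with skeleton $\chi|_{\BV{\bb F_{1}}}$, and so is $\bb F_{1}\leadsto_{Y^{\chi(Y)}}\bb G\leadsto^{*}\mathrm{NF}(\bb G)$ (any convergent completion of it, which exists by Prop.~\ref{prop:termination} together with Lemma~\ref{lemma:solvable}); hence $\bb E_{1}=\mathrm{NF}(\bb G)$ by the induction hypothesis. Symmetrically, $\bb E_{2}=\mathrm{NF}(\bb G)$, so $\bb E_{1}=\bb E_{2}$. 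The main technical obstacle is the verification that two innermost eliminations genuinely commute: beyond the disjointness of their subtrees, one must check that the skeletons on the resulting trees match on both sides of the diamond, which relies precisely on the second uniformity clause applied to the (possibly duplicated) copies of outer bound variables produced by each rule.
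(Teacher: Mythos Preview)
Your proof follows a different route than the paper's. The paper introduces \emph{hierarchical} uniform reductions (those eliminating variables in order of increasing depth, where depth is measured once and for all in the original tree~$\bb E$) and argues that (i)~every uniform reduction can be rearranged into a hierarchical one with the same outcome, and (ii)~two hierarchical reductions with the same skeleton agree. Your Newman's-lemma style argument, proving a local diamond and then invoking termination, is a perfectly reasonable alternative strategy and in some ways more transparent than the paper's sketch.

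However, there is a genuine gap in your diamond step. You claim that when $X\neq Y$ are both innermost, ``the root nodes $\bb r_{X},\bb r_{Y}$ cannot be comparable in the tree order \dots hence the subtrees rooted at $\bb r_{X}$ and at $\bb r_{Y}$ are disjoint.'' But the paper's convention (see the Notation preceding the definition of modular nodes) allows \emph{several} bound variables to share the same binding node: $\bb r_{X}=\bb r_{Y}$ whenever $X$ and $Y$ lie in the same label set $\vec Y$ at a single root. In that case neither variable is in the scope of the other (so both are innermost), yet the two subtrees coincide rather than being disjoint, and your locality argument collapses. The commutation of $\leadsto_{X^{\chi(X)}}$ and $\leadsto_{Y^{\chi(Y)}}$ when $\bb r_{X}=\bb r_{Y}$ is a separate case analysis on the four rule shapes: one must track how the $X$-subtrees, the $Y$-subtrees, the remaining $\bb F_{l}$'s and the head interact, in particular when the $D_{jk}$'s for $X$ contain occurrences of $Y$ (or conversely), or when the head of the shared subtree is itself $\cc X$ or $\cc Y$. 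In such configurations the substitutions $\theta_{X}$ and $\theta_{Y}$ built in the two orders feed into each other, and it is not obvious that the two composites yield literally the same polynomial tree (as opposed to merely $\equiv_{\bb Y}$-isomorphic ones). This is where the actual work lies, and neither the disjoint-subtree reasoning nor the appeal to the second uniformity clause addresses it---indeed, once every step is innermost each $g_{i}$ is a bijection, so that clause carries no information here.
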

\begin{proof}
Let the depth of a variable $ X\in \BV{ \bb E}$ be the number of distinct bound variables within its scope. 
Let us call a uniform reduction \emph{hierarchical} if any variable of depth $k+1$ is eliminated only after all variables of depth $k$ have already been eliminated. 
It is not difficult to check by induction on the length of reductions that (1) any uniform reduction can be transformed into a hierarchical uniform reduction converging onto the same polynomial tree, (2) two hierarchical uniform reductions with same skeleton converge onto the same polynomial tree. The claim then follows from (1) and (2).
%
%We argue by induction on the maximum length among $r_{0}$ and $r_{1}$. Suppose both reductions start with $\bb E\leadsto^{\rop} \bb F$ and then $r_{0}$ continues with $\bb F\leadsto_{ X^{\chi(c)}}\bb F_{1}$, while $r_{1}$ with $\bb F\leadsto_{  Y^{\chi(Y)}}\bb F_{2}$, for two distinct variables $ X\neq   Y$.
%If both $ X$ and $  Y$ have depth 
\end{proof}
We indicate the 
unique result of a converging uniform reduction from $\bb E$ with skeleton $\chi$ simply as $\chi( \bb E)$.%

For all polynomial tree $\bb F$, let $ {\OV {\bb F}}$ be the tree obtained by switching colors on all labels.
For all polynomial trees $\bb E$, $\bb F$ and variable $ X\in \FV{\bb E}$, we let $\bb E[ X\mapsto \bb F]$ be the result of replacing all terminal nodes of $\bb E$ labeled $\cc X$ by $\bb F$ and all terminal nodes of $\bb E$ labeled $\mm X$ by $ {\OV{\bb F}}$.

\begin{lemma}\label{lemma:quasisub}
\begin{itemize}
\item[i.] If $\bb E \leadsto_{\mathsf{uni}} \bb E'$ and $ \bb F\in \C P_{0}$, then $ \bb E[ X\mapsto_{\mathsf{uni}}\bb  F] \leadsto_{ {\mathsf{uni}}} \bb E'[ X\mapsto \bb F]$.

\item[ii.] If $\bb F \leadsto_{\mathsf{uni}} \bb F'$, then $ \bb E[ X\mapsto\bb  F] \leadsto_{\mathsf{uni}} \bb E[ X\mapsto\bb F']$.
\end{itemize}
\end{lemma}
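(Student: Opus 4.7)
The plan is to prove both claims by induction on the length of the given uniform reduction, using the crucial locality of Yoneda reduction: a single step $\leadsto_{Z^c}$ only rewrites the subtree of $\bb E$ rooted at $\bb r_Z$ according to Fig.~\ref{fig:rewriting}, leaving the rest of the tree untouched, while the substitution $X \mapsto \bb F$ only rewrites leaves labelled $X^c$ (replacing $\cc X$-leaves by $\bb F$ and $\mm X$-leaves by $\OV{\bb F}$).

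For (i), since $\bb F \in \C P_0$ has no bound variables, the substitution $X \mapsto \bb F$ preserves the whole bookkeeping of $\bb E$: every bound variable $Z$ of $\bb E$ remains a bound variable of $\bb E[X \mapsto \bb F]$ at the same position, the subtree rooted at $\bb r_Z$ in $\bb E[X \mapsto \bb F]$ is the $X \mapsto \bb F$-image of the one in $\bb E$, and modular/parallel nodes (labelled by bound variables) are preserved. In particular if $Z$ is $c$-eliminable in $\bb E$ it is still so in $\bb E[X \mapsto \bb F]$, the applicable rule in Fig.~\ref{fig:rewriting} is the same, and a direct inspection of the four reduction patterns shows that the rule commutes with $X \mapsto \bb F$: the inner substitution $\theta$ introduces only the fresh symbols $\bullet, \nuzzet$ and never touches $X$. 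Hence $\bb E \leadsto_{Z^c} \bb E'$ lifts to $\bb E[X \mapsto \bb F] \leadsto_{Z^c} \bb E'[X \mapsto \bb F]$, and the skeleton and scope structure of the reduction are preserved, so uniformity is preserved step by step; iteration gives (i).

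For (ii), substituting $\bb F$ for $X$ in $\bb E$ places one disjoint copy of $\bb F$ below each positive occurrence of $X$ and one disjoint copy of $\OV{\bb F}$ below each negative occurrence. Because the rewrite rules of Fig.~\ref{fig:rewriting} are symmetric under color reversal, any step $\bb F \leadsto_{Y^d} \bb F'$ transports to $\OV{\bb F} \leadsto_{Y^{\OV d}} \OV{\bb F'}$. Thus each single step $\bb F_k \leadsto_{Y^d} \bb F_{k+1}$ in the given uniform reduction $\bb F = \bb F_0 \leadsto \dots \leadsto \bb F_N = \bb F'$ can be mimicked, inside $\bb E[X \mapsto \bb F_k]$, by a block of reductions — one on each of the disjoint copies of $Y$ sitting in the copies of $\bb F_k$ (with color $d$) and of $\OV{\bb F_k}$ (with color $\OV d$). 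Since these copies are located in pairwise disjoint subtrees of $\bb E[X \mapsto \bb F_k]$, they can be performed in any order, and their concatenation transforms $\bb E[X \mapsto \bb F_k]$ into $\bb E[X \mapsto \bb F_{k+1}]$. Concatenating all $N$ such blocks gives the desired reduction $\bb E[X \mapsto \bb F] \leadsto^{*} \bb E[X \mapsto \bb F']$.

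The main obstacle is verifying that this concatenated reduction is uniform. The hierarchical scope condition transfers pointwise: if, by uniformity of the original reduction, $Y$ has no bound variable in its scope inside $\bb F_k$, then the same holds inside each copy of $\bb F_k$ sitting in $\bb E[X \mapsto \bb F_k]$, and nothing outside that copy lies below $\bb r_Y$. The same-color (skeleton) condition is more delicate and splits into three cases: eliminations in different copies of $\bb F_k$ trace back via $G$ to different bound variables of $\bb E[X \mapsto \bb F]$, so no constraint is imposed; eliminations that, within a single copy of $\bb F_k$, target duplicates of a single original variable of $\bb F$ use the same color by the uniformity of $\bb F \leadsto_{\mathsf{uni}} \bb F'$; and the choice of $d$ for positive copies together with $\OV d$ for negative copies ensures that the color prescribed by the rule of Fig.~\ref{fig:rewriting} matches in both $\bb F_k$ and $\OV{\bb F_k}$. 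Induction on $N$ then yields (ii).
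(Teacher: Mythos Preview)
The paper states this lemma without proof, so there is nothing to compare your argument against directly; your sketch is the natural proof and is essentially correct.

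Two small points could be tightened. In (i), the phrase ``$\theta$ introduces only the fresh symbols $\bullet,\nuzzet$ and never touches $X$'' is misleading: the tree produced by $\theta$ is built from the subtrees $\bb D_{jk}$, which may very well contain occurrences of the free variable $X$. The actual reason the step commutes with $X\mapsto\bb F$ is that $\theta$ substitutes for the bound variable being eliminated while $X\mapsto\bb F$ substitutes for a distinct free variable, and two substitutions for distinct variables commute (assuming, as usual, that the bound variable does not occur free in $\bb F$). In (ii) you check the scope and skeleton clauses of uniformity but do not explicitly address standardness, which is also part of the definition. It follows easily: modular nodes carried by bound variables belonging to two distinct copies of $\bb F$ lie in disjoint subtrees of $\bb E[X\mapsto\bb F]$ and hence are never parallel, so cross-copy coherence is automatic; within a single positive copy coherence is inherited from the standardness of $\bb F\leadsto_{\mathsf{uni}}\bb F'$, and for a negative copy one uses that $Y^{c}$ and $Z^{d}$ are coherent in $\OV{\bb F}$ iff $Y^{\OV c}$ and $Z^{\OV d}$ are coherent in $\bb F$.
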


%
%A consequence of Lemma \ref{lemma:skeleton} is that in order to define $A^{\flat}$ it is enough to define a skeleton for inner-most reduction $\cc T(A)\leadsto_{ {\mathsf{uni}}}\cc{\bb E}$. 

Let $\bb E$ be a coherent polynomial tree. We define a coherent valuation $\phi_{\bb E}: \BV{ \bb E}\to  \mathsf{Colors}$ using  the well-known \emph{strongly connected components algorithm for 2-SAT} \cite{2SAT} (see Remark \ref{rem:2sat}).
We recall that a direct graph $G=(V,E)$ is \emph{skew symmetric} if there exists an involution $^{*}:V\to V$ such that for all $v,w\in V$, $v\to w\in E$ iff $w^{*}\to v^{*}\in E$. 

Let $G=(|G|,\|G\|)$ be the directed skew-symmetric graph with vertices the colored variables $ X_{i}^{c}$, for all $X_{i}\in \BV{\bb E}$ and $c\in \mathsf{Colors}$, symmetry $(X_{i}^{c})^{*}=X_{i}^{\OV c}$, and with edges 
$X_{i}^{\OV c}\to X_{i}^{c}$ if $X_{i}$ is not $\OV c$-eliminable and 
$ X_{i}^{c}\to   Y_{j}^{d}$, $  Y_{j}^{\OV d}\to X_{i}^{\OV c}$ if 
$ X_{i}^{c}$ is incoherent with $  Y_{i}^{\OV d}$.  
Let $X_{i}^{c}\sim Y_{j}^{d}$ if there is a directed path in $G$ from $X_{i}^{c}$ to $Y_{j}^{d}$ and a directed path from $Y_{j}^{d}$ to $X_{i}^{c}$. The equivalence classes of $\sim$ are usually called \emph{strong connected components} of $G$. Observe that a strong connected component cannot contain both $X_{i}^{c}$ and $X_{i}^{\OV c}$ (otherwise $\bb E$ would not be coherent). 
Let $G/\sim =(|G/\sim|, \| G/\sim\|)$ be the directed graph with vertices the strong connected components of $G$ and an edge $a\to b$ if there is $X_{i}^{c}\in a$ and $Y_{j}^{d}\in b$ and an edge $X_{i}^{c}\to Y_{j}^{d}$ in $G$. $G/\sim$ is then a directed acyclic graph, inducing thus an order relation $\prec$ on $|G/\sim|$. 
We now define a map $\chi: \BV{ \bb E}\to C$ as follows: let us split $|G/\sim|$ in its connected components $G_{1},\dots, G_{n}$; for each component $G_{i}$, choose a linear ordering $a_{1}\prec_{i}\dots \prec_{i}a_{k_{i}}$ of its vertices, compatible with the order $\prec$. We define a partial function $\phi_{i}:\BV{ \bb E}\rightharpoonup  \bb{Colors}$ by induction on $\succ_{i}$: 
\begin{itemize}
\item for all $X_{j}^{c}\in a_{k_{i}}$, set $\phi_{i}(X_{j})=c$;
\item for all $X_{j}^{c}\in a_{k_{i}-l}$, if $\phi_{i}(X_{j})$ is not yet defined, set $\phi_{i}(X_{j})=c$.

\end{itemize}
Finally let $\phi_{\bb E}= \bigcup_{i=1}^{n}\phi_{i}$.
One can check then that $\phi_{\bb E}$ is a coherent valuation of $\bb E$.
%
% following the reverse order $\succ_{i}$, and setting $\chi_{i}(X_{i})$
%
%
%for the first element $a\in G_{i}$ of the order and all variable $X_{i}\in a$, let $\chi_{i}(X
%
%by induction on the depth of $\prec$ as follows: for all $X_{i}$ occurring in some $\prec$-maximal point, if both $\cc X_{i}$ and $\mm X_{i}$ occur in one $\prec$-maximal point (which by skew-symmetry, implies that $\cc X_{i}$ and $\mm X_{i}$ are the unique elements of their equivalence class), let $\chi(X_{i})=\cc{blue}$; if only $X_{i}^{c}$ occurs in a $\prec$-maximal point, let $\chi(X_{i})=c$;
%
%set $\chi(X_{i})=\cc{blue}$ unless only $\mm X_{i}$ occurs in some $\prec$-maximal point, and in that case set $\chi(X_{i})=\mm{red}$;
%for all $\prec$-maximal points $a$, and all $X_{i}^{c}\in a$, set $\chi(X_{i})=c$ if $X_{i}^{\OV c}$ does not occur in any
%unless $\cc X_{i}$ does not occur in any $\prec$-maximal point, ; 
% if $b\prec a$, and $X_{i}^{c}\in b$, then if $\chi(X_{i})$ is not yet defined, set $\chi(X_{i})=c$.

%For each $\Nd$-type $A$, we let $\chi_{A}:\BV{A}\to C$ be the skeleton defined as above.

\begin{lemma}\label{lemma:subst}
If $\bb E$ and $\bb F$ are coherent and $ X\in \FV{\bb E}$, then $\bb E[ X\mapsto \bb F]$ is coherent and 
$\phi_{\bb E[X\mapsto \bb F]}(\bb E[ X\mapsto \bb F])= \phi_{\bb E}(\bb E)[ X\mapsto \phi_{\bb F}(\bb F)]$.
\end{lemma}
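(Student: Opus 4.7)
The plan is to split the proof into two parts, one establishing coherence of the substitute and one establishing the equality of the normal forms.

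For coherence, I would first observe that since $X\in\FV{\bb E}$ is a free variable of $\bb E$, the bound variables of $\bb E[X\mapsto\bb F]$ can be identified with the disjoint union $\BV{\bb E}\uplus\biguplus_{i=1}^k\BV{\bb F_i}$, where $\bb F_1,\dots,\bb F_k$ are renamed copies of $\bb F$ plugged in at the $k$ occurrences of $X$ in $\bb E$. The next step is to check that no parallel pair of modular nodes in $\bb E[X\mapsto \bb F]$ can straddle the boundary between $\bb E$ and any $\bb F_i$, or between $\bb F_i$ and $\bb F_j$ for $i\neq j$: any two parallel nodes lie either both in $\bb E$, or both in one of the copies $\bb F_i$. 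This is a routine case analysis on where the root of a substituted copy sits relative to the other node, much as in the proof of Lemma \ref{rem:compositionality}. It then follows directly that the valuation $\chi:=\phi_{\bb E}\uplus\biguplus_{i=1}^k \phi_{\bb F}$ is a coherent valuation of $\bb E[X\mapsto\bb F]$.

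For the equality, I would first construct a uniform reduction $\bb E[X\mapsto\bb F]\leadsto_{\bb{uni}}\phi_{\bb E}(\bb E)[X\mapsto\phi_{\bb F}(\bb F)]$. This is obtained by concatenating two applications of Lemma \ref{lemma:quasisub}: first reduce each copy of $\bb F$ to $\phi_{\bb F}(\bb F)$ using clause (ii) (legal since uniformity can be preserved by performing the $\bb F$-reductions in parallel with the same skeleton in each copy), then reduce the surrounding $\bb E$ using clause (i). The skeleton of the composite reduction is precisely the valuation $\chi$ defined above.

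The final step, and the one that I expect to be the main obstacle, is to argue that the valuation $\phi_{\bb E[X\mapsto\bb F]}$ computed by the 2-SAT / strongly-connected-components algorithm on $\bb E[X\mapsto\bb F]$ coincides with $\chi$. The idea is to show that the implication graph $G$ associated with $\bb E[X\mapsto\bb F]$ decomposes, up to isolated vertices, as the disjoint union of the implication graph of $\bb E$ and those of the copies $\bb F_i$: the parallel-node analysis from the first paragraph precisely rules out any cross-component edge. Consequently the strongly connected components of $G$ are the union of those of each piece, the induced linearizations $\prec_i$ on the condensed DAGs can be chosen componentwise, and the recursive definition of $\phi_{\bb E[X\mapsto\bb F]}$ restricts to $\phi_{\bb E}$ on $\BV{\bb E}$ and to $\phi_{\bb F}$ on each $\BV{\bb F_i}$. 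Once this is established, Lemma \ref{lemma:skeleton} gives $\phi_{\bb E[X\mapsto\bb F]}(\bb E[X\mapsto\bb F])=\chi(\bb E[X\mapsto\bb F])=\phi_{\bb E}(\bb E)[X\mapsto \phi_{\bb F}(\bb F)]$, as required. The delicate point here is matching the algorithmic freedom in choosing the linearizations $\prec_i$ with the fact that $\phi_{\bb E}$ and $\phi_{\bb F}$ were themselves produced by possibly different choices; this should be handled by noting that any valuation compatible with the componentwise SCC structure is a valid output of the algorithm, so we may always pick linearizations that reproduce $\phi_{\bb E}$ and $\phi_{\bb F}$ on their respective blocks.
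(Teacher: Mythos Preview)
Your proposal is correct and follows essentially the same route as the paper: identify $\BV{\bb E[X\mapsto\bb F]}$ with the disjoint union, observe that variables from $\bb E$ and from the copies of $\bb F$ are mutually coherent so that the implication graph (and hence its SCC condensation) splits into independent pieces, conclude $\phi_{\bb E[X\mapsto\bb F]}=\phi_{\bb E}\uplus\biguplus_i\phi_{\bb F}$, and then combine Lemma~\ref{lemma:quasisub} with Lemma~\ref{lemma:skeleton}. Your explicit remark about the freedom in choosing the linearizations $\prec_i$ is in fact more careful than the paper, which simply asserts that the value of $\phi$ on a variable depends only on its connected component in $G/\sim$.
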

\begin{proof}
We have that $\BV{\bb E[ X\mapsto \bb F]}=\BV{\bb E} \uplus \biguplus_{i=1}^{k} \BV{\bb F}$, where $ X$ occurs $k$ times in $\bb E$. 
Observe that in $\bb E[ X\mapsto \bb F]$ any bound variable coming from a copy of $\bb F$ is coherent with any bound variables coming from $\bb E$. Hence, if we construct the graph $G/\sim$ of the strongly connected components, the variables from $\bb E$ and those from the copies of $\bb F$ belong to different connected components of $G/\sim$.
 Thus, since by construction the value $\phi_{\bb E[ X\mapsto  \bb F]}(  Y)$ only depends on the connected component of $  Y$ in $G/\sim$, we deduce that
 $\phi_{\bb E[ X\mapsto \bb F]}= \phi_{\bb E}\uplus \biguplus_{i=1}^{k}\phi_{ \bb F}$.

Using this fact along with Lemma \ref{lemma:quasisub} we deduce the existence of two uniform reductions  
$ \bb E [ X\mapsto  \bb F] \leadsto_{ {\mathsf{uni}}}  \bb E[ X\mapsto \phi_{\bb F}( \bb F)] \leadsto_{ {\mathsf{uni}}}\phi_{\bb E}(\bb E)[ X\mapsto \phi_{\bb F}(\bb F)]$ and 
$ \bb E[ X\mapsto \bb F] \leadsto_{ {\mathsf{uni}}} \phi_{\bb E[X\mapsto \bb F]}( \bb E[ X\mapsto \bb F])$
with the same skeleton, so using Lemma \ref{lemma:skeleton} we deduce our claim.
\end{proof}

We can now define $A^{\flat}$ as follows:
\begin{definition}
For all $A\in \NYY$, we let $A^{\flat}:= \tau( \phi_{\cc{\bb t}(A)}(\cc{\bb t} (A)))$.
\end{definition}
From Lemma \ref{lemma:subst} we immediately deduce that $(A[B/X])^{\flat}=A^{\flat}[B^{\flat}/X]$.
Moreover, we can check  that if $A=B\To C$, then $A^{\flat}=B^{\flat}\To C^{\flat}$ and that, if $A=\forall X.C$, then since $A^{\flat}$ is obtained by uniform reduction, we can suppose 
that $\cc{\bb t}(A)\leadsto_{\mathsf{\mathsf{uni}}} \cc{\bb E}' \leadsto_{\cc X} \cc{\bb E}''$, where $\tau(\cc{\bb E}'')=A^{\flat}$ and $\tau(\cc{\bb E}')=\forall X.C^{\flat}$, and since $X$ is eliminable in $\cc{\bb E}'$, $C^{\flat}$ must have one of forms below 
\begin{equation}\label{eqqi}
\Big \langle   \big\langle \FFun{\cc X}{A^{\flat}_{jk}}\big\rangle_{j}\To \mm X\Big\rangle_{k}\To \FFun{\cc X}{B^{\flat}} 
\qquad \qquad 
\Big \langle  \cc X\To \FFun{\mm X}{A^{\flat}_{j}}\Big\rangle_{j}\To \FFun{\mm X}{B^{\flat}} 
\end{equation}
for given types $A_{jk},A_{j},B$ which are strictly less complex than $A$.

\paragraph{The Embedding of Terms}

Using the properties of the embedding of type just defined as well as the terms for the Yoneda type isomorphisms from App.~\ref{app4}, we can define the embedding of terms from $\NYY$ to $\NImu$.

Given $\Gamma\vdash_{\NYY}t:A$ (resp. $\Gamma \vdash_{\NY}t:A$), we define a term $t^{\flat}$ such that 
$\Gamma^{\flat}\vdash_{\NImu}t^{\flat}:A^{\flat}$ (resp. $\Gamma^{\flat}\vdash_{\NI}t^{\flat}:A^{\flat}$) holds  by induction on $t$. 
If $t=x$, we let $t^{\flat}=x$, if $t=\lambda x.u$, we let $(t)^{\flat}= \lambda x.t^{\flat}$ and if $t=uv$, we let $(t)^{\flat}=u^{\flat}v^{\flat}$. The case of $t=\Lambda X.u$ and $t=uD$ are less obvious:
\begin{itemize}

\item if $t=\Lambda X.u$, then $A=\forall X.C$ and by the induction hypothesis $\Gamma^{\flat}\vdash u^{\flat}:C^{\flat}$. Moreover, we can suppose that $C^{\flat}$ is of one of the forms of Eq. \eqref{eqqi}. Hence we must consider two cases:
\begin{itemize}
\item[a.] $A^{\flat}= \FFun{\cc X \mapsto \mu  X.\sum_{k}\prod_{j}\FFun{\cc X}{A_{jk}^{\flat}}}{B^{\flat}}$. Then we let
$t^{\flat}= \TT s_{{A_{jk}^{\flat}}, {B^{\flat}}} \Big[ u^{\flat} [ X\mapsto \mu  X.\sum_{k}\prod_{j}\FFun{\cc X}{A_{jk}^{\flat}}  ]       \Big]$.

\item[b.] $A^{\flat}= \FFun{\mm X \mapsto \nu  X.\prod_{j}\FFun{\mm X}{A_{j}^{\flat}}}{B^{\flat}}$. Then we let
$t^{\flat}= \TT u_{{A_{j}^{\flat}}, {B^{\flat}}} \Big[ u^{\flat} [ X\mapsto \nu  X.\prod_{j}\FFun{\cc X}{A_{j}^{\flat}}  ]       \Big]$.

\end{itemize}
where for the terms $\bb s_{A_{jk}^{\flat},B^{\flat}}$ and $\bb u_{A_{j}^{\flat},B^{\flat}}$  see App.~\ref{app4}.

\item if $t=uD$, then $A=C[D/X]$ and by the induction hypothesis we have $\Gamma^{\flat}\vdash_{\NYY}u: (\forall X.C)^{\flat}$. Moreover we have $A^{\flat}= C^{\flat}[D^{\flat}/X]$ and as above we can suppose that $C^{\flat}$ is in one of the forms in Eq. \eqref{eqqi}, so we consider again two cases: 
	\begin{itemize}
	\item[a.] If $(\forall X.C)^{\flat}=\FFun{\cc X \mapsto \mu  X.\sum_{i}\prod_{j}\FFun{\cc X}{A_{jk}^{\flat}}}{B^{\flat}}$, then we let $t^{\flat}=  \bb t_{ A_{jk}^{\flat},B^{\flat}}[u^{\flat}, D^{\flat}]$.

	\item[b.]  If $(\forall X.C)^{\flat}=\FFun{\mm X \mapsto \nu  X.\prod_{j}\FFun{\mm X}{A_{j}^{\flat}}}{B^{\flat}}$, then we  let  $t^{\flat}=  \TT v_{A_{j}^{\flat},B^{\flat}}[u^{\flat},  D^{\flat}]$.
	\end{itemize}
where for the terms $\TT t_{A_{jk}^{\flat},B^{\flat}}$ and $\bb v_{A_{j}^{\flat},B^{\flat}}$  see App.~\ref{app4}.

\end{itemize}

Using the constructions above it is a simple exercise to define explicit isomorphisms $(\rr{\TT d}_{A}[x], \TT d_{A}^{-1}[x]):A\vdash A^{\flat}$ for all type $A$ of $\NYY$.

\subsection{The equivalence of $\NY/\NYY$ with $\BB B/\mu\BB B$}

The two functors $
^{\sharp} $ and 
$^{\flat}$ preserve all the relevant structure (products, coproducts, exponentials, initial algebras, final coalgebras), 
but they are 
not \emph{strictly} inverse: $(A^{\flat})^{\sharp}$ is not equal to $A$, but only $\varepsilon$-isomorphic to it (e.g. for $A=\forall X.((\forall Y.\mm Y\To\cc Y)\To \mm X)\To\cc  X$, we have $A^{\flat}=1$ and $(A^{\flat})^{\sharp}= \forall X.\mm X\To \cc X$). 
Nevertheless, we will establish that the following equivalences hold:

\begin{restatable}{theorem}{equiva}\label{thm:equiva}
\begin{enumerate}
\item[i.] $\CTX_{\varepsilon}(\NY)\cong \BB B$.
\item[ii.] $\CTX_{\varepsilon}(\NYY)     \cong  \mu \BB B$.

%\item[ii.] $\CTX_{\varepsilon}^{0}(\NY)\cong \CTX_{\varepsilon}^{0}(\Nd_{{0,1,+,\times}})$.
%\item[iii.] $\CTX_{\varepsilon}^{0}(\NY)\cong \CTX_{\varepsilon}^{0}(\NRP)$.

\end{enumerate}
\end{restatable}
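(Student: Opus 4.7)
The plan is to establish that the two functors $^{\sharp}$ and $^{\flat}$ introduced in Section~\ref{secEpsilon} and Section~\ref{app5} respectively form an equivalence of categories. The functor $^{\sharp}: \mu\BB B \to \CTX_{\varepsilon}(\NYY)$ is already known to be well-defined, since, as recalled at the end of Section~\ref{secEpsilon}, $\beta\eta$-equal terms of $\Ntot$ translate into terms of $\Nd$ which are equal modulo the $\varepsilon$-theory. To see that $^{\flat}$ is functorial, I would check that if $t \simeq_{\varepsilon} u$ then $t^{\flat} \simeq_{\beta\eta} u^{\flat}$: this is immediate for the $\beta\eta$-equations, and for the $\varepsilon$-rules it follows because, under $^{\flat}$, any universal type is sent to the corresponding initial algebra or final coalgebra of monomorphic type, and the $\varepsilon$-rules of Fig.~\ref{fig:dinazzone} match precisely the universal property of such (co)algebras. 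Restriction to $\NY$ gives a functor into $\BB B$, since by Proposition~\ref{prop:kappone} types of characteristic $0$ embed into the $\mu,\nu$-free fragment $\NI$.

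To exhibit the equivalence I would build natural isomorphisms $\eta_A: A \xrightarrow{\sim} (A^{\sharp})^{\flat}$ in $\mu\BB B$ and $\varepsilon_A: (A^{\flat})^{\sharp} \xrightarrow{\sim} A$ in $\CTX_{\varepsilon}(\NYY)$. The underlying terms are obtained by composing the explicit isomorphism pairs $(\TT c_A,\TT c_A^{-1})$ from Figure~\ref{fig:cc} with the pairs $(\TT d_A,\TT d_A^{-1})$ constructed at the end of Section~\ref{app5}: for instance, one takes $\eta_A := (\TT c_A^{-1})^{\flat} \circ \TT d_{A^{\sharp}}$ and $\varepsilon_A := \TT d_A^{-1} \circ (\TT c_A)^{\flat}$. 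Each factor is by construction an isomorphism modulo the relevant theory, so $\eta_A$ and $\varepsilon_A$ are isomorphisms too.

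The substantive part of the argument is to check naturality, i.e.\ that for every $t[x]: A \vdash B$ the equations $(t^{\sharp})^{\flat} \circ \eta_A \simeq_{\beta\eta} \eta_B \circ t$ in $\mu\BB B$ and $\varepsilon_B \circ (t^{\flat})^{\sharp} \simeq_{\varepsilon} t \circ \varepsilon_A$ in $\CTX_{\varepsilon}(\NYY)$ hold. This is proved by induction on the structure of $t$; the cases of variables, abstractions, applications, and the monomorphic constructors go through straightforwardly because both $^{\sharp}$ and $^{\flat}$ act compositionally on these. The hard part is the polymorphic constructors $\Lambda X.u$ and $uD$: the embedding $^{\flat}$ does not preserve them directly but inserts the explicit terms $\TT s, \TT t, \TT u, \TT v$ witnessing the Yoneda type isomorphisms of Appendix~\ref{app4}. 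Verifying naturality in these cases reduces, after unfolding, to showing that the round-trip $\Lambda X.u \mapsto (\Lambda X.u)^{\flat} \mapsto ((\Lambda X.u)^{\flat})^{\sharp}$ is $\varepsilon$-equal to $\Lambda X.u^{\sharp\flat\sharp}$ via the isomorphisms $\TT d$ and $\TT c$, and this is precisely the content of the Yoneda isomorphisms $\equiv_{\cc X}, \equiv_{\mm X}$ being \emph{$\varepsilon$-valid}, as established in Appendix~\ref{app4}.

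Once the two natural isomorphisms are verified, both equivalences follow at once. Part~(ii) is the general case, and part~(i) is obtained by restricting every step to types of characteristic $0$: the functor $^{\flat}$ then maps into $\NI$ by Proposition~\ref{prop:kappone}, and $^{\sharp}$ maps $\NI$ into $\NY$ because the $\sharp$-translation of a type without $\mu,\nu$-constructors introduces only simple (representable, co-representable) quantifiers, which have characteristic $0$ by direct inspection. The main obstacle, as already stressed, is the bookkeeping needed to ensure that all the explicit Yoneda terms composed through $\TT c$, $\TT d$, $^{\sharp}$, and $^{\flat}$ are $\varepsilon$-equivalent to what they are supposed to represent; this would presumably be relegated to a separate technical appendix.
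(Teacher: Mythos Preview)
Your approach differs from the paper's in a fundamental way. You attempt a direct verification that $^{\sharp}$ and $^{\flat}$ form an equivalence by (a) showing both are functors, (b) constructing unit/counit natural isomorphisms, and (c) verifying naturality. The paper instead uses an abstract shortcut: Lemma~\ref{lemma:fullsub} states that if $\BB C$ and $\BB D$ are both \emph{full} subcategories of a common ambient category $\BB E$, and there is a surjection $f:\mathsf{Ob}(\BB C)\to\mathsf{Ob}(\BB D)$ together with isomorphisms $u_a:a\to f(a)$ living in $\BB E$, then $f$ extends to an equivalence $F(g)=u_b\circ g\circ u_a^{-1}$. Taking $\BB E=\CTX_{\varepsilon}(\Nnew)$, $\BB C=\CTX_{\varepsilon}(\NYY)$, $\BB D=\mu\BB B$, $f=(-)^{\flat}$ and $u_A=\TT d_A$, the only remaining work is to check fullness of the two inclusions, which is done in Appendix~\ref{app8} (Proposition~\ref{prop:mado} and Theorem~\ref{thm:full}).

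The payoff of the paper's route is that it never needs to establish that $^{\flat}$ respects $\varepsilon$-equivalence, i.e.\ your claim ``if $t\simeq_{\varepsilon}u$ then $t^{\flat}\simeq_{\beta\eta}u^{\flat}$''. This is not obvious and is not proved anywhere in the paper: the translation $(-)^{\flat}$ inserts the Yoneda-isomorphism terms $\TT s,\TT t,\TT u,\TT v$ at every polymorphic constructor, and verifying that an arbitrary instance of the $\varepsilon$-rule (Fig.~\ref{fig:dinazzone}) is sent to a $\beta\eta$-equality in $\NImu$ would require a separate inductive argument via the $\eta$-rules for $\mu,\nu$ that you only gesture at. In the paper's proof the equivalence functor is not literally $(-)^{\flat}$ on morphisms; it is $g\mapsto \TT d_B\circ g\circ \TT d_A^{-1}$ computed inside the ambient $\Nnew$, and fullness is what guarantees this lands in $\mu\BB B$.

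There is also a concrete type mismatch in your construction of $\eta_A$. You set $\eta_A:=(\TT c_A^{-1})^{\flat}\circ\TT d_{A^{\sharp}}$, but $\TT c_A^{-1}:A^{\sharp}\vdash A$ is a term of $\Nnew$ (it mixes polymorphic constructors with $\iota_i,\delta,\inn,\ff,\dots$), not of $\NYY$, so $(-)^{\flat}$ is not defined on it; the same applies to $(\TT c_A)^{\flat}$ in your $\varepsilon_A$. Working throughout in the common ambient category $\Nnew$, as the paper does, is exactly what allows such mixed terms to be composed and then recognised as living in the smaller categories via the fullness statements.
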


%Observe that, while the two embedding $ t\mapsto  t^{\sharp}$ and $ t\mapsto  t^{\sharp}$ yield 
%
%Note that the equivalences above do not immediately follow from the existence of the two functors, as these are not \emph{strict} inverse: in other words $(A^{\flat})^{\sharp}$ is not equal to $A$, but only $\varepsilon$-isomorphic to it (e.g. for $A=\forall X.((\forall Y.\mm Y\To\cc Y)\To \mm X)\To\cc  X$, we have $A^{\flat}=1$ and $(A^{\flat})^{\sharp}= \forall X.\mm X\To \cc X$). 
%
%The equivalence of $\mu \BB B$ and $\CTX_{\varepsilon}(\NYY) $ is proved (in App. \ref{app8}) 
The proof of Theorem \ref{thm:equiva} is done by checking (by way of $\beta$-, $\eta$- and \emph{$\varepsilon$-rules}) that both $\NYY$ and $\NImu$ embed \emph{fully} in $\Nnew$, using the lemma below (with $\BB C=\CTX_{\varepsilon}(\NYY)$, $\BB D=\mu \BB B$, $\BB E=\Nnew$ and  $f(A)=A^{\flat}$).
%We postpone this technical verification to App.~\ref{app8}.

\begin{restatable}{lemma}{fullsub}\label{lemma:fullsub}
Let $\BB C, \BB D$ be full subcategories of a category $\BB E$. Let $f: \mathsf{Ob}(\BB C)\to \mathsf{Ob}(\BB D)$ be surjective and suppose there is a map $u$ associating each object $a$ of $\BB C$ with an isomorphism $u_{a}:a\to f(a)$ in $\BB E$. Then $f$ extends to an equivalence of categories $F: \BB C\to \BB D$.

\end{restatable}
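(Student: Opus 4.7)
The plan is to construct the functor $F$ explicitly using the isomorphisms $u_a$ as a kind of transport, then verify the three standard conditions for being an equivalence: functoriality, essential surjectivity, and full faithfulness.

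First, I would define $F$ on objects by $F(a) := f(a)$, and on a morphism $\alpha: a \to b$ in $\BB C$ by setting $F(\alpha) := u_b \circ \alpha \circ u_a^{-1}$, where the composition is taken in $\BB E$. Since $\BB C$ is a full subcategory of $\BB E$, $\alpha$ is a genuine morphism in $\BB E$, so $F(\alpha)$ is a well-defined morphism of $\BB E$ from $f(a)$ to $f(b)$. Because $\BB D$ is \emph{full} in $\BB E$ and both $f(a), f(b)$ are objects of $\BB D$, this composite automatically lies in $\BB D$, so $F(\alpha)$ is a morphism of $\BB D$.

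Next I would verify functoriality. The identity is preserved since $F(\mathrm{id}_a) = u_a \circ \mathrm{id}_a \circ u_a^{-1} = \mathrm{id}_{f(a)}$, and for composable $\alpha: a \to b$, $\beta: b \to c$, one inserts $u_b^{-1} \circ u_b = \mathrm{id}_b$ in the middle to get $F(\beta \circ \alpha) = u_c \circ \beta \circ u_b^{-1} \circ u_b \circ \alpha \circ u_a^{-1} = F(\beta) \circ F(\alpha)$. Essential surjectivity is immediate: since $f$ is surjective on objects, every object $d$ of $\BB D$ is literally of the form $F(a) = f(a)$ for some $a \in \BB C$, hence isomorphic (in fact equal) to an object in the image of $F$.

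For full faithfulness, I would exploit fullness of both $\BB C$ and $\BB D$ in $\BB E$ to rewrite the hom-sets as $\BB C(a,b) = \BB E(a,b)$ and $\BB D(F(a), F(b)) = \BB E(f(a), f(b))$. The assignment $\alpha \mapsto u_b \circ \alpha \circ u_a^{-1}$ then has an explicit two-sided inverse, namely $\beta \mapsto u_b^{-1} \circ \beta \circ u_a$, so it is a bijection between these hom-sets. Since $F$ is a functor which is both essentially surjective and fully faithful, it is an equivalence of categories.

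There is no real obstacle here; the only delicate point worth stating clearly is the reliance on fullness of $\BB D$ in $\BB E$ to ensure that the composite $u_b \circ \alpha \circ u_a^{-1}$, a priori only a morphism of $\BB E$ between two $\BB D$-objects, actually belongs to $\BB D$ (and similarly fullness of $\BB C$ for well-definedness of the inverse assignment). Without fullness on both sides the construction would collapse.
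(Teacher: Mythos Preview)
Your proposal is correct and follows essentially the same approach as the paper: define $F$ on morphisms by conjugation with the isomorphisms $u_a$, use fullness of $\BB D$ in $\BB E$ to land in $\BB D$, and exhibit the inverse $\beta \mapsto u_b^{-1}\circ\beta\circ u_a$ (using fullness of $\BB C$ in $\BB E$) to get full faithfulness. The paper's version is terser---it omits the explicit functoriality check and phrases surjectivity on objects rather than essential surjectivity---but the argument is the same.
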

\begin{proof}%[Proof of Lemma \ref{lemma:fullsub}]
Let $F(a)=f(a)$ and $F(g:a\to b)= u_{b}  \circ g\circ u_{a}^{-1}\in   \BB E(f(a),f(b))=  \BB D(f(a),f(b))$. $F$ is clearly faithful and surjective. It is also full since for any $h\in   \BB D( F(a), F(b))$ there exists $k=u^{-1}_{b}\circ h\circ u_{a}\in  \BB\bb E(a,b)= \BB C(a,b)$ such that $h=F(k)$.% is equal to $F(u_{b}^{-1}\circ h \circ u_{a})$. 
\end{proof}%

Let $ \BB C=\CTX_{\varepsilon}(\NYY)$, $ \BB D=\mu  \BB B$ and let $\BB E=\CTX_{\varepsilon}(\Nnew)$. We know that there is a surjective map $^{\flat}: \mathsf{Ob}( \BB C)\to \mathsf{Ob}(  \BB D)$ as well as isomorphisms $(\TT d_{A}[x], \TT d_{A}^{-1}[x]):A\to A^{\flat}$ in $ \BB E$. 
Thus, to apply Lemma \ref{lemma:fullsub} it remains to check  that $\CTX_{\varepsilon}(\NYY)$ and $\mu  \BB B$ are \emph{full} subcategories of $\CTX_{\varepsilon}(\Ntot)$. 
Concretely, this means checking that:
\begin{itemize}
\item for any term $\Gamma\vdash_{\Nnew}t:A$ in which $\Gamma,A$ are $\NYY$-types
 %(resp. $\NY$-types)
  we can find a term $u\simeq_{\varepsilon}t$ such that $\Gamma\vdash_{\NYY} u:A$ holds; %(resp. $\Gamma\vdash_{\NY} u:A$) holds;
\item  for any term $\Gamma\vdash_{\Nnew}t:A$ in which $\Gamma,A$ are types of $\mu \BB B$ 
%(resp. $  B$) 
we can find a term $u\simeq_{\varepsilon}t$ such that $\Gamma\vdash_{\NImu} u:A$
% (resp. $\Gamma\vdash_{\NI} u:A$) 
 holds.
\end{itemize}
This is where we will actually use our embeddings, since we take $u=t^{\sharp}$ in the first case and $u=t^{\flat}$ in the second case. The verification of $t\simeq_{\varepsilon}u$ involves both $\varepsilon$- and $\beta\eta$-equations and is postponed to App.~\ref{app8}.% The whole argument above can be restricted to the systems $\NY$ and $\NI$.

\subsection{Program Equivalence in $\NY$ is decidable.}

Theorem \ref{thm:equiva} can be used to deduce properties of program equivalence in $\NY$ and $\NYY$ from well-known properties of program equivalence for $\BB B$ and $\mu \BB B$. In fact, it is known that $\beta\eta$-equivalence in $\NI$ (i.e. arrow equivalence in $\BB B$) is decidable and coincides with {contextual equivalence} \cite{Scherer2017}, while contextual equivalence for $\mu \BB B$ is undecidable \cite{Basold2016}. 
Using Theorem \ref{thm:equiva} we obtain similar facts for $\NY$ and $\NYY$:
\begin{theorem}\label{thm:decidable}
The $\varepsilon$-theory for $\NY$ is decidable and coincides with contextual equivalence.

\end{theorem}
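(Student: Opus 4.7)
The proof would leverage the equivalence of categories
$\CTX_{\varepsilon}(\NY)\simeq \BB B=\CTX_{\beta\eta}(\NI)$
established in Theorem~\ref{thm:equiva}.i via the functors ${}^{\sharp}$ and ${}^{\flat}$, combined with two classical facts about the free bicartesian closed category: (a) $\beta\eta$-equivalence in $\NI$ is decidable (\cite{Scherer2017}), and (b) $\beta\eta$-equivalence in $\NI$ coincides with contextual equivalence in $\NI$ (\cite{Okada1999}). The argument splits into three parts.

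For \emph{decidability}, given $\Gamma\vdash_{\NY}t,u:A$, I would compute the translations $\Gamma^{\flat}\vdash_{\NI}t^{\flat},u^{\flat}:A^{\flat}$ from Section~\ref{app5}, which is clearly effective. By faithfulness of the equivalence of categories, $t\simeq_{\varepsilon}u$ in $\NY$ iff $t^{\flat}\simeq_{\beta\eta}u^{\flat}$ in $\NI$, and the latter is decidable by (a).

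For the inclusion $\simeq_{\varepsilon}\subseteq \simeq_{\ctx}$, I would first check that $\simeq_{\varepsilon}$ is a consistent congruence on $\NY$: the terms $\Lambda X.\lambda xy.x$ and $\Lambda X.\lambda xy.y$ of type $\forall X.X\To X\To X$ translate under ${}^{\flat}$ to $\iota_{1}\star$ and $\iota_{2}\star$ of type $1+1$, which are $\beta\eta$-distinct in $\NI$, so by Theorem~\ref{thm:equiva}.i they are $\varepsilon$-distinct in $\NY$. Since $\simeq_{\ctx}$ is the largest consistent congruence, this gives the desired inclusion.

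For the converse $\simeq_{\ctx}\subseteq \simeq_{\varepsilon}$, I would argue by contrapositive. Suppose $t\not\simeq_{\varepsilon}u$ in $\NY$. Then $t^{\flat}\not\simeq_{\beta\eta}u^{\flat}$ in $\NI$, and by (b) there is a $\NI$-context $\TT{C}:A^{\flat}\vdash_{\NI}1+1$ with, say, $\TT{C}[t^{\flat}]\simeq_{\beta\eta}\iota_{1}\star$ and $\TT{C}[u^{\flat}]\simeq_{\beta\eta}\iota_{2}\star$. Using the canonical isomorphism $\TT{d}_{A}:A\to (A^{\flat})^{\sharp}$ from Section~\ref{app5}, I would lift $\TT{C}$ to the $\NY$-context $\TT{C}'[x]:=\TT{C}^{\sharp}[\TT{d}_{A}[x]]:A\vdash_{\NY}\forall X.X\To X\To X$. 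Since ${}^{\sharp}$ preserves $\beta$ strictly and $\eta$ up to $\varepsilon$, and since $\TT{d}_{A}[t]\simeq_{\varepsilon}(t^{\flat})^{\sharp}$, one obtains $\TT{C}'[t]\simeq_{\varepsilon}\Lambda X.\lambda xy.x$ and $\TT{C}'[u]\simeq_{\varepsilon}\Lambda X.\lambda xy.y$. Now if $\TT{C}'[t]\simeq_{\beta\eta}\TT{C}'[u]$ held, then by transitivity of $\simeq_{\varepsilon}\supseteq \simeq_{\beta\eta}$ we would have $\Lambda X.\lambda xy.x\simeq_{\varepsilon}\Lambda X.\lambda xy.y$, contradicting the consistency of $\varepsilon$ established in the previous step. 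Hence $\TT{C}'$ separates $t$ from $u$ at $\beta\eta$, giving $t\not\simeq_{\ctx}u$.

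The delicate point is the third step: the lifted context only separates the terms up to $\varepsilon$, not $\beta\eta$, and consistency of $\varepsilon$ is exactly what bridges this gap—a nice illustration of how the finite-characteristic fragment tames the $\varepsilon$-theory by letting it inherit the good behaviour of $\beta\eta$ on $\NI$.
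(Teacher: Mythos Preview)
Your approach is essentially the paper's: both reduce to $\beta\eta$-decidability in $\NI$ via Theorem~\ref{thm:equiva}, lift a separating $\NI$-context to an $\NY$-context for the $\simeq_{\ctx}\Rightarrow\simeq_{\varepsilon}$ direction, and use consistency of the $\varepsilon$-theory to pass from $\varepsilon$-separation to $\beta\eta$-separation at the boolean type (you make this last step explicit; the paper leaves it implicit). The main presentational difference is the concrete witness chosen: rather than working with $t^{\flat}$, the paper introduces a separate translation $t^{\natural}$, defined as the normal form (under $\beta$ and commuting conversions in $\Ndv$) of $\TT d_{A}\big[t\mcirc\vec{\TT d}_{\Gamma}^{-1}\big]$, and then packages everything as the three-way equivalence of Proposition~\ref{prop:equivalents}; the point of normalizing is that this term lives \emph{a priori} only in $\Ndv$ (the $\TT d$'s use $+,\times$), and the subformula property for normal forms guarantees it lands in $\NI$. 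One small slip to fix in your write-up: in Section~\ref{app5} the isomorphism $\TT d_{A}$ has codomain $A^{\flat}$, not $(A^{\flat})^{\sharp}$, so your lifted context $\TT C^{\sharp}[\TT d_{A}[x]]$ does not type-check as written---you need to insert $\TT c_{A^{\flat}}$, or (as the paper does) build the separating context in $\Ndv$ first and apply $^{\sharp}$ only at the end.
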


The first claim of Theorem \ref{thm:decidable} is proved by defining a new embedding $t\mapsto t^{\natural}$ of $\NY$ into $\NI$, exploiting the well-known fact that for the terms of the system $\Ndv$ one can obtain a normal form under $\beta$-reduction and \emph{commutative conversions} \cite{Tatsuta2005}\footnote{Actually,  \cite{Tatsuta2005} does not consider commuting conversions for $0$, but these can be added without altering the existence of normal forms.}. 
Using the isomorphisms $\TT d_{A}[x], \TT d_{A}^{-1}[x]$ between $A$ and $A^{\flat}$, a term $t$ such that $\Gamma\vdash_{\NY}t:A$ holds is first translated into 
$u=\TT d_{A}[  t[ x_{i}\mapsto \TT d_{A_{i}}^{-1}[x_{i}]]]$ (where $\Gamma=x_{1}:A_{1},\dots, x_{n}:A_{n}$), and then $t^{\natural}$ is defined as the normal form of $u$. From the fact that $\Gamma^{\flat}\vdash_{\Ndv}t^{\natural}:A^{\flat}$ and that $t^{\natural}$ is in normal form, we can deduce that 
$\Gamma^{\flat}\vdash_{\NI} t^{\natural}: A^{\flat}$ holds, so the embedding is well-defined.

Using the embedding $t\mapsto t^{\natural}$ we prove the proposition below, from which the claim of Theorem \ref{thm:decidable} descends (using the fact that $\simeq_{\beta\eta}$ and $\simeq_{\ctx}$ coincide and are decidable in $\NI$). 
%
%
%
%We provide again a sketch of how Theorem \ref{thm:decidable} is proved.
%Using the explicit isomorphisms $\TT d_{A}[x], \TT d_{A}^{-1}[x]$ between $A$ and $A^{\flat}$ we can define a second embedding $t\mapsto t^{\natural}$ of $\NY$-typable terms into $\NI$ as follows.
%First, whenever $\Gamma\vdash_{\Ndv}t:A$ and $\Gamma, A$ are $\NI$-types,  one can find a term  $u\simeq_{\beta\eta}t$ such that $\Gamma\vdash_{\NI}u:A$. In fact, the terms $t$ typable in $\Ndv$ have a unique normal form $\mathsf{nf}(t)$ with respect to $\beta$-reduction as well as \emph{commutative conversions} (see \cite{Tatsuta2005}) satisfying $t\simeq_{\beta\eta}\mathsf{nf}(t)$. Moreover, from $\Gamma\vdash_{\Ndv}\mathsf{nf}(t):A$, since $\Gamma,A$ are $\NI$-types, we deduce that $\Gamma\vdash_{\NI}\mathsf{nf}(t):A$ holds (since normal forms satisfy a \emph{subformula} principle). 
%
%Hence, given $\Gamma\vdash_{\NY}t:A$, we ca let $t^{\natural}$ be $\mathsf{nf}(u)$, where $u=
%\TT d_{A}[ t \theta   ]$ and $\theta=\{ x_{i}\mapsto \TT d_{A_{i}}^{-1}[x_{i}]\mid x_{i}:A_{i}\in \Gamma\}$, so that 
%$\Gamma^{\flat}\vdash_{\NI} t^{\natural}: A^{\flat}$. 
%Both claims of Theorem \ref{thm:decidable} follow then from the equivalences below, proved in App.~\ref{app5}:
\begin{restatable}{proposition}{contextual}\label{prop:equivalents}
The following are equivalent:
\begin{itemize}
\item[i.] 
 $\Gamma\vdash_{\NY} t\simeq_{\varepsilon} u:A$;
 % iff $\Gamma\vdash_{\NY} t\simeq_{\ctx} u:A$ iff $\Gamma^{\flat}\vdash_{\NI}t^{\natural}\simeq_{\beta\eta}u^{\natural}:A^{\flat}$.
%
%\item[iii.] $\Gamma^{\flat}\vdash_{\NI}t^{\natural}\simeq_{\beta\eta}u^{\natural}:A^{\flat}$ 
%
\item[ii.] $\Gamma\vdash_{\NY} t\simeq_{\ctx} u:A$.
\item[iii.] $\Gamma^{\flat}\vdash_{\NI}t^{\natural}\simeq_{\beta\eta}u^{\natural}:A^{\flat}$.
\end{itemize}
\end{restatable}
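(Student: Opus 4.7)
The proof proceeds by establishing the cycle (i) $\Rightarrow$ (ii) $\Rightarrow$ (iii) $\Rightarrow$ (i). The key ingredients are the $\varepsilon$-inverse pair $\TT d_A, \TT d_A^{-1}$ witnessing $A \equiv_{\bb Y} A^{\flat}$ (constructed analogously to $\TT c_A,\TT c_A^{-1}$), the fact that the commutative conversions used to define $t^{\natural}$ are derivable from $\beta\eta$, so that $t^{\natural}\simeq_{\beta\eta} v_t$ where $v_t:=\TT d_A[t[\TT d_{A_i}^{-1}[x_i]/x_i]]$, and the classical coincidence $\simeq_{\beta\eta}=\simeq_{\ctx}$ on $\NI$.

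The implication (i) $\Rightarrow$ (ii) is the standard consistency argument: the $\varepsilon$-theory is consistent, since the equivalence $\CTX_{\varepsilon}(\NY)\cong \BB B$ of Theorem \ref{thm:equiva} provides a model in which $\Lambda X.\lambda xy.x$ and $\Lambda X.\lambda xy.y$ correspond to the two distinct arrows $1\to 1+1$ of $\BB B$; maximality of $\simeq_{\ctx}$ among consistent theories concludes. For (iii) $\Rightarrow$ (i), I start from $t^{\natural}\simeq_{\beta\eta} u^{\natural}$ and observe that $v_t\simeq_{\beta\eta} t^{\natural}$, $v_u\simeq_{\beta\eta} u^{\natural}$; since $\beta\eta\subseteq\varepsilon$, congruence gives $v_t\simeq_{\varepsilon} v_u$. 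Substituting $\TT d_{A_i}[x_i]$ for $x_i$ and prepending $\TT d_A^{-1}[-]$ preserves $\simeq_{\varepsilon}$, and the resulting left-hand side $\varepsilon$-reduces to $t$ using the isomorphism equations $\TT d_A^{-1}[\TT d_A[w]]\simeq_{\varepsilon} w$ and $\TT d_{A_i}^{-1}[\TT d_{A_i}[x_i]]\simeq_{\varepsilon} x_i$; the symmetric computation yields $u$ on the right, so $t\simeq_{\varepsilon}u$.

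The main step is (ii) $\Rightarrow$ (iii). From $t\simeq_{\ctx}u$ in $\NY$, congruence (applied to the same transformation as above) gives $v_t\simeq_{\ctx}v_u$ in $\NY$, and composing with $v_t\simeq_{\beta\eta}t^{\natural}$ together with $\simeq_{\beta\eta}\subseteq\simeq_{\ctx}$ yields $t^{\natural}\simeq_{\ctx}u^{\natural}$ in $\NY$. Now $t^{\natural},u^{\natural}$ already live in $\NI$, in the $\NI$-context $\Gamma^{\flat}$ at the $\NI$-type $A^{\flat}$, and every $\NI$-context $C:(\Gamma^{\flat}\vdash A^{\flat})\to (\vdash 1+1)$ embeds into an $\NY$-context with observation type $(1+1)^{\sharp}=\forall X.X\To X\To X$ via $C\mapsto\TT c_{1+1}[C[-]]$, preserving the two canonical booleans. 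Hence $\NY$-contextual indistinguishability of $t^{\natural},u^{\natural}$ restricts to $\NI$-contextual indistinguishability, and the classical fact $\simeq_{\ctx}=\simeq_{\beta\eta}$ in $\NI$ yields $t^{\natural}\simeq_{\beta\eta}u^{\natural}$.

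The main obstacle will be carefully justifying the last restriction from $\NY$- to $\NI$-contextual equivalence: one must verify that $^{\sharp}$ sends $\iota_1\star,\iota_2\star$ to $\beta\eta$-distinct elements of $\forall X.X\To X\To X$ (which it does by inspection of Fig.~\ref{fig:starbeta}), and that no polymorphic $\NY$-context can distinguish a pair of $\NI$-terms which every monomorphic $\NI$-context identifies. The latter amounts to the observation that the equivalence of categories of Theorem \ref{thm:equiva} matches the consistent theories on both sides, so that the maximum consistent theory on $\NY$ does not properly refine the maximum consistent theory on $\NI$ when restricted to $\NI$-terms.
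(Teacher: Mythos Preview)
Your arguments for (i)$\Rightarrow$(ii) and (iii)$\Rightarrow$(i) are essentially those of the paper, just spelled out in more detail; they are fine.

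The step (ii)$\Rightarrow$(iii), however, has a real gap. You write that ``congruence gives $v_t\simeq_{\ctx}v_u$ in $\NY$'', but $v_t=\TT d_A[t[\TT d_{A_i}^{-1}[x_i]/x_i]]$ is not a $\NY$-term: its type $A^{\flat}$ and context $\Gamma^{\flat}$ live in $\NI$, and the very transformation $t\mapsto v_t$ is a $\Ndv$-context, not a $\NY$-context. Contextual equivalence $\simeq_{\ctx}^{\NY}$ is only a congruence for $\NY$-contexts; you cannot apply it to a context that uses sum/product constructors. Consequently, the sentence ``$t^{\natural}\simeq_{\ctx}u^{\natural}$ in $\NY$'' is not even well-typed, and the subsequent ``restriction from $\NY$- to $\NI$-contextual equivalence'' never gets off the ground. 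Your final paragraph tries to patch this via Theorem~\ref{thm:equiva}, but that theorem concerns the $\varepsilon$-theory, not contextual equivalence; invoking it here is circular, since relating $\varepsilon$ to $\ctx$ is precisely what the proposition establishes.

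The paper avoids this trap by arguing (ii)$\Rightarrow$(iii) contrapositively. From $t^{\natural}\not\simeq_{\beta\eta}u^{\natural}$ one first gets a separating $\NI$-context $\TT C$ (using Scherer's coincidence of $\beta\eta$ and $\ctx$ in $\NI$), then composes it with the $\Ndv$-context $\TT F$ realising $t\mapsto v_t$ and with $\TT d_B^{-1}$ for $B=\forall X.X\to X\to X$, obtaining a $\Ndv$-context $\TT E:(\Gamma\vdash A)\to(\vdash B)$. The crucial move is now to replace this $\Ndv$-context by a genuine $\NY$-context: one applies the fullness of $\NYY\hookrightarrow\Nnew$ (Theorem~\ref{thm:full}) to obtain $\TT E^{\sharp}$ with $\TT E\simeq_{\varepsilon}\TT E^{\sharp}$ in $\Nnew$. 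This yields $\TT E^{\sharp}[t]\not\simeq_{\beta\eta}\TT E^{\sharp}[u]$ (using consistency of $\varepsilon$ on the two closed booleans), hence $t\not\simeq_{\ctx}^{\NY}u$. The fullness step is exactly what your direct congruence argument is missing: it is what lets one trade a context in the larger system for one in $\NY$.
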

\begin{proof}%[Proof of Prop. \ref{prop:equivalents}.]
\begin{description}
\item[(i.$\To$ ii.)] Since $\simeq_{\ctx}$ is the maximal consistent theory, it is clear that $\simeq_{\varepsilon}\subseteq \simeq_{\ctx}$.

\item[(ii.$\To$ iii.)] 

Suppose $\Gamma^{\flat}\vdash_{\NI} t^{\natural}\not\simeq_{\beta\eta}u^{\natural}:A^{\natural}$. Since $\simeq_{\beta\eta}$ and $\simeq_{\ctx}$ coincide in $\NI$, there exists a context $\TT C: (\Gamma^{\flat}\vdash A^{\flat})\To (\vdash 1+1)$ such that $\TT C[t^{\natural}]\not\simeq_{\beta\eta}\TT C[u^{\natural}]$. Observe that this also implies $\TT d_{B}^{-1}[\TT C[t^{\natural}]]\not\simeq_{\beta\eta}\TT d_{B}^{-1}[\TT C[u^{\natural}]]$, where $B=\forall X.X\To X \To X$.

Let $\TT F: (\Gamma\vdash A)\to (\Gamma^{\flat}\vdash A^{\flat})$ be 
$(\lambda x_{1}.\dots. \lambda x_{k}. \TT d_{B}[ x ])\TT d_{A_{1}}^{-1}[x_{1}]\dots \TT d_{A_{k}}^{-1}[x_{k}]$, where $\Gamma=\{x_{1}:A_{1},\dots, x_{k}:A_{k}\}$. Then $\TT F[t]\simeq_{\beta\eta}t^{\natural}$ and $\TT F[u]\simeq_{\beta\eta}u^{\natural}$.

Let now $\TT E:(\Gamma\vdash_{\Ndv} A)\to (\vdash_{\Ndv}\forall X.X\To X\To X)$ be the context $\TT d_{B}^{-1}\circ\TT C\circ\TT F$, so that $\TT E^{\sharp}:(\Gamma\vdash_{\NY} A)\to (\vdash_{\NY}\forall X.X\To X\To X)$. We also have $\TT E\simeq_{\varepsilon}\TT E^{\sharp}$ by Th. \ref{thm:full} in App.~\ref{app8}. We now have
$\TT E^{\sharp}[t] \simeq_{\varepsilon} \TT E[t] \simeq_{\beta\eta} \TT d_{B}^{-1}[ \TT C[t^{\natural}]] \not\simeq_{\beta\eta}
\TT d_{B}^{-1}[\TT C[u^{\natural}]] \simeq_{\beta\eta} \TT E[u]\simeq_{\varepsilon}\TT E^{\sharp}[t]$, so we can conclude that 
$t \not \simeq_{\ctx} u$.

\item[(iii.$\To$ i.)] This immediately follows from the fact $t^{\natural},u^{\natural}$ are obtained from $t,u$ by applying isomorphisms relative to $\varepsilon$-equivalence.
\end{description}
\end{proof}

%\begin{proof}
%In App.~\ref{app5}.
%\end{proof}

\subsection{Program Equivalence in $\NYY$ is undecidable.}

It is well-known that contextual equivalence is undecidable in $\mu\BB B$ \cite{Basold2016}. From this fact we can easily deduce, using the encoding $^{\sharp}$, that contextual equivalence is also undecidable in $\NYY$.

We do not know if the $\varepsilon$-theory and contextual equivalence coincide in $\NYY$, as it is not clear whether the embedding $t\mapsto t^{\natural }$ scales to $\NYY$: this depends on the existence of normal forms for commutative conversions, which are not known to hold in presence of $\mu,\nu$-types (although this is conjectured in \cite{MatthesPhD}). %In any case, contextual equivalence in $\mu \BB B$ is known to be undecidable \cite{Basold2016}.

%The undecidability of the $\varepsilon$-theory is proved in App.~\ref{app7} following a different strategy: we first observe that the $\varepsilon$-rules for $\NYY$ imply some \emph{uniqueness} rule for inductively/co-inductively defined functions. 
%For instance, under the $\varepsilon$-theory, the usual type of natural numbers $\nat=\forall X.(X\To X)\To (X\To X)$ is associated with a uniqueness rule for functions defined by iteration of the following form: for any type $C$ and functions $f:\nat \To C$ and $h: C\To C$, if $f(x+1)\simeq h(f(x))$, then 
%$f$ is equivalent to the function defined by iterating $h$ on $f(0)$, i.e. $f\simeq \lambda x.xCh(f(0))$.
%Similarly, the type of the usual recursor of $\Nd$ is associated with a uniqueness rule for functions defined by recursion. 
%
%
%It is well-known that any equational theory over a system containing the simply typed $\lambda$-calculus, a type of natural numbers and a recursion operator, and satisfying a uniqueness rule for recursively defined functions as above, is undecidable \cite{Okada1999}. It suffices then to check that $\NYY$, under the $\varepsilon$-theory, provides such a system.
%
%
%
%
%We now turn to the problem of showing the undecidability of the $\varepsilon$-theory in $\NYY$.

For this reason, we think it's worth to provide a direct argument for the undecidability of the $\varepsilon$-theory in $\NYY$. 
The correspondence between universal types in $\NYY$ and $\mu,\nu$-types provides a way to compute $\varepsilon$-equivalence in an indirect way. In fact, the $\varepsilon$-rule for a type $\forall X.A$ of $\NYY$ can be interpreted as a principle expressing the \emph{uniqueness} of a function defined by an inductive/co-inductive principle. 
For example, in the case of the type $\nat=\forall X.(\cc X\To \mm X)\To (\mm X\To \cc X)$, from the isomorphism
$\nat\equiv_{\cc X} \mu X(X+1)$ we deduce the validity of the  $\varepsilon$-rule in Fig. \ref{fig:nat}.
Let $C=\nat$, $u= \sss$, $D$  (the successor function) and $\rr{\TT c}$ be given by some function $f: \nat \To A$. Then the $\varepsilon$-rule yields the \emph{uniqueness rule} in Fig. \ref{fig:uniqnat}, which expresses the uniqueness of functions defined by iterations.

\begin{figure}[t]
\begin{subfigure}{\textwidth}
\adjustbox{scale=0.8, center}
{$
\AXC{$ \Gamma\vdash u: C\To C  $} 
\AXC{$ \Gamma\vdash v: D\To D $}
\AXC{$t[x]: C\vdash^{\Gamma} D$}
\AXC{$t[ ux] \simeq v\rr{\TT c} : C\vdash^{\Gamma}D$}
\QuaternaryInfC{$
\Gamma, z:\nat, y:C\vdash t[ zCuy] \simeq zD vt[y] : D
 $}
\DP
$}
\caption{$\varepsilon$-rule for $\nat$.}
\label{fig:nat}
\end{subfigure}

\medskip

\begin{subfigure}{\textwidth}
\adjustbox{scale=0.8, center}{
$
\AXC{$\Gamma\vdash f: \nat \To D $}
\AXC{$\Gamma\vdash v: D \To D$}
\AXC{$\Gamma, x:\nat\vdash f(\sss x) \simeq_{\varepsilon} v(fx):D$}
\TrinaryInfC{$ \Gamma,x:\nat \vdash fx \simeq_{\varepsilon} xD v (f\mathsf 0):D$}
\DP 
$}
\caption{Uniqueness rule for $\nat$.}
\label{fig:uniqnat}
\end{subfigure}
\caption{The $\varepsilon$-rules for $\nat$ yields a unicity condition for iteratively-defined functions.}
\end{figure}

\begin{figure}[t]
\adjustbox{scale=0.8, center}
{$
\AXC{$ \Gamma\vdash u: \nat \To Y \To A\To A  $} 
\AXC{$ \Gamma\vdash v: \nat \To Y \To  B\To B $}
\AXC{$\rr{t}[x]: A\vdash^{\Gamma} B$}
\AXC{$\rr{t}[ uyzx] \simeq vyz\rr{t} : A\vdash^{\Gamma, y:\nat, z:Y}B$}
\QuaternaryInfC{$
\Gamma, x:\REC,y:\nat, z:Y, w:A\vdash {t}[ xAuyzw] \simeq xB v yz {t}[w] : B
 $}
\DP
$}
\caption{$\varepsilon$-rule for $\REC$.}
\label{fig:rec}
\end{figure}

We will exploit this correspondence to establish the undecidability of the equivalence generated by $\varepsilon$-rules from the result below.

\begin{theorem}[Okada, Scott]\label{th:okada}
Let $\mathsf S$ be a system containing the type $\nat$ of integers (with terms $\zz:\nat$ and  $\sss:\nat\To \nat$), closed by arrow types and having, for all type $A,B$ a recursor $\rec_{A,B}: (\nat \To A\To B\To B)\To (A\To B)\To (A\To \nat\To B)$ satisfying
\begin{equation}
\rec_{A,B} h u a \zz= ua \qquad 
\rec_{A,B}h u a (\sss x)= h x a (\rec_{A,B} h u a x)
\end{equation}
Let $\simeq_{\gamma}$ be an equivalence over $\mathsf S$ containing $\simeq_{\beta},\simeq_{\eta}$ and closed by  the following rule:
\begin{equation}\label{ruleU}
\AXC{$f x(\mathsf S y) \simeq h xy(fxy)$}
%\RL{($\mathsf U_{A,B,f}$)}
\UIC{$f \simeq \rec_{A,B} h  (\lambda x.fx0)  $}
\DP \tag{U}
\end{equation}
for all $f: A\To \nat\To B$ and $h: \nat\To A\To B\To B$.
Then $\simeq_{\gamma}$ is undecidable.
\end{theorem}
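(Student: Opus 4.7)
The strategy is to reduce a known undecidable problem, such as the halting problem for Turing machines or Hilbert's tenth problem, to the equivalence relation $\simeq_{\gamma}$. The pivotal role is played by the rule (U), which expresses that $\rec_{A,B}\, h\, (\lambda x.fx0)$ is the \emph{unique} solution to the recursion scheme $fx(\sss y) \simeq hxy(fxy)$. From this one immediately obtains a uniqueness principle: if two terms $f,g:A\To \nat\To B$ satisfy the same recursion scheme (i.e.\ $fx(\sss y)\simeq hxy(fxy)$ and $gx(\sss y)\simeq hxy(gxy)$ for the \emph{same} $h$) and moreover $\lambda x.fx0 \simeq \lambda x.gx0$, then $f\simeq_{\gamma}g$, by applying (U) to both sides and using transitivity. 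This principle lets $\simeq_{\gamma}$ internalize certain extensional statements about $\lambda$-definable recursive functions, making the theory strictly stronger than $\simeq_{\beta\eta}$.

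The first step in the plan is to derive the uniqueness principle above as an immediate consequence of (U). The second step is to exploit it to embed an undecidable predicate into $\simeq_{\gamma}$. Concretely, given a Turing machine $M$ and an input $n$, one uses Kleene's normal form theorem to build a primitive recursive step predicate $T_M$, then uses $\rec$ to assemble two closed terms $t_M,u_M:\nat\To\nat$ of $\mathsf S$ (for instance $t_M$ accumulating the value $1$ as soon as $M$ halts in $\le k$ steps, and $u_M := \lambda k.\zz$) so chosen that in the non-halting case \emph{both} $t_M$ and $u_M$ become, after unfolding their definitions, two presentations of the same recursion in the sense of the uniqueness principle, while in the halting case a single reduction to a numeral already separates them $\beta\eta$-syntactically.

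The third step combines these to establish a reduction. In the non-halting case, the uniqueness principle collapses $t_M \simeq_{\gamma} u_M$ purely from rule (U) together with $\simeq_{\beta\eta}$, without any appeal to an external $\omega$-rule. In the halting case, consistency of $\simeq_{\gamma}$ relative to a suitable term model of $\mathsf S$ (e.g.\ the standard interpretation in $\Set$, where numerals have distinct interpretations and (U) holds trivially since $\rec$ genuinely is the unique solution) rules out $t_M \simeq_{\gamma} u_M$: specifying $k$ to be the halting step separates the two sides already at the level of closed terms of type $\nat$. Hence $M$ halts on $n$ iff $t_M \not\simeq_{\gamma} u_M$, and the undecidability of the halting problem transfers to $\simeq_{\gamma}$.

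The main obstacle is the construction in the second step: one must arrange $t_M$ and $u_M$ so that in the non-halting case they satisfy \emph{literally the same} recursion equation up to $\simeq_{\beta\eta}$, so that (U) can be applied without requiring a schematic induction over numerals that the theory does not directly validate. This is the classical Okada--Scott construction and the technical core of the theorem; it amounts to packaging the predicate $T_M$ so that its ``$M$ never halts'' branch collapses, under $\simeq_{\beta\eta}$ alone, to the trivial step function $\lambda y z.z$ used by the constant-zero recursion.
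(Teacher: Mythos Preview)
The paper does not prove this theorem: it is stated as a cited result of Okada and Scott (the attribution is in the theorem header, and the reference \cite{Okada1999} appears elsewhere in the text), and the paper immediately moves on to use it as a black box, first weakening rule \eqref{ruleU} via Lemma~\ref{lemma:UUU} and then checking that the $\varepsilon$-theory of $\NYY$ supplies the hypotheses. So there is no ``paper's own proof'' to compare against.

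Your proposal is a reasonable high-level outline of the classical argument, and you correctly isolate the key point: rule \eqref{ruleU} turns the recursor into a \emph{unique} fixed point, which lets one collapse two different presentations of the same recursion scheme without an external $\omega$-rule. That said, what you have written is a plan rather than a proof. You explicitly flag the ``main obstacle'' --- arranging $t_M$ and $u_M$ so that in the non-halting case they satisfy literally the same step equation up to $\simeq_{\beta\eta}$ --- and then defer it as ``the classical Okada--Scott construction.'' That construction \emph{is} the proof; without it, the reduction does not go through, since the whole difficulty is precisely that $\simeq_{\beta\eta}$ alone cannot see that two intensionally different recursions compute the same function. If you intend this as a self-contained proof rather than a pointer to the literature, you would need to actually exhibit the encoding (typically via a pairing of the step counter with a flag that freezes once the halting predicate fires, so that the successor case becomes $\beta\eta$-identical to the trivial one when the flag never flips).
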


Let us first show that we can restrict to a weaker rule:

\begin{lemma}\label{lemma:UUU}
The rule \eqref{ruleU} is deducible from the rule
\begin{equation}\label{wU}
\AXC{$f x(\mathsf S y) \simeq h yx(fxy)$}
%\RL{($\mathsf U_{A,B,f}$)}
\UIC{$f ay \simeq \rec_{A,B} h  (\lambda x.fa0) ay $}
\DP\tag{wU}
\end{equation}

\end{lemma}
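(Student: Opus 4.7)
The plan is to derive (U) from (wU) by applying (wU) twice: first to $f$ itself, then to the candidate recursor $F := \rec_{A,B}\,h\,(\lambda x.fx\mathsf{0})$, bridging the two pointwise identities by $\beta$-reduction and finally lifting to a functional identity by $\eta$-expansion. I read the hypothesis of (U) as agreeing, modulo a plausible typo in the argument order of $h$ (the typing $h:\nat\To A\To B\To B$ forces $h$'s first argument to be of type $\nat$), with the hypothesis of (wU); if instead one insists on the displayed order, the very same argument goes through after first transporting the problem along the permutation $h\mapsto \lambda yxb.hxyb$, modulo an easy $\beta$-verification that the resulting recursor computes the intended values.

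Concretely, I proceed as follows. First, from the given hypothesis $fx(\mathsf{S}y)\simeq hyx(fxy)$, rule (wU) immediately yields, for arbitrary $a:A$ and $y:\nat$, the pointwise identity
\[
fay \,\simeq\, \rec_{A,B}\,h\,(\lambda x.fa\mathsf{0})\,a\,y.
\]
Next, I verify that $F:=\rec_{A,B}\,h\,(\lambda x.fx\mathsf{0})$ satisfies the same hypothesis as $f$. Unfolding the defining equation of the recursor gives
\[
Fa(\mathsf{S}y) \;=\; \rec_{A,B}\,h\,(\lambda x.fx\mathsf{0})\,a\,(\mathsf{S}y) \;=\; hya(Fay),
\]
so (wU) applied to $F$ produces
\[
Fay \,\simeq\, \rec_{A,B}\,h\,(\lambda x.Fa\mathsf{0})\,a\,y;
\]
and since $Fa\mathsf{0} = (\lambda x.fx\mathsf{0})a \simeq_\beta fa\mathsf{0}$ by the base clause of the recursor, a single congruence step rewrites the right-hand side to $\rec_{A,B}\,h\,(\lambda x.fa\mathsf{0})\,a\,y$.

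Combining the two pointwise identities by transitivity through their common right-hand side yields $fay \simeq Fay$ for arbitrary $a$ and $y$. Two applications of $\eta$, on the arrow types $\nat\To B$ and then $A\To\nat\To B$, lift this pointwise equality to the functional identity $f \simeq F = \rec_{A,B}\,h\,(\lambda x.fx\mathsf{0})$, which is the conclusion of (U). The whole derivation is purely equational, using only $\beta$, $\eta$, and two invocations of (wU); the only mildly subtle point is the check that $F$ satisfies the hypothesis of (wU), but this is a one-line computation with the recursor's reduction rule, and therefore I do not anticipate any real obstacle.
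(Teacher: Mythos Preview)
Your proof is correct and follows essentially the same route as the paper's: both apply (wU) once to $f$ and once to $Q:=\rec_{A,B}\,h\,(\lambda x.fx\mathsf 0)$ (your $F$), then link the two pointwise identities through the common term $\rec_{A,B}\,h\,(\lambda x.fa\mathsf 0)\,a\,y$, using the base clause $Qa\mathsf 0\simeq fa\mathsf 0$. Your explicit treatment of the $hxy$/$hyx$ typo and of the final $\eta$-step is a welcome addition.
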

\begin{proof}
Let $Q = \rec_{A,B}h(\lambda x.fx\mathsf 0) $. Since 
$Q a (\mathsf Sy) \simeq
h ya( Q a y)$  and $Qa\mathsf 0\simeq fa\mathsf 0$,  by \eqref{wU} we deduce
$Q a y \simeq \rec_{A,B}h(\lambda x.Qa0) a y \simeq
\rec_{A,B}h(\lambda x.fa\mathsf 0)ay$. 

Now, if $f x(\mathsf S y) \simeq h xy(fxy)$ holds, by \eqref{wU} we deduce 
$fay\simeq \rec_{A,B}h(\lambda x.fa0) a y $ and by composing this with the equivalence above we finally get 
$fay \simeq \rec_{A,B}h(\lambda x.fx0) a y$. 
\end{proof}

We now establish some preliminary properties of $\varepsilon$-equivalence over the type $\nat$. For simplicity, we  will work in 
an extension of $\NYY$ with product types.
We let $\mathsf 0=\Lambda X.\lambda fx.x$ and $\mathsf  s= \lambda y. \Lambda X.\lambda fx.f(yXfx)$, $\mathsf P=\lambda x. \pi_{2}\left ( x (\nat\times \nat) \lambda y.\langle \sss (\pi_{1}y), \pi_{1}y\rangle \langle \zz, \zz \rangle\right)$.

\begin{lemma}\label{lemma:predecessor}
\begin{itemize}
\item $x:\nat  \vdash x \simeq_{\varepsilon} x(\nat)  ( \sss )  \mathsf 0$.
\item $\mathsf P( \sss x)\simeq_{\varepsilon}x$.
\end{itemize}
\end{lemma}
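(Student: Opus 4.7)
The plan is to derive both equations from the $\varepsilon$-rule for $\nat$ in its uniqueness form (Fig.~\ref{fig:uniqnat}), which is the concrete shape the $\varepsilon$-principle takes at the type of Church numerals.

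For the first equation, I will apply the uniqueness rule with $D:=\nat$, $f:=\lambda x.x$ and $v:=\sss$. The premise $f(\sss x)\simeq_{\varepsilon} v(fx)$ reduces to the trivial identity $\sss x\simeq_{\beta}\sss x$, and $f\,\mathsf{0}=\mathsf{0}$, so the rule directly yields $x\simeq_{\varepsilon} x(\nat)(\sss)\mathsf{0}$.

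For the predecessor equation I set $g:=\lambda y.\langle\sss(\pi_{1}y),\pi_{1}y\rangle$ and $Fz:=z(\nat\times\nat)\,g\,\langle\mathsf{0},\mathsf{0}\rangle$, so that $\mathsf{P}z\simeq_{\beta}\pi_{2}(Fz)$ and $F(\sss x)\simeq_{\beta}g(Fx)\simeq_{\beta}\langle\sss(\pi_{1}Fx),\pi_{1}Fx\rangle$. The crucial intermediate step is the identity $\pi_{1}(Fx)\simeq_{\varepsilon}x$. To prove it I apply the uniqueness rule a second time, now with $f:=\lambda x.\pi_{1}(Fx)$ and $v:=\sss$: $\beta$-reduction gives $f\,\mathsf{0}\simeq_{\beta}\pi_{1}\langle\mathsf{0},\mathsf{0}\rangle\simeq_{\beta}\mathsf{0}$ and $f(\sss x)\simeq_{\beta}\pi_{1}(g(Fx))\simeq_{\beta}\sss(\pi_{1}Fx)=\sss(fx)$, so the rule delivers $fx\simeq_{\varepsilon}x(\nat)(\sss)\mathsf{0}$, which by the first part of the lemma is $\simeq_{\varepsilon}x$. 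Chaining everything, $\mathsf{P}(\sss x)\simeq_{\beta}\pi_{2}(g(Fx))\simeq_{\beta}\pi_{1}(Fx)\simeq_{\varepsilon}x$.

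The conceptual obstacle I expect is the well-known fact that the Church-encoded predecessor is \emph{not} $\beta\eta$-equivalent to what it computes on successors, so there is no hope of establishing $\mathsf{P}(\sss x)\simeq x$ by pure reduction. The missing extensionality is supplied precisely by the $\varepsilon$-rule: read as a uniqueness principle for iteratively defined functions, it collapses $\pi_{1}\circ F$ to the identity on $\nat$, which is exactly what makes the ``pair-trick'' definition of $\mathsf{P}$ behave correctly up to $\varepsilon$. Once this is in hand, all remaining steps are routine $\beta$-unfoldings of Church numerals and pair projections.
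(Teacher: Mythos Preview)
Your proof is correct and follows essentially the same route as the paper's: both parts are obtained by instantiating the uniqueness rule for $\nat$ (Fig.~\ref{fig:uniqnat}) with the identity for the first claim and with the function $x\mapsto \pi_{1}(Fx)$ (equivalently $x\mapsto \mathsf{P}(\sss x)$, since these are $\beta$-equal) for the second. The only cosmetic difference is that the paper phrases the second application directly in terms of $\mathsf{P}(\sss\,\cdot)$ rather than $\pi_{1}\circ F$, but the underlying computations and the appeal to uniqueness are identical.
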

\begin{proof}
We prove both results as a consequence of unicity of iteration. If $Fx=x$, then $F\zz\simeq_{\beta}\zz (\nat)( \sss) \zz$ and $F(\sss x)= \sss x= \sss (Fx)$, whence by unicity $x=Fx\simeq_{\varepsilon} x (\nat )(\sss) \zz$. 

Let $\rr{\TT c}[x]:=x (\nat\times \nat) \lambda y.\langle \sss (\pi_{1}y), \pi_{1}y\rangle \langle \zz, \zz \rangle$.
We have $\mathsf P(\sss \zz)=\zz$ and, since $\mathsf P(\sss x)\simeq_{\beta} \pi_{1}\rr{\TT c}[x]$,
and $\pi_{1}\rr{\TT c}[\sss x]\simeq_{\beta} \sss (\pi_{1}\rr{\TT c}[x])$, we have 
 $\mathsf P(\sss(\sss x))\simeq_{\beta} \sss (\pi_{1} \rr{\TT c}[x])
\simeq_{\beta}\sss (\mathsf P(\sss x))$, so by unicity we conclude
$\mathsf P(\sss x)\simeq_{\varepsilon} x (\nat)( \sss) \zz \simeq_{\varepsilon}x$. 
\end{proof}
	
To apply Theorem \ref{th:okada} we will exploit the type $\REC=\forall X. (\nat\To Y \To X\To X)\To (Y\To  X) \To Y\To \nat \To X$. Observe that $\REC$ is isomorphic to the non-polymorphic type $(\mu X. (\nat \times Y\times X)+Y)\times \nat \times Y$. 
%, and its $\varepsilon$-rule is described in Fig. \ref{fig:rec}.
 Using Lemma \ref{lemma:predecessor}, it can be checked by induction that  the term $\rec: \forall Y.\REC$ given by
$\rec  (X)( Y) h x a y = y (X)(   \lambda z. h (\mathsf{P}y)a z)    (xa)  $
yields, for all types $A,B$ of $\NYY$, a recursor $\rec (A)(B)$ in $\Nd_{\kappa\leq 3}$ under $\simeq_{\varepsilon}$.

%We now show that the initiality condition  for $\REC$ (the $\varepsilon$-rule in Fig. \ref{fig:rec}) implies the rule  $(\mathsf U_{A,B,f})$:

%Using Yoneda equalities it is not difficult to show that
%$\int_{X} X^{(\nat \times Y \times X^{Y}\times X^{\nat \times Y\times X})} \simeq  (\mu  X ((\nat \times Y\times X)+Y))\times \nat \times Y$. In fact using the $\varepsilon$-rules we can prove:
%
%\begin{lemma}\label{lemma:muxxi}
%$\nat \equiv_{\varepsilon}\mu X.X+1$, $\REC \equiv_{\varepsilon} (\mu X. (\nat \times Y\times X)+Y)\times \nat \times Y$.
%
%\end{lemma}
	
%%Following \cite{pare} we define strong dinatural transformations 
%
%\begin{lemma}\label{lemma:recursor}
%For all type $A,B$ of $\NYY$ $\rec AB$ is a recursor in $\Nd_{\kappa\leq 3}$ under $\simeq_{\varepsilon}$.
%\end{lemma}

%Using the $\varepsilon$-rule for $\REC$ (Fig: \ref{fig:rec}) 

\begin{proposition}
The rule \eqref{ruleU} holds in $\NYY$ under $\simeq_{\varepsilon}$ with $\rec$ as recursor.

\end{proposition}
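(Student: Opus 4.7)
The plan is to reduce the claim to the uniqueness of iteratively defined functions (Fig.~\ref{fig:uniqnat}), which is itself a consequence of the $\varepsilon$-rule for $\nat$ already established in the text, applying the standard ``pair trick'' to simulate primitive recursion by iteration. The strategy works inside the extension of $\NYY$ with product types, which is the setting fixed in the paragraph preceding Lemma \ref{lemma:predecessor}.

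Fix $a:A$ and consider the iteratively defined function
\[
P \ := \ \lambda y.\, y(\nat\times B)\bigl(\lambda z.\langle \sss(\pi_{1}z),\ h(\pi_{1}z)\,a\,(\pi_{2}z)\rangle\bigr)\langle \zz,\, f a \zz\rangle : \nat \To \nat\times B.
\]
First I would verify that $\pi_{1}\circ P \simeq_{\varepsilon} \lambda y.y$. Indeed, $\pi_{1}(P\zz) \simeq_{\beta} \zz$ and $\pi_{1}(P(\sss y)) \simeq_{\beta} \sss(\pi_{1}(Py))$, so Fig.~\ref{fig:uniqnat} gives $\pi_{1}(Py) \simeq_{\varepsilon} y(\nat)(\sss)\zz \simeq_{\varepsilon} y$ by Lemma \ref{lemma:predecessor}. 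Consequently $\pi_{2}(P(\sss y)) \simeq_{\varepsilon} h y\, a\,(\pi_{2}(Py))$, while $\pi_{2}(P\zz)\simeq_{\beta} fa\zz$.

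Next I would show that both $\lambda y.fay$ and $\lambda y.\rec(A)(B)h(\lambda x.fa\zz)ay$ are $\varepsilon$-equivalent to $\pi_{2}\circ P$. For $\lambda y. fay$: by the hypothesis $fx(\sss y) \simeq h y x (fxy)$, specialised to $x=a$, the map $Q := \lambda y.\langle y,\, fay\rangle$ satisfies $Q\zz \simeq_{\beta} \langle \zz, fa\zz\rangle$ and, using $\pi_{1}(Qy)\simeq_{\beta} y$ and $\pi_{2}(Qy)\simeq_{\beta} fay$, also
\[
Q(\sss y)\ \simeq_{\varepsilon}\ \bigl\langle \sss(\pi_{1}(Qy)),\ h(\pi_{1}(Qy))\,a\,(\pi_{2}(Qy))\bigr\rangle,
\]
which is exactly the step of the iteration defining $P$. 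Applying Fig.~\ref{fig:uniqnat} to $Q$ with this step function yields $Q \simeq_{\varepsilon} P$, so $fay \simeq_{\varepsilon} \pi_{2}(Py)$. For the second function, the recursor equations $\rec h x a \zz \simeq_{\varepsilon} xa$ and $\rec h x a (\sss y)\simeq_{\varepsilon} h y\, a\,(\rec h x a y)$ already justified in the text allow the identical argument with $Q' := \lambda y.\langle y,\, \rec(A)(B)h(\lambda x.fa\zz)ay\rangle$, giving $\rec(A)(B)h(\lambda x.fa\zz)ay\simeq_{\varepsilon}\pi_{2}(Py)$. Transitivity then concludes the proof of (wU), and \eqref{ruleU} follows by Lemma \ref{lemma:UUU}.

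The main obstacle is purely bookkeeping: the uniqueness rule for $\nat$ requires the step function to be independent of the recursion index $y$, which is precisely why one must pass through the pair $\langle y,\, \text{value}\rangle$ rather than apply uniqueness directly to either $\lambda y.fay$ or $\lambda y.\rec h(\lambda x.fa\zz)ay$ (whose step functions genuinely depend on $y$ through $h$). Verifying $\pi_{1}\circ P \simeq_{\varepsilon}\mathrm{id}_{\nat}$ is the only non-trivial input needed, and this is handled by Lemma \ref{lemma:predecessor} together with one application of Fig.~\ref{fig:uniqnat}; once established, both equalities $fay \simeq_{\varepsilon}\pi_{2}Py$ and $\rec(A)(B)h(\lambda x.fa\zz)ay\simeq_{\varepsilon}\pi_{2}Py$ proceed by the same use of uniqueness of iteration with identical step function.
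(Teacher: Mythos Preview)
Your argument is correct, and it takes a genuinely different route from the paper's. Both proofs end by establishing \eqref{wU} and invoking Lemma~\ref{lemma:UUU}, but they reach \eqref{wU} by different means. The paper applies the $\varepsilon$-rule for the type $\REC$ (Fig.~\ref{fig:rec}) directly to the recursor term: it first checks that $\rec(A)(\nat)(\mathsf Z)(\lambda x.\zz)\,a\,y \simeq_{\varepsilon} y$ for the ``shifted successor'' $\mathsf Z\,xyz=\sss z$, and then the dinaturality in $X$ of $\REC$ transports $f\,a\,[-]$ across the two instantiations $\nat$ and $B$, yielding $f\,a\,y \simeq_{\varepsilon} \rec(A)(B)\,h\,(\lambda x.f\,a\,\zz)\,a\,y$ in one step. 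You instead never touch the $\varepsilon$-rule for $\REC$: you rely only on the uniqueness rule for $\nat$ (Fig.~\ref{fig:uniqnat}) and on the recursor equations already granted by the text, and you run the classical G\"odel pair trick to show that both $\lambda y.f\,a\,y$ and $\lambda y.\rec(A)(B)\,h\,(\lambda x.f\,a\,\zz)\,a\,y$ coincide with $\pi_{2}\circ P$ for a single iteratively defined $P:\nat\to\nat\times B$.

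Each approach has its appeal. The paper's proof exhibits precisely the phenomenon the section is advertising, namely that the $\varepsilon$-rule of a polymorphic type of finite characteristic \emph{is} the uniqueness clause of the corresponding initial algebra (here for $\REC$); the argument is a one-line instance of that principle. Your proof is more elementary in that it needs only the $\varepsilon$-rule for $\nat$, which was already unpacked in the text, together with the recursor equations; it is essentially the textbook reduction of primitive recursion to iteration, and it makes transparent why products suffice. The trade-off is that your route does not display the role of $\REC$'s own dinaturality, which is the conceptual point of this subsection.
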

\begin{proof}
Let $\mathsf Z: \nat \To A \To \nat \To \nat$ be the term $\mathsf Z xyz= \mathsf sz$. We claim that 
$\rec(A)( \nat)( \mathsf  Z )(\lambda x.0 ) a y \simeq_{\varepsilon} y$.  In fact
$\rec (A) (\nat )(\mathsf  Z )(\lambda x.0 ) a y \simeq_{\beta}
y (\nat)( \lambda z. \mathsf Z (\mathsf Py) a z) \mathsf 0 \simeq_{\beta\eta} y( \nat)( \mathsf s) \mathsf 0\simeq_{\varepsilon}y$, by Lemma \ref{lemma:predecessor}.
Let now $f: A\To \nat \To B$. If $ f a( \mathsf Z x ay)  \simeq_{\varepsilon} h x a (f ax)$, we deduce, by the $\varepsilon$-rule in Fig. \ref{fig:rec} with $t[x]= fax: \nat \vdash B$, that $
f a (\rec (A)( \nat)( \mathsf Z)( \lambda  x.0 )a y) \simeq_{\varepsilon}
\rec (A)( B)( h)(  \lambda x. fa0)  \ ay
$. By what we just proved this implication corresponds to the rule \eqref{wU}, so we can conclude by Lemma \ref{lemma:UUU}.
%The claim follows then from Theorem \ref{th:okada}.
\end{proof}

We can thus finally conclude:

\begin{theorem}\label{thm:undecidable}
Both the $\varepsilon$-theory and contextual equivalence for $\NYY$ are undecidable.

\end{theorem}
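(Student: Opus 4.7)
The proof naturally splits into two parts, one for each theory. For $\simeq_\varepsilon$, the plan is a direct application of Theorem~\ref{th:okada}. The system $\NYY$ contains the Church-numeral type $\nat = \forall X.(X\to X)\to X\to X$ with constructors $\zz$ and $\sss$, is closed under arrow types, and admits the recursor $\rec$ constructed above, which satisfies the defining primitive-recursion equations under $\simeq_\varepsilon$ by virtue of Lemma~\ref{lemma:predecessor}. The preceding proposition establishes the rule~(U) of Theorem~\ref{th:okada}, and since $\simeq_\varepsilon$ contains $\simeq_\beta$ and $\simeq_\eta$ all hypotheses of the Okada--Scott result are satisfied. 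Undecidability of $\simeq_\varepsilon$ in $\NYY$ follows at once.

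For contextual equivalence, I would reduce from the known undecidability of $\simeq_{\ctx}$ in $\mu\BB B$~\cite{Basold2016} using the encoding $(-)^\sharp : \mu\BB B \to \NYY$. Since the encoding is manifestly computable, it suffices to verify that for every pair of $\mu\BB B$-terms $t, u : A \vdash B$,
\[
t \simeq_{\ctx}^{\mu\BB B} u \ \Longleftrightarrow \ t^\sharp \simeq_{\ctx}^{\NYY} u^\sharp.
\]
The easy direction ($\Leftarrow$, in contrapositive form) is immediate: any $\mu\BB B$-context $\TT C$ that distinguishes $t$ from $u$ lifts to the $\NYY$-context $\TT C^\sharp$ with $\TT C^\sharp[v^\sharp] = (\TT C[v])^\sharp$, and since $(1+1)^\sharp$ is the boolean type $\forall X.X\to X\to X$, which encodes booleans faithfully up to $\simeq_{\beta\eta}$, an observable $\mu\BB B$-difference transfers to $\NYY$. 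The converse direction follows the template of Proposition~\ref{prop:equivalents}: given an arbitrary $\NYY$-observation context $\TT D$ on the types $A^\sharp, B^\sharp$, one pre- and post-composes with the isomorphisms $\TT d$ of Section~\ref{app5} to obtain a context whose ambient types lie in the image of $(-)^\sharp$, and then invokes the fullness of the embedding furnished by Theorem~\ref{thm:equiva} to replace it, up to $\simeq_\varepsilon$, by some $\TT C^\sharp$ coming from a genuine $\mu\BB B$-context $\TT C$; since $\simeq_\varepsilon \subseteq \simeq_{\ctx}$, the $\NYY$-observation on $t^\sharp, u^\sharp$ is then determined by the $\mu\BB B$-observation on $t, u$ via $\TT C$.

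The main obstacle I anticipate is the rigorous execution of this last step. The analogous argument used in Proposition~\ref{prop:equivalents} for $\NY$ relied on normalisation modulo commutative conversions in $\NI$, a tool not known to be available in $\NImu$. The plan is therefore to bypass any syntactic normalisation and argue purely categorically, leveraging the equivalence $\CTX_\varepsilon(\NYY) \cong \mu\BB B$ from Theorem~\ref{thm:equiva}: the statement $\TT D \simeq_\varepsilon \TT C^\sharp$ then becomes simply the assertion that $(-)^\sharp$ is essentially surjective and fully faithful on the relevant hom-sets, which is precisely what that theorem provides.
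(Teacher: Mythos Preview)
Your proposal is correct and matches the paper's approach: the $\varepsilon$-theory case is exactly the intended application of Theorem~\ref{th:okada} via the preceding proposition, and for contextual equivalence the paper likewise reduces from the undecidability in $\mu\BB B$ via the $^\sharp$-encoding (the paper simply asserts this is ``easily deduced'' without spelling out the biconditional you carefully unpack). The only detail you leave implicit is that $\simeq_\varepsilon$ and $\simeq_{\beta\eta}$ coincide on closed terms of the boolean observation type---needed to pass from the $\varepsilon$-level fullness of Theorem~\ref{thm:equiva} back to a genuine contextual distinction---but this follows immediately from consistency of $\simeq_\varepsilon$ and the fact that there are only two $\beta\eta$-classes at that type.
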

%
%
%We do not know if the $\varepsilon$-theory and contextual equivalence are equivalent $\NYY$. In particular, we do not know if the embedding $t\mapsto t^{\natural }$ can be extended to $\NYY$, as it depends on the existence of normal forms for commutative conversions, which are not known to hold in presence of $\mu,\nu$-types (although this is conjectured in \cite{MatthesPhD}). %In any case, contextual equivalence in $\mu \BB B$ is known to be undecidable \cite{Basold2016}.

%as a consequence of the correspondence with $\mu,\nu$-types, which have an undecidable equivalence \cite{Basold2016}.

% by exploiting 
%the correspondence between universal types in $\NYY$ and $\mu,\nu$-types.

%
% can be used to show that the rules defining the $\varepsilon$-theory express  \emph{initiality} and \emph{finality} conditions (essentially corresponding to the $\eta$-rules for such types, see App.~\ref{appN}), which can be read  as the uniqueness of a function defined by an inductive/co-inductive principle. 
%For example in the case of the type $\nat\simeq \mu X.X+1$ the $\varepsilon$-rule is as in Fig. \ref{fig:nat}.
%Let $C=\nat$, $u= \sss$, $D$  (the successor function) and $\rr{\TT C}$ be given by some function $f: \nat \To A$. Then the $\varepsilon$-rule yields the \emph{uniqueness rule} in Fig. \ref{fig:uniqnat}, which expresses the uniqueness of functions defined by iterations. 
%

%We exploit this correspondence to establish the unidecidability of the equivalence generated by $\varepsilon$-rules (see \cite{LongVersion}, Thm. 6.10).

\section{Program Equivalence and Predicativity}\label{secAtomic}
% !TEX root = YonedaTryAgain.tex

%
%The correspondence of $\NY,\NYY$ with well-investigated monomorphic systems enables the use of the latter to compute the (admittedly quite involved) $\varepsilon$-theory. 
In this section we sketch an example of how the correspondence between the polymorphic system $\NYY$ and the monomorphic system $\mu\BB B$ can be used to prove non-trivial properties of program equivalence in $\Nd$.

A main source of difficulty to compute program equivalence of polymorphic programs is that, while these can be instantiated \emph{at any type}, programs with different type instantiations can behave in the same way. This is made evident by the $\varepsilon$-rules from Fig.~\ref{fig:dinazza} and Fig.~\ref{fig:dinazzabis}, which allow one to permute an instantiation on a type $A$ into one on any type $B$, provided there is a suitable term $\rr{t}[x]:A\vdash^{\Gamma} B$.

A natural question is thus whether one can use such rules to turn a program into one with instantiations of lower complexity. This cannot be possible in all cases since the so-called \emph{predicative} fragments of $\Nd$, that is the subsystems of $\Nd$ in which type instantiations must satisfy some complexity bound, are known to be strictly less expressive than $\Nd$ (see for instance \cite{Leivant1991}). However, 
using the correspondence between universal types in $\NYY$ and \emph{initial algebras}, 
we will deduce a useful \emph{sufficient condition} under which this simplifications can be put forward in $\NYY$.

After presenting this condition, we will use it to provide a quick proof based on type isomorphisms of a result that was established in \cite{SL2} by the same authors by computing $\varepsilon$-rules directly: 
any polymorphic program in a certain fragment of $\Nd$ (in fact, a fragment of $\NY$) arising from the encoding of finite sum and product types, can be transformed into one containing only \emph{atomic} instantiations. 

%
%We finally shortly discuss on how this result sheds some light on some recent results on \emph{atomic polymorphism} \cite{Ferreira2013, SL2}.

We first illustrate our idea with an example.

\begin{example}[pointwise induction]\label{ex:indu}
Suppose we recursively define a function $t: \nat \To (A \To B)$ by ``$\nat$-induction on $A\To B$'': we let $t(0)$ be some fixed function $u: A\To B$ and we let 
$t({n+1})$ be the function $x\mapsto  h((t(n)(x))$, for some $h: B\To B$. In fact, since the variable $x:A$ is used as a parameter, we could equivalently define $t$ by  ``$\nat$-induction on $B$'': we let $t_{x}(0)$ be some fixed $c=u(x)$, and $t_{x}(n+1)$ be $h(t_{x}(n))$, so that $t(n)(x)=t_{x}(n)$.

%In fact, the equivalence 
%$\rr{\TT C_{1}}\simeq_{\varepsilon}\rr{\TT C_{2}}$ can easily be deduced from the rule in Fig. \ref{fig:nat}.
\end{example}

A bit more formally, the argument from Example \ref{ex:indu} shows the equivalence of the two contexts
$\rr{\TT C_{1}}= [\ ] (A\To B)( \lambda fx.h(fx))( u)$ and $\rr{\TT C_{2}}=\lambda x.[\ ] (B)( h)    (ux)$ of type $\nat\vdash A\To B$. Observe that these two contexts are \emph{not} $\beta\eta$-equivalence and, in particular, that they differ for the fact that the second contains a type instantiation which is strictly less complex than the one of the first. 

We can formalize the notion of being ``less complex'' as follows:
given a type $A$ of $\Nd$, we let the set $\TT{RS}(A)$ of \emph{right simplifications of $A$} be inductively as follows:
$$
\TT{RS}(X)=\emptyset \qquad \TT{RS}(A\To B)=\{B\}\cup \TT{RS}(B) \qquad \TT{RS}(\forall Y.B)=\{B\}\cup \TT{RS}(B)
$$
Moreover, for all type $A$, we let $\at{A} \in\TT{RS}(A)$ indicate the rightmost variable of $A$.

The reasoning from Example \ref{ex:indu} can be justified at a more abstract level by exploiting the isomorphism $\nat \equiv_{\cc X} \mu X.\cc X+1$, which shows that $\nat$ is isomorphic to the \emph{initial $X+1$-algebra}.
In fact, by appealing to initial algebras we can justify similar arguments for a large class of inductive types like $\nat$, that we define as follows (essentially following a terminology from \cite{SL2}): let us call a type $P[X]$ \emph{polynomial} when it is of the form 
$$
 P[X]:=   \Big \langle  \big \langle \FFun{\cc X}{A_{jk}} \rangle_{j}\To X\Big \rangle_{k}\To X
$$
and let us call $P[X]$ \emph{constant} if $X$ does not occur in any of the $A_{jk}$.
This terminology is justified by the isomorphism
\begin{equation}\label{eq:initia}
\forall X.P[X]  \ \equiv_{\cc X}\ \{ \mu X.\} \sum_{k}\prod_{j} \FFun{\cc X}{A_{jk}}
\end{equation}
showing that the universal type $\forall X.P[X]$ coincides (under the $\varepsilon$-theory) with the initial $\sum_{k}\prod_{j} \FFun{\cc X}{A_{jk}}$-algebra. Note that the fixpoint binder $\mu X.$ occurs precisely when $P[X]$ is non-constant.

Let a \emph{$P[X]$-algebra} be a triple $(\Gamma,C, \langle u_{k}\rangle_{k})$ made of a context $\Gamma$, a type $C$ and terms $u_{k}:  \langle x_{j}: \FFun{\mm X\mapsto C}{A_{jk}} \rangle_{j} \vdash^{\Gamma}  C$, and a \emph{$P[X]$-algebra morphism} from $(\Gamma,C, \langle u_{k}\rangle_{k})$ to $(\Delta, D, \langle v_{k}\rangle_{k})$ be an inclusion $\Gamma\subseteq \Delta$\footnote{We consider here context inclusions for simplicity but one might rather consider a context morphism.} and a term $ w[x]: C\vdash^{\Delta} D$ such that 
$$
\left \langle  w[ u_{k}[x]] \ \simeq_{\beta\eta} \  v_{k}\Big [\langle \Phi^{A_{jk}}_{X}( w[ x_{j}]) \rangle_{j}\Big ] : 
\langle \FFun{\mm X\mapsto C}{A_{jk}}\rangle_{j} \vdash^{\Delta} D \right \rangle_{k}
$$
The pair 
$(\forall X.P, \langle \bb t^{P}_{k}\rangle_{k})$, where $\bb t^{P}_{k}[\langle x_{j}\rangle_{j}]$ is the term
$$
 \Lambda X.\lambda \langle f_{k} \rangle_{k}.f_{k} \Big \langle 
\Phi^{X}_{A_{jk}}(x X\langle f_{k}\rangle_{k} )\big[x\mapsto x_{j}\big ]\Big \rangle_{j} 
\ : \  \langle \FFun{\mm X\mapsto \forall X.P[X]}{A_{jk}}\rangle_{j} \vdash \forall X.P[X]
$$
 is the \emph{initial $P[X]$-algebra}, that is, the initial object in the category of $P[X]$-algebras and $P[X]$-algebra morphisms. This means in particular, that for all $P[X]$-algebra $(C, \langle u_{k}\rangle_{k})$, 
one can define a context
$$
\TT{Ind}_{C}^{P}(\langle  u_{k}\rangle_{k})= [\ ] C\langle \lambda \langle x_{j}\rangle. u_{k}[\langle x_{j}\rangle_{j}]\rangle_{k}
\ : \ \forall X.P[X] \vdash C
$$
called \emph{$P[X]$-induction on $C$}, which is a $P[X]$-algebra morphism which satisfies the following universal property: for all $P[X]$-algebra morphism $ v[x]$
 from the initial $P[X]$-algebra to $(C, \langle u_{k}\rangle_{k})$, $ v[x] \simeq_{\varepsilon} \TT{Ind}_{C}^{P}(\langle  u_{k}\rangle_{k})[x]$.

% The following fact is an immediate consequence of this property:
% 
% 
% \begin{lemma}
% $$
% \begin{tikzcd}
%    &     \sum_{k}\prod_{j}\FFun{\cc X\mapsto D}{A_{jk}}    \ar{rr}{\langle w_{k}[x]\rangle_{k}}    & & D \\
% \sum_{k}\prod_{j}\FFun{\cc X\mapsto C}{A_{jk}} \ar{ru}{\langle\langle\Phi^{A_{jk}}_{X}[t]\rangle_{j}\rangle_{k}} 
% \ar{rr}{\langle v_{k}[x]\rangle_{k}} &  &  C \ar{ru}{t}   &  \\
%    & &  \forall X.P[X] \ar{u}{\TT{Ind}^{P}_{C}(\langle v_{k}\rangle_{k})}  \ar{uur}[right]{\TT{Ind}^{P}_{D}(\langle w_{k}\rangle_{k})} &  
% \end{tikzcd}
% $$
%\end{lemma}
%
%and this context is unique (up to $\varepsilon$-equivalence) with the property of being a $P[X]$-algebra morphism 
%$$
%\TT{Ind}_{C}
%$$
%
%
% 
% 
% We recall that a given a type $\FFun{\cc X}{T}$, i.e. a functor $T:\CTX_{\varepsilon}(\NYY)\To \CTX_{\varepsilon}(\NYY)$, a \emph{$T$-algebra} is a pair $(A,\rr{\bb t})$ made of a type $A$ and a term $\bb t[x]: \FFun{\cc X \mapsto A}{T}\vdash A$, and a \emph{$T$-algebra morphism} from $(A,\rr{\bb t})$ to $(B,\rr{\bb u})$ is a term $\rr{\bb v}:A\vdash B$ such that 
% $\rr{\bb t}[\rr{\bb v}] \simeq_{\varepsilon} \rr{\bb v}[\Phi^{X}_{T}(\rr{\bb u})]$. 
% The pair $(\mu X.\FFun{\cc X}{T}, \rr{\inn_{T}})$ is the 
% \emph{initial $T$-algebra}, that is, the initial object in the category of $T$-algebras and $T$-algebras morphisms. This means that for all $T$-algebra $(A,\rr{\bb t})$ there exists a \emph{unique} $T$-algebra morphism $\rr{\ff_{T}}(\rr{\bb t}): \mu X.\FFun{\mm X}{T}\vdash  A$ (``$T$-induction on $A$''). 
 
In Example \ref{ex:indu} first observe that the type $\nat$ is of the form $\forall X.\nat[X]$, where $\nat[X]=(X\To X)\To (X\To X)$ is a polynomial type. The two ways of defining our function correspond to two $\nat[X]$-algebras
$(\emptyset, A\To B, \langle  x\mapsto \lambda y.h(xy) , u \rangle)$ and 
$(\{ y:A  \}   , B, \langle x\mapsto hx, uy\rangle)$. These are related by the $\nat[X]$-algebra morphism $v[x]= xy$; moreover,  the two contexts $\rr{\TT C}_{1}$ and 
$\rr{\TT C}_{2}$ correspond to the associated $\nat[X]$-induction morphisms (respectively on $A\To B$ and $B$).
We can then deduce the equivalence $\rr{\TT C}_{1}\simeq_{\varepsilon}\rr{\TT C_{2}}$ from the fact that $\nat$ is the initial $\nat[X]$-algebra.

The moral of this argument is that one can simplify the type of a $P[X]$-induction whenever one has a 
$P[X]$-algebra morphism which is an \emph{elimination context}.

 \begin{definition}[elimination and introduction contexts]
%For all type $C\in \mathsf T(\Nd)$, let $\ar(C)$ be defined by induction as $\ar(X)=0$, $\ar(A\To B)=\ar(B)+1$ and $\ar(\forall X.A)=\ar(A)$. 
\emph{Elimination contexts} and \emph{introduction  contexts} for $\Nd$ are defined by the grammars
$$
\rr{\TT E}:= [\ ]\mid \rr{\TT E}C \mid \rr{\TT E}u
\qquad
\rr{\TT I}:= [\ ]\mid \Lambda Y. \rr{\TT I} \mid \lambda y. \rr{\TT I}
$$
A \emph{pair of dual contexts} (noted $(\Elim, \Intro)_{A\vdash B}$) is a pair made by an elimination context $\rr{\TT E}:A\vdash B$ and an introduction context $\rr{\TT I}:B\vdash A$ such that $ \rr{\TT I}\circ \rr{\TT E}\simeq_{\eta}[\ ]$. 

%
%$\rr{\TT E^{\vee}}$ such that  $\rr{\TT E}\circ \rr{\TT E^{\vee}}\simeq_{\beta}[\ ]$. 
%A \emph{full elimination context} for $A$ is any context $\rr{\TT C}: A\vdash \at{A}$. 
%
%
%For any elimination  context $\rr{\TT E}$, we let $\rr{\TT E}^{}
%For any  type $A$ of $\Nd$ we let $\Elim_{A}: A\vdash \at{A}$ be the 
%
%Let $C= \forall \vec Y_{1}.C_{1}\To \forall \vec Y_{2}.C_{2}\To \dots \To \forall \vec Y_{n}.C_{n}\To \forall \vec C_{n+1}.Z$ be a type of $\Nd$.
%For all sequence of $m\geq n$ distinct variables $\vec z=z_{1},\dots, z_{m}$, let  $\Sigma_{C}^{\vec z}=\{\bb{z_{1}}:C_{1},\dots, \bb{z_{n}}:C_{n}\}$. We define the contexts
% $\bb{\Elim_{C}}:(\Gamma\vdash_{\Nd} C)  \To (\Gamma, \Sigma_{C}^{\vec z}\vdash_{\Nd} Z)$ and 
%$\bb{\Intro_{C}}: (\Sigma_{C}^{\vec z}\vdash_{\Nd} Z) \To (\vdash_{\Nd} C)$ by induction on $C$ as follows:
%\begin{itemize}
%\item if $C=Z$ and we let $\bb{\Elim_{C}}=\bb{[\ ]}$ and $\bb{\Intro_{C}}=\bb{ [\ ]}$;
%
%\item if $C= D\To E$, then $\bb{\rr{\TT{El}}_{C}^{z,\vec z}}=\bb{  \Elim_{D} [ [\ ]z ]}$ and 
%$\bb{\rr{\TT{In}}_{C}^{z,\vec z}}=\bb{ \lambda z. \Intro_{D}}$;
%
%\item if $C=\forall Y.C'$, then $\bb{\Elim_{C}}=\bb{  \Elim_{C'}[ [\ ] Y]}$ and $\bb{\Intro_{C}}=\bb{ \Lambda Y. \Intro_{C'}}$.
%
%
%\end{itemize}
 \end{definition}

For any type $A$ and right simplification $B\in \TT{RS}(A)$, we can always construct a pair of dual contexts $(\Elim, \Intro)_{A\vdash B}$ by letting $\Elim$ be $[\ ]$ followed by type and term variable applications and $\Intro$ be made of abstractions on the same variables. 
For example, if $A= \forall X. Y \To \forall Z. (X\To Y)\To Z$ and $B= (X\To Y)\To Z$, then we can let $\Elim= [\ ]X y Z$ and $\Intro= \Lambda X.\lambda y.\Lambda Z.[\ ]$.

%For any pseudo-polynomial $p(X; \vec X)$, $\int_{X}p$ is an initial algebra (Lemma \ref{??}) which can be written $\int_{X}X^{X^{p^{\sharp}}}\simeq \mu X.p^{\sharp}$. For all $p^{\sharp}$-algebra $u: p^{\sharp}(A)\To A$, the initial $p^{\sharp}$-algebra morphism $\ff_{p^{\sharp}}(u): \mu X.p^{\sharp}\To A$ (``induction on $A$'') is the context
%$[\ ]A u$. 
%

The lemma below generalizes to an arbitrary polynomial type $P[X]$ the reasoning of Example \ref{ex:indu}, allowing to turn a $P[X]$ induction on $A$ into a $P[X]$-induction on $B$, for $B\in \TT{RS}(A)$:%
% 
%%For any polynomial type $A= \forall X. A_{1}\To \dots \To A_{k}\To X$. 
%
%If $\pp(\vec X)$ is a polynomial\footnote{Here we are again confusing an expression $\pp\in \C P$ with a type.}, then for all variable $X$, $\Fun{X}{\pp(\vec X)}$ is a functor over $\CTX_{\varepsilon}(\Ntot)$ and $\forall X.\pp^{\dag}(X;\vec X)  \simeq \mu X.\pp^{\dag}(X;\vec X)$ is its initial algebra (by Lemma \ref{lemma:pseudopp}). For $\rr{\TT C}: \pp[C/X]\vdash C$, the context $[\ ](C) \lambda x.\rr{\TT C}[x]: \forall X. \pp^{\dag}(X;\vec X)\vdash C$ corresponds to $\Fun{X}{\pp}$-induction on $C$.
%

\begin{lemma}[simplification lemma]\label{lemma:simpli}
Let $P[X]$ be a polynomial type, $A$ be a $\Nd$-type, $B\in \TT{RS}(A)$, and
$(\Elim, \Intro)_{A\vdash B}$ be a pair of dual contexts.
If  $\Elim$  is a morphism between two $\Fun{X}{P}$-algebras $(\Gamma, A,\langle u_{k}^{A}\rangle_{k})$ and $(\Delta, B, \langle u_{k}^{B}\rangle)$, then
 $$
\TT{Ind}_{A}^{P}(\langle u_{k}^{A}\rangle_{k}) \ \simeq_{\varepsilon} \
\Intro\big [ \TT{Ind}_{B}^{P}[\langle u_{k}^{B}\rangle_{k}] \big ]
 \ : \ 
 \forall X.P[X] \vdash^{\Delta} A
$$
%
%%[\ ](A)\lambda x.\rr{\TT C}[x]  \simeq_{\varepsilon}   \Intro[  [\ ](B)\lambda x. \rr{\TT D}[x]   ] : \forall X. \pp^{\dag}(X; \vec X)\vdash A$. 
%%
%\begin{itemize}
%\item For any polynomial $\pp$ and pair of dual contexts $(\Elim, \Intro)_{A\vdash B}$, if  $\Elim$  is a morphism between the $\Fun{X}{\pp}$-algebras $(A,\rr{\TT C})$ and $(B, \rr{\TT D})$, then $[\ ](A)\lambda x.\rr{\TT C}[x]  \simeq_{\varepsilon}   \Intro[  [\ ](B)\lambda x. \rr{\TT D}[x]   ] : \forall X. \pp^{\dag}(X; \vec X)\vdash A$. 
%\item if $\pp^{\dag}(X;\vec X)$ is constant (Def. \ref{def:ppp}),  then 
%$[\ ] (A)\lambda x.\rr{\TT C}[x] \simeq_{\varepsilon} \Intro[ [\ ] (B)\lambda x. \Elim\circ \rr{\TT C}[x]]: \forall X.  \pp^{\dag}(X;\vec X)\vdash A$ holds for all pair of dual contexts $(\Elim, \Intro)_{A\vdash B} $.
%\end{itemize}
\end{lemma}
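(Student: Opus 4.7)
The plan is to derive the claimed $\varepsilon$-equivalence from the universal property of the initial $P[X]$-algebra $(\forall X.P, \langle \bb t^P_k \rangle_k)$ stated just before the lemma: any $P[X]$-algebra morphism from this initial algebra to a target $(C, \langle u_k \rangle_k)$ is $\varepsilon$-equivalent to the canonical induction context $\TT{Ind}_C^P(\langle u_k \rangle_k)$.

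First I would establish the standard fact that $P[X]$-algebra morphisms compose: given morphisms $w_1[x]: C \vdash^{\Delta_1} D$ and $w_2[x]: D \vdash^{\Delta_2} E$ between $P[X]$-algebras $(C,\langle u_k^C\rangle_k)$, $(D,\langle u_k^D\rangle_k)$, $(E,\langle u_k^E\rangle_k)$, the substitution $w_2[w_1[x]]$ is itself a $P[X]$-algebra morphism from $(C,\langle u_k^C\rangle_k)$ to $(E,\langle u_k^E\rangle_k)$. This is a routine $\beta\eta$-calculation using the functoriality identity $\Fun{X}{A_{jk}}(w_2[w_1[x]]) \simeq_{\beta\eta} \Fun{X}{A_{jk}}(w_2)\bigl[x \mapsto \Fun{X}{A_{jk}}(w_1)\bigr]$ recalled in Section~\ref{secEpsilon}, which lets the algebra-morphism squares for $w_1$ and $w_2$ be glued together.

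Next I would note that $\TT{Ind}_A^P(\langle u_k^A\rangle_k)$ is by its very definition a $P[X]$-algebra morphism from $(\forall X.P,\langle \bb t^P_k \rangle_k)$ to $(A,\langle u_k^A\rangle_k)$: unfolding the context $[\ ]\, A\, \langle \lambda\langle x_j\rangle.\, u_k^A[\langle x_j\rangle_j]\rangle_k$ on any input of the form $\bb t^P_k[\langle y_j\rangle_j]$ and applying $\beta$-reduction yields precisely the required equation. Composing with the hypothesised morphism $\Elim: A \vdash^{\Delta} B$ and appealing to the composition principle just stated, the context $\Elim\bigl[\TT{Ind}_A^P(\langle u_k^A\rangle_k)[x]\bigr]: \forall X.P \vdash^\Delta B$ is then a $P[X]$-algebra morphism from the initial algebra to $(B,\langle u_k^B\rangle_k)$. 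Invoking the universal property of the initial algebra, this composite is $\varepsilon$-equivalent to $\TT{Ind}_B^P(\langle u_k^B\rangle_k)$.

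To finish, I would prefix the introduction context $\Intro$ to both sides of this equivalence, obtaining $\Intro\bigl[\Elim[\TT{Ind}_A^P(\langle u_k^A\rangle_k)[x]]\bigr] \simeq_\varepsilon \Intro\bigl[\TT{Ind}_B^P(\langle u_k^B\rangle_k)[x]\bigr]$, and collapse the left-hand side via the duality assumption $\Intro\circ\Elim \simeq_\eta [\ ]$. The main obstacle I anticipate is not conceptual but notational: carefully tracking how the functorial action $\Fun{X}{A_{jk}}$ interacts with the elimination context $\Elim$ (which is a sequence of type and term applications) when checking the algebra-morphism square for the composite $\Elim\circ\TT{Ind}_A^P$, and making sure that the side conditions on $\Gamma\subseteq\Delta$ and on bound variables do not spoil the substitutions. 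Once this bookkeeping is settled, the proof reduces to combining three elementary ingredients — functoriality of $\Fun{X}{A_{jk}}$, the universal property of the initial $P[X]$-algebra, and the $\eta$-duality of $(\Elim,\Intro)$ — in sequence.
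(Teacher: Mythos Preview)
Your proposal is correct and follows essentially the same route as the paper: both show that $\Elim\bigl[\TT{Ind}_A^P(\langle u_k^A\rangle_k)\bigr]$ and $\TT{Ind}_B^P(\langle u_k^B\rangle_k)$ are $P[X]$-algebra morphisms from the initial algebra to $(B,\langle u_k^B\rangle_k)$, invoke initiality to identify them, and then apply $\Intro$ together with $\Intro\circ\Elim\simeq_\eta[\ ]$. Your write-up is simply more explicit about the intermediate step that algebra morphisms compose, which the paper leaves implicit.
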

\begin{proof}
Since $\Elim\big [[\ ](A)\langle \lambda \langle x_{j}\rangle_{j}.u_{k}^{A}[\langle x_{j}\rangle_{j}]\rangle_{k}\big ]$ and 
$ [\ ](B)\langle \lambda \langle x_{j}\rangle_{j}.u_{k}^{B}[\langle x_{j}\rangle_{j}]\rangle_{k}$ are both $P[X]$-algebra morphisms from the initial $P[X]$-algebra to  $(B, \langle u_{k}^{B}\rangle)$, initiality implies that they are $\varepsilon$-equivalent. The claim then follows by applying $\eta$-equivalences.
%The second claim follows from the remark that if $X$ is not among the $\vec X $, then $\Fun{X}{\pp}(\rr{\TT C})=[\ ]$, whence $\Elim$ is always a $\Fun{X}{\pp}$-algebra morphism between $(A,\rr{\TT C})$ and $(B, \Elim[\rr{\TT C}])$. 
\end{proof}
%
%It is easy to see that the conditions for atomization \emph{always} hold when $p$ is constant. 
%\begin{corollary}
%For all polynomial $\pp(\vec X)$, $X\neq \vec X$, and $\Fun{X}{\pp}$-algebra $(A,u)$, 
%$x Au \simeq_{\varepsilon} \Intro[ x B \Elim[u]]$ holds for all pair of dual contexts $(\Elim, \Intro)_{A\vdash B}$.
%\end{corollary}
%\begin{proof}
%The claim follows from the remark that if $p$ is constant, then $\Elim$ is always a $\Fun{X}{\pp}$-algebra morphism between $(A,u)$ and $(B, \Elim[u])$. 
%\end{proof}
%

%Since the universal types of $\NRP$ are of the form $\forall X.\pp^{\dag}(X;\vec A)$ for some constant pseudo-polynomial $\pp^{\dag}(X;\vec X)$, using Remark \ref{rem:ato} the result above can be used to transform \emph{any} program in $\NRP$ into one containing only \emph{atomic} instantiations.  Letting $\Nda$ be the fragment of $\Nd$ with only \emph{atomic} instantiations \cite{Ferreira2013} and exploiting the translation $^{\sharp}:\NY\To \NRP$ we deduce that the fragment $\NY$ is in fact equivalent to a fully predicative system:

When $P[X]$ is a constant polynomial type, from Lemma \ref{lemma:simpli} we can deduce a rather striking fact: \emph{any} type instantiation of the initial algebra $\forall X.P[X]$ on a type $A$ can be \emph{atomized}, i.e. transformed into a term using a type instantiation on the variable $\at{A}$.

Let us fix, for all type $A$, a pair of dual contexts $(\Elim_{A}, \Intro_{A})_{A\vdash \at{A}}$.

\begin{lemma}[atomization lemma]
For all constant polynomial type $P[X]$ and type $B$ of $\Nd$,
\begin{equation}\label{eq:atomi}
[\ ] (B) \ \simeq_{\varepsilon} \ \lambda \langle y_{k}\rangle.
\Intro_{B}\Big [ [\ ] (\at{B})\langle \lambda \langle x_{j}\rangle_{j}. \Elim_{B}[y_{k} \langle x_{j}\rangle_{j} ]\rangle_{k}\Big ] \ :  \ \forall X.  P[X]\vdash P[B/X]
\end{equation}
\end{lemma}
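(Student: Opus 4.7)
The plan is to derive the atomization lemma as a direct application of the simplification lemma (Lemma \ref{lemma:simpli}), with the pair $(\Elim_B, \Intro_B)$ witnessing $B\vdash \at{B}$ and with two carefully chosen $P[X]$-algebras built in the parameter context $\Delta = \{y_k: \langle A_{jk}\rangle_j \To B\}_k$. Because $P[X]$ is constant, the functor $\Fun{X}{A_{jk}}$ acts as the identity for every $j,k$, and this is precisely what will make the morphism condition collapse to a triviality.

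First, I would introduce the two $P[X]$-algebras
$\mathbf A = (\Delta, B, \langle u_k^B\rangle_k)$ with $u_k^B[\langle x_j\rangle_j] := y_k\langle x_j\rangle_j$, and
$\mathbf A' = (\Delta, \at{B}, \langle u_k^{\at B}\rangle_k)$ with $u_k^{\at B}[\langle x_j\rangle_j] := \Elim_B[y_k\langle x_j\rangle_j]$,
where in both cases $x_j: A_{jk}$ (note that $\FFun{\mm X\mapsto B}{A_{jk}} = A_{jk}$ and $\FFun{\mm X\mapsto \at{B}}{A_{jk}} = A_{jk}$, since $X\notin FV(A_{jk})$). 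Next, I would verify that $\Elim_B$ is a $P[X]$-algebra morphism from $\mathbf A$ to $\mathbf A'$: the required condition is
\[
\Elim_B[u_k^B[\langle x_j\rangle_j]] \ \simeq_{\beta\eta} \ u_k^{\at B}\Big[\big\langle \Fun{X}{A_{jk}}(\Elim_B[x])[x\mapsto x_j]\big\rangle_j\Big].
\]
Since $X$ does not occur in any $A_{jk}$, $\Fun{X}{A_{jk}}$ is the identity functor on arrows, so the right-hand side reduces to $u_k^{\at B}[\langle x_j\rangle_j] = \Elim_B[y_k\langle x_j\rangle_j]$, which is exactly the left-hand side.

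I would then invoke Lemma \ref{lemma:simpli} to obtain
\[
\TT{Ind}_B^P(\langle u_k^B\rangle_k) \ \simeq_{\varepsilon} \ \Intro_B\big[\TT{Ind}_{\at{B}}^P(\langle u_k^{\at B}\rangle_k)\big],
\]
i.e.
\[
[\ ](B)\langle \lambda \langle x_j\rangle_j. y_k\langle x_j\rangle_j\rangle_k \ \simeq_{\varepsilon} \ \Intro_B\Big[[\ ](\at{B})\langle \lambda \langle x_j\rangle_j. \Elim_B[y_k\langle x_j\rangle_j]\rangle_k\Big].
\]
Finally, applying $\eta$ on the left-hand side to collapse $\lambda \langle x_j\rangle_j. y_k\langle x_j\rangle_j$ to $y_k$, and then $\lambda$-abstracting both sides over $\langle y_k\rangle_k$ (followed by one more $\eta$ on the outermost $\lambda \langle y_k\rangle_k.[\ ](B)\langle y_k\rangle_k$), yields exactly the equation \eqref{eq:atomi}.

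The only real content is the verification of the morphism condition, which is where constancy of $P[X]$ is essential; everything else is bookkeeping around the initial-algebra property. I do not expect any genuine obstacle, but the one subtle point worth spelling out is the trivialization of $\Fun{X}{A_{jk}}$, since this is precisely what fails for a non-constant polynomial and explains why the atomization lemma is stated only in the constant case.
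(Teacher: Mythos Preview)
Your proposal is correct and follows essentially the same approach as the paper: set up the two $P[X]$-algebras in the parameter context $\Delta=\{y_k:\langle A_{jk}\rangle_j\To B\}_k$, use constancy of $P[X]$ to make $\Elim_B$ trivially a $P[X]$-algebra morphism (since $\Fun{X}{A_{jk}}$ acts as the identity), apply the simplification lemma, and clean up with $\eta$. The paper's proof is slightly terser but the structure and all key steps are identical.
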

\begin{proof}
Since $P[X]$ is constant, for any choice of $k,j$, $\FFun{\cc X\mapsto B}{A_{jk}}=A_{jk}$ and $\Phi^{X}_{A_{jk}}( t)=x$. This implies that, given the context $\Gamma=\{\big\langle y_{k}: \langle {A_{jk}} \rangle_{j}\To B\big \rangle_{k}\}$, 
 $\Elim_{B}$ is trivially a $P[X]$-algebra morphism from $(\Gamma, B, \langle y_{k}\langle x_{j}\rangle_{j}\rangle_{k})$ to $(\Gamma,\at{B}, \langle \Elim_{B}[  y_{k}\langle x_{j}\rangle_{j}\rangle_{k}])$. We then have
 $$
 [\ ]B \simeq_{\eta}
  \lambda \langle y_{k}\rangle_{k} . [\ ] (B) (\lambda \langle x_{j}\rangle_{j}.\langle y_{k}\langle x_{j}\rangle_{j}\rangle_{k}) =  \lambda \langle y_{k}\rangle_{k} . \TT{Ind}^{P}_{B}( \langle y_{k}\langle x_{j}\rangle_{j}\rangle_{k}) 
 $$
 $$
   \quad \stackrel{[\text{Lemma \ref{lemma:simpli}}]}{\simeq_{\varepsilon}}
   \quad \lambda \langle y_{k}\rangle_{k} . \Intro_{B} \Big[ \TT{Ind}^{P}_{\at{B}}( \langle \Elim_{B}\big[ y_{k}\langle x_{j}\rangle_{j}\rangle_{k}\big])  \Big]
 $$
where the last term is the desired one. 
\end{proof}
% A consequence of this fact is that \emph{all} programs of the finite polynomial fragment $\NRP$ can be transformed into ones with atomic instantiations only (hence \emph{predicative}), preserving the behavior. %

While the lemma above is rather technical, it has a rather intuitive explanation: it permits to transform any instantiations of a type of the form $\forall X.P[X]$, where $P[X]$ is a constant polynomial type, into an atomic instantiation, preserving the contextual behavior of the program. 
This property can be restated as the existence of an embedding 
between the fragment $\NRP$ (see \cite{SL2}) of $\Nd$ (in fact, of $\NY$) in which all universal types are of the form $\forall X.P[X]$, for some polynomial type $P[X]$, and the predicative fragment $\Nda$ of $\Nd$ (see \cite{Ferreira2013}), in which all type instantiations are atomic.
%
%Alternatively, this can be restated as the fact that for any $\Nd$-type $A$ of characteristic 0, one can find a type $A'$ of characteristic 0,
%
%This technique can be used to transform \emph{any} term typable in the fragment $\NY$ into one typable in $\Nda$.

\begin{theorem}[atomizing embedding]\label{genff}
%\begin{itemize}
%\item[i.]
If $\Gamma\vdash_{\NRP}  t:A$, then there exists a term ${t^{\downarrow}}$ such that $\Gamma\vdash_{\Nda} {t^{\downarrow}}: A$ and $ t \simeq_{\varepsilon}{t^{\downarrow}}$. 
%
%\item[ii.]
%\end{itemize}

\end{theorem}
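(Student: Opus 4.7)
The proof goes by structural induction on the derivation of $\Gamma \vdash_{\NRP} t:A$. The cases for variables, $\lambda$-abstraction, term application, and $\Lambda$-abstraction are handled by direct recursion: set $x^{\downarrow}=x$, $(\lambda x.s)^{\downarrow}=\lambda x.s^{\downarrow}$, $(su)^{\downarrow}=s^{\downarrow}u^{\downarrow}$, $(\Lambda X.s)^{\downarrow}=\Lambda X.s^{\downarrow}$; in each of these sub-cases, atomicity is preserved and $t\simeq_{\varepsilon}t^{\downarrow}$ follows from the IH together with the fact that $\simeq_{\varepsilon}$ is a congruence.

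The crucial case is a type application $t = uB$ with $u:\forall X.P[X]$. I would first observe that in $\NRP$ the polynomial $P[X]$ must be \emph{constant} in $X$: since $\NRP$ is a fragment of $\NY$ (characteristic zero), and since by the isomorphism~\eqref{eq:initia} a non-constant $P[X]$ would force a $\mu$-binder in the normal form of $\forall X.P[X]$, giving $\kappa(\forall X.P)\geq 1$, only the constant case is compatible with $\NRP$. With $P$ constant the atomization lemma applies, and I would define
\[
(uB)^{\downarrow} \;:=\; \lambda\langle y_k\rangle_k.\,\Intro_{B}\Big[\,u^{\downarrow}(\at{B})\big\langle \lambda\langle x_j\rangle_j.\,\Elim_{B}[y_k\langle x_j\rangle_j]\big\rangle_k\Big].
\]
The desired $\varepsilon$-equivalence is then obtained as the chain $uB \simeq_{\varepsilon} u^{\downarrow}B \simeq_{\varepsilon} (uB)^{\downarrow}$, where the first step combines congruence with the IH applied to $u$, and the second step is exactly~\eqref{eq:atomi} instantiated with $u^{\downarrow}$ in place of $[\ ]$.

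To conclude $(uB)^{\downarrow} \in \Nda$, I would inspect the remaining sites of type instantiation. The outer application $u^{\downarrow}(\at{B})$ is an instantiation on a type variable, hence atomic; every instantiation hidden inside the dual pair $(\Elim_{B},\Intro_{B})$ is, by construction of these contexts, an application to a type variable bound by the surrounding $\Intro_{B}$; finally all instantiations occurring inside $u^{\downarrow}$ are atomic by the IH. So all instantiations in $(uB)^{\downarrow}$ are atomic, witnessing $\Gamma\vdash_{\Nda}(uB)^{\downarrow}:A$.

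The main obstacle I expect lies in the bookkeeping around the fragment $\NRP$ itself, rather than in the core equational step. Specifically, one has to verify (i) that a dual pair $(\Elim_{B},\Intro_{B})$ can always be chosen so that $(uB)^{\downarrow}$ is a well-formed $\NRP$-term, in particular so that $\at{B}$ is well-defined and appears in scope at the required position, (ii) that every universal sub-type generated along the construction remains of the polynomial form $\forall X.P[X]$ with $P$ constant, and (iii) that the atomization lemma, stated as an equivalence of contexts of type $\forall X.P[X]\vdash P[B/X]$, lifts to plugging inside an arbitrary derivation of $\Gamma\vdash_{\NRP}uB:P[B/X]$. Items (i) and (ii) reduce to an inspection of the inductive definitions of $\Elim$/$\Intro$ and of the grammar of $\NRP$-types; item (iii) is routine because $\simeq_{\varepsilon}$ is a congruence.
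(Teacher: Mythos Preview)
Your proposal is correct and follows essentially the same approach as the paper: the paper's proof is the one-liner ``$t^{\downarrow}$ is obtained by replacing in $t$ any subterm of the form $tB$, with $B$ non-atomic, as in Eq.~\eqref{eq:atomi}'', and your structural induction is exactly the unfolding of that replacement, with the $\varepsilon$-equivalence at each instantiation site supplied by the atomization lemma.

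Two small remarks. First, your argument that $P[X]$ must be constant is correct but slightly roundabout: rather than invoking the isomorphism~\eqref{eq:initia} and the shape of the normal form, it is cleaner to observe directly that a positive occurrence of $X$ in some $A_{jk}$ yields a cyclic alternating path (down-move to the modular $\mm X$ heading the $k$-th branch, then up-move back), forcing $\kappa\geq 1$. The paper does not argue this point at all in the proof, since the inclusion $\NRP\subseteq\NY$ is taken as part of the setup. Second, your worries (i)--(iii) are indeed routine: the dual pair $(\Elim_B,\Intro_B)$ is built from fresh type and term variables along the spine of $B$, so all its type applications are on variables bound by $\Intro_B$; no new universal types are created by the transformation; and the lifting in (iii) is immediate from congruence of $\simeq_\varepsilon$.
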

\begin{proof}
$t^{\downarrow}$ is obtained by replacing in $t$ any subterm of the form $t B$, with $B$ non-atomic, as in Eq.~\eqref{eq:atomi}.

\end{proof}

\begin{remark}[Comparing predicative and impredicative embeddings of $\NI$]\label{rem:IO}
The result above is discussed in \cite{SL2} in connection with some results from \cite{Ferreira2013}, which describes a variant $^{\bb{FF}}$ of the standard embedding $^{\sharp}$ of $\NI$ into $\Nd$ whose target is the predicative systems $\Nda$. In fact, by composing
 $^{\sharp}$ with atomization $t\mapsto (t^{\sharp})^{\downarrow}$ one obtains an embedding of $\NI$ into $\Nda$ which is $\beta$-equivalent to $^{\bb{FF}}$. From this fact we deduce that the standard embedding and the predicative variant from \cite{Ferreira2013} are indeed $\varepsilon$-equivalent.

%The fundamental remark leading to the FF-translation is that when a type $\forall X.A$ translates a sum or a product type, one can construct, for all $C$, terms $t^{C}_{A}$ for $\forall X.A\To A[C/X]$ in $\Nda$. 

%Our analysis shows that this property extends to all types $\F U_{f,g}(A_{i})$. Moreover, since the terms $t_{A}^{C}$ are easily seen to be $\beta\eta$-equivalent to $\lambda x.\IO_{C}^{\forall X.A}[x]$, the mapping $t\mapsto t^{\downarrow}$ of Theorem \ref{genff} is $\beta\eta$-equivalent to the FF-translation. From Theorem \ref{genff} it thus follows that the FF-translation and the RP-translation yield $\varepsilon$-equivalent terms. 
%Moreover, in \cite{Ferreira2009} it is shown that the translation into $\Nda$, unlike the RP-translation, preserves permutative conversions and a restricted $\eta$-rule for coproducts. This fact is well-explained by the remark that the two translations send a term of $\NI$ into two different representatives of the same $\varepsilon$-equivalence class.
\end{remark}

\section{Conclusion}
% !TEX root = YonedaTryAgain.tex

\subparagraph*{Related Work}

The  connection between parametricity, dinaturality and the Yoneda isomorphism is well-known \cite{Bainbridge1990, Plotkin1993, Hasegawa2009}. The extension of this correspondence to initial algebras comes from \cite{Uustalu2011}. \cite{Bernardy2010} exploits this connection to  
define a schema to \emph{test} the equivalence of two programs $t,u$ of type $\forall X. (\FFun{\cc X}{F}\To \mm X)\To (\FFun{\cc X}{G}\To X')\To \FFun{\cc X}{H}$ by first instantiating $X$ as $\alpha=\mu X.\FFun{\cc X}{F}$ and then applying $t,u$ to the canonical morphism $\FFun{\cc X\mapsto \alpha}{F}\To \alpha$
(in fact, this is exactly how one side of the isomorphisms $\equiv_{\cc X}$ are constructed). The possibility of expressing  program equivalence  through naturality conditions has recently attracted new attention 
 due to \cite{Awodey2018}, where these are investigated using ideas from homotopy type theory.
 Type isomorphisms in $\Nd$ with the Yoneda lemma are also discussed in \cite{Hinze2010}. In \cite{PistoneTLLA} a similar restriction based on the Yoneda isomorphism is used by the first author to describe a fragment of \emph{second order multiplicative linear logic} with a decidable program equivalence.

%The existence of an atomic encoding of $\NI$ inside $\Nd$ was first shown in \cite{Ferreira2013}. 
%The exact relationship between this encoding and our atomization procedure based on program equivalence is discussed in \cite{SL2}.
 
\subparagraph*{Future Work}

%
%A natural application of Yoneda type isomorphisms is to \emph{count} the elements of a (simple) type $A$, by checking whether the universal closure of $A$ rewrites into a finite type of the form $1+\dots +1$.
%For instance, in the full paper we show that some simple well-known conditions for the \emph{unique inhabitation} of a type (\cite{Aoto1999,Broda2005,Salvati2011})
%can be subsumed under the condition ``$\forall \vec X.A$ rewrites to $1$ under Yoneda reduction''.
%It also seems that some existing proof-theoretic techniques to count inhabitants (as the one in \cite{Zaoinc}) can be subsumed under this method.

The definition of the characteristic employs an acyclicity condition which is reminiscent of linear logic \emph{proof-nets}. 
In particular, 
we would like to investigate whether the alternating paths can be related to the \emph{cyclic proofs} for linear logic systems with $\mu,\nu$-types \cite{Baelde2016}. 
Moreover, the notion of characteristic seems likely to scale to second order \emph{multiplicative-exponential} linear logic, an extension which might lead to better expose the intrinsic duality in the tree-shapes in Fig. \ref{fig:rewriting}.

The  connection between Yoneda type isomorphisms and proof-search techniques suggests to look for canonical proof-search algorithms for $\NY$ and $\NYY$ (for instance using \emph{focusing} as suggested in \cite{SchererPhD}).
 Moreover, the appeal to least/greatest fixpoints suggests a connection with 
 the technique to count inhabitants by computing \emph{fixpoints of polynomial equations} \cite{Zaoinc}. For example, given $A=(\cc Y_{1}\To \mm X)\To (  \cc X\To \cc Y_{2}   \To \mm X)\To \cc X$, one can show 
 by proof-theoretic reasoning that the number $|A|$ of inhabitants of $A$ is a solution of the fixpoint equation
$|A|= |A_{Y_{1}}|+ (|A|\times |A_{Y_{2}}|)$,  where $A_{Y_{i}}=(\cc Y_{1}\To \mm X)\To (  \cc X\To \cc Y_{2}   \To \mm X)\To \cc Y_{i}$, which implies $|A|=0$, since $|A_{Y_{i}}|=0$. On the other hand, Yoneda type isomorphisms yield the 
strikingly similar computation  
$\forall \vec YX.A \equiv_{\cc X}
\forall \vec Y.
\mu X. \cc Y_{1} + (\cc X\times \cc Y_{2})\equiv_{\vec{\cc Y}} \mu X.0+(\cc X\times 0)\equiv 0$.

%
%The  connection between Yoneda type isomorphisms and proof-search techniques also suggests to look for canonical proof-search algorithms for $\NY$ and $\NYY$ (for instance using \emph{focusing} as suggested in \cite{SchererPhD}).
%A main obstacle to devise canonical proof-search algorithms for $\Nd$ is that the left-rule for $\forall$ requires to choose a type instantiation, a choice which is in fact \emph{never} canonical (as the $\varepsilon$-rules show). Instead, the isomorphism $T \ \equiv_{\bb Y} \ 
%\FFun{ \cc X\mapsto T}{A}$, 
%where $T=\forall X.(\FFun{\cc X}{A}\To \mm X)\To \cc X$ (obtained from $T\equiv_{\cc X} \mu X.\FFun{\cc X}{A}$ and $\FFun{\cc X\mapsto \mu X.\FFun{\cc X}{A}}{A}\equiv_{\beta\eta} \mu X.\FFun{\cc X}{A}$), 
%suggests to consider left-rules like 
%$$
%\AXC{$   \Gamma,  \FFun{\cc X\mapsto \alpha}{A}\To \alpha, \FFun{\mm X\mapsto \alpha}{A}\vdash C $}
%\UIC{$\Gamma, \forall X.(\FFun{\mm X}{A}\To \cc X)\To \mm X\vdash C $}
%\DP
%$$
%where $\alpha$ is some \emph{fresh} constant, since a closed proof of $\forall X.(\FFun{\cc X}{A}\To \mm X)\To \cc X$ should coincide with a proof of $\FFun{\cc X\mapsto \alpha}{A}$ using the assumption $\FFun{\mm X\mapsto \alpha}{A}\To \alpha$. 
%
 
 Finally, we would like to explore an apparent correspondence between 
Yoneda type isomorphisms of the form 
$$\forall X.\Big \langle \big \langle \FFun{\cc X}{A_{jk}}\big \rangle_{j}\To X\Big \rangle_{k}\To X  \  \equiv_{\cc X} \  \mu X.\sum_{i}\prod_{j}\FFun{\cc X}{A_{jk}}$$
\noindent and the so-called \emph{inversion principles} discussed in the proof-theoretic literature (see \cite{MT08}), which relate introduction and elimination rules for \emph{generalized connectives} \cite{Pra79,SH84,Backhouse1986} (for example, the isomorphism
$$
\forall X. (X_{1}\To X_{2}\To X)\To (X_{1}\To X_{3}\To X)\To X \ \equiv_{\cc X} 
\
(X_1\times X_2) + (X_1\times X_3)
$$
encodes the symmetry between rules for a ternary connective $\dag$ as shown in Fig. \ref{fig:rules}).

\begin{figure}
 \adjustbox{scale=0.85, center}{
 $\AXC{$X_1$}\AXC{$X_2$}\RL{$\dagger$I$_1$}\BIC{$\dagger(X_1,X_2,X_3)$}\DP \quad\AXC{$X_1$}\AXC{$X_3$}\RL{$\dagger$I$_2$}\BIC{$\dagger(X_1,X_2,X_3)$}\DP \qquad\qquad \AXC{$\dagger(X_1,X_2,X_3)$}\AXC{$[X_1,X_2]$}\noLine\UIC{$X$}\AXC{$[X_1,X_3]$}\noLine\UIC{$X$}\RL{$\dagger$E}\TIC{$X$}\DP$}
\caption{Introduction and elimination rules for the connective $\dag$.}
\label{fig:rules}
\end{figure}

\bibliography{YonedaTryAgain}

\appendix

%\section{$\beta\eta$-Type Isomorphisms}\label{appIsos}
%\input{appIsos}

%\section{Counting Type Inhabitants}\label{app6}
%\input{balanced1}

%\section{Postponed Proofs on Yoneda Reduction}\label{app1}
%\input{eliminable}
%
%\section{Standard reduction}
%\input{appstandard}
%
%
%
%\section{Postponed proofs.}
%\input{appProofs}
%
%\section{Type Systems and Equational Theories}\label{app2}
%\input{appSystems}

%\section{The Tree-Representation of $\Nd$-Types}\label{app3}
%\input{betaetainvariance}
%
%
\section{The $\varepsilon$-Theory and the Yoneda Isomorphisms}\label{app4}

We prove that the isomorphisms $\equiv_{\cc X}$ and $\equiv_{\mm X}$ hold under the $\varepsilon$-theory. Let us start from $\equiv_{\cc X}$, recalled below:
%We recall that the type isomorphisms from Fig. \ref{fig:isomorphisms} are the following two:
\begin{equation}\label{eqi}
\forall X.\Big \langle \forall \vec Y_{k}. \big\langle \FFun{\cc X}{A_{jk}}\big\rangle_{j}\To \mm X\Big\rangle_{k}\To \FFun{\cc X}{B} 
\equiv \FFun{\cc X \mapsto \{\mu  X.\}  \sum_{k}\left (\exists \vec Y_{k}.\prod_{j}\FFun{\cc X}{A_{jk}}\right)}{B}
\end{equation}
%where $\FFun{\cc X}{A_{jk}}$ is a polynomial family of positive $\NImu$-types and $\FFun{\cc X}{B}$ is a $\NImu$-type, or of the form
%where $\FFun{\cc X}{A_{j}}$ is a polynomial family of positive $\NImu$-types and $\FFun{\mm X}{B}$ is a negative $\NImu$-type.

In the case of Eq. \eqref{eqi} we can construct terms  
\begin{equation*}
\begin{split}
\TT a_{k}[  \langle z_{j}\rangle_{j}]  =  \TT{in}_{k}^{\sharp K}(\pack[\vec Y_{k}](\TT{prod}_{j}^{\sharp J_{k}}\langle z_{j}\rangle_{j}))
& \ :  \ \big \langle \FFun{\mm X\mapsto \alpha}{A_{jk}}\big\rangle_{j} \vdash \FFun{\cc X\mapsto\alpha}{T}
\\
\hat{\TT a}_{k}  = 
\Lambda \vec Y_{k}. \lambda \langle z_{j}\rangle_{j}. \TT{in}_{T}(\TT a_{k}[\langle z_{j}\rangle_{j}])
& \  : \ 
\forall \vec Y_{k}. \langle A_{jk}\langle \mm X\mapsto \alpha\rangle\rangle_{j}\To \alpha
\end{split}
\end{equation*}
\noindent where $\FFun{\cc X}{T}=\sum_{k}\exists \vec Y_{k}.\prod_{j}\FFun{\cc X}{A_{jk}}$ and 
$\alpha=\mu X. \FFun{\cc X}{T}$, 
$\TT{in}_{k}^{\sharp I}:  X_{k}\To \sum_{k}^{\sharp I}X_{k}$
and $\TT{prod}_{j}: \langle X_{j}\rangle_{j}\To \prod_{j}^{\sharp J_{k}}X_{j}$ are defined composing usual sum and product constructors, and 
 $\pack$ is defined as follows:
$$
  \pack[B_{1},\dots, B_{k}]  =
  \lambda x.\Lambda  Z .\lambda f. xB_{1}\dots B_{k} f  : A[B_{1}/Y_{1},\dots, B_{k}/Y_{k}] \To \exists \vec Y. A
  $$
  
  With such terms we can then construct a term 
$$
\TT s_{ {A_{jk}}, {B}}[x]=x \langle \hat{\TT a}_{k}\rangle_{k}:
 \Big \langle\forall \vec Y_{k}. \big\langle \FFun{\mm X\mapsto \alpha}{A_{jk}}\big\rangle_{j}\To \alpha\Big\rangle_{k}\To \FFun{\mm X\mapsto \alpha}{B} 
 \quad \vdash \quad 
\FFun{\cc X \mapsto 
\alpha}{B}
$$
%by letting $\TT S_{A_{jk},B}=[\ ] \langle \lambda \langle z_{jk}\rangle_{j}.\inn_{T}(\TT a_{k}\langle z_{jk}\rangle_{j})\rangle_{k}$. 

Moreover, using sum and product destructors we can construct terms 
\begin{equation*}
\begin{split}
\TT b[x,Z] 
& \ : \ 
\FFun{\mm X \mapsto Z}{T} 
\vdash^{\Delta}
Z
 \\
   \TT t_{ {A_{jk}}, {B}}[x,Z]
   & \ : \ 
 \FFun{\mm X\mapsto \alpha}{B}\quad  \vdash \quad
 \Big \langle\forall \vec Y_{k}. \big\langle \FFun{\cc X\mapsto Z}{A_{jk}}\big\rangle_{j}\To Z\Big\rangle_{k}\To \FFun{\cc X\mapsto Z}{B} 
\end{split}
\end{equation*}
where $\Delta= \{
  \langle f_{k}: \forall \vec Y_{k}. \langle \FFun{\mm X\mapsto Z}{A_{jk}}\rangle_{j}\To Z\rangle_{k}\}$ and \begin{equation*}
\begin{split}
\TT b[x,Z] &  =   \delta^{\sharp K}\Big ( x, \Big \langle    z. \unpack(z)
\big (\Lambda \vec Y_{k}.\lambda y. f_{k}\vec Y_{k} \langle \pi^{\sharp J_{k}}_{j}(y)\rangle_{j}\big ) \Big\rangle_{k} \Big)
%
%( f_{k}\vec Y_{k} ((\pi^{\sharp J_{k}}_{1}z)\dots (\pi^{\sharp J_{k}}_{\sharp J_{k}}z)    )      \rangle_{k}   ) 
  \\ 
\TT t_{A_{jk},B}[x,Z] & =  \lambda \langle f_{k}\rangle_{k}.\Fun{X}{B}\big ( \ff_{T}(\lambda x.\TT b[x,Z] ) x\big )
\end{split} 
\end{equation*}
\noindent
 where $\pi^{j}_{i}$ and $\delta^{\sharp K}(t, \langle z. u_{k}\rangle_{k})$ indicate suitable generalized product and sum  destructors which can be defined inductively using product and sum destructors, and the term $\unpack$ is defined as follows:
 $$
  \unpack  =
\Lambda Z.\lambda x.\lambda f. fZx
  : \forall Z. \left(\exists \vec Y. A\right) \To( \forall \vec Y.
  A \To  Z)\To Z
$$

% 
%  for $t:\sum_{i=1}^{\sharp I}A_{k}$ and $t_{k}[x]: A_{k}\vdash Z$, the term 
%$\delta^{*}( t, \langle \lambda x.t_{k}[x]\rangle_{k}): Z$ can be constructed by composing usual sum destructors.

One can check that $\TT b[x,{\alpha}]\big[\langle f_{k}\mapsto \hat{\TT a}_{k}\rangle_{k}\big]: \FFun{\cc X\mapsto \alpha}{T}\vdash \alpha$ is $\beta\eta$-equivalent to $\inn_{T}x$, from which we deduce
$$\ff_{T}\Big(\lambda x.\TT b[x,\alpha]\big[ \langle f_{k}\mapsto  \hat{\TT a}_{k} \rangle_{k}\big] \Big)x\simeq_{\beta\eta}
\ff_{T}(\lambda x.\inn_{T}x)x
\simeq_{\eta}x $$

In this way the isomorphism $\equiv_{\cc X}$ are realized in $\CTX_{\varepsilon}(\Nnew)$ by the two terms below:\footnote{We are here supposing that $X$ does occur in at least some of the $A_{jk}$ (so that $\mu X.$ actually occurs in the left-hand type of $\equiv_{\cc X}$). If this is not the case, the construction can be done in a similar (and simpler) way.}
$$
\TT s[x]= \TT s_{A_{jk},B}[x\alpha] 
\qquad\qquad
\TT t[x]=
\Lambda X.\TT t_{A_{jk},B}[x,X] $$
We can compute then 
$$
\TT s[ \TT t[x]] \simeq_{\beta}
\Fun{X}{B}\Big ( \ff_{T}\big (\lambda x.\TT b[x,{\alpha}]\big[ \langle f_{k}\mapsto \langle \hat{\TT a}_{k}\rangle_{k}\big] \big )x\Big)\simeq_{\beta\eta}\Fun{X}{B}(x)\simeq_{\eta}x
$$
and 
$$
\TT t[\TT s[x]] \simeq_{\beta}
\Lambda X.\lambda \langle f_{k}\rangle_{k}.
\Fun{X}{B}\Big (  \ff_{T}(\lambda x.\TT b[x,X] ) x\Big) 
\Big [x\mapsto   x\alpha \langle\hat{\TT a}_{k}\rangle_{k} \Big]
$$
$$
\qquad\qquad\qquad
\simeq_{\varepsilon}
\Lambda X.\lambda \langle f_{k}\rangle_{k}.
x X  \langle f_{k}\rangle_{k}
\simeq_{\eta}x
%\langle \lambda z. \ff_{T}(\TT b_{X}
%\inn_{T}(\TT a_{k}z))\rangle_{k}
$$
where the central $\varepsilon$-equivalence is justified using the $\varepsilon$-rule in Fig. \ref{fig:dinazza} 
with $E=\alpha$, $F=X$, $e_{k}=\hat{\TT a}_{k}$ and $\TT v[x]=\ff_{T}(\lambda x.\TT b[x,{X}])x$, with the last premise given by the computation below:
$$
\Big(\ff_{T}(\lambda x.\TT b[x,X] )x\Big)\Big[ x\mapsto \hat{\TT a}_{k}\vec Y_{k}\langle z_{j}\rangle_{j} \Big] \simeq_{\beta}
\Big(\ff_{T}(\lambda x.\TT b[x,X] )\Big)\inn_{T}(\TT a_{k}[\langle z_{j}\rangle_{j}])  
$$
$$
\simeq_{\beta}
\big (\lambda x.\TT b[x,{X}]\big )
%^{\FFun{X}{T}\To X}
\Big (\big ( \Fun{X}{T}( \ff_{T}(\lambda x.\TT b[x,{X}])x )
%^{\alpha\vdash X} 
   \big)
%^{\FFun{\alpha}{T}\vdash \FFun{X}{T}}
\Big [x\mapsto \TT a_{k}[\langle z_{j}\rangle_{j}] \Big]
%^{\FFun{\alpha}{T}}
\Big)
$$
%$$
%\qquad\qquad
%\simeq_{\beta}
%\big (\lambda x.\TT b[x,{X}]\big )
%%^{\FFun{X}{T}\To X}
%\Big ( 
%\TT{in}_{k}^{\sharp K} \big (\mathsf{pack}\big( \TT{prod}_{j}^{\sharp J_{k}}( \langle \Fun{X}{A_{jk}}( \ff_{T}(\lambda x.\TT b[x,X])x )
%%^{\alpha\vdash X} 
% [x\mapsto z_{j}
% %^{\FFun{\alpha}{A_{jk}}}
% ]   \rangle_{j}
% %^{\FFun{X}{A_{jk}} } 
%   ) \big)\big)\Big)
%$$
$$\simeq_{\beta}
f_{k}\vec Y_{k}
%^{\langle\FFun{X}{A_{jk}} 
%\rangle_{j}\To X}
\Big \langle
\Fun{X}{A_{jk}}( \ff_{T}(\lambda x.\TT b[x,{X}])x )
%^{\alpha\vdash X}
[x\mapsto z_{j}
%^{\FFun{\alpha}{A_{jk}}}
]
\Big\rangle_{j}
%^{\FFun{X}{A_{jk}}}
$$ 
\noindent where the last $\beta$-equation can be checked by unrolling the definition of $\Phi^{X}_{T}$:
$$\Phi^{X}_{T}(t)= \delta^{\sharp K}\Big ( x, \Big \langle
 z. \mathsf{unpack}(z) \big (
\Lambda \vec Y_{k}. \lambda x. \TT{in}^{\sharp K}_{k} ( \mathsf{pack}[\vec Y_{k}] (  
\TT{prod}^{\sharp J_{k}} \langle   \Phi^{X}_{A_{jk}}(t[ x\mapsto \pi^{\sharp J_{k}}_{j}(x)] ) \rangle_{j}
 ))
\big)
\Big \rangle_{k}
\Big)
$$

Let us now consider the isomorphism $\equiv_{\mm X}$, recalled below:
\begin{equation}\label{eqii}
 \forall X. \Big \langle \forall \vec Y_{j}.\cc X\To \FFun{\mm X}{A_{j}}\Big\rangle_{j}\To \FFun{\mm X}{B} 
 \equiv \FFun{\mm X \mapsto 
\{\nu  X.\} \forall \vec Y_{j}.\prod_{j}\FFun{\mm X}{A_{j}}}{B}
\end{equation}
We can  
define terms\footnote{We are here supposing that $X$ does occur in at least some of the $A_{j}$ (so that $\nu X.$ actually occurs in the left-hand type of $\equiv_{\mm X}$). If this is not the case, the construction can be done in a similar (and simpler) way.}
\begin{equation*}
\begin{split}
\TT c_{j}[x,\vec Y_{j}]=  \pi^{\sharp J}_{j}(x\vec Y_{j})
& \ : \  \FFun{\mm X\mapsto \beta}{U}\vdash \FFun{\cc X\mapsto \beta}{A_{j}}
\\
\hat{\TT c}_{j}= \Lambda \vec Y_{j}.\lambda x.\TT c_{j}[\outt_{U}(x),\vec Y_{j}]
& \ 
: \
\forall \vec Y_{j}.\beta\To \FFun{\cc X\mapsto \beta}{A_{j}}
\end{split}
\end{equation*}
\noindent where
$\FFun{\cc X}{U}= \forall \vec Y_{j}.\prod_{j}\FFun{\cc X}{A_{j}}$ and $\beta= \nu X.\FFun{\cc X}{U}$, and a term
$$
\TT u_{{A_{j}, B}}[x]=x\langle \hat{\TT c}_{j}\rangle_{j}:
 \Big \langle \forall \vec Y_{j}. \beta\To  \FFun{\cc X\mapsto \beta}{A_{k}}\Big\rangle_{j}\To \FFun{\cc X\mapsto \beta}{B} 
 \quad \vdash \quad 
\FFun{\mm X \mapsto 
\beta}{B}
$$

Moreover, using product constructors we can construct a term 
$$\TT d[\langle f_{j}\rangle_{j}, Z]:  \Big \langle f_{j}: \forall \vec Y_{j}.Z\To \FFun{\mm X\mapsto Z}{A_{j}}\Big\rangle_{j} \ \vdash  \ 
\big (Z\To  \FFun{\cc X \mapsto Z}{U}\big ) $$ 
with $\TT d[\langle f_{j}\rangle_{j},Z]=  \lambda z. \Lambda \vec Y_{j}. \mathsf{prod}^{\sharp J}( \langle f_{j}\vec Y_{j} z\rangle_{j})$
and 
a term
$$
\TT v_{ {A_{j}}, {B}}[x,Z]:
\FFun{\cc X\mapsto \beta}{B}\quad \vdash \quad
 \Big \langle\forall \vec Y_{j}. Z\To  \FFun{\mm X\mapsto Z}{A_{j}}\Big\rangle_{j}\To \FFun{\mm X\mapsto Z}{B} 
$$
where $\TT v_{A_{j},B}[x,Z]=  \lambda \langle f_{j}\rangle_{j}.\Fun{X}{B}\big( \uu_{U}(\TT d[\langle f_{j}\rangle_{j}, Z] )x\big )$.

In this way the isomorphism $\equiv_{\mm X}$ are realized in $\CTX_{\varepsilon}(\Nnew)$ by the two terms below:
$$
\TT u[x]= \TT u_{A_{j},B}[x\beta] 
\qquad\qquad
\TT v[x]=
\Lambda X. \TT v_{A_{j},B}[x,X]$$
We leave it the reader to check that $\TT u[\TT v[x]]\simeq_{\varepsilon}x$ and $\TT v[\TT u[x]]\simeq_{\varepsilon}x$, using an argument similar to the one for $\TT s$ and $\TT t$ and using the $\varepsilon$-rule in Fig. \ref{fig:dinazzabis}.

\section{$\NYY$ and $\mu\BB B$ are full subcategories of $\Nnew$}\label{app8}

Given a term $t$ with $n$ free variables $x_{1},\dots, x_{n}$ and $n$ terms $\vec{\rr{ u}_{i}}[x]$, we let $t\mcirc \vec{\rr{ u}}$ be shorthand for $t[ \rr{ u}_{1}[x_{1}]/x_{1},\dots, \rr{ u}_{n}[x_{n}]/x_{n}]$.

In this section we prove the following commutations:

\begin{proposition}\label{prop:mado}
\begin{enumerate}
\item If $\Gamma\vdash_{\Nnew}  t:  A$, then 
$\Gamma\vdash_{\Nnew}{ \rr{\TT c}_{A} [ t]}   \simeq_{\varepsilon} { \Rp t \mcirc \vec{\rr{\TT c}}_{\Gamma}}: \Rp A$.

%\item If $\Gamma\vdash_{\NYY}   t:  A$, then 
%$\Gamma\vdash_{\NYY} { \rr{\TT U}_{A} [ t]}   \simeq_{\varepsilon}  {  t^{\sharp} \mcirc \vec{\rr{\TT U}}_{\Gamma}}: A^{\sharp}$.

\item If $\Gamma\vdash_{\Nnew}  t:  A$, then 
$\Gamma\vdash_{\Nnew}{ \rr{\TT d}_{A} [ t]}   \simeq_{\varepsilon} {  t^{\flat} \mcirc \vec{\rr{\TT d}}_{\Gamma}}: A^{\flat}$.

\end{enumerate}
\end{proposition}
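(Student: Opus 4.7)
The plan is to prove both claims by simultaneous induction on the typing derivation of $\Gamma \vdash_{\Nnew} t : A$, exploiting that in both cases $\TT c_A$ and $\TT d_A$ are defined structurally on $A$, and that both $t \mapsto t^{\sharp}$ and $t \mapsto t^{\flat}$ are defined by recursion on $t$. In each inductive step we will compare, on the left-hand side of the equation, the behaviour of $\TT c_A$ (resp.\ $\TT d_A$) on the constructor heading $t$ with the behaviour prescribed by the translation on the right-hand side, and close the gap using $\beta\eta$- and $\varepsilon$-equations.

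For part~(1), the cases where $t$ is a variable, an abstraction $\lambda x.u$, an application $uv$, a polymorphic abstraction $\Lambda X.u$, or a polymorphic application $uB$, are routine: they reduce, after unfolding the definitions of $\TT c_{A \To B}$, $\TT c_{\forall X.A}$, and $(-)^{\sharp}$ from Fig.~\ref{fig:cc} and Fig.~\ref{fig:starbeta}, to a direct application of the induction hypothesis together with $\beta\eta$-equations. The $\times$, $+$, $1$, $0$ cases proceed by a $\beta$-reduction after the sum/product constructors or destructors have been translated into their Church-style $\sharp$-images, and then collapsing the resulting elimination form by $\beta\eta$; $\varepsilon$-rules are needed only in the $\eta$-direction to permute the destructors across arbitrary contexts, exactly as in the standard verification that $\TT c_A \circ \TT c_A^{-1} \simeq_{\varepsilon} x$. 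The genuinely delicate cases are the fixpoint constructors $\inn_A$, $\ff_A(t)$, $\outt_A$, $\uu_A(t)$: here one must push $\TT c$ across a $\ff$ or $\uu$, and commutation with the translation relies precisely on the $\varepsilon$-rule for the (co)inductive type, i.e.\ on the uniqueness of (co)algebra morphisms out of the initial algebra / into the final coalgebra, as expressed by the last rule in Fig.~\ref{eqscherermu}.

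For part~(2), the variable, $\lambda$-abstraction and application cases are again immediate from the inductive hypothesis, because by construction $(A \To B)^{\flat} = A^{\flat} \To B^{\flat}$ and $\TT d_{A \To B}$ acts componentwise. The core of the argument concerns $t = \Lambda X.u$ and $t = uB$, because $(-)^{\flat}$ on these is defined precisely by pre/post-composing $u^{\flat}$ with the Yoneda realizers $\TT s, \TT t, \TT u, \TT v$ constructed in App.~\ref{app4}. The plan here is to unfold $\TT d_{\forall X.A}$ as the composition of $\TT c_{\forall X.A}$ with the Yoneda isomorphism identified by the uniform reduction that defines $(\forall X.A)^{\flat}$, and then use the already-established fact (from App.~\ref{app4}) that $\TT s, \TT t$ (and $\TT u, \TT v$) are mutually inverse up to $\varepsilon$ to reduce the statement to part~(1). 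Because $(-)^{\flat}$ commutes with substitution by Lemma \ref{lemma:subst}, the type-application case $(uB)^{\flat} = u^{\flat} B^{\flat}$ then goes through by naturality of $\TT d$ in its type argument.

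The main obstacle is the combination of the two difficulties above in the type-application/type-abstraction cases under $(-)^{\flat}$: one has to verify that $\TT d_{A[B/X]}$ agrees, up to $\varepsilon$, with $\TT d_{A}$ followed by $\TT d_{B}$-instantiation, and this hinges on the uniformity of the underlying reductions (Lemma \ref{lemma:skeleton}) together with the strong dinaturality diagrams of Fig.~\ref{fig:catdinazza}--\ref{fig:catdinazzabis}. Once this naturality is established, the proof amounts to assembling these pieces; the whole induction then goes through cleanly, and in particular both claims of Prop.~\ref{prop:mado} fall out simultaneously.
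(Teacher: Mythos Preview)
Your overall inductive strategy is right, and the fixpoint cases are indeed where the $\varepsilon$-theory does heavy lifting. But there is a genuine gap in Part~(1): the polymorphic-application case $t = uC$ is \emph{not} routine, and in particular does not reduce to the induction hypothesis plus $\beta\eta$. The induction hypothesis gives you $\TT c_{\forall X.A}[u] \simeq_{\varepsilon} u^{\sharp}\mcirc\vec{\TT c}_{\Gamma}$; what you then need is $\TT c_{A[C/X]}[x] \simeq_{\varepsilon} (\TT c_{\forall X.A}[x])\,C^{\sharp}$, i.e.\ that $\TT c$ is natural in the instantiating type. Unfolding $\TT c_{\forall X.A}[x] = \Lambda X.\TT c_{A}[xX]$ and $\beta$-reducing gives $\TT c_{A}[C^{\sharp}/X][xC^{\sharp}]$, which is \emph{not} $\beta\eta$-equal to $\TT c_{A[C/X]}[xC]$: at each free occurrence of $X$ the former expects something already of type $C^{\sharp}$ while the latter applies $\TT c_C$. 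Closing this gap is exactly the content of the paper's Lemma~\ref{coroll:bicci}, whose proof (via Lemma~\ref{lemma:reduce}) uses an $\varepsilon$-rule for the universal type $\forall X.A$ --- this is available precisely because in $\Nnew$ every universal type has one of the two shapes in~\eqref{eq:adhoc}. The same lemma is also needed in the $\xi_{C}$ case. So the $\varepsilon$-content of Part~(1) is not concentrated only in the $\mu,\nu$ clauses; it already appears at type instantiation.

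For Part~(2), your plan to ``unfold $\TT d_{\forall X.A}$ as the composition of $\TT c_{\forall X.A}$ with the Yoneda isomorphism'' and thereby reduce to Part~(1) does not type-check: $\TT c_{A}$ goes from $A$ to $A^{\sharp}$, while $\TT d_{A}$ goes from $A$ to $A^{\flat}$, and these targets are unrelated (indeed, on $\NYY$-types $A^{\sharp}=A$, so $\TT c_A$ is essentially the identity and carries no useful information for $\TT d$). The paper instead proves Part~(2) by a \emph{parallel} induction on $t$, mirroring the Part~(1) argument: one still needs an analogue of Lemma~\ref{coroll:bicci} expressing naturality of $\TT d$ under type instantiation, now proved via the Yoneda realizers $\TT s,\TT t,\TT u,\TT v$ and the substitution lemma for $(\,\cdot\,)^{\flat}$. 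Your observations about Lemma~\ref{lemma:subst} and the strong-dinaturality diagrams are the right ingredients, but they assemble into a direct induction for Part~(2), not a reduction to Part~(1).
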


In fact, from Proposition \ref{prop:mado} we immediately deduce our fullness claims:
\begin{theorem}%[fullness of $\NYY$ and $\mu\BB B$ in $\Nnew$.]
\label{thm:full}
\begin{enumerate}
\item If $\Gamma\vdash_{\Nnew }   t:  A$ and $\Gamma,A$ are $\NYY$-types, then there exists $u$ such that 
$\Gamma\vdash_{\NYY}u:A$ and 
$\Gamma\vdash_{\Nnew}{ t}   \simeq_{\varepsilon} { u}: A$.

\item If $\Gamma\vdash_{\Nnew}   t:  A$ and $\Gamma,A$ are $\NImu$-types, then there exists $u$ such that 
$\Gamma\vdash_{\NImu}u:A$ and 
$\Gamma\vdash_{\Nnew}{ t}   \simeq_{\varepsilon} { u}: A$.

\end{enumerate}
\end{theorem}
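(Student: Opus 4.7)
The plan is to derive both items of Theorem \ref{thm:full} as immediate corollaries of Proposition \ref{prop:mado}, exploiting the key observation that when the ambient types $\Gamma, A$ already live in the target fragment, the translations $^{\sharp}$ and $^{\flat}$ act as the identity on them and the corresponding ``iso-expansion'' terms $\TT c_{B}$ and $\TT d_{B}$ degenerate to canonical $\eta$-expansions.

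For item 1 I set $u := t^{\sharp}$. Since $\NYY$-types contain no occurrence of $0,1,+,\times,\mu,\nu$, inspection of Fig.~\ref{fig:startypes} gives $B^{\sharp} = B$ for every $\NYY$-type $B$, hence in particular $A^{\sharp} = A$ and $A_{i}^{\sharp} = A_{i}$ for each $A_{i} \in \Gamma$; a parallel induction on $B$ using Fig.~\ref{fig:cc} then shows that on such types $\TT c_{B}[x] \simeq_{\eta} x$. Chaining these $\eta$-equivalences with Proposition~\ref{prop:mado} item~1 yields
$$t \ \simeq_{\eta} \ \TT c_{A}[t] \ \simeq_{\varepsilon} \ t^{\sharp} \mcirc \vec{\TT c}_{\Gamma} \ \simeq_{\eta} \ t^{\sharp}.$$
It remains to check that $t^{\sharp}$ is a well-typed $\NYY$-term, i.e.\ that every type appearing in it has characteristic at most~$1$. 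I would do this by a clause-by-clause verification that $\kappa(B^{\sharp}) \leq 1$ for every $B \in \Nnew$: the arrow and $\forall$-clauses preserve $\kappa$, the clauses for $0,1,+,\times$ produce types of characteristic~$0$, and the $(\mu X.\FFun{\cc X}{A})^{\sharp}$ and $(\nu X.\FFun{\cc X}{A})^{\sharp}$ clauses produce tree-representations with precisely one family of cyclic alternating paths, hence characteristic exactly~$1$. Combined with the compositionality of $\kappa$ (Lemma~\ref{rem:compositionality}), this guarantees that every intermediate type appearing in $t^{\sharp}$ lies in $\NYY$.

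For item 2 I would reduce to item 1 by composition: first apply item~1 to obtain $v \in \NYY$ with $t \simeq_{\varepsilon} v$, then set $u := v^{\flat}$. Since $A$ and $\Gamma$ are $\NImu$-types, they contain no $\forall$ at all, so their tree-representations have no bound variables and the coherent valuation is trivial, giving $A^{\flat} = A$ and $\Gamma^{\flat} = \Gamma$. Exactly as in item~1 this forces $\TT d_{B}[x] \simeq_{\eta} x$, and applying Proposition~\ref{prop:mado} item~2 to the $\Nnew$-derivation of $v$ gives $v \simeq_{\varepsilon} v^{\flat}$, whence $t \simeq_{\varepsilon} v^{\flat}$. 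Moreover, since $v \in \NYY$ and $B^{\flat}$ is quantifier-free for every $B \in \NYY$, the term $v^{\flat}$ lies in $\NImu$ as required.

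The main obstacle is purely bookkeeping in the first item, namely checking that $t^{\sharp}$ actually lies in $\NYY$ rather than merely in $\Nd$. The critical case is the $\nu$-clause $\forall Y.(\forall X.X \To (X \To \FFun{\cc X}{A^{\sharp}}) \To Y) \To Y$ with its two nested quantifiers: one must verify that both $Y$ and $X$ remain eliminable under a mutually coherent valuation, which follows because the modular pairs involving $X$ sit above the innermost $\forall X$ and so do not conflict with the modular pair for $Y$ at the outer layer. Once this clause is handled, the rest of the proof reduces to invoking Proposition~\ref{prop:mado} and the $\eta$-collapse described above.
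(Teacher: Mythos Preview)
Your treatment of item~1 is essentially the paper's own proof: both of you take $u=t^{\sharp}$ and deduce $t\simeq_{\varepsilon}t^{\sharp}$ from Proposition~\ref{prop:mado}(1) together with the observation that on $\NYY$-types the translation $^{\sharp}$ and the maps $\TT c_{B}$ are trivial. Your added discussion of why $t^{\sharp}$ lands in $\NYY$ is useful detail the paper leaves implicit.

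Your argument for item~2, however, has a genuine gap. You propose to ``first apply item~1 to obtain $v\in\NYY$ with $t\simeq_{\varepsilon}v$'', but the hypothesis of item~1 is that $\Gamma,A$ are $\NYY$-types, i.e.\ $\Nd$-types of characteristic $\leq 1$. In item~2 the types $\Gamma,A$ live in $\NImu$, which is a \emph{different} type language: it contains $0,1,+,\times,\mu,\nu$ but no $\forall$. A type such as $1+1$ or $\mu X.X+1$ is simply not a $\NYY$-type, so no judgement $\Gamma\vdash_{\NYY}v:A$ can exist and item~1 does not apply. Relatedly, your remark that ``their tree-representations have no bound variables'' presupposes that $\cc{\bb t}(A)$ is defined for $\NImu$-types, which it is not: the tree map $\cc{\bb t}$ is only defined on $\Nd$-types.

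The paper instead handles item~2 symmetrically to item~1: it sets $u=t^{\flat}$ directly and appeals to Proposition~\ref{prop:mado}(2). The point is that on $\NImu$-types (under the obvious extension of $^{\flat}$ to $\Nnew$, where it acts trivially on the monomorphic constructors) one has $B^{\flat}=B$ and $\TT d_{B}\simeq_{\eta}\mathrm{id}$, so Proposition~\ref{prop:mado}(2) applied to $t$ itself yields $t\simeq_{\varepsilon}t^{\flat}$ with $t^{\flat}\in\NImu$. No detour through $\NYY$ is needed or possible.
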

\begin{proof}
In the first case let $u=t^{\rop}$, and in the second case let $u=t^{\flat}$.
\end{proof}

We limit ourselves to check the first statement of Prop. \ref{prop:mado}, as the other one is established by computing $\varepsilon$-equivalences in a similar way. We need the following two technical lemmas.

\begin{lemma}\label{lemma:reduce}
For all $t:\forall X.A$  as in \eqref{eqi} and type $C$, 
{\small\begin{equation}\label{eq:bordello}
{\Lambda \vec Y.\lambda \vec z. \Fun{X}{P}(\rr{\TT c}_{C}) \Big[  xC \vec Y \  (\Fun{X}{A_{1}}(\rr{\TT c}^{-1}_{C})[ z_{1}] ) \dots(\Fun{X}{A_{k}}(\rr{\TT c}^{-1}_{C})[ z_{k}] )  
\Big]}
 \simeq_{\varepsilon}
{ x\Rp C
}
: A[\Rp C/X]
\end{equation}}
\end{lemma}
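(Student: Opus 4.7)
The lemma is a dinaturality (or free-theorem) statement for the polymorphic term $t$, with the bridging morphism being the $\varepsilon$-isomorphism $\rr{\TT c}_C : C \cong \Rp C$ already established at the end of Sec.~\ref{secEpsilon}. The plan is to invoke the strong dinaturality $\varepsilon$-rule of Fig.~\ref{fig:dinazza} with $E := \Rp C$, $F := C$, and $\TT v[x] := \rr{\TT c}^{-1}_C[x]$, and then to rearrange the conclusion into \eqref{eq:bordello} using the invertibility of $\rr{\TT c}_C$.

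Before applying the rule, I would first dispatch an auxiliary functoriality-and-invertibility sub-lemma, asserting that for every type $D$ (possibly containing $X$), $\Fun{X}{D}(\rr{\TT c}_C)\bigl[\Fun{X}{D}(\rr{\TT c}^{-1}_C)[y]\bigr] \simeq_{\varepsilon} y$, and symmetrically with the roles of $\rr{\TT c}_C$ and $\rr{\TT c}^{-1}_C$ swapped. This is a routine induction on $D$ using the defining clauses of $\Fun{X}{-}$ from Sec.~\ref{secEpsilon} together with the base fact that $\rr{\TT c}_C \circ \rr{\TT c}^{-1}_C \simeq_{\varepsilon} \mathrm{id}$. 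The sub-lemma will be used twice: once to discharge the coherence hypothesis of the $\varepsilon$-rule, and once to cancel an outer $\Fun{X}{P}(\rr{\TT c}^{-1}_C)$ against the desired $\Fun{X}{P}(\rr{\TT c}_C)$.

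With this in place, I would take $f_k := \Lambda \vec Y_k.\lambda\langle z_j\rangle_j.\rr{\TT c}^{-1}_C\bigl[e_k\vec Y_k\langle \Fun{X}{A_{jk}}(\rr{\TT c}_C)[z_j]\rangle_j\bigr]$, where the $e_k$ are the generic parameters playing the role of the $\lambda$-bound arguments on the LHS of \eqref{eq:bordello}. The coherence hypothesis $\TT v[e_k\vec Y_k\langle z_j\rangle_j] \simeq_{\varepsilon} f_k\vec Y_k\langle\Fun{X}{A_{jk}}(\TT v)[z_j]\rangle_j$ then reduces, after $\beta\eta$-unfolding the definition of $f_k$, to an instance of the invertibility sub-lemma. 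Applying the $\varepsilon$-rule gives $\Fun{X}{P}(\rr{\TT c}^{-1}_C)\bigl[x\Rp C\langle e_k\rangle_k\bigr] \simeq_{\varepsilon} xC\langle f_k\rangle_k$; post-composing both sides with $\Fun{X}{P}(\rr{\TT c}_C)$ and invoking the sub-lemma once more yields \eqref{eq:bordello} up to an $\eta$-expansion.

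The main obstacle is not the conceptual content — which is a single appeal to strong dinaturality — but the bookkeeping required to line up the nested families $\langle \vec Y_k, \langle z_j\rangle_j\rangle_k$ on both sides so that the chosen $e_k, f_k$ produce exactly the expression in \eqref{eq:bordello} after substitution, rather than something merely $\beta\eta$-equivalent up to further rewriting. Once the indexing is fixed cleanly, the proof is mechanical.
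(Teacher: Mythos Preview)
Your proposal is correct and follows essentially the same route as the paper: apply the $\varepsilon$-rule of Fig.~\ref{fig:dinazza} with $E=\Rp C$, $F=C$, $\TT v=\rr{\TT c}^{-1}_C$, then cancel the resulting $\Fun{X}{P}(\rr{\TT c}^{-1}_C)$ against the outer $\Fun{X}{P}(\rr{\TT c}_C)$ and $\eta$-contract. The paper's proof is terser (it writes the $\varepsilon$-step in one line without isolating $e_k,f_k$ or the coherence hypothesis), and your invertibility sub-lemma is already available from the functoriality identity $\Fun{X}{A}(t[u])\simeq_\beta \Fun{X}{A}(t)[\Fun{X}{A}(u)]$ stated in Sec.~\ref{secEpsilon} together with $\rr{\TT c}_C\circ\rr{\TT c}^{-1}_C\simeq_\varepsilon x$, so no separate induction on $D$ is needed.
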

\begin{proof}
From 
$
{
xC \vec Y \  (\Fun{X}{A_{1}}(\rr{\TT c}^{-1}_{C})[ z_{1}] ) \dots(\Fun{X}{A_{k}}(\rr{\TT c}^{-1}_{C})[ z_{k}]   )}
\simeq_{\varepsilon}^{\rr{\TT c}^{-1}_{C}}
{
\Fun{X}{P}(\rr{\TT c}^{-1}_{C}) 
\big [ x \Rp C\vec Y   z_{1}\dots z_{k} \big]
}
$
we deduce 
%$$
% {\Lambda \vec Y.\lambda \vec z. \Fun{X}{F}(\rr{\TT c}_{C}) 
%\Big [
%xC\vec Y \  \lambda \vec y.  (\rr{\TT c}^{-1}_{C}[ z_{1}\vec y  ]\dots\lambda  \vec y.  (\rr{\TT c}^{-1}_{C}[ z_{k}\vec y ]
%\Big]}
%$$
$
t
\simeq_{\varepsilon}
{\Lambda \vec Y.\lambda \vec z. \Fun{X}{P}(\rr{\TT c}_{C})\circ \Fun{X}{P}(\rr{\TT c}^{-1}_{C}) 
\big [
x\Rp C\vec Y \  z_{1}\dots z_{k} 
\big]
}
%$$
%$$
\simeq_{\varepsilon}
{
\Lambda \vec Y.\lambda \vec z. x \Rp C \vec Yz_{1}\dots z_{k}}
\simeq_{\eta}
{
x \Rp C
}
$,
where $t$ is the lefthand term in Equation \ref{eq:bordello}.
\end{proof}

\begin{lemma}\label{coroll:bicci}
Let $A$ be as in Lemma \ref{lemma:reduce}. Then for all type $C$,
${\rr{\TT c}_{\forall X.A}\Rp C} \simeq_{\varepsilon} {\rr{\TT c}_{A[C/X]}}$.
\end{lemma}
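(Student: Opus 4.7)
The plan is to reduce the statement to a single application of Lemma~\ref{lemma:reduce}, after some $\beta\eta$-bookkeeping that identifies $\rr{\TT c}_{A[C/X]}$ with a composition of $\rr{\TT c}$ and the functorial actions $\Fun{X}{-}(\rr{\TT c}_C)$.

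First I would expand the left-hand side by a single $\beta$-step, using the definition $\rr{\TT c}_{\forall X.A}[x] = \Lambda X.\rr{\TT c}_A[xX]$, to obtain
$$
\rr{\TT c}_{\forall X.A}[x]\,\Rp C \ \simeq_\beta\ \rr{\TT c}_A[xX]\,[X\mapsto \Rp C].
$$
By structural inspection of $\rr{\TT c}_A$ against the shape of $A$ in~\eqref{eqi} (i.e.\ $A=\langle \forall \vec Y_k.\langle \FFun{\cc X}{A_{jk}}\rangle_j\To \mm X\rangle_k\To \FFun{\cc X}{B}$), this term is the iterated arrow/quantifier translation of $\rr{\TT c}$ applied at each sub-type of $A$ while treating $\Rp C$ as an \emph{atomic} type occupying the positions of $X$. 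In particular the occurrences of $X$ give rise to identities, not to invocations of $\rr{\TT c}_{\Rp C}$.

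Next I would unfold $\rr{\TT c}_{A[C/X]}[xC]$ in the same way, yielding a term of the same overall shape but with $\rr{\TT c}_{A_{jk}[C/X]}^{-1}$ in the negative slots and $\rr{\TT c}_{B[C/X]}$ in the positive head, and with $xC$ in place of $x\Rp C$. The key auxiliary observation, provable by a straightforward induction on types (and dual for negative types), is that for any positive type $D\langle\cc X\rangle$ one has
$$
\rr{\TT c}_{D[C/X]}[y] \ \simeq_\varepsilon \ \rr{\TT c}_D\bigl[\Fun{X}{D}(\rr{\TT c}_C)[y]\bigr],
\qquad
\rr{\TT c}_{D[C/X]}^{-1}[z] \ \simeq_\varepsilon \ \Fun{X}{D}(\rr{\TT c}_C^{-1})\bigl[\rr{\TT c}_D^{-1}[z]\bigr].
$$
Using these identities one sees that $\rr{\TT c}_{A[C/X]}[xC]$ is $\varepsilon$-equivalent to the syntactic image, under the atomic translation $\rr{\TT c}_A[\,\cdot\,][X\mapsto\Rp C]$, of exactly the left-hand side of Lemma~\ref{lemma:reduce}.

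Applying Lemma~\ref{lemma:reduce} to replace that inner expression by $x\Rp C$ then delivers precisely $\rr{\TT c}_A[xX][X\mapsto\Rp C]$, which by the $\beta$-step above is $\rr{\TT c}_{\forall X.A}[x]\,\Rp C$. The main obstacle is the bookkeeping for the auxiliary identities: one has to track variance carefully across the full list of $\Nnew$-constructors (arrows, universals, sums, products, fixpoints), where both $\rr{\TT c}$ and the functorial action $\Fun{X}{D}(-)$ branch into several cases. Once this compatibility between $\rr{\TT c}$ and substitution-via-$\Fun{X}{D}$ is established, the corollary is an immediate combination of $\beta$-reduction and one invocation of Lemma~\ref{lemma:reduce}.
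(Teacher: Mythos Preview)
Your proposal is correct and follows the same route as the paper's proof: both first $\beta$-reduce $\rr{\TT c}_{\forall X.A}[x]\,\Rp C$ to $\rr{\TT c}_A[X\mapsto\Rp C]$ applied to $x\Rp C$, then rewrite $\rr{\TT c}_{A[C/X]}[xC]$ as $\rr{\TT c}_A[X\mapsto\Rp C]$ applied to the left-hand side of Lemma~\ref{lemma:reduce}, and conclude by that lemma. The only difference is presentational: what the paper calls ``a simple calculation'' for the second rewriting you spell out as the compositionality identity $\rr{\TT c}_{D[C/X]}\simeq\rr{\TT c}_D[X\mapsto\Rp C]\circ\Fun{X}{D}(\rr{\TT c}_C)$ (and its dual), to be checked by induction on $D$; this is indeed what underlies the paper's one-line claim, so your version is more explicit rather than genuinely different.
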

\begin{proof}
By a simple calculation we can deduce 
${\rr{\TT c}_{\forall X.A}\Rp C}\simeq_{\beta}
{ \rr{\TT c}_{A} [ x\Rp C]}$ and  \\
${\rr{\TT c}_{A[C/X]}}\simeq_{\beta}
{\rr{\TT c}_{A}  \Big[
\Lambda \vec Y.\lambda \vec z. \Fun{X}{F}(\rr{\TT c}_{C}) \big [
xC  \vec Y \  (\Fun{X}{A_{f(i_{1})}}(\rr{\TT c}^{-1}_{C})[ z_{1}] ) \dots(\Fun{X}{A_{f(i_{k})}}(\rr{\TT c}^{-1}_{C})[ z_{k}]   )
\big]\Big]}
$,
so we can conclude by Lemma \ref{lemma:reduce}.
\end{proof}

\begin{figure}[t]
\begin{subfigure}{0.48\textwidth}
\adjustbox{scale=0.8,center}{
\begin{tikzcd}
    &   &  P[\mu X.P] \ar{dd}[black]{\rr{\TT c}_{P[\mu X.P]}}  
   % \ar{ld}[black, right]{P(\rr{\TT c}_{\mu X.P})}
     \ar{rr}[black]{\inn_{P}} &   & \mu  X.P    \ar{dd}[black,right]{  \rr{\TT c}_{\mu X.P}  }     \\
  \Gamma \ar[bend left=10]{urr}[black]{u}        \ar[ bend right=10]{rrd}[black,below]{u^{\rop}\mcirc \vec{\rr{\TT c}_{\Gamma}}}  & 
  %P[\beta]\ar{rd}[black, above]{\rr{\TT c}_{P}[\beta]}
  (*) &  & (**) &  \\
      &    &  P^{\rop}[\beta] \ar{dd}{P^{\rop}(xXf)}  \ar{rr}[black]{\rr{\TT F}}   & &   \beta  \ar{dd}{xXf}   \\
      & & &(***)  & \\
  &  &P^{\rop}[X]  \ar{rr}{f}  &  &  X    \\
\end{tikzcd}
}
\caption{Case $t= \inn_{P}[u]$.}
\label{fig:sbatti}
\end{subfigure}
\ \ 
\begin{subfigure}{0.48\textwidth}
\adjustbox{scale=0.8,center}{
%{\begin{tikzcd}
%    & P[\mu X.P] \ar{d}[left]{P(\ff_{T}(u))}  \ar{r}{\inn_{P}}& \mu X.P  \ar[dashed]{dddr}{\ff_{P}(\rr{\TT G})}    \ar[violet]{d}[left]{\ff_{T}(u)} &    \\
%      &  P[C]  \ar{dl}[left]{P(\rr{\TT c}_{C})} \ar{d}{\rr{\TT c}_{P[C]}}\ar{r}{u}& C\ar[violet]{d}{\rr{\TT c}_{C}}  &   1 \ar[violet]{lu}[above]{v}   \ar[orange]{dd}{v^{\rop}\mcirc \vec{\rr{\TT c}_{\Gamma}}}  \\
%     P[C^{\rop}]    \ar{r}{\rr{\TT c}_{P}[C^{\rop}]}   &   P^{\rop}[C^{\rop}]  \ar{r}{u^{\dag}}& C^{\rop} &   \\
%          &P[\beta] \ar{lu}{P([\ ]C^{\rop}u^{\dag}  )}  \ar{r}[below]{\rr{\TT c}_{P}[\beta]} &   P^{\rop}[\beta]  \ar{lu}[right]{P^{\rop}([\ ]C^{\rop}u^{\dag})} \ar{r}[below]{ \rr{\TT F} }  &  \beta \ar[orange]{ul}[right]{[\ ]C^{\rop}u^{\dag}}    \\
%         \end{tikzcd}}
\begin{tikzcd}
       &   \Gamma \times P[C^{\rop}]  \ar{rd}{\rr{\TT c}_{\Gamma}\times \rr{\TT c}_{P}[ C^{\rop}  ]} &   \\
           \Gamma\times P[C] \ar{rr}{\rr{\TT c}_{\Gamma}\times \rr{\TT c}_{P[C]}} \ar{ru}{\Gamma\times \Fun{X}{P}(\rr{\TT c}_{C})}\ar{dd}{u[\ ]}  &   & \Gamma^{\rop}\times P^{\rop}[C^{\rop}]  \ar{dd}{u^{\rop}[\ ]} \\
             & (**)  &   \\
  C \ar{rr}{\rr{\TT c}_{C}}  &   &  C^{\rop} \\
& (***) &  \\
    \mu X.P \ar{uu}{\ff_{P}(u)[\ ]}  \ar{rr}{\rr{\TT c}_{\mu X.P}}& &  \beta \ar{uu}{[\ ]C^{\rop}u^{\rop}}\\
    & (*) & \\
    \Gamma  \ar{uu}{v} \ar{rr}{\rr{\TT c}_{\Gamma}}& & \Gamma^{\rop}\ar{uu}{v^{\rop}}
   \end{tikzcd}
}
\caption{Case $t=\ff_{P}(u)v$.}
%, where $u^{\dag}=u^{\rop}\mcirc \vec{\rr{\TT c}_{\Gamma}}$ and $\rr{\TT G}= \rr{\TT c}_{P}[\beta]\circ \rr{\TT F}$.}
\label{fig:sbatti2}
\end{subfigure}
\caption{Commutation cases.}
\end{figure}

\begin{proof}[Proof of Proposition \ref{prop:mado}]
By induction on $  t$, we only consider some interesting cases:
\begin{itemize}

\item if $  t=  {uC}$, then $A= B[C/Y]$, $ {\Rp t}=  {\Rp u\Rp C}$ and by the induction hypothesis
$
{\rr{\TT c}_{\forall Y.B}[ u]}\simeq_{\varepsilon}
{\Rp u \mcirc \rr{\TT c}_{\Gamma}}
$. By Lemma \ref{coroll:bicci} and the induction hypothesis we have
${\rr{\TT c}_{B[C/Y]}[ u]} \simeq_{\varepsilon}  { ( \rr{\TT c}_{\forall Y.B}[ u])\Rp C  }
\simeq_{\varepsilon}
{ (\Rp u \mcirc \rr{\TT c}_{\Gamma})\Rp C} =
{ \Rp t\mcirc \rr{\TT c}_{\Gamma}}$ (as $Y\notin FV(\Gamma)$). 

\item if $ t=  {\delta_{C}(u, x.v_{1}, x.v_{2})}$, then ${\Rp t}=  {\Rp u \Rp C \ \lambda x.\Rp{v_{1}} \ \lambda x.\Rp{v_{2}}}$ and by the induction hypothesis we have
$
{  \rr{\TT c}_{B_{1}+B_{2}}[ u] }\simeq_{\varepsilon}  { \Rp u \mcirc \rr{\TT c}_{\Gamma}}
$ and
${ \rr{\TT c}_{C} [ v_{i}]} \simeq_{\varepsilon}  {\Rp{v_{i}}\mcirc \rr{\TT c}_{\Gamma,  {x}: B_{i}}}$.
We can compute then
\begin{equation*}
\begin{split}
\Rp t \mcirc \rr{\TT c}_{\Gamma}
& =
 (\Rp u \mcirc\rr{\TT c}_{\Gamma})\Rp C \ \lambda x . (\Rp{v_{1}}\mcirc \rr{\TT c}_{\Gamma, x:B_{1}}) \ \lambda x. (\Rp{v_{2}}\mcirc \rr{\TT c}_{\Gamma, x:B_{2}})
\\ &
\simeq_{\varepsilon}
( \rr{\TT c}_{B_{1}+B_{2}}\circ u)\Rp C \ \lambda x_{1}. (\Rp{v_{1}}\mcirc \rr{\TT c}_{\Gamma}) \ \lambda x_{2}. (\Rp{v_{2}}\mcirc \rr{\TT c}_{\Gamma})
\\ &
\simeq_{\beta}
( \Lambda X.\lambda ab. \delta_{X}( u, x.a(\rr{\TT c}_{B_{1}}[y])), x.a(\rr{\TT c}_{B_{1}}[y]))
\Rp C \ \lambda x_{1}. (\Rp{v_{1}}\mcirc \rr{\TT c}_{\Gamma, x:B_{1}}) \ \lambda x_{2}. (\Rp{v_{2}}\mcirc \rr{\TT c}_{\Gamma, x:B_{2}})
\\ &\simeq_{\beta}
\delta_{\Rp C}( u, 
x. \Rp{v_{1}}[\rr{\TT c}_{B_{1}}[x]/x] \mcirc \rr{\TT c}_{\Gamma}, x. \Rp{v_{2}}[\rr{\TT c}_{B_{2}}[x]/x] \mcirc \rr{\TT c}_{\Gamma}) \\
& 
=
\delta_{\Rp C}( u, 
x. (\Rp{v_{1}}\mcirc \rr{\TT c}_{\Gamma,  {x: B_{1}}}), x. (\Rp{v_{2}}\mcirc \rr{\TT c}_{\Gamma,  {x: B_{2}}}))
\simeq_{\varepsilon}
\delta_{\Rp C}( u, 
x. \rr{\TT c}_{C}[ \Rp{v_{1}}], x. \rr{\TT c}_{C}[ \Rp{v_{2}}])
\\ &
\simeq_{\eta}
\rr{\TT c}_{C}[ \delta_{C}(u, x.v_{1},x.v_{2})]
=
\rr{\TT c}_{C} [ t]
\end{split}
\end{equation*}

\item if $  t=  {\xi_{C}u}$, then $ {\Rp{t}}= {\Rp u C}$ and
by the induction hypothesis
$
{
\rr{\TT c}_{0}[ u]} \simeq_{\varepsilon}  {\Rp u \mcirc \rr{\TT c}_{\Gamma}
}
$ so we have
$
{
\Rp t \mcirc \rr{\TT c}_{\Gamma}
}
=
{
(\Rp u \mcirc \rr{\TT c}_{\Gamma})C
}
\simeq_{\varepsilon}
{(\rr{\TT c}_{C} [ u])C}
\simeq_{\varepsilon}
{
 \rr{\TT c}_{C} [ uC]
}
$
where the last step is an application of Lemma \ref{coroll:bicci}.

\item if $t= \inn_{P}u$, then $t^{\rop}[x]= \Lambda X.\lambda f. f(\Fun{X}{(P^{\rop})}( xX f   )[u^{\rop}] )$ and  by the induction hypothesis
$\rr{\TT c}_{P[\mu X.P]}[ u] \simeq_{\varepsilon} u^{\rop}\mcirc \vec{\rr{\TT c}_{\Gamma}}$.
Moreover we have that $A=\mu X.P$ and 
$\rr{\TT c}_{\mu X.P}[ t]= \ff_{P}(  \rr{\TT f}\circ \rr{\TT c}_{P}[\beta/X]  )$
where $\beta= \forall X.(P^{\rop}[X]\To X)\To X$ and 
$\rr{\TT f}[x]=    \Lambda X.\lambda f. f( \Fun{X}{P^{\rop}}(xXf)) $. 
The equivalence $(t^{\rop}\circ \vec{\rr{\TT c}}_{\Gamma})Xf \simeq_{\varepsilon} \rr{\TT c}_{\mu X.P}[t]Xf$ 
 is then seen from the diagram in Fig. \ref{fig:sbatti}, where $(*)$ commutes by the induction hypothesis, $(**)$ commutes since $\rr{\TT c}_{P[\mu X.P/X]}\simeq_{\beta\eta} \rr{\TT c}_{P}[  \beta/X]\circ \Fun{X}{P}(  \rr{\TT c}_{\mu X.P})$, and $(***)$ commutes by an instance of a $\varepsilon$-rule.
%the red triangle commutes by the induction hypothesis, and the blue one by the definition of $\rr{\TT c}_{\mu X.P}$.

%$\rr{\TT c}_{A}[t]$ corresponds to the violet path $t^{\rop}\mcirc \vec{\rr{\TT c}}_{\Gamma}$ to the orange path.
%
%\Lambda X.\lambda f. \ff_{P}(\lambda x.f\rr{\TT c}_{P}[x])[ \inn_{P}[u]] \simeq_{\beta} \Lambda X.\lambda f.  $

\item if $t= \ff_{P}(u)v$, then 
$t^{\rop}= v^{\rop}C^{\rop}u^{\rop}   $ and by the induction hypothesis we know that 
$\rr{\TT c}_{\mu X.P}[v]\simeq_{\varepsilon}v^{\rop} \mcirc \vec{\rr{\TT c}}_{\Gamma}$ and 
$\rr{\TT c}_{P(C)\To C }[u]\simeq u^{\rop}\mcirc \vec{\rr{\TT c}}_{\Gamma}$. 
The equivalence between $\rr{\TT c}_{C}[ \ff_{P}(u)v]$ and $t^{\rop}\mcirc \vec{\rr{\TT c}}_{\Gamma} = (v^{\rop}C^{\rop}u^{\rop})\mcirc \vec{\rr{\TT c}}_{\Gamma}$ can be seen from the diagram in Fig. \ref{fig:sbatti2}, where 
$(*)$ and $(**)$ commute by the induction hypothesis and the commutation of $(***)$ follows from
$\rr{\TT c}_{C}[\ff_{P}(u)]\simeq_{\varepsilon} \ff_{P}( u^{\rop}\rr{\TT c}_{P}[C^{\rop}])$
 and 
$\ff_{P}(u^{\rop}\rr{\TT c}_{P})\simeq_{\varepsilon} C^{\rop}u^{\rop}\rr{\TT c}_{\mu X.P}$, where the latter holds since
$\rr{\TT c}_{\mu X.P}$ is an isomorphism of initial algebras. % induced by a natural isomorphism $\rr{\TT c}_{P}[X]$ between the functors $\Fun{X}{P}$ and $\Fun{X}{(P^{\rop})}$. 
\end{itemize}
\end{proof}

%\section{Program Equivalence in $\NYY$ is Undecidable}\label{app7}
%\input{undecidability}
%

\end{document}